\numberwithin{equation}{section}
\newmdenv[%
  roundcorner=5pt,
  linecolor=blue!15,
  linewidth=2pt,
  subtitlebackgroundcolor=blue!15,
  subtitleaboveskip=0pt,
  subtitlebelowskip=0pt,
  subtitleinneraboveskip=0pt,
  innerbottommargin=0pt,
  subtitlefont=\normalfont
]{mdfigure}
\def\myfigureInternal#1#2#3{
\refstepcounter{figure} #1
\begin{mdfigure}[
  frametitle={
    \tikz[baseline=(current bounding box.east),outer sep=0pt]
    \node[anchor=east,rectangle,fill=blue!15,rounded corners]
    {Figure~\thefigure};},
  frametitleaboveskip=-10pt
  innertopmargin=0pt,
  innerbottommargin=0pt,
  roundcorner=5pt,
  linecolor=blue!15,
  linewidth=2pt,
  subtitlebackgroundcolor=blue!15,
  subtitleaboveskip=0pt,
  subtitlebelowskip=0pt,
  subtitleinneraboveskip=0pt,
  subtitlefont=\normalfont
]
#3
\mdfsubtitle{\medskip #2}
\end{mdfigure}
}
\def\myfigure#1#2#3{
\begin{figure}[htb]
\myfigureInternal{#1}{#2}{#3}
\negbigskip
\end{figure}
}
\newcommand{\OO}{\mathcal{O}}
\def\reg#1{\mathcal{#1}}
\newcommand{\Adv}{\mathop{\mathrm{Adv}^\pm}}
\newcommand{\maps}[1]{\;\stackrel{#1}{\longmapsto}\;}
\newcommand{\transduce}[1]{\stackrel{#1}{\rightsquigarrow}}
\newcommand{\DownTransduce}{\scalebox{0.7}{\rotatebox[origin=c]{270}{$\rightsquigarrow$}}}
\def\sS[#1]{\vcenter{\hbox{$\scriptstyle ($}} #1 \vcenter{\hbox{$\scriptstyle )$}}}
\begin{document}
\title{Taming Quantum Time Complexity}
\author{Aleksandrs Belovs}
\affiliation{Center for Quantum Computing Science, Faculty of Science and Technology, University of Latvia}
\orcid{0009-0004-1625-108X}
\author{Stacey Jeffery}
\affiliation{QuSoft, CWI \& University of Amsterdam}
\orcid{0000-0003-0046-5089}
\author{Duyal Yolcu}
\affiliation{\url{https://github.com/qudent}}

\maketitle

\begin{abstract}
    Quantum query complexity has several nice properties with respect to composition. First, bounded-error quantum query algorithms can be composed without incurring log factors through error reduction (\emph{exactness}). Second, through careful accounting (\emph{thriftiness}), the total query complexity is smaller if subroutines are mostly run on cheaper inputs -- a property that is much less obvious in quantum algorithms than in their classical counterparts. 
    While these properties were previously seen through the model of span programs (alternatively, the dual adversary bound), a recent work by two of the authors (Belovs, Yolcu 2023) showed how to achieve these benefits without converting to span programs, by defining \emph{quantum Las Vegas query complexity}. Independently, recent works, including by one of the authors (Jeffery 2022), have 
    worked towards bringing thriftiness to the more practically significant 
    setting of quantum \emph{time} complexity.

    In this work, we show how to achieve both exactness and thriftiness in the setting of time complexity. We generalize the quantum subroutine composition results of Jeffery 2022 so that, in particular, no error reduction is needed. We give a time complexity version of the well-known result in quantum query complexity, $Q(f\circ g)=\OO(Q(f)\cdot Q(g))$, without log factors.

We achieve this by employing a novel approach to the design of quantum algorithms based on what we call \emph{transducers}, and which we think is of large independent interest.
While a span program is a completely different computational model, a transducer is a direct generalisation of a quantum algorithm, which allows for much greater transparency and control.
Transducers naturally characterize general state conversion, rather than only decision problems;
 provide a very simple treatment of other quantum primitives such as quantum walks; and lend themselves well to time complexity analysis.
\end{abstract}

\clearpage
{
\tableofcontents
}
\clearpage

\pagestyle{plain}

\mycutecommand{\q}{{\mathrm{q}}}
\mycutecommand{\w}{{\mathrm{w}}}

\definecolor{applegreen}{rgb}{0.55, 0.71, 0.0}

\def\xicolor{red}
\def\taucolor{applegreen}
\def\witnesscolor{blue}
\def\nonquerycolor{gray}
\def\querycolor{orange}

\section{Introduction}

\label{sec:introPrior}
Since the introduction of span programs into quantum query complexity by Reichardt and \v Spalek
\cite{reichardt:formulae, reichardt:spanPrograms}%
\footnote{However, this particular connection is attributed to Troy Lee in the first of these two papers.},
the quantum query world became a nicer place to be.
A span program (alternatively, a feasible solution to the dual adversary bound) is an idealised computational model, in the sense that no real device corresponds to it.
Nonetheless, it has a strong connection to quantum query complexity.
We can turn any quantum query algorithm into a span program,%
    \footnote{Originally, this construction was non-constructive as it came from the dual of a lower bound.  Section 3 of~\cite{reichardt:spanPrograms} contains a constructive construction for algorithms with one-sided error.  This was later extended to two-sided error in~\cite{jeffery:spanProgramsSpace}.}
or we can construct one from scratch.
They can be composed as usual quantum subroutines.
At the end of the day, the span program can be transformed back into a quantum query algorithm.
We identify two main points of advantage of span programs compared to usual quantum subroutines.

\begin{itemize}
\item 
We call the first one \emph{exactness}.
A span program evaluates a function exactly even if the quantum query algorithm it was converted from had a bounded error.
Therefore, span programs can be composed without any error reduction.
The conversion from the span program to a quantum query algorithm does introduce an error, and we do \emph{not} get an exact quantum query algorithm at the end.
However, the error is introduced only \emph{once} per the whole algorithm, and does not accumulate even in the case of many non-precise subroutines.

This idea was the major driving force behind this line of research starting with the algorithm for the iterated NAND function~\cite{farhi:nandTree, ambainis:formulaeEvaluation} to general iterated Boolean functions~\cite{reichardt:formulae, reichardt:unbalancedFormulas}.

\item
We call the second one \emph{thriftiness}.
Span programs have a natural predisposition towards accurate bookkeeping.
If an execution of a subroutine happens to be cheap on a particular input, this cheaper cost will contribute to the total complexity of the whole span program.
This is in contrast to the usual execution of quantum subroutines, where the maximal cost is to be paid no matter how easy the input is in a particular execution.
This is because we generally cannot measure a subroutine, which may be run only in some branch of a superposition, to see whether it has ended its work or not.%
\footnote{
In \emph{specific} cases, Ambainis' variable-time framework~\cite{ambainis:searchVariableTimes, ambainis:amplificationVariableTimes, ambainis:variableTimeNew} allows for this particular approach.
We consider variable-time framework as a piece of motivation towards importance of thus defined thriftiness.  See, e.g., the results mentioned in the introduction of~\cite{ambainis:variableTimeNew}.
}
At the end, when converting to a quantum query algorithm, we still have to account for the maximal possible complexity, but this is done only \emph{once} per the whole algorithm, thus amortising complexities of individual subroutines.

This point has been emphasised to a lesser extent than the exactness property, but we find it equally important.
In particular, this ability results in some rather interesting super-polynomial speed-ups~\cite{zhan:treesWithHiddenStructure}.
We consider the thriftiness property in more detail in \rf{sec:conceptualPreliminaries}.
\end{itemize}

In a recent paper by a subset of the authors~\cite{belovs:LasVegas}, it was shown that one does not even have to change the model to obtain these benefits.
It suffices to keep the model of a quantum query algorithm, and only change the way query complexity is defined.
Namely, one can count total squared norm of all the states processed by the input oracle (coined quantum Las Vegas query complexity in~\cite{belovs:LasVegas} as it can be interpreted as the expected number of queries) instead of the number of executions of the input oracle (which is the usual definition, called Monte Carlo in~\cite{belovs:LasVegas}).
Note that quantum Las Vegas query complexity is an idealised complexity measure, but it has an operational meaning: at the end of the day, a quantum Las Vegas query algorithm can be turned into a Monte Carlo query algorithm with a constant increase in complexity and introducing a small error.

A major drawback of the above results is that while accounting of query complexity is very natural and important, analysis of the more meaningful \emph{time complexity} 
of the resulting algorithm can be quite difficult.
While the time complexity of some nicely structured span program algorithms has been successfully analyzed~(e.g.~\cite{belovs:learningClaws, reichardt:formulae}), in other cases, time complexity analyses have proven much more challenging, perhaps most notably in~\cite{belovs:learningKDist}, where a query algorithm was given, but a time complexity analysis remained elusive for more than a decade~\cite{jeffery:kDist}.

There has been a line of work attempting to extend some of the benefits of span programs to time complexity.
In \cite{cornelissen:spanProgramsTime}, 
it was shown how span programs could be made to capture time complexity of quantum algorithms, not only query complexity. 
Essentially what \cite{cornelissen:spanProgramsTime} did was to extend the algorithm-to-span-program conversion of~\cite{reichardt:spanPrograms} so that the span program still encodes information about the gate structure of the original algorithm.
Thus, this structure can be recovered when the span program is converted back into an algorithm.
The resulting span programs could be manipulated in a limited number of ways.
For instance, this allowed for an alternative implementation of Ambainis' variable-time search algorithm~\cite{ambainis:searchVariableTimes}.
This idea was further exploited in~\cite{jeffery:subroutineComposition} by one of the authors (in the framework of multidimensional quantum walks~\cite{jeffery:kDist}) to give a general composition result for time complexity of quantum algorithms, similar to the query results of~\cite{belovs:LasVegas}, but lacking in generality.
While~\cite{jeffery:subroutineComposition} achieved thriftiness, it failed to achieve exactness.  Instead, this work assumes that subroutines have had their success probability amplified through majority voting so that errors occur with inverse polynomial probability, which results in logarithmic factor overhead. 

One of the main contributions of this paper is a framework that achieves both exactness and thriftiness for quantum \emph{time} complexity in a systematic manner.
In order to do so, we introduce \emph{transducers}.
While quantum Las Vegas query complexity keeps the model of a quantum algorithm the same but changes the definition of the query complexity, transducers still keep the same model but change the definition of computation.
We consider two objects: a \emph{transducer} that is a usual quantum algorithm in a larger Hilbert space, and its \emph{transduction action} that is an idealised transformation obtained by ``additively tracing out'' a part of the Hilbert space of the transducer.
A simple procedure converts a transducer into an algorithm implementing its transduction action, introducing a small error in the process.
Monte Carlo query complexity of the resulting algorithm is Las Vegas query complexity of the transducer.
Its time complexity is a product of two things: time complexity of the transducer (considered as a usual quantum algorithm) and its \emph{transduction complexity}, which measures the extent of the ``traced-out'' part of the system. 
Compared to span programs, which most naturally model computations of decision problems,  transducers naturally model arbitrary state conversion computations, and as a transducer is itself a quantum computation, its time complexity analysis is more immediate.

For the composition results, we follow the same overall strategy as for span programs.
We can transform any quantum program into a transducer, and we can compose transducers in essentially the same way as we compose quantum programs.
The advantage is that both query complexity and time complexity (in the guise of transduction complexity) are composed in a thrifty manner.
At the end, we convert the composed transducer back into a quantum algorithm.
In parcitular, we obtain the following results:
\begin{description}
\item[Exact and thrifty composition of quantum algorithms:] We show how to compose quantum algorithms in a way that is thrifty\footnote{For an idea of what we mean by thrifty composition, see \rf{eqn:randomComposition} and \rf{eqn:randomComposition2}, and the surrounding discussion of the ``gold standard'' for composition.} in time complexity, but also exact (\rf{thm:introCompositionQRAG}). This improves on \cite{jeffery:subroutineComposition} in several ways.
	\begin{enumerate}
	\item Exactness: We do not need to assume the composed algorithms have inverse polynomial errors, which saves log factors in the overall complexity.
	\item The results apply to composing quantum algorithms that solve general state conversion problems, whereas \cite{jeffery:subroutineComposition} only applies to algorithms that decide Boolean functions.
	\item In addition to achieving thrifty time complexity, the composed algorithm is also thrifty in its number of queries to the input oracle.
	\end{enumerate}
The one way in which our results, as stated, are not more general than \cite{jeffery:subroutineComposition} is that \cite{jeffery:subroutineComposition} considers quantum algorithms that work in the variable-time model, meaning their running times can be random variables, achieved by performing intermediate partial measurements that indicate if the algorithm is finished. While variable-time algorithms are also compatible with transducers, we omit their explicit treatment here, for the sake of simplicity, and leave it for future work. 
We also give results that apply purely to the circuit model (without quantum random access gates (QRAGs)), but with worse complexity (\rf{thm:introCompositionCircuit}). 
We also explicitly consider multiple layers of composition (\rf{thm:introCompositionTree}) in the QRAG model, and its special case of iterated functions in the circuit model (\rf{thm:introIterated}).

\item[Quantum analogue of majority voting, and time-efficient function composition:] It\\ is known that the bounded-error quantum query complexity, $Q$, of a composed function $f \circ g$ is $Q(f\circ g)=\OO(Q(f)Q(g))$. This statement would be obvious with log factors, by using the standard technique of majority voting to reduce the error of each call to $g$ to inverse polynomial, but the fact that this holds without log factors is somewhat surprising. In \rf{sec:introPurifier}, we shed some light on this surprising result, and show a similar result for time complexity. We extend a notion from \cite{belovs:variations}, called a \emph{purifier}, to show how to take a quantum algorithm with constant error, and convert it to a transducer with any error $\eps>0$ with only \emph{constant} overhead on the query and time complexities. This, in turn, allows us to prove a log-factor-free composition result for time complexity in the QRAG model (\rf{thm:introCompositionFunctionQRAG}). Specifically, if there is an algorithm for $f$ that makes $Q$ queries and takes $T_f$ additional time, and an algorithm for $g$ that takes $T_g$ time, then there is an algorithm for $f\circ g$ that takes time $\OO(T_f+QT_g)$. As with the aforementioned query result, this would be obvious with log factors on the second term, but the fact that it holds without log factor overhead is surprising. This implies log factor improvements in the time complexity of quantum algorithms obtained by composition, such as~\cite{jeffery:kDist}.

\end{description}

We stress that in addition to our concrete results, a major contribution is conceptual. 
Along with 
bringing the beautiful existing work on quantum query complexity to the real world of quantum time complexity, transducers achieve many existing results in a much simpler and cleaner way, and we feel that they are a novel and potentially instructive way of understanding quantum algorithms. 

For example, discrete-time quantum walks are an important technical tool.
However, understanding their internal workings requires some background: the notion of the spectral gap, phase estimation subroutine, non-trivial spectral analysis.
In \rf{sec:introWalks} we use transducers to devise a very simple implementation of a quantum walk that completely avoids all this background.

Let us end this section with a few remarks on the model of computation.
The previous results towards thriftiness in time complexity,~\cite{cornelissen:spanProgramsTime} and~\cite{jeffery:subroutineComposition}, assumed the QRAG model in a fundamental way.
It is based around the quantum random access gate (QRAG), which allows index access to an array of \emph{quantum} registers in superposition.
This should not be confused with quantum random access memory (QRAM), which assumes such access to an array of \emph{classical} registers.%
\footnote{This nomenclature is not completely standardised and has been a source of confusion, as different authors use the same name to refer to different models.}
QRAG is a stronger model than QRAM, but it is utilised in a large number of time-efficient quantum algorithms, Ambainis' element distinctness algorithm~\cite{ambainis:distinctness} being a noticeable example.

Our general attitude towards the QRAG model is one of ambivalence.
On the one hand, the QRAG model is a very natural quantization of the classical RAM machine, and, thus, makes a lot of sense to study theoretically.
On the other hand, the assumption that we can swap a large number of qubits in superposition in essentially constant time seems far-fetched given the current state of the art in development of quantum computers.
Because of that, we have chosen to pursue both directions.  
We prove results involving the QRAG model, as they tend to have more natural formulations, and continue the aforementioned line of research in~\cite{cornelissen:spanProgramsTime, jeffery:subroutineComposition}, but we also design algorithms in the circuit model, which, while not being as efficient as in the QRAG model, are of significant interest.

We now give a technical overview of our results in \rf{sec:conceptualPreliminaries} and \rf{sec:overview}, before fleshing out full details in the remaining sections.

\section{Conceptual Preliminaries: Quantum Las Vegas Complexity}
\label{sec:conceptualPreliminaries}
Quantum Las Vegas query complexity~\cite{belovs:LasVegas}, also known as the total query weight~\cite{jeffery:subroutineComposition}, is a cornerstone in understanding this paper for a number of reasons.
First, the definition of the query complexity of a transducer is intrinsically the Las Vegas one.
Second, when converting a quantum program into a transducer, we take into account its Las Vegas query complexity.
Finally, we use composition results for the Las Vegas \emph{query} complexity as a model that we strive to achieve for \emph{time} complexity.

This section serves as a very brief overview of the composition properties for randomised and quantum Las Vegas complexity.
The main goal of this section is to highlight the connection between quantum and randomised Las Vegas complexity, and to explain the composition properties we are interested in this paper.
More technical exposition of these topics will be done in Sections~\ref{sec:preliminaries} and~\ref{sec:properties}, respectively.

The randomised results in this section are folklore, and the quantum ones are either from~\cite{belovs:LasVegas} or can be obtained similarly.
Both are for purely illustrative purpose and will serve as reference points to the results obtained later in the paper.
The reader may choose to skip to \rf{sec:overview}, and to return to this section if needed.

\subsection{Types of Composition}
\label{sec:conceptualTypes}
The composition property we have in mind is as follows.
Assume we have an algorithm $A$ with an oracle $O'$, and an algorithm $B$ that implements the oracle $O'$.
The functional composition of the two is an algorithm $A\circ B$, where the algorithm $B$ is used as a subroutine to process the queries made by $A$ to $O'$.
The subroutine $B$ has access to some oracle $O$, which is also the input to $A\circ B$.
For simplicity we will now assume that $A$ has no access to $O$, but we will drop this assumption shortly.
The main question is a bound on the complexity of $A\circ B$ in terms of complexities of $A$ and $B$.

One particular example studied extensively both classically and quantumly is given by composed functions, i.e., functions of the form
\begin{equation}
\label{eqn:randomComposedFunction}
f\sA[
g_1(z_{1,1}, \dots,z_{1,m}), 
g_2(z_{2,1}, \dots,z_{2,m}),
\dots,
g_n(z_{n,1}, \dots,z_{n,m})].
\end{equation}
(For simplicity of notation, we assume all the functions $g_i$ are on $m$ variables, which is without loss of generality.)
Then, $O$ encodes the input string $z$, and $A$ evaluates the function $f$.
Concerning $B$, it is a parallel composition $B=\bigoplus_i B_i$ of algorithms evaluating $g_i$.
On query $i$, $B$ returns the output of $B_i$, which is $g_i(z_{i,1}, \dots,z_{i,m})$.

In general, the parallel composition $B=\bigoplus_i B_i$, which we also call the direct sum, has queries of the form $(i,j)$, on which it responds with the output of $B_i$ on query $j$.
We assume that all $B$ and $B_i$ have access to the same oracle $O$.
In the example above, all $j$ are absent.
Also, although each $B_i$ has access to $O$, it only uses the substring $z_i = (z_{i,1}, \dots,z_{i,m})$ thereof.

\subsection{Randomised Complexity}

Recall the distinction between Las Vegas and Monte Carlo randomised algorithms. A Monte Carlo algorithm is a randomised algorithm that takes at most some fixed number $T$ of time steps, and outputs the correct answer with bounded error. 
The complexity of such an algorithm is simply $T$. This is analogous to the usual \emph{bounded-error} quantum query model. 
In contrast, a Las Vegas algorithm is a randomised algorithm whose number of steps is a random variable $T$, and that never outputs an incorrect answer (though it may run forever). The complexity of such an algorithm is $\E[T]$. More generally, one might consider the Las Vegas complexity, $\E[T]$, of an algorithm whose running time is a random variable $T$, even if the algorithm has some probability of erring (that is, it is not strictly a Las Vegas algorithm). 

Randomised Las Vegas complexity behaves nicely with respect to composition.
Here we sketch the corresponding notions and results in order to facilitate the forthcoming introduction of the related quantum notions, and to have a reference point against which we can gauge our quantum composition results.

\mycutecommand{\rand}{_{\mathrm{rand}}}

Let us start with functional composition.
Let $T\rand(A,O')$ be the complexity of the algorithm $A$ on the oracle $O'$.
Let $B(O)$ denote the action of the algorithm $B$ on the oracle $O$, and $T\rand(B, O, i)$ the complexity of $B(O)$ on query $i$.
Denote by $p^{(i)}\rand(A,O')$ the probability the algorithm $A$ will give $i$ as a query to the oracle $O'$ at some point during the execution of the algorithm (we assume each query is given at most once).
Then, it is not hard to see that the total complexity of $A\circ B$ on oracle $O$ is given by
\begin{equation}
\label{eqn:randomComposition}
T\rand(A\circ B,O) = T\rand\sA[A, B(O)] + \sum_i p^{(i)}\rand\sA[A, B(O)]\cdot T\rand(B, O, i).
\end{equation}

Let us rewrite this formula to illuminate transition to the quantum case.
Define the total query vector of the algorithm $A$ as a formal linear combination 
\[
q\rand(A, O') = \sum_i p^{(i)}\rand (A,O')\; e_i
\]
for $e_i$ the elements of the standard basis.
Let by definition the complexity of $B$ on such a vector be
\begin{equation}
\label{eqn:randomComplexityExtended}
T\rand\sB[B, O, \sum_i p_i e_i] = \sum_i p_i\, T\rand\sA[B, O, i].
\end{equation}
Then, we can rewrite~\rf{eqn:randomComposition} as
\begin{equation}
\label{eqn:randomComposition2}
T\rand(A\circ B, O) =  T\rand\sA[A,B(O)] + T\rand\sA[B, O, q\rand(A,B(O))],
\end{equation}
which says that the complexity of the composed program on the oracle $O$ equals the complexity of $A$ by itself plus the complexity of $B$ on its total query vector.
This equation will serve as our ``gold standard'' of thriftiness in functional composition.

Similarly, thriftiness holds for parallel composition.
Let $\xi = \sum_{i,j} \xi_{i,j} e_{i,j}$ be a query vector for $\bigoplus_i B_i$.
We can break it down as
\begin{equation}
\label{eqn:randomXiDecomposition}
\xi = \bigoplus_i \xi_i,
\end{equation}
where $\xi_i = \sum_j \xi_{i,j} e_{i,j}$ is the corresponding query vector for the constituent $B_i$.
Then, we have
\begin{equation}
\label{eqn:randomParallelComplexity}
T\rand\sB[\bigoplus_i B_i, O, \xi] = \sum_{i} T\rand\sA[B_i, O, \xi_i]. 
\end{equation}
(We assume here that relaying from $B$ to the corresponding $B_i$ is done instantly.)

These results can be combined in various ways.
For instance, assume the oracle $O'$ in the functional composition settings is a direct sum $O'=\bigoplus_{i=1}^n O^{(i)}$ of $r$ independent oracles $O^{(i)}$.
Let $B_i$ implement $O^{(i)}$, so that $B = \bigoplus_i B_i$ implements $O'$.
Similarly as in~\rf{eqn:randomXiDecomposition}, we can decompose the corresponding total query vector
\[
q\rand(A, O') = \bigoplus_i q\rand^{(i)}(A,O')
\]
into partial query states corresponding to the $i$-th oracle.
In this way, the multiple-oracle case differs from the single-oracle case only by this change of perspective.
Combining~\rf{eqn:randomComposition2} and~\rf{eqn:randomParallelComplexity}, we obtain
\begin{equation}
\label{eqn:randomComposition3}
T\rand(A\circ B, O) = T\rand\sA[A,B(O)] + \sum_i T\rand\sA[B_i, O, q\rand^{(i)}(A,B(O))].
\end{equation}

\mycutecommand\Ic{I^\circ}
\mycutecommand\Ib{I^\bullet}
\mycutecommand\Iw{I^\circ}
\mycutecommand\cHw{\cH^\circ}
\mycutecommand\cHq{\cH^\bullet}
\mycutecommand\cHt{\cH^\uparrow}

\subsection{Quantum Complexity}
\label{sec:conceptualQuantum}
The usual definition of quantum query algorithms does not allow for different complexities on different inputs, hence, we cannot even properly define a meaningful analogue of~\rf{eqn:randomComposition}.
However, it is possible for \emph{Las Vegas} complexity of quantum query algorithms defined in~\cite{belovs:LasVegas}.
We will demonstrate here that it possesses properties essentially identical to those of the randomised case.
For a more formal definition of the model of query algorithms and the Las Vegas complexity, refer to \rf{sec:prelimQuery}.

Let $A = A(O)$ be a quantum algorithm in space $\cH$ with an oracle $O$ in space $\cM$.
We denote by $A(O)$ the action of $A$ on a specific input oracle $O$.
We use $T(A)$ to denote time complexity of $A$, and $Q(A)$ to denote the number of queries made by $A$.

What is important, is that the queries to the input oracle $O$ are conditional.
This means that the query applies $O$ to a subspace $\cHq\subseteq\cH$ of the workspace $\cH$, and is the identity elsewhere.
The total query state $q(A,O,\xi)$ records the history of all the queries given by the algorithm $A$ to the input oracle $O$ on the initial state $\xi$.
In other words, it is the direct sum
\[
q(A,O,\xi) = \bigoplus_{t=1}^{Q(A)} \psi^\bullet_t = \sum_{t=1}^{Q(A)} \ket|t> \ket |\psi^\bullet_t>,
\]
where $\psi^\bullet_t\in \cHq$ is the state given to the input oracle on the $t$-th query.
We have $q(A,O,\xi) \in \cE\otimes \cM$ for some space $\cE$.
The Las Vegas query complexity is defined as $L(A,O,\xi) = \|q(A,O,\xi)\|^2$.
This definition has an operational meaning: the algorithm can be modified to use $\OO(L)$ queries, where $L$ is the worst-case Las Vegas query complexity, by introducing some small error.

One convention of this paper is that we usually only allow \emph{unidirectional} access to $O$.  The algorithm can only execute $O$, but not its inverse $O^*$.
This is without loss of generality, as \emph{bidirectional access} to $O$, when the algorithm can execute both $O$ and $O^*$, is equivalent to unidirectional access to $O\oplus O^*$.

As in the randomised case, $r$ input oracles can be combined into one as follows:
\begin{equation}
\label{eqn:introMultipleOracles}
O = O^{(1)}\oplus O^{(2)}\oplus \cdots \oplus O^{(r)}.
\end{equation}
The \emph{partial query state} $q^{(i)}(A,O,\xi)$ of the $i$-th input oracle is defined as the direct sum of all the states given to that particular oracle $O^{(i)}$.
In particular,
$
q(A, O, \xi) = \bigoplus_i q^{(i)} (A, O,\xi).
$
Similarly as before, the Las Vegas query complexity of the $i$-th input oracle is 
$
L^{(i)}(A, O, \xi) = \normA|q^{(i)}(A,O,\xi)|^2.
$

\paragraph{Parallel Composition}
The parallel composition is straightforward.
For programs $B_1,\dots, B_n$, all on the input oracle $O$, its direct sum is $\bigoplus_i B_i$, which executes $B_i$ on orthogonal parts of the space.
It is not hard to show that
\begin{equation}
\label{eqn:quantumParallel}
q\sB[\bigoplus_i B_i, O, \bigoplus_i \xi_i] = \bigoplus_i q(B_i, O, \xi_i),
\end{equation}
where we implicitly assume the correct arrangement of the entries in the corresponding direct sums.
A direct consequence is the following counterpart of~\rf{eqn:randomParallelComplexity}:
\[
L\sB[\bigoplus_i B_i, O, \bigoplus_i \xi_i] = \sum_i L(B_i, O, \xi_i).
\]

\paragraph{Functional Composition}
Let us now derive a quantum query analogue of~\rf{eqn:randomComposition2}.
First, though, we define a counterpart of~\rf{eqn:randomComplexityExtended}.
For $\xi'\in \cE\otimes \cH$, we can write $\xi' = \xi_1\oplus \xi_2 \oplus \cdots \oplus \xi_m$ with $\xi_t\in \cH$, and define
\begin{equation}
\label{eqn:queryStateExtended}
q(A,O,\xi') = \bigoplus_t q(A, O, \xi_t).
\end{equation}
This is precisely the total query state we will get if we tensor-multiply $A$ by the identity in the register $\cE$ and execute it on $\xi'$.
The corresponding Las Vegas query complexity is 
\begin{equation}
\label{eqn:LasVegasExtended}
L(A, O, \xi') = \|q(A, O, \xi')\|^2.
\end{equation}

Now consider \rf{fig:composition}, which depicts composition of two quantum programs, and which goes along the lines of our previously discussed randomised case.
This time, however, we consider a more general case when $A$ has access to the input oracle $O$ as well.

\myfigure{\label{fig:composition}}
{
General case of functional composition of two programs.
The outer program $A$ has two oracles $O$ and $O'$, which we identify with the oracle $O\oplus O'$.
The oracle $O'$ is implemented by a program $B$ with access to $O$.
The diagram specifies the corresponding query states, where we use the upper indices ${}^{(0)}$ and ${}^{(1)}$ in relation to $O$ and $O'$, respectively.
}
{
\negbigskip
\[
\begin{tikzpicture}[every path/.append style={thick,->}]
    \draw[\xicolor] (-1, 1) node[left]{$\xi$} to (0,1);
    \draw (0,0) rectangle (3,1.5) node[pos=0.5] {$A(O\oplus O')$};
    \draw (4,0) rectangle (7,-1.5) node[pos=0.5] {$O' = B(O)$};
    \draw (0, -3) rectangle (7,-4.5) node[pos=0.5] {\Large $O$}; 
    \draw[purple] (1.5,0) to node[left] {\scriptsize $q^{(0)}(A, O\oplus O', \xi)$} (1.5,-3);
    \draw[purple, out=0, in=90] (3, 1) to node[above]{\scriptsize $q^{(1)}(A, O\oplus O', \xi)$} (5.5,0);
    \draw[purple] (5.5, -1.5) to node[right] {\scriptsize $q\sB[ B, O, q^{(1)}(A, O\oplus O', \xi)]$}(5.5,-3);
\end{tikzpicture}
\]
\negbigskip
}

The composed algorithm $A\circ B$ is implemented by replacing each execution of $O'$ by an execution of $B$.
Its action on the input oracle $O$ is equal to $A\sA[O\oplus B(O)]$.
It is not hard to show that
\begin{equation}
\label{eqn:quantumComposition}
q(A\circ B, O, \xi) = q^{(0)}\sB[A, O\oplus B(O), \xi]\; \oplus\; q\sB[ B, O, q^{(1)}{\sA[A, O\oplus B(O), \xi]}].
\end{equation}
Similarly to~\rf{eqn:randomComposition2}, this slightly complicated expression represents a very intuitive observation that the total query state of $A\circ B$ on the input oracle $O$ consists of the part of the query state of $A$ given directly to $O$ (denoted $q^{(0)}$), together with the query state of $B$ on the initial state composed of the part of the query state of $A$ given to $O'$ (denoted $q^{(1)}$). 

\myfigure{\label{fig:compositionMultiple}}
{
A version of \rf{fig:composition} where the oracle $O'$ is decomposed into $r$ input oracles $O' = O^{(1)}\oplus\cdots\oplus O^{(r)}$ and the program $B$ is accordingly decomposed into $B = B_1\oplus\cdots\oplus B_r$ so that $B_i$ implements $O^{(i)}$.
The diagram specifies the corresponding query states, where we use the upper index ${}^{(0)}$ in relation to $O$.
We note that it is without loss of generality to assume that $A$ and all $B_i$ use the same input oracle $O$.
Indeed, if this is not the case, we can define $O$ as the direct sum of the oracles used by $A$ and $B_i$.
}
{
\negbigskip\negbigskip\negbigskip
\[
\begin{tikzpicture}[every path/.append style={thick,->}]
    \draw[\xicolor] (-1, 2) node[left]{$\xi$} to (0,2);
    \draw (0,1) rectangle (3,2.5) node[pos=0.5] {$A(O\oplus O')$};
    \draw (3,0) rectangle (6,-1.5) node[pos=0.5] {$O^{(1)} = B_1(O)$};
    \node at (7.5, -0.75) {\Large$\cdots$};
    \draw (8.5,0) rectangle (11.5,-1.5) node[pos=0.5] {$O^{(r)} = B_r(O)$};
    \draw (0, -3) rectangle (11.5,-4.5) node[pos=0.5] {\Large $O$}; 
    \draw[purple] (1.5,1) to node[left] {\scriptsize $q^{(0)}(A, O\oplus O', \xi)$} (1.5,-3);
    \draw[purple, out=0, in=90] (3, 1.5) to node[right, pos=0.8]{\scriptsize $q^{(1)}(A, O\oplus O', \xi)$} (4.5,0);
    \draw[purple, out=0, in=90] (3, 2.25) to node[right, pos=0.95]{\scriptsize $q^{(r)}(A, O\oplus O', \xi)$} (10.5,0);
    \draw[purple] (4.5, -1.5) to node[right] {\scriptsize $q\sB[ B_1, O, q^{(1)}(A, O\oplus O', \xi)]$}(4.5,-3);
    \draw[purple] (10, -1.5) to node[right] {\scriptsize $q\sB[ B_r, O, q^{(r)}(A, O\oplus O', \xi)]$}(10,-3);
\end{tikzpicture}
\]
\negbigskip
}

It is quite often the case, that the oracle $O'$ above is composed of several input oracles, each implemented by its own subroutine $B_i$, see \rf{fig:compositionMultiple}.
In this case, we can use~\rf{eqn:quantumParallel} to obtain an analogue of~\rf{eqn:randomComposition3}:
\begin{equation}
\label{eqn:quantumCompositionMultiple}
q(A\circ B, O, \xi) = q^{(0)}\sB[A, O\oplus B(O), \xi]\; \oplus\; \bigoplus_i q\sB[B_i, O, q^{(i)}{\sA[A, O\oplus B(O), \xi]}].
\end{equation}

It is often convenient to define $L_{\max}(B,O)$ as the worst-case complexity of $L(B,O,\xi)$ as $\xi$ ranges over all unit vectors (or over all unit vectors in some admissible subspace of initial vectors).
Then, using linearity, we can obtain from~\rf{eqn:quantumComposition}: 
\begin{align}
L(A\circ B, O, \xi) 
&= L^{(0)}\sB[A, O\oplus B(O), \xi]\; +\; L\sB[ B, O, q^{(1)}{\sA[A, O\oplus B(O), \xi]}] \label{eqn:quantumComposition1}\\
&\le L^{(0)}\sB[A, O\oplus B(O), \xi]\; +\; L_{\max}(B,O) L^{(1)}\sA[A, O\oplus B(O), \xi]
\label{eqn:quantumComposition2}.
\end{align}
and from~\rf{eqn:quantumCompositionMultiple}
\begin{align}
L(A\circ B, O, \xi) 
&= L^{(0)}\sB[A, O\oplus B(O), \xi]\; +\; \sum_i L\sB[B_i, O, q^{(i)}{\sA[A, O\oplus B(O), \xi]}] \label{eqn:quantumCompositionMultiple1}\\
&\le L^{(0)}\sB[A, O\oplus B(O), \xi]\; +\; \sum_i L_{\max}(B_i,O) L^{(i)}\sA[A, O\oplus B(O), \xi]
\label{eqn:quantumCompositionMultiple2}.
\end{align}
In~\rf{eqn:quantumComposition2} and~\rf{eqn:quantumCompositionMultiple2}, it is assumed that $B$ and $B_i$ are only executed on the admissible initial states.

To summarise, we see that quantum Las Vegas query complexity satisfies composition properties very similar to the ``gold standard'' of the randomised Las Vegas complexity.
One of the goals of this paper is to approach these results for quantum \emph{time} complexity.
In~\cite{jeffery:subroutineComposition}, a result in this direction was obtained for evaluation of functions assuming the QRAG model of computation.
In this paper, we consider more general state conversion settings, and also obtain partial results for the circuit model.
We also think that the approach of this paper is less technical than the one taken in in~\cite{jeffery:subroutineComposition}.

\section{Overview of the Paper}
\label{sec:overview}
This section serves as an informal version of the whole paper, where we introduce all the main concepts, ideas, and sketch the proofs of the main results.
In the remaining paper, we fill in all the technical gaps.

\subsection{Transducers}
\label{sec:introTransducers}

In the current paper, we take a different approach to time complexity than in the two papers~\cite{cornelissen:spanProgramsTime, jeffery:subroutineComposition} mentioned in \rf{sec:introPrior}.
Instead of using span programs or quantum walks, we build on the key technical primitive from~\cite{belovs:LasVegas}, which we call a \emph{transducer}\footnote{Not to be confused with transducers from the theory of finite automata, however there are some connections between the two, as we discuss in \rf{sec:automaton}.} 
in this paper.

Transducers are based on the following mathematical observation 
we prove in \rf{sec:transducerDefinition}:
\begin{thm}[Transduction]
\label{thm:introTransduce}
Let $S$ be a unitary acting in a direct sum of two vectors spaces $\cH\oplus \cL$.
For every $\xi\in \cH$, there exist a unique $\tau = \tau(S,\xi)\in \cH$ and in some sense unique $v = v(S,\xi)\in \cL$ such that
\begin{equation}
\label{eqn:1transduce}
S\colon \xi \oplus v \mapsto \tau \oplus v.
\end{equation}
\end{thm}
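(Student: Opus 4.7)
The plan is to prove this by a direct block-matrix analysis of $S$, reducing the fixed-point condition on the $\cL$-component to a linear equation and reading off $\tau$ from the $\cH$-component.

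First, I would write $S$ in block form with respect to $\cH \oplus \cL$ as
\[
S = \begin{pmatrix} A & B \\ C & D \end{pmatrix},
\]
so the equation $S(\xi \oplus v) = \tau \oplus v$ becomes the pair
\[
A\xi + Bv = \tau, \qquad (I - D)v = C\xi.
\]
So the entire statement reduces to: (i) the equation $(I-D)v = C\xi$ has a solution $v$ for every $\xi$, (ii) any two solutions $v, v'$ produce the same $\tau = A\xi + Bv$.

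Next, I would exploit the two unitarity relations that $S$ must satisfy. From $S^*S = I$, the $\cL$-diagonal block gives $B^*B + D^*D = I_{\cL}$, and from $SS^* = I$ the $\cL$-diagonal block gives $CC^* + DD^* = I_{\cL}$. These are the only facts I need. The first shows that if $Dw = w$, then $\|Bw\|^2 = \|w\|^2 - \|Dw\|^2 = 0$, so $Bw = 0$; this will handle uniqueness of $\tau$ under non-uniqueness of $v$. The second shows that if $D^*w = w$, then $\|C^*w\|^2 = \|w\|^2 - \|D^*w\|^2 = 0$, so $C^*w = 0$, which implies $\langle w, C\xi\rangle = 0$ for every $\xi$; this will give existence of $v$.

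For existence (working in finite dimension, or assuming closed range in general) I would then use the standard identity $\operatorname{range}(I - D) = \ker(I - D^*)^\perp$: the second unitarity calculation above tells us $\operatorname{range}(C) \subseteq \ker(I - D^*)^\perp$, so $C\xi \in \operatorname{range}(I - D)$ and some $v$ exists. For uniqueness of $\tau$, any two solutions $v, v'$ differ by an element $w \in \ker(I - D)$, and the first unitarity calculation gives $Bw = 0$, so $A\xi + Bv = A\xi + Bv'$. This also pins down in what sense $v$ is unique: it is determined only modulo $\ker(I - D)$, equivalently uniquely determined if one insists on $v \in \ker(I-D)^\perp$.

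I do not anticipate a real obstacle; the proof is essentially a two-line block-matrix computation together with the observation that the two relevant unitarity identities are perfectly matched to yield existence from one and $\tau$-uniqueness from the other. The one point needing mild care is the infinite-dimensional case, where I would either restrict to the finite-dimensional setting used throughout the paper or invoke closedness of $\operatorname{range}(I-D)$ via the fact that $D$ is a contraction whose defect operator $I - D^*D = B^*B$ is bounded below on $\ker(I-D)^\perp$.
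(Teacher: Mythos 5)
Your proof is correct and takes essentially the same route as the paper's: both reduce the defining equation $S(\xi\oplus v)=\tau\oplus v$ to the linear system $\tau = A\xi+Bv$, $(I-D)v=C\xi$ (the paper writes $D$ as $\Pi S\Pi$ and expresses the solution via the Moore--Penrose pseudoinverse), and both use unitarity of $S$ to show that $\ker(I-D)$ is annihilated by $B$, which gives uniqueness of $\tau$ and uniqueness of $v$ modulo that kernel. The only cosmetic difference is in the existence step: the paper notes that $S$ fixes the $1$-eigenspace of $\Pi S\Pi$ pointwise and therefore preserves its orthogonal complement, whereas you invoke the adjoint identity $CC^*+DD^*=I$ together with $\operatorname{range}(I-D)=\ker(I-D^*)^\perp$ --- two phrasings of the same underlying fact.
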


We say in the setting of~\rf{eqn:1transduce} that $S$ \emph{transduces} $\xi$ into $\tau$, denoted $\xi\transduce{S} \tau$ or $S\colon\xi\transduce{} \tau$.
This defines a mapping $\xi\mapsto \tau$ on $\cH$, which turns out to be unitary.
We call it the \emph{transduction action} of $S$ on $\cH$, denoted by $S\DownTransduce_\cH$.
See~\rf{fig:transducer} for a schematic depiction.
The motivation behind this terminology is that while $S$ does \emph{not} literally map $\xi$ into $\tau$, having $S$ is a legitimate and fruitful way of implementing $S\DownTransduce_\cH$ on a quantum computer, as we show shortly.
If a unitary $S$ is designed primarily with this application in mind, we call it a \emph{transducer}, and say that it \emph{implements} $S\DownTransduce_\cH$. 

\myfigure{\label{fig:transducer}}
{
Schematic depiction of transducers.  To the left is the real action of $S$, which is interpreted as the action of $S\DownTransduce_\cH$ on $\cH$.\\
Note that parallel wires here denote direct sum of the corresponding subspaces, not tensor product.  The same applies to the other figures in this paper.
}
{
\negbigskip
\[
\begin{tikzpicture}
    \draw (0,0) rectangle (1, 2) node[pos=0.5] {$S$};
    \draw[->,\witnesscolor, thick] (-1,0.5) node[above]{$v$} to (0,0.5);
    \draw[->,\xicolor, thick] (-1,1.5) node[above]{$\xi$} to (0,1.5);
    \draw[->,\witnesscolor, thick] (1,0.5) to (2,0.5) node[above]{$v$};
    \draw[->,\taucolor, thick] (1,1.5) to (2,1.5) node[above]{$\tau$};
    \draw (6,1) rectangle (7, 2) node[pos=0.5] {$S\DownTransduce_\cH$};
    \draw[->,\xicolor, thick] (5,1.5) node[above]{$\xi$} to (6,1.5);
    \draw[->,\taucolor, thick] (7,1.5) to (8,1.5) node[above]{$\tau$};
\end{tikzpicture}
\]
\negmedskip
}

Let us note that this construction is not new.
It has appeared before as an additive trace in the category of isometries in finite-dimensional Hilbert spaces~\cite{bartha:quantumTuringAutomata, AndresMartinez:phd}.
Here we demonstrate that this construction has an operational meaning. 
It would be interesting to understand the connection between our construction and the one taken in these two references.

We will stick to the following terminology.
We call $\cH$ the \emph{public} and $\cL$ the \emph{private} space of $S$.
We say that the transducer $S$ is \emph{on} the space $\cH$, but works \emph{in} the space $\cH\oplus \cL$.
Also, we will call $\xi$ the \emph{initial state} of $S$, while $\xi\oplus v$ is 
the \emph{initial coupling}.
We call $v$ the \emph{catalyst} of the transduction~\rf{eqn:1transduce} because it helps in the transformation of $\xi$ into $\tau$, but is not changed in the process.
The role of the catalyst is similar to the role of the witness in span programs.
In particular, the \emph{transduction complexity} of the transducer $S$ on an initial vector $\xi\in \cH$ is given by its size:
\begin{equation}
\label{eqn:introWitnessSize}
W(S,\xi) = \|v(S,\xi)\|^2.
\end{equation}

Let $T(S)$ denote the usual time complexity of implementing $S$ as a unitary.
As a rule of thumb, among various transducers $S$ with the same transduction action, there is a trade-off between $W$ and $T$ so that the product
\begin{equation}
\label{eqn:tradeoff}
(1+W(S,\xi)) \;\cdot\; T(S)
\end{equation}
stays approximately the same.
The importance of this product can be readily seen from the following result.
\begin{thm}[Implementation of Transducer]
\label{thm:introImplementation}
Let spaces $\cH, \cL$, and parameters $W, \eps>0$ be fixed.
There exists a quantum algorithm that $\eps$-approximately transforms $\xi$ into $S\DownTransduce_\cH\, \xi$ for all transducers $S\colon \cH\oplus \cL \to \cH\oplus \cL$ and initial states $\xi\in\cH$ such that $W(S,\xi)\le W$.
The algorithm conditionally executes $S$ as a black box $K = \OO(1+W/\eps^2)$ times, and makes $\OO(K)$ other elementary operations.
\end{thm}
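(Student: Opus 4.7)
The plan is to simulate the idealised action $\xi \mapsto \tau := S\DownTransduce_\cH\, \xi$ by running $S$ a total of $K = \Theta(1 + W/\eps^2)$ times within a carefully enlarged Hilbert space, exploiting the transducer identity
\[
S(\xi \oplus v) = \tau \oplus v, \qquad \|v\|^2 = W(S,\xi) \le W.
\]
Intuitively, had the algorithm been handed the catalyst $v$ for free, a single call to $S$ would produce $\tau \oplus v$ and we would be done. Since $v$ is unknown, we instead ``spread'' the catalyst across $K$ time steps, so that on each step the deviation from the ideal coupling $(\xi,v)$ has amplitude only $\OO(1/\sqrt{K})$. Summing across all steps, the total deviation is $\OO(\sqrt{W/K})$, which we make at most $\eps$ by choosing $K = \OO(W/\eps^2)$.

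Concretely, I would adjoin a time register $\cT = \operatorname{span}\{|0\rangle,\dots,|K\rangle\}$ to $\cH \oplus \cL$, start in $|0\rangle \otimes (\xi \oplus 0)$, and iterate a walk unitary $U$ that applies $S$ to $\cH \oplus \cL$ conditioned on the time register and then shifts the time register forward. The crux of the argument is to exhibit a history-state-like ``ideal trajectory'' $|\Phi^*\rangle \in \cT \otimes (\cH \oplus \cL)$ whose $\cL$-component spreads $v$ uniformly across all time slots (hence with per-slot amplitude $1/\sqrt{K+1}$), whose $|K\rangle$-slice is $\tau$ up to normalisation, and which is an (approximate) fixed point of $U$ by virtue of the transducer identity. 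Comparing $|\Phi^*\rangle$ to the actual trajectory $U^K (|0\rangle \otimes (\xi \oplus 0))$ step by step, one obtains the $\OO(\sqrt{W/K})$ bound; undoing the shifts and projecting onto the final time slot $|K\rangle$ (or, equivalently, measuring $\cT$ at the end) leaves $\tau$ on $\cH$ with error $\OO(\eps)$. The elementary operations are the $\OO(K)$ controlled shifts of $\cT$ and state preparations, giving the claimed complexities.

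The main obstacle I expect is the precise construction of $|\Phi^*\rangle$, because the naive guess $S^k(\xi \oplus v)$ does \emph{not} retain its $\cL$-component as $v$ once $k \ge 2$: the transducer identity applies only to the specific coupling $(\xi, v(S,\xi))$, not to $(\tau, v(S,\xi))$, since the catalyst associated with $\tau$ is generically different. The fix is to use the history-state encoding above, which packages the entire computation into a single $U$-invariant superposition and bypasses the need for per-step iterability of $S$ on the main register. Once $|\Phi^*\rangle$ is defined, verifying closeness to the actual trajectory and correctness of the $|K\rangle$-slice reduces to routine (though somewhat involved) algebra using unitarity of $S$ and of the shift on $\cT$, together with a telescoping application of the triangle inequality that turns the per-slot amplitude $1/\sqrt{K+1}$ and the catalyst norm $\sqrt{W}$ into the global $\OO(\sqrt{W/K})$ error.
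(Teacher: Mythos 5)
Your high-level plan---$K=\Theta(1+W/\eps^2)$ conditional applications of $S$ controlled on a time register, an error of order $\sqrt{W/K}$ coming from guessing and later discarding the catalyst, and a telescoping comparison between the real evolution and an idealised trajectory---is the right family of ideas and matches the paper's strategy in spirit. You have also correctly isolated the central difficulty: $S$ cannot simply be iterated on $\cH\oplus\cL$, because the catalyst identity holds only for the specific coupling $\xi\oplus v(S,\xi)$.

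However, the concrete construction you sketch does not resolve that difficulty, and as stated it fails in three places. First, you start from $|0\rangle\otimes(\xi\oplus 0)$, so the first conditional application of $S$ acts on the \emph{full} vector $\xi$, which requires the full catalyst $v$; the one-step perturbation is then $\Theta(\|v\|)=\Theta(\sqrt{W})$, not $\OO(\sqrt{W/K})$. The missing move is to first map $|0\rangle\mapsto\frac{1}{\sqrt{K}}\sum_{t=0}^{K-1}|t\rangle$, splitting $\xi$ into $K$ copies of $\xi/\sqrt{K}$, so that each conditional $S$ needs only the scaled-down catalyst $v/\sqrt{K}$, and guessing it costs only $\|v\|/\sqrt{K}$. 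Second, your ideal trajectory spreads $v$ over \emph{all} slots with per-slot amplitude $1/\sqrt{K+1}$, giving total $\cL$-mass $\|v\|^2=W$; such a component can be neither guessed nor discarded within error $\eps$, and the state is not even approximately normalised. In the correct construction a \emph{single} copy of $v/\sqrt{K}$ occupies one slot at a time and migrates forward, being reused $K$ times precisely because a catalyst is unchanged by $S$; its total mass $W/K=\OO(\eps^2)$ is what makes the two perturbations affordable. Relatedly, the trajectory is not a fixed point of the loop: slot $t$'s public part flips from $\xi/\sqrt{K}$ to $\tau/\sqrt{K}$ at step $t$. Third, if $\tau$ ends up only in the $|K\rangle$-slice, projecting or measuring the time register succeeds with probability $\OO(1/K)$, and amplifying this costs another $\sqrt{K}$ factor of $S$-calls, breaking the claimed complexity. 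Instead, all $K$ slots end holding $\tau/\sqrt{K}$, and the amplitude is recombined coherently by \emph{inverting} the initial uniform-superposition preparation. With these corrections your telescoping error analysis goes through exactly as you intend.
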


Since $S$ generally takes at least one elementary operation, the complexity of the algorithm is dominated by the executions of~$S$, which takes time~\rf{eqn:tradeoff} up to constant factors (assuming $\eps = \Theta(1)$).
The term 1 in the definition of $K$ is required as we can have non-trivial transducers with $W=0$, see, e.g., \rf{sec:introAlgorithm->Transducer}.
Also, as follows from the discussion in~\cite{belovs:LasVegas}, the dependence on $\eps$  is optimal.

\pfstart[Proof sketch of \rf{thm:introImplementation}]
We are given a copy of $\xi$, and our goal is to transform it into $\tau = S\DownTransduce_{\cH} \xi$ using $S$ as a black box.
Assume we are additionally given a copy of $v$ (in the sense of direct sum, \emph{cf.} \rf{fig:transducer}).  
Then, we can perform the required transformation $\xi \oplus v \mapsto \tau\oplus v$ using $S$.

There are two problems here.
First, the algorithm is not given $v$, and, second, $v$ can have a huge norm.
The second problem is solved by breaking $\xi$ down into $K$ copies of $\xi/\sqrt K$, that is, performing the transformation $\xi \mapsto \sum_{t=0}^{K-1} \ket |t>\ket|\xi>/\sqrt K$.
The key idea is that we can use the same scaled down catalyst $v/\sqrt{K}$ to perform $K$ scaled down transductions $\xi/\sqrt{K} \transduce{} \tau/\sqrt{K}$ as $v/\sqrt{K}$ does not change in the process.
See \rf{fig:pumping}, for an illustration.

The first problem is solved by ``guessing'' $v/\sqrt{K}$, i.e., using that $\xi$ is close to $\xi\oplus v/\sqrt{K}$ if $K$ is sufficiently large.
The larger the value of~$K$, the smaller the error imposed by guessing, but the larger the number of executions of $S$.
For a formal proof, see \rf{sec:implementation}.
\pfend

\myfigure{\label{fig:pumping}}
{
A graphical illustration of the construction of \rf{thm:introImplementation}.
The initial state $\xi$ is broken down into $K=4$ copies of $\xi/\sqrt{K}$, which are sequentially transformed into $\tau/\sqrt K$ using only one copy of the scaled-down catalyst $v/\sqrt{K}$.
}
{
\negbigskip
\def\witnessheight{3.5}
\def\inputheight{4.5}
\def\xiheight{6}
\tikzset{witness/.style={above, node contents={\scriptsize $\frac{v}{\sqrt{K}}$}}}
\newcommand{\OneIteration}[1]{
    \edef\indxx{#1}
    \begin{scope}[shift={(3*\indxx,0)}]
        \draw (0.5,3) rectangle (1.5, 5) node[pos=0.5] {\Large $S$};
        \draw (0.2,\inputheight+0.4) node[\xicolor] {\scriptsize $\frac{\xi}{\sqrt{K}}$} ;
        \draw (1.8,\inputheight+0.4) node[\taucolor] {\scriptsize $\frac{\tau}{\sqrt{K}}$} ;
        \draw[->, thick, \witnesscolor] (1.5,\witnessheight) to node[witness]{} (3.5,\witnessheight);
    \end{scope}
    \draw[->, thick, \xicolor] (-1, \xiheight)  .. controls (3*\indxx, \xiheight) and (3*\indxx-1, \inputheight) .. (3*\indxx+0.5,\inputheight);
    \draw[<-, thick, \taucolor] (12, \xiheight)  .. controls (3*\indxx+2, \xiheight) and (3*\indxx+3, \inputheight) .. (3*\indxx+1.5,\inputheight);
}
\[
\begin{tikzpicture}
\draw[thick, \xicolor] (-1.5,\xiheight) to node[above]{$\xi$} (-1,\xiheight);
\draw[thick, \taucolor] (12.5,\xiheight) to node[above]{$\tau$} (12,\xiheight);
\draw[->, thick, \witnesscolor] (-1.5,\witnessheight) to node[witness]{} (0.5,\witnessheight);
\OneIteration{0}
\OneIteration{1}
\OneIteration{2}
\OneIteration{3}
\end{tikzpicture}
\]
}

\subsection{Connection to Quantum Walks}
\label{sec:introWalks}

In this section, we take a short detour, and inspect the connection between transducers and quantum walks.
Like transducers, quantum walks~\cite{santha:walkBasedAlgorithms} replace the desired transformation with some other transformation that is easier to implement: one iteration of the quantum walk.
See \rf{fig:trans2walk} for an informal comparison between transducers and quantum walks, on which we will elaborate in this section.

\myfigure{\label{fig:trans2walk}}
{
An informal correspondence between transducers and quantum walks.
The main point of this comparison is to give some intuition about the roles of different objects from \rf{sec:introTransducers} by linking them to objects with similar functions in the context of quantum walks.
This comparison is purely indicative.
}
{
\negmedskip
\[
\begin{tabular}{r>{$\longleftrightarrow$}cl}
Transducer&&Quantum Walk\\
Execution of $S$&&One Iteration $R_2R_1$\\
Transformation $S\DownTransduce_{\cH}$ && Accept/reject of the initial state\\
$W(S,\xi)$ && Spectral Gap\\
\rf{thm:introImplementation} && Phase Estimation
\end{tabular}
\]
}

In this paper, we consider broadly interpreted discrete-time quantum walks.
We identify the two characteristic properties of such algorithms, where the first one is essential, and the second one is usual, but not, strictly speaking, necessary.

The first, essential property is that one iteration of a quantum walk is a product of two reflections $R_1$ and $R_2$.
The quantum walk either rejects 
the initial state $\xi$, when it is close to an eigenvalue-1 eigenvector of $R_2R_1$; 
or accepts it, when $\xi$ is mostly supported on eigenvectors with eigenvalues far from 1. 
The most standard implementation of quantum walks is a phase estimation of the product $R_2R_1$ on the initial state $\xi$.
The analysis of quantum walks involves spectral analysis, sometimes assisted by the effective spectral gap lemma~\cite{lee:stateConversion}.

The second, optional property is that each reflection, $R_1$ and $R_2$, is broken down as a product of local reflections that act on pairwise orthogonal subspaces.
This allows for their efficient implementation.
The walk is usually described using a bipartite graph (like in \rf{fig:walk}), where each edge corresponds to a portion of the space.
Local reflections are given by vertices, and they act on the direct sum of spaces corresponding to their incident edges.
The reflection $R_1$ executes all the local reflections for one part of the bipartite graph, and $R_2$ for the second. 
The two reflections, interleaving, transcend locality and form an involved global transformation.

Because of the second property, quantum walks find a large number of algorithmic applications.
This includes such basic primitives as Grover's algorithm~\cite{grover:search} and amplitude amplification~\cite{brassard:amplification}; as well as the element distinctness algorithm~\cite{ambainis:distinctness}, Szegedy quantum walks~\cite{szegedy:walk} and their various extensions, span programs~\cite{reichardt:advTight}, learning graphs~\cite{belovs:learning}, and others.

We will show that quantum walks are very often transducers of the following form.
Let $\cH$ and $\cL$ be the public and the private spaces, so that the initial state $\xi\in\cH$.
The transducer is the iteration of the walk: $S = R_2R_1$, where we additionally assume that the second reflection $R_2$ acts trivially on $\cH$.
If $\xi$ is negative, we have the following chain of transformations:
\begin{equation}
\label{eqn:QW_sequence_negative}
\xi \oplus v \maps{R_1} \xi \oplus v \maps{R_2} \xi \oplus v,
\end{equation}
certifying that $\xi\transduce{S}\xi$.
In the positive case, we have the following sequence of transformations:
\begin{equation}
\label{eqn:QW_sequence_positive}
\xi \oplus v \maps{R_1} -\xi \oplus -v \maps{R_2} -\xi \oplus v,
\end{equation}
certifying that $\xi\transduce{S}-\xi$.

Both sequences follow the standard practice of designing quantum walks.
In the negative case, $\xi\oplus v$ is a stationary vector of both $R_1$ and $R_2$, and, hence, $R_2R_1$.
In the positive case, $\xi\oplus v$ is the witness for the Effective Spectral Gap Lemma.
The transduction vantage point unites these asymmetric positive and negative analyses. 
An interesting artefact of this construction is that the transduction action of the corresponding quantum walk is exact: It transduces $\xi$  to either $\xi$ or $-\xi$ exactly.

We will illustrate this construction in more detail in \rf{sec:walks} by re-proving the main result of~\cite{belovs:electicityQuantumWalks} on electric quantum walks.
However, the same applies to any quantum walk that adheres to the same design principles, including algorithms derived from span programs \cite[Section 3.4]{belovs:phd}, and more generally, algorithms of the type formally defined in \cite[Section 3.1]{jeffery:kDist}.

To epitomise, we keep the iteration of the quantum walk intact, but replace the wrapping phase estimation by the algorithm of \rf{thm:introImplementation}.
In this way, we significantly simplify the construction by abandoning \emph{any} spectral analysis both from the implementation and the analysis of the algorithm.
We believe this is worthy from the pedagogical point of view, as the corresponding algorithms now require very little background knowledge.

\mycutecommand{\vw}{v^\circ} 
\mycutecommand{\vq}{v^\bullet} 
\mycutecommand{\Sw}{S^\circ} 

\mycutecommand{\cLw}{\cL^\circ}
\mycutecommand{\cLq}{\cL^\bullet}
\mycutecommand\cLt{\cL^\uparrow}

\subsection{Input Oracle and the Canonical Form}
\label{sec:introCanonical}
Our previous discussion in \rf{sec:introTransducers} did not consider the input oracle.
In this paper, we assume an approach similar to that of quantum query algorithms, where oracle executions and the remaining operations are separated.
Moreover, unlike the algorithms, it suffices to have one query for a transducer.

We define the \emph{canonical} form of a transducer $S=S(O)$ with the input oracle $O$ in \rf{fig:canonical}.
The private space $\cL = \cLw\oplus\cLq$ is decomposed into the work part $\cLw$ and the query part $\cLq$, with the imposed decomposition $v = \vw\oplus\vq$ of the catalyst $v$.
The query is the very first operation, and it acts only on $\vq$.
It is followed by a unitary $\Sw$ without queries.

\myfigure{\label{fig:canonical}}
{A schematic depiction of a transducer in the canonical form.
It consists of one application of the oracle $O$ and an input-independent unitary $\Sw$.
The catalyst $v\in\cL$ is separated into two parts $v=\vw\oplus \vq$ with $\vw\in\cLw$ and $\vq\in\cLq$. 
The first one is not processed by the oracle, and the second one is.
Note that the input oracle is not applied to the public space.
}
{
\negbigskip
\[
\begin{tikzpicture}[every path/.append style={thick,->}]
    \draw (2,0) rectangle (3, 3) node[pos=0.5] {$\Sw$};
    \draw (-0.2,0) rectangle (1.2, 1) node[pos=0.5] {$I\otimes O$};
    \draw[\nonquerycolor] (-1,0.5) node[above]{$\vq$} to (-0.2,0.5);
    \draw[\witnesscolor] (-1,1.5) node[above]{$\vw$} to (2,1.5);
    \draw[\xicolor] (-1,2.5) node[above]{$\xi$} to (2,2.5);
    \draw[\nonquerycolor] (3,0.5)  to (4,0.5) node[above]{$\vq$};
    \draw[\witnesscolor] (3,1.5) to (4,1.5) node[above]{$\vw$} ;
    \draw[\taucolor] (3,2.5) to (4,2.5) node[above]{$\tau$} ;
    \draw[\querycolor] (1.2,0.5) to (2,0.5);
\end{tikzpicture}
\]
\negmedskip
}

Canonical transducers are easier to deal with, and every transducer can be converted into the canonical form (see \rf{prp:canoning}).
We will generally assume our transducers are canonical.
We write $W(S, O, \xi)$ instead of $W\sA[S(O),\xi]$ for the transduction complexity $ W(S,O,\xi) = \|v\|^2 = \|\vw\|^2 + \|\vq\|^2$.
Also, the \emph{total query state} is defined by $q(S, O, \xi) = \vq$, and the \emph{query complexity} by $L(S, O, \xi) = \|\vq\|^2$.

This definition is compatible with the case when $O$ is combined of several input oracles like in~\rf{eqn:introMultipleOracles}.
In this case, we define the \emph{partial query state} $q^{(i)}(S, O, \xi)$ as the state processed by the oracle $O^{(i)}$, and $L^{(i)}(S, O, \xi) = \norm|q^{(i)}(S, O, \xi)|^2$.

The following result, which justifies the name ``query complexity'', is proven in \rf{sec:reducingOracle}:

\begin{thm}[Query-Optimal Implementation of Transducer]
\label{thm:introImplementationBetter}
Let spaces $\cH, \cL=\cLw\oplus \cLq$ be fixed.
Moreover, assume the transducer uses $r=\OO(1)$ input oracles combined as in~\rf{eqn:introMultipleOracles}.
Let $\eps, W, L^{(1)},\dots,L^{(r)}>0$ be parameters.
Then, there exists an algorithm that conditionally executes $\Sw$ as a black box $K=\OO(1+W/\eps^2)$ times, 
makes $\OO(L^{(i)}/\eps^2)$ queries to the $i$-th input oracle $O^{(i)}$, and uses $\OO(K)$ other elementary operations.
The algorithm $\eps$-approximately transforms $\xi$ into $\tau(S, O, \xi)$ for all $S$, $O^{(i)}$, and $\xi$ such that $W(S, O, \xi)\le W$ and $L^{(i)} (S, O, \xi) \le L^{(i)}$ for all $i$.
\end{thm}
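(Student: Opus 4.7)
The plan is to derive the theorem from \rf{thm:introImplementation} by exploiting the canonical decomposition $S = \Sw(I\otimes O)$ together with the split of the catalyst as $v = \vw \oplus \bigoplus_{i=1}^r v^{(i)}$, where $\|\vw\|^2 + \sum_i \|v^{(i)}\|^2 = W(S,O,\xi)$ and $\|v^{(i)}\|^2 = L^{(i)}(S,O,\xi)$. A direct application of \rf{thm:introImplementation} to $S$ gives $K = \OO(1 + W/\eps^2)$ uses of the whole black box $S$, which already yields the desired count of $K$ conditional applications of $\Sw$, but also makes $K$ queries to each oracle $O^{(i)}$. The content of the theorem is therefore to refine the pumping scheme so that each $O^{(i)}$ is called only $\OO(L^{(i)}/\eps^2)$ times rather than $K$ times.

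I would achieve this by decoupling the pumping rates of the work and query portions of the catalyst. In the pumping picture of \rf{fig:pumping}, the scaled work catalyst $\vw/\sqrt{K}$ has squared norm at most $\OO(\eps^2)$ and controls the error of the outer pumping; but the scaled query catalyst $v^{(i)}/\sqrt{K_i}$ can already be guessed as zero at the coarser rate $K_i = \OO(1 + L^{(i)}/\eps^2)$, which is typically much smaller than $K$. Concretely, I would nest two pumping loops: an outer loop of $K$ conditional applications of $\Sw$ that pumps $\vw$ exactly as in \rf{thm:introImplementation}, and, for each oracle $O^{(i)}$, an inner loop of $K_i$ refresh points where the scaled query catalyst $v^{(i)}/\sqrt{K_i}$ is processed once by $O^{(i)}$ and its image is then recycled across the $K/K_i$ intervening $\Sw$-iterations. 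This recycling is consistent because $\Sw$ is input-independent and acts as the identity on the oracle subspace between refreshes. The resulting organisation yields the claimed counts of $K$ uses of $\Sw$, $\OO(L^{(i)}/\eps^2)$ queries to $O^{(i)}$, and $\OO(K)$ further elementary operations arising from the control structure of the outer loop.

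The main obstacle is verifying that this decoupled scheme still $\eps$-approximately implements $S\DownTransduce_\cH$. Formally, one constructs an auxiliary transducer $\tilde S$ whose enlarged catalyst encodes the nested pumping registers, and one must check that $\tilde S\colon \xi \oplus \tilde v \mapsto \tau \oplus \tilde v$ holds up to error $\eps$. This reduces to the original catalytic identity for $S$ together with the fact that each guessed scaled query catalyst of squared norm $\OO(\eps^2)$ contributes only $\OO(\eps)$ to the total error; summing over the $r = \OO(1)$ oracles keeps the total at $\OO(\eps)$, and a mild rescaling of $\eps$ absorbs the constant. The quantitatively delicate part is a hybrid argument over the $K$ outer iterations, showing that the accumulated deviations from the idealised scaled catalyst remain controlled even though the query component is refreshed at a slower rate than the work component; this is the main technical content, and it relies crucially on the canonical separation between $\Sw$ and $O$ provided by \rf{fig:canonical}.
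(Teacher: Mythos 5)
Your overall strategy -- decouple the rate at which the input oracles are called from the rate at which $\Sw$ is applied, and pay for this with a larger guessed catalyst -- is the right one, and it is essentially the idea behind the paper's proof (\rf{thm:optimalImplementation} and \rf{cor:optimalImplementation}). However, the mechanism you describe for the query part does not work as stated. You propose to process $v^{(i)}/\sqrt{K_i}$ by one call to $O^{(i)}$ and then ``recycle'' its image across the $K/K_i$ intervening applications of $\Sw$, justifying this by the claim that $\Sw$ ``acts as the identity on the oracle subspace between refreshes.'' This is false: the defining catalytic identity of the canonical form is $\Sw\colon \xi\oplus\vw\oplus(I\otimes O)\vq \mapsto \tau\oplus\vw\oplus\vq$, so every application of $\Sw$ \emph{consumes} the queried copy of the catalyst and returns the un-queried one. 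Nothing in the canonical form makes $\Sw$ trivial on $\cLq$; indeed it must mix $\cH$, $\cLw$ and $\cLq$ for the transducer to be non-trivial. Moreover, the scaling must be uniform: $\Sw$ is applied once per iteration to the single vector $\frac{1}{\sqrt K}\bigl(\xi\oplus\vw\oplus(I\otimes O)\vq\bigr)$, so you cannot carry the query component at scale $1/\sqrt{K_i}$ while the rest is at scale $1/\sqrt{K}$.

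The fix, which is what the paper actually does (see \rf{fig:implementationBetter}), is to guess $D^{(i)} = K/K^{(i)}$ \emph{parallel copies} of $v^{(i)}/\sqrt{K}$, apply $O^{(i)}$ to all of them in a single batched query, and then feed one processed copy to each of the next $D^{(i)}$ applications of $\Sw$ (each of which returns an un-queried copy, so that after $D^{(i)}$ iterations the batch is ready to be re-queried). The price is that the guessed part of the catalyst now has squared norm $\frac{1}{K}\bigl(\|\vw\|^2 + \sum_i D^{(i)}\|v^{(i)}\|^2\bigr) = \frac{W}{K} + \sum_i \frac{L^{(i)}}{K^{(i)}}$ (up to the exact bookkeeping of \rf{eqn:optimalImplementationEstimate}), and the perturbation argument then forces $K^{(i)} = \Omega(L^{(i)}/\eps^2)$ -- this is precisely where the claimed query count comes from, not from guessing the query catalyst ``at a coarser scale.'' Choosing $K^{(i)} \approx K L^{(i)}/W$ equalises the work and query contributions so that the total error stays $\OO(\sqrt{rW/K}) = \OO(\eps)$. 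You would also need to handle the degenerate case $L^{(i)}/\eps^2 < 1$ separately (the paper drops such oracles entirely and absorbs the resulting perturbation into $\eps$), which your sketch does not address.
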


\pfstart[Proof Sketch]
Let us first consider the special case of $r=1$ for simplicity.
The crucial new idea compared to \rf{thm:introImplementation} is that we guess not one but some $D$ copies of the state $\vq$.
They are all processed by one oracle call, and then gradually given to $\Sw$, see \rf{fig:implementationBetter}.
Thus, the transduction complexity becomes $\|\vw\|^2 + D \|\vq\|^2$, but we have to execute the input oracle only once in every $D$ iterations.
The correct choice of $D$ is around $\|\vw\|^2/ \|\vq\|^2$, so that the norms of the query and the non-query parts of the catalyst become equalised.
If there are $r=\OO(1)$ input oracles, we perform the same procedure for all of them.
The total transduction complexity grows by a factor of $r$, which is tolerable by our assumption of $r=\OO(1)$.
\pfend

\myfigure{\label{fig:implementationBetter}}
{
A graphical illustration of the construction of \rf{thm:introImplementationBetter} with the same parameters as in \rf{fig:pumping}, and $D=2$.
We write $O$ instead of $I \otimes O$ to save space.
\\
Note that one oracle execution and $D$ subsequent executions of $\Sw$ form a transducer of its own.
}
{
\negbigskip
\def\queryheight{3}
\def\witnessheight{4}
\def\inputheight{4.5}
\def\xiheight{6}
\tikzset{witness/.style={above, node contents={\scriptsize $\frac{\vw}{\sqrt{K}}$}}}
\newcommand{\OneIteration}[1]{
    \edef\indxx{#1}
    \begin{scope}[shift={(3*\indxx,0)}]
        \draw (0.5,2.5) rectangle (1.5, 5) node[pos=0.5] {\Large $\Sw$};
        \draw[-] (0.2,\inputheight+0.4) node[\xicolor] {$\frac{\xi}{\sqrt{K}}$} ;
        \draw[-] (1.8,\inputheight+0.4) node[\taucolor] {$\frac{\tau}{\sqrt{K}}$} ;
        \draw[-] (0.2,\queryheight+0.4) node[\querycolor] {$\frac{O\vq}{\sqrt{K}}$} ;
        \draw[-] (1.8,\queryheight+0.4) node[\nonquerycolor] {$\frac{\vq}{\sqrt{K}}$} ;
        \draw[\witnesscolor] (1.5,\witnessheight) to node[witness]{} (3.5,\witnessheight);
    \end{scope}
    \draw[\xicolor] (-1, \xiheight)  .. controls (3*\indxx, \xiheight) and (3*\indxx-1, \inputheight) .. (3*\indxx+0.5,\inputheight);
    \draw[<-,\taucolor] (12, \xiheight)  .. controls (3*\indxx+2, \xiheight) and (3*\indxx+3, \inputheight) .. (3*\indxx+1.5,\inputheight);
}
\newcommand{\Oracle}[1]{
    \edef\indxx{#1}
    \begin{scope}[shift={(3*\indxx,0)}]
        \draw (-0.9,1) rectangle (-0.1, 3.5) node[pos=0.5] {\Large $O$};
        \draw [\querycolor](-0.1, 3) to (0.5, 3); 
        \draw [\querycolor](-0.1, 1.5) .. controls (1,1.5) and (2.5, 3) .. (3.5, 3);
        \draw [\nonquerycolor](4.5, 3) to (5.1, 3); 
        \draw [<-, \nonquerycolor](5.1, 1.5) .. controls (4,1.5) and (2.5, 3) .. (1.5, 3);
    \end{scope}
}
\[
\begin{tikzpicture}[every node/.style={font=\scriptsize}, every path/.append style={thick,->}]
\draw[-, \xicolor] (-1.5,\xiheight) to node[above]{\normalsize$\xi$} (-1,\xiheight);
\draw[-, \taucolor] (12.5,\xiheight) to node[above]{\normalsize$\tau$} (12,\xiheight);
\draw[\witnesscolor] (-1.5,\witnessheight) to node[witness]{} (0.5,\witnessheight);
\draw[\nonquerycolor] (-1.5,\queryheight) to node[above]{\scriptsize $\frac{\vq}{\sqrt{K}}$} (-0.9,\queryheight);
\draw[thick, \nonquerycolor] (-1.5,1.5) to node[above]{\scriptsize $\frac{\vq}{\sqrt{K}}$} (-0.9,1.5);
\draw[\nonquerycolor] (11, 3) to (12.5,3);
\draw[\nonquerycolor] (11, 1.5) to (12.5,1.5);
\OneIteration{0}
\OneIteration{1}
\OneIteration{2}
\OneIteration{3}
\Oracle{0}
\Oracle{2}
\end{tikzpicture}
\]
}

This theorem does not hold for superconstant values of $r$, in which case a more technical \rf{thm:optimalImplementation} should be used.
Note how canonicity of the transducer $S$ is used here.
Indeed, the input oracle is executed only once in each execution of the transducer, reducing the total number of oracle calls.

The definition of the canonical form is inspired by the implementation of the adversary bound for state conversion from~\cite{belovs:LasVegas}.
Moreover, as we show in \rf{sec:stateConversion}, the adversary bound is essentially equivalent to the above construction with $\cLw$ being empty.
Also, we show in \rf{sec:function} how to implement the usual adversary bound for function evaluation: \begin{thm}[Adversary Bound]
\label{thm:introAdv}
For every function $f\colon D\to [p]$ with $D\subseteq[q]^n$, there exists a canonical transducer $S_f$ with input oracle $O_x$ encoding the input string $x$, such that, for every $x\in D$, $S_f$ transduces $\ket |0>\transduce{} \ket |f(x)>$ on the input oracle $O_x$, and 
\[
W\sA[S_f, O_x, \ket|0>] =
L\sA[S_f, O_x, \ket|0>] \le \Adv(f),
\]
where $\Adv(f)$ is the adversary bound of $f$, defined in \rf{sec:function}.
\end{thm}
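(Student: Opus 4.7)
The plan is to lift an optimal feasible solution of the dual adversary SDP for $f$ directly into a canonical transducer whose query-part catalyst encodes the SDP witness vectors, with trivial work part. I would invoke the dual formulation of $\Adv(f)$: it supplies witness vectors $\{|w_{x,j}\rangle\}_{x\in D,\,j\in[n]}$ in some ambient Hilbert space $\mathcal{K}$ satisfying a feasibility identity of the shape
\[
\sum_{j}\Lambda_{x,y,j}\,\langle w_{x,j}|w_{y,j}\rangle \;=\; [f(x)\neq f(y)] \quad\text{for all } x,y\in D,
\]
where $\Lambda_{x,y,j}$ is the standard oracle-dependent scalar that vanishes when $x_j=y_j$, together with the objective bound $\sum_j\|w_{x,j}\|^2\le\Adv(f)$ for every $x$.

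I would then lay out $S_f$ in the form of \rf{fig:canonical}: public space $\cH=\mathrm{span}\{|b\rangle:b\in[p]\}$, trivial work part $\cLw=0$ (so $\vw=0$ and $W=L$ automatically), and query part $\cLq=\mathbb{C}^n\otimes\mathcal{A}\otimes\mathcal{K}$, where $\mathcal{A}$ is a small auxiliary register acted on by the standard $\pm 1$ phase oracle $O_x$. For each $x\in D$ I would take the candidate catalyst $\vq(x)=\sum_j|j\rangle\otimes|\phi_{x,j}\rangle\otimes|w_{x,j}\rangle$, with gadget states $|\phi_{x,j}\rangle\in\mathcal{A}$ chosen so that $\langle\vq(x)|O_x^{*}O_y|\vq(y)\rangle$ decomposes summand-wise as $\sum_j\Lambda_{x,y,j}\langle w_{x,j}|w_{y,j}\rangle$.

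The existence of a single unitary $\Sw$ performing $\Sw(|0\rangle\oplus O_x\vq(x))=|f(x)\rangle\oplus\vq(x)$ for every $x\in D$ simultaneously is equivalent to Gram-matrix equality between the source family $\{|0\rangle\oplus O_x\vq(x)\}_{x\in D}$ and the target family $\{|f(x)\rangle\oplus\vq(x)\}_{x\in D}$. A direct inner-product calculation reduces this to
\[
\langle\vq(x)|(I-O_x^{*}O_y)|\vq(y)\rangle \;=\; [f(x)\neq f(y)],
\]
which, by construction of $\vq(\cdot)$, is exactly the SDP feasibility identity already assumed. Extending the resulting partial isometry to a unitary on all of $\cH\oplus\cLq$ yields $\Sw$, completing the canonical transducer.

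The main obstacle is the ``oracle gadget'' design of $|\phi_{x,j}\rangle$: it must be tuned so that $I-O_x^{*}O_y$ extracts the ``$x_j\neq y_j$'' indicator per summand and nothing else, while keeping $\|\vq(x)\|^2$ equal (up to a constant absorbed into normalisation) to $\sum_j\|w_{x,j}\|^2$. This is routine adversary-bound bookkeeping rather than new content, and once in place the catalyst sizes immediately give $W(S_f,O_x,|0\rangle)=L(S_f,O_x,|0\rangle)=\|\vq(x)\|^2\le\Adv(f)$ for every $x\in D$, as required.
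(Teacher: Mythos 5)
Your proposal follows essentially the same route as the paper's proof in \rf{sec:function}: a feasible dual-adversary solution is placed directly into the query part of the catalyst with empty work space $\cLw$ (so $W=L$ automatically), and the work unitary $\Sw$ is obtained from the Gram-matrix equality between the post-query states and the target states, which reduces exactly to the SDP feasibility constraint. The only detail to be careful about is your ``constant absorbed into normalisation'': the paper makes this constant exactly $1$ by using the state-generating oracle with bidirectional access and the vectors $v^{\uparrow}_{x,i}=(u_{x,i}+v_{x,i})/2$ and $v^{\downarrow}_{x,i}=(u_{x,i}-v_{x,i})/2$, whose combined norm equals the objective $\tfrac12\sum_i(\|u_{x,i}\|^2+\|v_{x,i}\|^2)$ by the parallelogram identity, so the bound really is $\le\Adv(f)$ with no constant-factor loss.
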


We can draw the following parallels with span programs.
It is known that the dual adversary bound for Boolean functions is equivalent to a very special case of span programs~\cite{reichardt:spanPrograms}.
General span programs provide more flexibility, and thus are more suitable for time-efficient implementations~\cite{belovs:learningClaws, reichardt:formulae,jeffery:spanFormula,cornelissen:spanProgramsTime}. 
While it is possible to implement the dual adversary bound time-efficiently~\cite{belovs:gappedGroupTesting}, the constructions are more complicated.

Span programs sometimes model more general function evaluation~\cite{beigi:span-programs}; but the dual adversary has been extended much further to include arbitrary state conversion~\cite{lee:stateConversion} with general unitary input oracles~\cite{belovs:variations}.
Transducers treat state conversion with unitary input oracles very naturally.
Adding the non-query space $\cLw$ provides more flexibility compared to the dual adversary, which again is beneficial for time-efficient implementations.
It is also of great help that transducers are quantum algorithms themselves, which makes time analysis especially straightforward.

Summarising, there are \emph{three} basic complexity measures associated with a transducer:
\begin{itemize}
\item Time complexity $T(S)$, which is independent of the input oracle $O$ and the initial state.  Similarly to usual quantum algorithms, the precise value of $T(S)$ depends on the chosen model of quantum computation.
\item Transduction complexity $W(S,O,\xi)$.  It is defined mathematically, and does not depend on the model.
On the other hand, it depends on both the input oracle and the initial state.
\item Query complexity $L(S,O,\xi)$.  
It is also defined mathematically, and does not depend on the model.
The query state $q(S,O,\xi)$ provides more information.
\end{itemize}

Let us finish this section with a few technicalities.
Similarly to~\rf{eqn:queryStateExtended} and~\rf{eqn:LasVegasExtended}, we extend the above definitions to $\xi' = \xi_1\oplus\cdots\oplus \xi_m\in\cE\otimes \cH$ via
\begin{equation}
\label{eqn:transductionExtended}
q(S,O,\xi') = \bigoplus_t q(S, O,\xi_t),\quad
L(S,O,\xi') = \sum_t L(S, O,\xi_t),
\quad \text{and}\quad
W(S, O, \xi') = \sum_t W(S, O,\xi_t).
\end{equation}
Again, these can be interpreted as the complexities of the transducer $I_\cE\otimes S$, see \rf{cor:byIdentity}.

If for $\xi\in\cH$ we use a catalyst $v$, for $c\xi$ with $c\in\bC$, we can use the catalyst $cv$.
This yields
\begin{equation}
\label{eqn:introLinearity}
W(S, O, c\xi) = |c|^2 W(S, O, \xi),
\quad
q^{(i)} (S, O, c\xi) = c\, q^{(i)} (S, O, \xi),
\quad
L^{(i)} (S, O, c\xi) = |c|^2 L^{(i)} (S, O, \xi).
\end{equation} 

\mycutecommand{\Hgood}{\cH_{\mathrm{good}}}
\mycutecommand{\Wmax}{W_{\mathrm{max}}}
\mycutecommand{\Lmax}{L_{\mathrm{max}}}
It often makes sense to define the subspace $\Hgood\subseteq \cH$ of admissible initial vectors to the transducer $S(O)$,
and define $\Wmax(S, O)$ as the supremum of $W(S, O, \xi)$ as $\xi$ ranges over \emph{unit} vectors in $\Hgood$.
We define $\Lmax$ and $\Lmax^{(i)}$ similarly.
We say that the initial state $\xi'\in \cE\otimes \cH$ is admissible if it lies in $\cE\otimes \Hgood$.
For any such state, by~\rf{eqn:transductionExtended} and~\rf{eqn:introLinearity}, we have
\begin{equation}
\label{eqn:admissibleQuery}
W(S, O, \xi') \le \Wmax(S, O) \|\xi'\|^2
\qqand
L(S, O, \xi') \le \Lmax(S, O) \|\xi'\|^2.
\end{equation}

\subsection{Transducers from Quantum Algorithms}
\label{sec:introAlgorithm->Transducer}
In this section, we briefly explain how we achieve one of the points on our agenda: conversion of arbitrary quantum algorithms into transducers.
We consider both the QRAG model mentioned at the end of \rf{sec:introPrior}, and the usual circuit model.
While the stronger QRAG model allows for better and more intuitive exposition, we are still able to get some useful results in the circuit model.

Let
\begin{equation}
\label{eqn:introAlgorithm}
A(O) = G_TG_{T-1}\cdots G_2 G_1
\end{equation}
be a quantum algorithm, where $G_i$ are individual elementary operations (gates), which also include queries to the input oracle.
We call the mapping $i\mapsto G_i$ the \emph{description} of $A$.
The number of elementary operations $T = T(A)$ is the time complexity of the algorithm.

\paragraph{Trivial Transducer}
On the one extreme of the trade-off~\rf{eqn:tradeoff} is the trivial transducer $S=A$.
In this case, the catalyst $v=0$, hence, $W(S,O,\xi) = 0$, and we get that~\rf{eqn:tradeoff} equals $T(A)$, as expected.
Of course, this does not require any change of the model.

\paragraph{QRAG Transducer}
The other extreme of the trade-off~\rf{eqn:tradeoff} is covered by the following construction, which assumes the QRAG model.
Write the sequence of states the algorithm~\rf{eqn:introAlgorithm} goes through on the initial state $\xi$:
\begin{equation}
\label{eqn:introSequenceOfStates}
\xi = \psi_0 \maps{G_1} \psi_1 \maps{G_2} \psi_2 \maps{G_3}\cdots \maps {G_T} \psi_{T} = \tau.
\end{equation}
We utilize the following history state:
\begin{equation}
\label{eqn:introHistoryState}
v = \sum_{t=1}^{T-1} \ket |t> \ket |\psi_t>.
\end{equation}
And define the transducer $S_A$ as follows:
\begin{equation}
\label{eqn:introProgram->Transducer}
\xi \oplus v = \sum_{t=0}^{T-1} \ket |t> \ket |\psi_t> \maps{S_A} \sum_{t=1}^{T} \ket |t> \ket |\psi_t> = \tau \oplus v,
\end{equation}
where, in $S_A$, we first apply $G_t$ conditioned on the first register having value $t$, and then increment $t$ by one.
The first operation is possible assuming the QRAG model and QRAM access to the description of $A$, see \rf{sec:QRAG}.
The transduction action of $S_A(O)$ is exactly $A(O)$, and $W(S_A,O,\xi) = (T-1)\|\xi\|^2$, which we will simplify to $T\|\xi\|^2$ for brevity.

\mycutecommand{\timeR}{T_{\mathrm R}}

\paragraph{Comments on Time Complexity}
But what is the time complexity $T(S_A)$?
In $S_A$, we perform two operations: increment a word-sided register and execute one instruction from the program specified by the address $t$.
If this were a modern randomised computer, both operations would be elementary and took $\OO(1)$ time.
From the theoretical side, this is captured by the notion of RAM machine.
If we consider the scale of individual qubits instead of word-sided registers, the increment operation takes time $\OO(\log T)$, and the second operation at least as much.
In order to simplify the following discussion, let us assume that both operations take some time we denote $\timeR$.  That is
\begin{equation}
\label{eqn:timeR}
\text{$\timeR$: time required to perform basic word operations, including random access.}
\end{equation}
For simplicity, we do not discriminate between different word operations.
We may think of $\timeR$ as being 1, or $\OO(\log T)$, but either way, it is some fixed factor which denotes transition from the circuit model, where $A$ operates,%
\footnote{It is the usual assumption that $A$ is a bona fide quantum circuit, but we may also assume that $A$ uses QRAG gates.}
to the QRAG model, where $S_A$ is implemented.

\paragraph{Canonical Form}
Neither the first, nor the second transducer above are in the canonical form.
The latter is given by the following result, which we prove in \rf{sec:programs->transducers}.

\begin{thm}
\label{thm:introProg->Transducer}
For a quantum program $A$ on the input oracle $O$, there exists a canonical transducer $S_A = S_A(O)$, whose transduction action is identical to $A$, and whose complexity is given by \rf{fig:complexityTable}.
In the QRAG model, we assume QRAM access to the description of the program $A$.
\end{thm}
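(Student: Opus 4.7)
Proof sketch:

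The plan is to promote the non-canonical QRAG transducer of \rf{eqn:introProgram->Transducer} to canonical form by batching all queries of $A$ into a single parallel application of $O$. Write $\xi=\psi_0\mapsto\psi_1\mapsto\dots\mapsto\psi_T=\tau$ for the trajectory of $A$ on $\xi$, and let $Q\subseteq\{1,\dots,T\}$ be the set of time-steps where $G_t=O$. Split the private space as $\cL=\cLw\oplus\cLq$, with $\cLw$ spanned by the non-query slots $\{|t\rangle\otimes\cH : t\in\{1,\dots,T-1\}\setminus Q\}$ and $\cLq$ by the query slots $\{|t\rangle\otimes\cH : t\in Q\}$, and choose the catalyst
\[
\vw=\sum_{t\in\{1,\dots,T-1\}\setminus Q}|t\rangle|\psi_t\rangle,\qquad \vq=\sum_{t\in Q}|t\rangle|\psi_{t-1}\rangle.
\]
Thus non-query slots hold the actual intermediate states, while query slots hold the \emph{pre-query} states that $O$ will turn into the corresponding $\psi_t$.

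Set $S_A=\Sw\circ(I\otimes O)$, where $I\otimes O$ acts on $\cLq$ only, and $\Sw$ is the input-independent unitary that, controlled on the time register value $t$, applies $G_{t+1}$ when $t+1\notin Q$ and the identity when $t+1\in Q$, and then increments $t\mapsto t+1\pmod{T+1}$. After the oracle the combined state on $\cH\oplus\cLw\oplus\cLq$ is $\sum_{t=0}^{T-1}|t\rangle|\psi_t\rangle$, and one pass of $\Sw$ carries it to $\sum_{t=1}^{T}|t\rangle|\phi_t\rangle$ with $\phi_t=\psi_t$ for $t\notin Q$ and $\phi_t=\psi_{t-1}$ for $t\in Q$. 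Assuming without loss of generality that $T\notin Q$ (by padding with a trailing identity gate), this equals $\tau\oplus\vw\oplus\vq$, so $S_A$ is canonical and $\xi\transduce{S_A}\tau=A\xi$. Since each $\psi_t$ is a unitary image of $\xi$, we read off $W(S_A,O,\xi)=(T-1)\|\xi\|^2$ and $L(S_A,O,\xi)=Q(A)\|\xi\|^2$; in the QRAG model with QRAM access to $t\mapsto G_t$, one pass of $\Sw$ is a single QRAM-controlled gate plus an integer increment, giving $T(S_A)=\OO(\timeR)$.

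For the circuit model, where QRAM is unavailable, I use a leaner variant: set $\cLw=0$ and index $\cLq$ by the query number $i\in\{1,\dots,Q(A)\}$, storing the pre-$i$-th-query states $\eta_i$ only. Then $\Sw$ is a hard-wired circuit that runs the non-query blocks $U_0,U_1,\dots,U_{Q(A)}$ of $A$ directly on $\cH$, and immediately before where the $i$-th query would appear it swaps the content of $\cH$ with the $i$-th slot of $\cLq$: this inserts the already-queried $O\eta_i$ into $\cH$ for continuation while simultaneously refilling that slot of $\cLq$ with $\eta_i$ from $\cH$. After the final $U_{Q(A)}$, a telescoping argument along the query chain $\xi\to\eta_1\to\dots\to\eta_{Q(A)}$ shows $\cH$ holds $\tau$ and $\cLq$ holds $\vq$ exactly, giving $W=L=Q(A)\|\xi\|^2$ and $T(S_A)=\OO(T(A))$. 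The main subtlety in both constructions is that $\Sw$ must be input-independent — satisfied because queries are handled either by no-ops (QRAG) or by swaps against pre-queried slots (circuit) — and that the catalyst must be genuinely restored, which follows from the cyclic-shift telescoping in the QRAG case and the swap telescoping in the circuit case.
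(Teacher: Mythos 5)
Your overall architecture is the same as the paper's: in the QRAG model, a history-state catalyst with a single batched up-front query, a Select-controlled gate, and an increment (the paper's \rf{thm:program->transducerQRAG}); in the circuit model, running the non-query blocks directly on $\cH$ and swapping against pre-queried catalyst slots, which is exactly the paper's oracle gadget (\rf{prp:oracle}) composed sequentially via \rf{prp:sequentialsequential}. The time-complexity row and the QRAG transduction-complexity row of \rf{fig:complexityTable} come out correctly in your construction.

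There is, however, a genuine gap in the query accounting, in both models. Queries in this paper are \emph{conditional}: $\tO = \Iw\oplus I\otimes O$ applies $O$ only to the component of the state lying in $\cHq$ (flagged by the register $\cR$), and the theorem claims $q(S_A,O,\xi)=q(A,O,\xi)=\bigoplus_t\psi_t^\bullet$, hence $L(S_A,O,\xi)=L(A,O,\xi)$ and, in the circuit model, $W(S_A,O,\xi)=L(A,O,\xi)$. Your query slots hold the \emph{full} pre-query states ($\psi_{t-1}$, resp.\ $\eta_i$), so you obtain $L(S_A,O,\xi)=Q(A)\|\xi\|^2$ and, in the circuit model, also $W(S_A,O,\xi)=Q(A)\|\xi\|^2$ --- as you state explicitly. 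That is strictly weaker than the table whenever $L(A,O,\xi)\ll Q(A)\|\xi\|^2$, which is exactly the regime that makes \rf{thm:introQueryCompression} and the thrifty composition theorems nontrivial. The repair is to place only the $\cHq$-components $\psi_t^\bullet$ into $\cLq$ and to swap only those; doing so surfaces a second point your sketch elides: a canonical query must be an application of $O$ controlled by a fixed register value, not by membership of the time index in $Q$, and the non-query catalyst slots generically also carry $\cR=1$ components that must be shielded from the batched query. The paper resolves both issues at once with the auxiliary flag qubit and the controlled-copy operation $D$ of \rf{sec:assumptions} and \rf{thm:program->transducerQRAG}, so that precisely the to-be-queried components are exposed to the oracle. (A minor additional slip: incrementing modulo $T+1$ leaves $\tau$ in the slot labelled $T$ rather than in the public slot labelled $0$; the paper increments modulo the number of slots and marks the public space by the privacy qubit.)
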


\myfigure{\label{fig:complexityTable}}
{
Complexity of canonical transducers derived from a quantum algorithm in terms of complexities of the algorithm.  Both the circuit and the QRAG model versions are considered.
}
{
    \[
    \begin{tabular}{rl|cc|}
    &&Circuit Model&QRAG model\\\hline
    Time& $T(S_A)$& $\OO(T(A))$ & $\OO(\timeR)$ \\
    Transduction &$W(S_A,O,\xi)$ & $L(A,O,\xi)$ & $T(A)\|\xi\|^2$\\
    Query state & $q(S_A,O,\xi)$ & $q(A, O,\xi)$ & $q(A, O,\xi)$\\
    \hline
    \end{tabular}
    \]
}

\pfsketch
The constructions in both circuit and the QRAG models are already sketched above.
We describe here how they can both be transformed into the canonical form so that the query state $\vq$ from~\rf{fig:canonical} becomes equal to the total query state $q(A, O, \xi)$ of the algorithm.

For the trivial transducer, the catalyst is the total query state, which is all processed by one query to the oracle.
Whenever the algorithm was making a query, the transducer switches the query state with the corresponding part of the catalyst, thus simulating a query.

For the QRAG transducer, the catalyst~\rf{eqn:introHistoryState} already contains of all the intermediate states of the program.
The transducer can then apply the oracle to all of them in one go, and then proceed as before without making a single additional query.
\pfend

\paragraph{Query Compression}
One consequence of the above results is that we can compress the number of queries of a quantum algorithm $A$ to match its worst-case Las Vegas query complexity.
The following result is an immediate corollary of Theorems~\ref{thm:introProg->Transducer} and~\ref{thm:introImplementationBetter}.

\begin{thm}[Query Compression]
\label{thm:introQueryCompression}
Assume $A=A(O)$ is a quantum algorithm with $r=\OO(1)$ input oracles as in~\rf{eqn:introMultipleOracles}.
Let $\eps, L^{(1)},\dots, L^{(r)}>0$ be parameters.
There exists a quantum algorithm $A'=A'(O)$ with the following properties:
\begin{itemize}\itemsep=0pt
\item It makes $\OO(L^{(i)}/\eps^2)$ queries to the $i$-th input oracle $O^{(i)}$.
\item For every normalised initial state $\xi$ and every input oracle $O=O^{(1)}\oplus O^{(2)}\oplus \cdots \oplus O^{(r)}$ as in~\rf{eqn:introMultipleOracles}, we have 
$\norm|A(O)\xi - A'(O)\xi| \le \eps$ as long as $L^{(i)}(A, O,\xi)\le L^{(i)}$ for all $i$.
\item In the QRAG model and assuming QRAM access to the description of $A$, its time complexity is $\OO(\timeR\cdot T(A)/\eps^2)$.
\item In the circuit model, its time complexity is $\OO(L\cdot T(A)/\eps^2)$, where 
$L = 1 + L^{(1)}+\cdots+ L^{(r)}$.
\end{itemize}
\end{thm}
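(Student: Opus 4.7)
The plan is to chain Theorems~\ref{thm:introProg->Transducer} and~\ref{thm:introImplementationBetter} directly, with each model of computation handled as a separate instantiation of the same two-step recipe: first convert $A$ into a canonical transducer $S_A$, then feed $S_A$ into the transducer-implementation algorithm of \rf{thm:introImplementationBetter}. Since \rf{thm:introProg->Transducer} guarantees that the transduction action of $S_A$ equals $A$, the output of the implementation will be $\eps$-close to $A(O)\xi$, giving the error bound $\norm|A(O)\xi - A'(O)\xi|\le \eps$ claimed in the statement.

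For the QRAG model, I would take $S_A$ as in the second column of \rf{fig:complexityTable}, so that $T(S_A)=\OO(\timeR)$, $W(S_A,O,\xi)=T(A)\|\xi\|^2\le T(A)$ for unit $\xi$, and $q(S_A,O,\xi)=q(A,O,\xi)$, which in particular gives $L^{(i)}(S_A,O,\xi)\le L^{(i)}$. Applying \rf{thm:introImplementationBetter} with $W=T(A)$ yields $K=\OO(1+T(A)/\eps^2)$ conditional executions of $\Sw$, $\OO(L^{(i)}/\eps^2)$ queries to $O^{(i)}$, and $\OO(K)$ further elementary operations. Multiplying the $K$ calls to $\Sw$ by the per-call cost $\OO(\timeR)$ gives the target running time $\OO(\timeR\cdot T(A)/\eps^2)$.

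For the circuit-model bound, I would instead use the first column of \rf{fig:complexityTable}, so $T(S_A)=\OO(T(A))$ and $W(S_A,O,\xi)=L(A,O,\xi)$. Since $L(A,O,\xi)=\sum_i L^{(i)}(A,O,\xi)\le \sum_i L^{(i)}\le L$, I can invoke \rf{thm:introImplementationBetter} with $W=L$, yielding $K=\OO(1+L/\eps^2)$ executions of $\Sw$, each costing $\OO(T(A))$ time, together with $\OO(L^{(i)}/\eps^2)$ queries to each $O^{(i)}$; the overall time simplifies to $\OO(L\cdot T(A)/\eps^2)$, matching the statement. The ``$+1$'' in the definition of $L$ exactly absorbs the constant term in $K$.

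The main obstacle is essentially bookkeeping rather than mathematics: one must check that \rf{thm:introImplementationBetter} is being applied in a regime where its hypotheses are satisfied (in particular that the number of input oracles $r=\OO(1)$ is consistent, and that bounds on $W$ and $L^{(i)}$ hold on admissible inputs), and that the per-call cost of $\Sw$ is counted correctly in each model. A subtle point worth flagging in the full proof is that the query state produced by $S_A$ in \rf{fig:complexityTable} coincides with the total query state $q(A,O,\xi)$ of $A$ itself, so the promised bounds on the $L^{(i)}(A,O,\xi)$ transfer verbatim to the transducer, which is what allows the query count of $A'$ to depend only on the Las Vegas query complexities of $A$ and not on its worst-case query count.
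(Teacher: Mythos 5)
Your proposal is correct and follows exactly the same route as the paper's own proof: convert $A$ into the canonical transducer $S_A$ of \rf{thm:introProg->Transducer}, observe that $q(S_A,O,\xi)=q(A,O,\xi)$ so the Las Vegas bounds transfer, and then apply \rf{thm:introImplementationBetter} with $W=T(A)$ (QRAG) or $W=L$ (circuit), handling the additive $1$ in $K$ via $T(A)\ge 1$ and the explicit $+1$ in $L$, respectively. No gaps.
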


This improves over the analogous result from~\cite{belovs:LasVegas} in two respects.
First, it essentially preserves the time complexity of the algorithm in the QRAG model, while~\cite{belovs:LasVegas} did not consider time complexity at all.
Second, it allows multiple input oracles, as long as its number is bounded by a constant, while~\cite{belovs:LasVegas} only allowed for a single input oracle.

\pfstart[Proof of \rf{thm:introQueryCompression}]
First, obtain the transducer $S_A$ as in \rf{thm:introProg->Transducer}.
Then, apply \rf{thm:introImplementationBetter} to $S_A$.

One key observation is that $q(S_A, O,\xi) = q(A, O,\xi)$ for all $O$ and $\xi$.
Hence, $S_A$ and $A$ have the same partial Las Vegas query complexities on all $O$ and $\xi$.
This yields the statement on the number of queries bounded by $L^{(i)}/\eps^2$.

In the QRAG model, $W(S_A, O, \xi) = T(A)$ and $T(S_A) = \OO(\timeR)$.
This gives the required runtime, as we can use that $T(A)\ge 1$ to remove the additive 1 factor.

In the circuit model, we use that $T(S_A) = \OO\sA[T(A)]$ and
\[
W(S_A, O, \xi) = L(A, O, \xi) = \sum_i L^{(i)} (A, O, \xi) \le \sum_i L^{(i)}.
\]
We take care of the additive 1 factor by adding it explicitly to $L$.
This covers the extreme case of $L$ being too small, in particular, 0.
\pfend

\subsection{Composition of Transducers}
\label{sec:introComposition}
Transducers can be composed just like usual quantum algorithms: we consider parallel, sequential, and functional composition of transducers.
Thus, from the design point of view, there is little difference between dealing with quantum algorithms and transducers.
The advantage is that in all these composition modes, the resources are more tightly accounted for than is the case for traditional quantum algorithms.

We will first sketch the composition results for transducers.
We will completely omit the case of sequential composition from this section.
It suffices to say, that it is very similar to the transformation of programs in \rf{sec:introAlgorithm->Transducer}.
After that, we will give few applications both in the circuit and the QRAG models.
Unlike other subsections of this section, we will be able to give complete proofs for most of the results.

\paragraph{Composition of Transducers}
The parallel composition of transducers is straightforward.  They are just implemented in parallel as usual quantum algorithms.
We have the following relations, akin to~\rf{eqn:randomParallelComplexity} and~\rf{eqn:quantumParallel}:
\begin{equation}
\label{eqn:introParallel}
W\sB[\bigoplus_i S_i, O, \bigoplus_i \xi_i] = \sum_i W(S_i, O, \xi_i)
\qqand
q\sB[\bigoplus_i S_i, O, \bigoplus_i \xi_i] = \bigoplus_i q(S_i, O, \xi_i).
\end{equation}
For the time complexity of implementing $\bigoplus_i \Sw_i$, we can say precisely as much as for usual quantum algorithms.
In some cases it is easy: when all $\Sw_i$ are equal, for example.
Also, it can be efficiently implemented assuming the QRAG model, see \rf{cor:selectProgram}.
In general, however, direct sum in the circuit model can take as much time as the total complexity of all $\Sw_i$ together.

The functional composition of transducers parallels \rf{fig:composition}, where we replace the programs $A$ and $B$ with transducers $S_A$ and $S_B$, respectively.
The functional composition of the two is a transducer $S_A\circ S_B$ whose transduction action on the oracle $O$ is equal to the transduction action of $S_A$ on the oracle $O\oplus O'$, where $O'$ is the transduction action of $S_B(O)$.
The following result, proven in~\rf{sec:functional}, parallels~\rf{eqn:quantumComposition}.

\begin{prp}[Functional Composition of Transducers]
\label{prp:introFunctional}
The functional composition $S_A\circ S_B$ can be implemented in the following complexity, where we use the extended notion of complexity from~\rf{eqn:transductionExtended}.
\begin{itemize}\itemsep=0pt
\item Its transduction complexity satisfies
\begin{equation}
\label{eqn:introFunctionalTransduction}
W(S_A\circ S_B, O, \xi) = W(S_A, O\oplus O', \xi) + W\sB[ S_B, O, q^{(1)}(S_A, O\oplus O', \xi)].
\end{equation}
\item Its total query state is
\begin{equation}
\label{eqn:introFunctionalQuery}
q(S_A\circ S_B, O, \xi) = q^{(0)}(S_A, O\oplus O', \xi) \oplus q\sB[ S_B, O, q^{(1)}(S_A, O\oplus O', \xi)].
\end{equation}
\item Its time complexity is the sum of the (conditional) time complexities of $S_A$ and $S_B$.
\end{itemize}
Here $q^{(0)}$ and $q^{(1)}$ denote the partial query states of $S_A$ to the oracles $O$ and $O'$, respectively.
\end{prp}

One can see that~\rf{eqn:introFunctionalTransduction} and~\rf{eqn:introFunctionalQuery} meet the form of the ``gold standard'' of~\rf{eqn:randomComposition2}.
The second one strongly resembles~\rf{eqn:quantumComposition}.
One unfortunate deviation from this is the time complexity, which afterwards gets multiplied by the transduction complexity in~\rf{eqn:tradeoff}.
But since the dependence is additive, we can generally tolerate it, see, e.g., \rf{thm:introIterated} below.

Let us state, for the ease of future referencing, a number of simple consequences of \rf{prp:introFunctional} similar to~\rf{eqn:quantumCompositionMultiple}--\rf{eqn:quantumCompositionMultiple2}.
First, if all the queries to $S_B$ are admissible, we have by~\rf{eqn:introFunctionalTransduction} and~\rf{eqn:admissibleQuery}:
\begin{equation}
\label{eqn:compositionTransductionUpper}
W(S_A\circ S_B, O, \xi) \le W(S_A, O\oplus O', \xi) + \Wmax(S_B, O) L^{(1)}\sA[S_A, O\oplus O', \xi].
\end{equation}
Also, from~\rf{eqn:introFunctionalQuery}:
\begin{align}
\label{eqn:compositionLasVegas}
L(S_A\circ S_B, O, \xi) 
&= L^{(0)}(S_A, O\oplus O', \xi) + L\sB[ S_B, O, q^{(1)}(S_A, O\oplus O', \xi)]\\
\label{eqn:compositionLasVegasUpper}
&\le L^{(0)}(S_A, O\oplus O', \xi) + \Lmax(S_B, O) L^{(1)}\sA[S_A, O\oplus O', \xi].
\end{align}
In the case of multiple input oracles, similar to \rf{fig:compositionMultiple}, with $O' = \bigoplus_i O^{(i)}$ and $S_B = \bigoplus S_{B_ i}$, so that $S_{B_i}$ implements $O^{(i)}$, we have by~\rf{eqn:introParallel}:
\begin{align}
\label{eqn:compositionTransductionMultiple}
W(S_A\circ S_B, O, \xi) 
&= W(S_A, O\oplus O', \xi) + \sum_i W\sB[ S_{B_i}, O, q^{(i)}(S_A, O\oplus O', \xi)]\\
\label{eqn:compositionTransductionMultipleUpper}
&\le W(S_A, O\oplus O', \xi) + \sum_i \Wmax( S_{B_i}, O) L^{(i)}\sA[S_A, O\oplus O', \xi],\\
\label{eqn:compositionQueryMultiple}
q(S_A\circ S_B, O, \xi) 
&= q^{(0)}(S_A, O\oplus O', \xi) \oplus \bigoplus_i q\sB[ S_{B_i}, O, q^{(i)}(S_A, O\oplus O', \xi)]
\end{align}
and
\begin{align}
\label{eqn:compositionLasVegasMultiple}
L(S_A\circ S_B, O, \xi) 
&= L^{(0)}(S_A, O\oplus O', \xi) + \sum_i L\sB[ S_{B_i}, O, q^{(i)}(S_A, O\oplus O', \xi)]\\
\label{eqn:compositionLasVegasMultipleUpper}
&\le L^{(0)}(S_A, O\oplus O', \xi) + \sum_i \Lmax( S_{B_i}, O) L^{(i)}\sA[S_A, O\oplus O', \xi].
\end{align}
In~\rf{eqn:compositionTransductionMultipleUpper} and~\rf{eqn:compositionLasVegasMultipleUpper}, we used~\rf{eqn:admissibleQuery} in assumption that all the queries to $S_B$ and $S_{B_i}$ are admissible.

\paragraph{Composition of Programs}
Let us give a few examples of composition of quantum programs using transducers.
We focus on the general state conversion here, evaluation of function postponed till \rf{sec:introPurifier}.
We consider both the circuit and the QRAG models.

\begin{thm}[Composition of Programs, circuit model]
\label{thm:introCompositionCircuit}
Assume the settings of \rf{fig:compositionMultiple}, where $r=\OO(1)$ and all $A$ and $B_i$ are in the circuit model.
Define $B = \bigoplus_i B_i$, and let $\eps, L^{(1)},\dots, L^{(r)}>0$ be parameters.

Then, there exists a quantum algorithm $A' = A'(O)$ such that $\normA|A'(O)\xi - A\sA[O\oplus B(O)]\xi|\le \eps$ for every normalised $\xi$ as long as $L^{(i)}(A, O\oplus B(O), \xi)\le L^{(i)}$ for all $i\ge 1$.
The program $A'$ can be implemented in the circuit model in time
\begin{equation}
\label{eqn:introCompositionCircuit}
\OO\s[\frac{L\cdot T(A) + \sum_i T\s[B_i] L^{(i)}}{\eps^2}],
\end{equation}
where $L = 1+L^{(1)}+\cdots+L^{(r)}$.
\end{thm}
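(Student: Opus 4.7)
The plan is to lift each program to the transducer world, functionally compose, and implement the result.

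First, I would apply \rf{thm:introProg->Transducer} in the circuit model to turn $A$ and each $B_i$ into canonical transducers $S_A$ and $S_{B_i}$. This yields $T(S_A) = \OO(T(A))$ and $W(S_A, O \oplus O', \xi) = L(A, O \oplus O', \xi)$, and analogously $T(S_{B_i}) = \OO(T(B_i))$ with $W(S_{B_i}, O, \eta) = L(B_i, O, \eta)$; moreover, the query states of the transducers coincide with those of the original programs.

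Next, I would form the parallel composition $S_B = \bigoplus_i S_{B_i}$ and the functional composition $S := S_A \circ S_B$ via \rf{prp:introFunctional}, yielding a canonical transducer with input oracle $O$ only. Combining \rf{eqn:compositionTransductionMultiple} with the trivial inequality $L(B_i, O, \eta) \le T(B_i) \|\eta\|^2$, extended to superposed initial states through \rf{eqn:transductionExtended}, and using $\|q^{(i)}(S_A, \xi)\|^2 = L^{(i)}(A, O \oplus B(O), \xi) \le L^{(i)}$, I would deduce
\[
W(S, O, \xi) \;\le\; L(A, O\oplus B(O), \xi) \;+\; \sum_i T(B_i)\, L^{(i)},
\]
and bound the query complexity $L(S, O, \xi)$ analogously from \rf{eqn:compositionLasVegasMultiple}.

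Finally, I would apply \rf{thm:introImplementationBetter} to $S$ with error parameter $\eps$ to obtain the algorithm $A'$. This yields an implementation that conditionally executes $\Sw_S$ some $K = \OO(1 + W(S)/\eps^2)$ times and makes $\OO(L(S)/\eps^2)$ queries to $O$, with $A'(O)$ approximating the transduction action $(S_A \circ S_B)\DownTransduce_\cH$ on $O$, which by construction equals $A[O \oplus B(O)]$ up to the stated error $\eps$.

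The hard part is the cost accounting at the last step. The naive product $K \cdot T(\Sw_S)$ overshoots, because in the circuit model $T(\Sw_S) = T(\Sw_A) + T(\Sw_B) = \OO(T(A) + \sum_i T(B_i))$, producing unwanted cross terms between $T(A)$ and $\sum_i T(B_i)$. To match the target~\rf{eqn:introCompositionCircuit}, one has to exploit that in the composed canonical form, $\Sw_B$ only acts nontrivially on the query-state part of $S_A$: each $B_i$ therefore contributes to the runtime in proportion to $L^{(i)}/\eps^2$, rather than being paid in every iteration, while $\Sw_A$ contributes $\OO(L \cdot T(A)/\eps^2)$ overall. Making this interleaved accounting precise, together with absorbing the direct $O$-queries of $A$ (which show up in $L(A, O\oplus B(O), \xi)$) into the $L \cdot T(A)$ term, is the heart of the proof.
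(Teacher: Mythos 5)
Your first and last steps are fine, but the middle of your argument---forming $S = S_A \circ S_B$ via \rf{prp:introFunctional} and then implementing $S$---does not yield the claimed bound, and the paper explicitly warns against this route (``It turns out that it is not efficient to use \rf{prp:introFunctional} here as this would increase time complexity of the resulting transducer''). Concretely: \rf{thm:introImplementationBetter} executes the \emph{entire} work unitary $\Sw_S$ as a black box $K = \OO(1+W(S)/\eps^2)$ times, and in the circuit model $T(\Sw_S) = \TC(S_A)+\TC(S_B)$ with $\TC(S_B) = \OO\sA[\sum_i T(B_i)]$ for a generic direct sum. Since $W(S,O,\xi)$ contains the term $\sum_i T(B_i)L^{(i)}$, the product $K\cdot T(\Sw_S)$ contains cross terms such as $T(B_i)^2 L^{(i)}/\eps^2$ and $T(A)\sum_i T(B_i)L^{(i)}/\eps^2$, which can vastly exceed \rf{eqn:introCompositionCircuit}. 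Your proposed rescue---that ``$\Sw_B$ only acts nontrivially on the query-state part of $S_A$'' and therefore should only be paid for in proportion to $L^{(i)}/\eps^2$---is not a valid argument in the circuit model: the implementation algorithm runs the gates of $\Sw_B$ in every one of the $K$ iterations regardless of the amplitude of the state in the subspace they act on; there is no mechanism in \rf{thm:introImplementationBetter} for conditionally skipping part of the work unitary. The ``interleaved accounting'' you defer to is not a technical refinement of your construction but requires abandoning it.

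The paper's actual proof takes a different and shorter route that realizes exactly the savings you are after: apply the circuit version of \rf{thm:introQueryCompression} to $A$ \emph{alone}, treating $O^{(1)},\dots,O^{(r)}$ (i.e.\ the maps $B_i(O)$) as input oracles. This produces an algorithm with time complexity $\OO(L\cdot T(A)/\eps^2)$ that makes only $\OO(L^{(i)}/\eps^2)$ calls to $O^{(i)}$; one then substitutes a plain (exact) execution of the program $B_i$ for each such call, contributing $\OO\sA[T(B_i)L^{(i)}/\eps^2]$ in total. The point is that the batching of oracle calls happens inside the implementation of $S_A$ (via the parameter $D^{(i)}$ in \rf{thm:optimalImplementation}), so $B_i$ is invoked only when a query to $O^{(i)}$ is actually made, rather than being baked into a composite work unitary that must run in full every iteration. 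You should rework the proof along these lines.
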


\pfstart
Use the circuit version of \rf{thm:introQueryCompression} for the program $A$, where we treat calls to $O$ as ordinary operations (in other words, we assume $A$ has $r$ input oracles $O^{(1)},\dots, O^{(r)}$).
After that, replace each call to $O^{(i)}$ by the execution of $B_i$.
\pfend

It turns out that it is not efficient to use \rf{prp:introFunctional} here as this would increase time complexity of the resulting transducer.

\begin{rem}
In \rf{thm:introCompositionCircuit}, the emphasis is on the time complexity.
If we want to simultaneously bound query complexity, we treat $O$ as the input oracle in $A$ as well.
This gives time complexity~\rf{eqn:introCompositionCircuit} with $L = 1+ L^{(0)}+L^{(1)}+\cdots+L^{(r)}$ and the total number of queries
\[
\OO\s[\frac{L^{(0)} + \sum_i Q\s[B_i] L^{(i)}}{\eps^2}],
\]
where $Q(B_i)$ is the number of queries made by $B_i$. 
We additionally require that $L^{(0)}\sA[A, O\oplus B(O), \xi]\le L^{(0)}$.
\end{rem}

\begin{thm}[Composition of Programs, QRAG model]
\label{thm:introCompositionQRAG}
Assume the settings of \rf{fig:compositionMultiple}.
Let $\eps, T>0$ be parameters.
Assuming the QRAG model and QRAM access to an array with description of $A$ and all $B_i$, there exists a quantum algorithm $A' = A'(O)$ with time complexity $\OO(\timeR\cdot T/\eps^2)$ such that, for every normalised $\xi$, we have $\|A'(O)\xi - A\sA[O\oplus B(O)]\xi\| \le \eps$ as long as
\begin{equation}
\label{eqn:introCompositionQRAG}
T(A) + \sum_{i=1}^r T(B_i) L^{(i)}(A, O\oplus B(O), \xi) \le T.
\end{equation}
The algorithm makes $\OO(L/\eps^2)$ queries to the input oracle $O$, where $L$ is an upper bound on $L(A\circ B, O, \xi)$, given by~\rf{eqn:quantumCompositionMultiple1}.
\end{thm}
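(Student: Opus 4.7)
The plan is to reduce the composed computation to a single transducer and invoke \rf{thm:introImplementationBetter}. First, view $A$ as an algorithm with the single input oracle $O\oplus O'$ and apply the QRAG version of \rf{thm:introProg->Transducer} to obtain a canonical transducer $S_A$ of time complexity $\OO(\timeR)$ with $W(S_A, O\oplus O', \xi) = T(A)\|\xi\|^2$ and partial query states coinciding with those of $A$. Analogously, convert each $B_i$ into a canonical transducer $S_{B_i}$ of time complexity $\OO(\timeR)$ with $W(S_{B_i}, O, \eta) = T(B_i)\|\eta\|^2$ and $q(S_{B_i}, O, \eta) = q(B_i, O, \eta)$.

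Next, build $S_B := \bigoplus_i S_{B_i}$ (implementable in time $\OO(\timeR)$ in the QRAG model using QRAM access to the descriptions of the $B_i$ via \rf{cor:selectProgram}) and the functional composition $S := S_A \circ S_B$. By \rf{prp:introFunctional}, $T(S) = T(S_A) + T(S_B) = \OO(\timeR)$. Using \rf{eqn:compositionTransductionMultiple} together with the query-state identity $q^{(i)}(S_A, O\oplus O', \xi) = q^{(i)}(A, O\oplus B(O), \xi)$,
\begin{equation*}
W(S, O, \xi) \;=\; T(A)\|\xi\|^2 + \sum_i T(B_i)\,L^{(i)}(A, O\oplus B(O), \xi) \;\le\; T
\end{equation*}
for every unit $\xi$ satisfying the hypothesis. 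By~\rf{eqn:compositionQueryMultiple} chained with the equality $q(S_A, \cdot, \cdot) = q(A, \cdot, \cdot)$ and $q(S_{B_i}, \cdot, \cdot) = q(B_i, \cdot, \cdot)$, the query state of $S$ on $O$ equals $q(A\circ B, O, \xi)$, so $L(S, O, \xi) \le L$.

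Now apply \rf{thm:introImplementationBetter} to $S$, which from its own vantage point has a single input oracle $O$ (so the $r = \OO(1)$ hypothesis is trivially met), with the error parameter $\eps$. The resulting algorithm $A'$ is $\eps$-close to $S\DownTransduce_{\cH}\,\xi$, which is exactly $A[O\oplus B(O)]\xi$ by construction. The number of black-box calls to the non-oracle part $\Sw$ of $S$ is $K = \OO(1 + W/\eps^2) = \OO(T/\eps^2)$, each executed in time $\OO(\timeR)$, for a total running time of $\OO(\timeR\cdot T/\eps^2)$. The number of oracle calls is $\OO(L/\eps^2)$ by the query bound in \rf{thm:introImplementationBetter}.

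The main obstacle is bookkeeping in the QRAG model: verifying that each of the sub-constructions — the canonical form of the QRAG transducers from \rf{thm:introProg->Transducer}, the parallel composition $\bigoplus_i S_{B_i}$, the functional composition $S_A\circ S_B$, and the conditional execution of $S$ inside the implementation — really compose to $\OO(\timeR)$ per outer iteration. All of these rest on the same QRAM/QRAG primitive for selecting among the descriptions of $A$ and the $B_i$, so the only risk is in the overhead constants, not in an asymptotic blowup.
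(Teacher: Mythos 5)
Your proposal is correct and follows essentially the same route as the paper's own proof: convert $A$ and the $B_i$ into QRAG transducers via \rf{thm:introProg->Transducer}, take the parallel and then functional composition, bound the transduction complexity by~\rf{eqn:introCompositionQRAG} and the query state by that of $A\circ B$, and finish with \rf{thm:introImplementationBetter}. Your observation that the composed transducer has only the single oracle $O$ (so the $r=\OO(1)$ restriction is vacuous) matches the paper's remark on why this theorem, unlike the circuit-model version, needs no bound on $r$.
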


\pfstart
Convert $A$ and all the $B_i$ into transducers $S_A$ and $S_{B_i}$ as in \rf{thm:introProg->Transducer}.
We obtain the transducer $S_B = \bigoplus_i S_{B_i}$ for $B$.
Then compose $S_A\circ S_B$ using \rf{prp:introFunctional}.
By definition, its transduction action is identical to $A(B)$.

Since the transduction complexity of $S_A$ on a normalised initial state is bounded by $T(A)$ and that of $S_{B_i}$ by $T(B_i)$, we get from~\rf{eqn:compositionTransductionMultipleUpper} that for a normalised $\xi$:
\[
W(S_A\circ S_B, O, \xi) \le T(A) + \sum_i T(B_i) L^{(i)}\sA[A, O\oplus B(O), \xi].
\]

The main reason this construction is efficient in the QRAG model is that the time complexity of the transducers stays bounded by $\OO(\timeR)$ the whole time.
Indeed, such is the time complexity of the individual transducers obtained from $A$ and $B^{(i)}$.
Parallel composition can be performed efficiently in the QRAG model (\rf{prp:program->transducerParallel}), and the time complexity of the functional composition is the sum of its constituents.

The transducers $S_A$ and $S_{B_i}$ have the same query states as $A$ and $B_i$, respectively.
By~\rf{eqn:compositionQueryMultiple}, we get that the total query state of $S_A\circ S_B$ is identical to that of $A\circ B$, which is given by~\rf{eqn:quantumCompositionMultiple}.
The statement of the theorem follows from \rf{thm:introImplementationBetter}.
\pfend

Observe that \rf{thm:introCompositionQRAG}, while assuming a stronger model, gives a stronger result than \rf{thm:introCompositionCircuit}.
The differences are as follows.
First, we do not have to assume that $r=\OO(1)$.
Second, the $L^{(i)}(A, O\oplus B(O), \xi)$ in~\rf{eqn:introCompositionQRAG} are the actual values of the Las Vegas query complexity, while $L^{(i)}$ in~\rf{eqn:introCompositionCircuit} are only upper bounds on them.
This can be important if $L^{(i)}(A, O\oplus B(O), \xi)$ heavily fluctuates over different input oracles $O$.
Finally, the $T(A)$ term is oddly multiplied by $L$ in~\rf{eqn:introCompositionCircuit}.

Altogether, the expression in~\rf{eqn:introCompositionQRAG} is more natural.
It is also similar to the one in Section~1.2 of~\cite{jeffery:subroutineComposition}.
Our result is more general though, as it covers arbitrary state conversion, and not only function evaluation (we can assume $\eps=\Omega(1)$ in \rf{thm:introCompositionQRAG} as it gives bounded-error evaluation of a function).

On the other hand, if the estimates $L^{(i)}$ are sufficiently precise, and $T(A)$ is smaller than average $T(B^{(i)})$, the expression in~\rf{eqn:introCompositionCircuit} is quite close to~\rf{eqn:introCompositionQRAG}.

\paragraph{Multiple Layers of Composition}
Here we assume the QRAG model of computation.
Theorems~\ref{thm:introCompositionCircuit} and~\ref{thm:introCompositionQRAG} considered one layer of composition.
In the case of multiple layers, similarly as for the span programs, it is advantageous to perform all the compositions in the realm of transducers, and to transform the resulting transducer back into an actual algorithm only at the very end.

\myfigure{\label{fig:multipleLayers}}
{
One node $B_{t; i_1, i_2, \dots, i_t}$ in the composition tree.
Its initial state (in the general sense of~\rf{eqn:transductionExtended}) is given by $\xi_{t; i_1, i_2, \dots, i_t}$.
It has several subroutines of the form $B_{t+1; i_1; i_2,\dots,i_t, i_{t+1}}$ with the corresponding query state $\xi_{t+1; i_1; i_2,\dots,i_t, i_{t+1}}$.
}
{
\negbigskip
\negbigskip
\negbigskip
\negbigskip
\[
\begin{tikzpicture}[every path/.append style={thick,->}]
    \draw[\xicolor] (-2, -0.5) node[left]{$\xi_{t; i_1, i_2,\dots, i_t}$} to (-1,-0.5);
    \draw (-1,0) rectangle (1,-1) node[pos=0.5] {$B_{t; i_1; i_2,\dots,i_t}$};
    \draw (0,-2) rectangle (2.5,-3) node[pos=0.5] {$B_{t+1; i_1; i_2,\dots,i_t, 1}$};
    \draw (3,-2) rectangle (5.5,-3) node[pos=0.5] {$B_{t+1; i_1; i_2,\dots,i_t, 2}$};
    \node at (6.3, -2.5) {$\cdots$};
    \node at (10.5, -2.5) {$\cdots$};
    \draw (7,-2) rectangle (10,-3) node[pos=0.5] {$B_{t+1; i_1; i_2,\dots,i_t, i_{t+1}}$};
    \draw[purple, out=0, in=90] (1,-0.8) to node[left, pos=0.7] {$\xi_{t+1; i_1, i_2,\dots, i_t,1}$} (1.5,-2);
    \draw[purple, out=0, in=90] (1,-0.6) to node[left, pos=0.85] {$\xi_{t+1; i_1, i_2,\dots, i_t,2}$} (4.5,-2);
    \draw[purple, thick, -] (1,-0.4) to (1.2,-0.4);
    \draw[purple, thick, -] (1,-0.2) to (1.2,-0.2);
    \draw[purple, out=0, in=90] (1,-0.3) to node[left, pos=0.95] {$\xi_{t+1; i_1, i_2,\dots, i_t,i_{t+1}}$} (9,-2);
\end{tikzpicture}
\]
\negmedskip
}

Consider a composition tree of quantum subroutines.
The top layer is the algorithm $B_{0}$ that has several subroutines of the form $B_{1,i}$, like $B_i$ used to be for $A$ in \rf{fig:compositionMultiple}.
We define the tree downwards so that, in general, a subroutine $B_{t; i_1,i_2,\dots,i_t}$ has several subroutines of the form $B_{t+1; i_1, i_2, \dots, i_t, i_{t+1}}$, see \rf{fig:multipleLayers}.
Let $d$ be the maximal value of $t$ in $B_{t; i_1,i_2,\dots,i_t}$.
It is the depth of the composition tree.
We assume all the subroutines have access to some common oracle $O$.
Define the composition $B=B(O)$ of the whole tree in the obvious inductive way, so that the initial state $\xi$ of the composed algorithm is the initial state $\xi$ of $B_0$.

It is not hard to get the Las Vegas query complexity of $B$ using~\rf{eqn:quantumCompositionMultiple} inductively or from the general principles.
Let $q_{t; i_1,i_2,\dots,i_t}(O,\xi)$ be the query state given by $B_{t; i_1,i_2,\dots,i_t}$ to the input oracle $O$ when $B$ is executed on the initial state $\xi$.
Then,
\begin{equation}
\label{eqn:compositionTreeQueryComplexity}
q(B, O, \xi) = \bigoplus_{t; i_1,\dots, i_t} q_{t; i_1,i_2,\dots,i_t}(O,\xi).
\end{equation}
We obtain a similar result for the time complexity.
Let $\xi_{t; i_1,i_2,\dots,i_t}(O,\xi)$ be the total query state given \emph{to} the subroutine $B_{t; i_1,i_2,\dots,i_t}$.
In particular, $\xi_0(O,\xi) = \xi$.

\begin{thm}[Composition Tree]
\label{thm:introCompositionTree}
Assuming the QRAG model and QRAM access to the description of all $B_{t; i_1,\dots,i_t}$ as above, there exists a quantum algorithm $B' = B'(O)$ with time complexity $\OO(\timeR\cdot (d+1)\cdot T/\eps^2)$ such that, for every $\xi$, we have $\|B'(O)\xi - B(O)\xi\| \le \eps$ as long as
\begin{equation}
\label{eqn:introCompositionTree}
\sum_{t; i_1,\dots, i_t}  T(B_{t; i_1,i_2,\dots,i_t}) \|\xi_{t; i_1,i_2,\dots,i_t}(O,\xi)\|^2 \le T,
\end{equation}
where the summation is over all the vertices of the composition tree.
The algorithm makes $\OO(L/\eps^2)$ queries to $O$, where $L$ is an upper bound on the Las Vegas query complexity of $B$ as obtained from~\rf{eqn:compositionTreeQueryComplexity}.
\end{thm}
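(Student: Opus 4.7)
The plan is to build one composed transducer $\tilde S_B$ for the entire tree and then apply \rf{thm:introImplementationBetter} to it once. This generalises the depth-$d=1$ proof of \rf{thm:introCompositionQRAG} inductively over the layers.

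First, at every node I would convert the subroutine $B_{t; i_1, \dots, i_t}$ into a canonical transducer $S_{t; i_1, \dots, i_t}$ using the QRAG version of \rf{thm:introProg->Transducer}. Each such transducer has time complexity $\OO(\timeR)$, transduction action matching the original subroutine, transduction complexity equal to $T(B_{t; i_1, \dots, i_t})$ on a unit-norm initial state, and partial query state to $O$ coinciding with the original subroutine's.

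Next, I would build $\tilde S_B$ bottom-up. At each internal node I take the direct sum of the already-built child transducers (implementable in $\OO(\timeR)$ time in the QRAG model via \rf{cor:selectProgram}, using QRAM access to the descriptions) and functionally compose it with the parent's transducer via \rf{prp:introFunctional}. The crucial inductive claim, propagated upward from the leaves using \rf{eqn:compositionTransductionMultiple} and \rf{eqn:compositionQueryMultiple}, is that the transducer $\tilde S_v$ built for the subtree rooted at a vertex $v$ satisfies
\[
W(\tilde S_v, O, \xi) \,=\, \sum_{w} T(B_w)\,\|\xi_w(O, \xi)\|^2
\qquad\text{and}\qquad
q(\tilde S_v, O, \xi) \,=\, \bigoplus_{w} q_w(O, \xi),
\]
where $w$ ranges over the vertices of the subtree rooted at $v$. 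The step that makes this induction go through is that the partial query state $q^{(i)}$ in \rf{eqn:compositionTransductionMultiple} is, by the very definition of functional composition, precisely the ``state given to the $i$-th child'', i.e.\ $\xi_{t+1; i_1, \dots, i_t, i}$ in the notation of the theorem statement.

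Finally, the time complexity composes additively: functional composition sums the time complexities of its two constituents, and parallel composition in the QRAG model contributes $\OO(\timeR)$ per internal node, so over the $d+1$ layers one obtains $T(\tilde S_B) = \OO(\timeR (d+1))$. Applying \rf{thm:introImplementationBetter} with the bound $W \le T$ coming from the root-level specialisation of the inductive identity, and with the single input oracle $O$, then yields the claimed runtime $\OO(\timeR (d+1) T/\eps^2)$ and $\OO(L/\eps^2)$ queries. The main obstacle I expect is not any conceptually hard step but the inductive bookkeeping: verifying carefully that the partial query states emitted by one layer of the composition really agree, under the multi-copy convention of \rf{eqn:transductionExtended}, with the initial states $\xi_{t+1; i_1, \dots, i_t, i_{t+1}}$ entering the next layer's subroutines. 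This amounts to unrolling \rf{prp:introFunctional} and keeping the indexing of partial query states and child subroutines consistent throughout the tree.
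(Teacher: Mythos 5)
Your proposal is correct and follows essentially the same route as the paper: convert every node into a QRAG transducer via \rf{thm:introProg->Transducer}, compose the entire tree into a single transducer using \rf{prp:introFunctional} together with parallel composition, propagate the identities for $W$, $q$ and $T$ through the composition, and invoke \rf{thm:introImplementationBetter} once at the end. The only organizational difference is the direction of the induction --- the paper inducts on depth by peeling off the bottom layer and treating it as a direct sum of input oracles for the depth-$d$ tree above, whereas you induct bottom-up over subtrees, which means your direct sums are taken over already-composed transducers rather than over program-derived ones, so \rf{prp:program->transducerParallel} does not apply verbatim and you need the (easy) extra observation that sibling subtree transducers share the same layered structure and can be run in lockstep in the QRAG model so that the direct sum costs the maximum rather than the sum of their times.
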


\pfstart
We use the induction on $d$ to show that, under the assumptions of the theorem, there exists a transducer $S_B$ whose transduction action is identical to $B$, whose transduction complexity is given by the left-hand side of~\rf{eqn:introCompositionTree}, and whose time complexity is $\OO\sA[(d+1)\cdot\timeR]$.

The base case is given by $d=0$, where we only have $B_0$, which we transform into a transducer using \rf{thm:introProg->Transducer}.
The transduction complexity of $S_{B_0}$ on a normalised initial state is $T(B_0)$. 
Hence on the initial state $\xi_0$, the transduction complexity is $T(B_0)\|\xi_0\|^2$ by~\rf{eqn:introLinearity}.

Assume the theorem is true for depth $d$.
For the depth $d+1$, we treat the nodes $B_{t; i_1,i_2,\dots,i_t}$ with $t\le d$ as forming a composition tree $A$ with depth $d$, and the nodes $B_{d+1; i_1,i_2,\dots,i_d, i_{d+1}}$ as input oracles to $A$.
In other words, $A$ has an input oracle $O\oplus O'$ with 
\[
O' = \bigoplus_{i_1,\dots,i_d, i_{d+1}} B_{d+1; i_1,\dots,i_d, i_{d+1}}(O).
\]

We use the induction assumption to obtain a transducer $S_A$ for the composition tree $A$, whose transduction complexity on $O \oplus O'$ and $\xi$ is given by the sum in~\rf{eqn:introCompositionTree}, where we restrict the sum to $t\le d$.
We convert each $B_{d+1; i_1,\dots,i_d, i_{d+1}}$ into a transducer and join them via direct sum to obtain a transducer $S_{B_{d+1}}$ whose transduction action on the oracle $O$ is identical to $O'$.
Then, we apply \rf{prp:introFunctional} to get the transducer $S_B = S_A\circ S_{B_{d+1}}$.
Its transduction complexity on $O$ and $\xi$ is given by the left-hand side of~\rf{eqn:introCompositionTree} with all the terms involved, where the term 
\[
T(B_{d+1; i_1,i_2,\dots,i_d, i_{d+1}}) \|\xi_{d+1; i_1,i_2,\dots,i_d, i_{d+1}}(O,\xi)\|^2
\]
is the contribution of $B_{d+1; i_1,i_2,\dots,i_d, i_{d+1}}$.

All the transducers in $S_{B_{d+1}}$ can be implemented in parallel due to the QRAG assumption (see \rf{prp:program->transducerParallel}), hence, its time complexity is $\OO(\timeR)$.
Thus, $T(S_B) = T(S_A) + T(S_{B_{d+1}}) = \OO\sA[(d+1)\timeR]$.
Finally, using~\rf{eqn:quantumCompositionMultiple} and~\rf{eqn:compositionQueryMultiple}, we get that $S_B$ and $B$ have the same query state.
The statement of the theorem again follows from \rf{thm:introImplementationBetter}.
\pfend

\paragraph{Iterated Functions}
Due to the discussion after \rf{thm:introCompositionCircuit}, one might think that several layers of composition are difficult for the circuit model.
However, this is not the case if the composition tree is sufficiently homogenous.
As an example, we consider evaluation of iterated functions in the circuit model.

Let $f\colon [q]^{n} \to [q]$ and $g\colon [q]^m\to [q]$ be total functions.
The \emph{composed function} $f\circ g\colon [q]^{nm} \to [q]$ is defined by
\begin{equation}
\label{eqn:introComposedFunction}
\begin{aligned}
\sS[f\circ g]&(z_{1,1}, \dots,z_{1,m},\;\; z_{2,1},\dots,z_{2,m},\;\;\dots\dots,\;\;z_{n,1},\dots,z_{n,m})\\
&= f\sA[
g(z_{1,1}, \dots,z_{1,m}), 
g(z_{2,1}, \dots,z_{2,m}),
\dots,
g(z_{n,1}, \dots,z_{n,m})],
\end{aligned}
\end{equation}
which is equivalent to~\rf{eqn:randomComposedFunction} with all the inner functions being equal.
The function composed with itself several times is called \emph{iterated function}.
We use the following notation
$f^{(1)} = f$ and $f^{(d+1)} = f^{(d)}\circ f = f\circ f^{(d)}$.

Iterated functions have been studied before both classically~\cite{snir:nand, saks:nand, santha:readOnceFormulae, magniez:maj3, leonardos:maj3} and quantumly~\cite{farhi:nandTree, ambainis:formulaeEvaluation, reichardt:unbalancedFormulas, reichardt:formulae}.
For the case of Boolean functions, an essentially optimal algorithm
was given by Reichardt and \v Spalek in~\cite{reichardt:unbalancedFormulas, reichardt:formulae} using span programs.%
\footnote{
Papers~\cite{reichardt:unbalancedFormulas, reichardt:formulae} are mostly known for their \emph{query} results, but they also contain statements on the time complexity of the resulting algorithms.
It is these time complexity statements that we extend in \rf{thm:introIterated}.
}
It is based on the use of span programs.
Similar results for the general case of non-Boolean functions can be easily obtained using composition of transducers.
While it is true that the time complexity grows with each layer of iteration, it only does so \emph{additively}, which is overshadowed by optimal \emph{multiplicative} handling of the query complexity.
We formally prove the following result in \rf{sec:iterated}.

\begin{thm}[Iterated Functions]
\label{thm:introIterated}
Let $f\colon [q]^n\to[q]$ be a total function.
There exists a bounded-error quantum algorithm that evaluates the iterated function $f^{(d)}$ in query complexity $\OO(\Adv(f)^d)$ and time complexity\footnote{We use $\OO_f(\cdot)$ to indicate that the suppressed constant may depend on the particular function $f$.} $\OO_f\sA[d\cdot\Adv(f)^d]$, where $\Adv(f)$ is the adversary bound of $f$.
The algorithm works in the circuit model.
\end{thm}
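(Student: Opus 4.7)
The plan is to build a single canonical transducer $S_d$ for $f^{(d)}$ by iterated functional composition, and extract the algorithm once, at the very end, via \rf{thm:introImplementationBetter}. This mirrors the span-program strategy of~\cite{reichardt:unbalancedFormulas,reichardt:formulae}, but lives entirely inside the transducer framework, so no error reduction between levels is ever needed.

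The base transducer is $S_1 := S_f$ from \rf{thm:introAdv}, which satisfies $\Wmax(S_f), \Lmax(S_f) \le \Adv(f)$ and $T(S_f) = \OO_f(1)$. For $d \ge 2$, I will set
\[
S_d \;:=\; S_f \circ \Bigl(\bigoplus_{i=1}^n S_{d-1}\Bigr),
\]
where the $i$-th summand is $S_{d-1}$ redirected to the $i$-th block of $n^{d-1}$ inputs, and combine via \rf{prp:introFunctional}. Substituting the inductive bounds $\Wmax(S_{d-1}) \le W_{d-1}$ and $\Lmax(S_{d-1}) \le L_{d-1}$ into~\rf{eqn:compositionTransductionMultipleUpper} and~\rf{eqn:compositionLasVegasMultipleUpper}, and using $\sum_i L^{(i)}(S_f,\cdot) \le \Adv(f)$, both $\Wmax$ and $\Lmax$ for $S_d$ satisfy the same recurrence $X_d \le \Adv(f) + \Adv(f)\cdot X_{d-1}$. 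The geometric sum yields $W_d, L_d = \OO(\Adv(f)^d)$ whenever $\Adv(f) > 1$ (the remaining case is trivial). Passing $S_d$ through \rf{thm:introImplementationBetter} with constant $\eps$ yields an algorithm with query complexity $\OO(\Adv(f)^d)$ and time complexity $\OO(T(S_d)\cdot\Adv(f)^d)$.

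The hard part is bounding $T(S_d)$ in the circuit model. A literal implementation of $\bigoplus_{i=1}^n S_{d-1}$ as $n$ conditional copies of $S_{d-1}$ would cost $n\cdot T(S_{d-1})$ per level and blow up to $n^d$ over $d$ levels, overwhelming the $\Adv(f)^d$ factor (which is typically much smaller than $n$). The rescuing observation is that all $n$ summands in this direct sum use the \emph{same} transducer $S_{d-1}$, differing only in which contiguous block of $n^{d-1}$ input indices their queries consult. I will therefore realize the direct sum by a \emph{single} instance of $S_{d-1}$ that carries an extra $\OO(\log n)$-bit register $i \in [n]$ identifying the sub-call, and reroutes each access to the global input oracle by adding the offset $(i-1)n^{d-1}$. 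Unfolded through the recursion, this converts the direct sum into $\OO_f(1)$ overhead per level, so that each execution of $S_d$ amounts to one $\Sw_f$ plus constant-size index bookkeeping at each of $d$ levels, giving $T(S_d) = \OO_f(d)$. Feeding this back into the extraction step produces the claimed $\OO_f(d\cdot\Adv(f)^d)$ circuit-model time, with $\OO(\Adv(f)^d)$ queries.
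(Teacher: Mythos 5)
Your proposal is correct and follows essentially the same route as the paper: build $S_{f^{(d)}}$ by iterated functional composition of the adversary transducer $S_f$ with $I_n\otimes S_{f^{(d-1)}}$, track $W$ and $L$ through the recurrences~\rf{eqn:compositionTransductionMultipleUpper} and~\rf{eqn:compositionLasVegasMultipleUpper}, and extract the algorithm once at the end via \rf{thm:introImplementationBetter}; your ``rescuing observation'' about realizing the $n$-fold direct sum of identical subtransducers as a single tensor-with-identity instance whose oracle accesses are rerouted by the index register is precisely \rf{cor:byIdentity} combined with \rf{rem:directSumDifferentOracle}, yielding the same $T(S_{f^{(d)}})=\OO_f(d)$. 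The only detail you omit is that the inner transducer must be wrapped into a bidirectional version $S_{d-1}\oplus S_{d-1}^{-1}$ (via \rf{prp:inverse}) so that the outer adversary transducer can make both forward and inverse queries, which costs only a constant factor.
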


\pfsketch
We use induction to construct the transducer $S_{f^{(d)}}$ evaluating the function $f^{(d)}$ and having worst-case query complexity $\Adv(f)^d$, and transduction complexity $\OO\sA[\Adv(f)^d]$.

The base case is the transducer $S_f$ from \rf{thm:introAdv}.
For the inductive step, we use \rf{prp:introFunctional} with $S_A = S_{f^{(d)}}$ and $S_B = S_f$ to get $S_{f^{(d+1)}} = S_A\circ S_B$.
In notations of that theorem, $O$ encodes the input to $f^{(d+1)}$ and $O'$ the input to $f^{(d)}$ obtained by evaluating the lowest level of the composition tree.
First, $S_A$ does not make direct queries to $O$.
Second, $S_B = S_f$ has worst-case Las Vegas query complexity $\Adv(f)$ on a unit admissible vector. 
Hence, by~\rf{eqn:compositionLasVegasUpper} and the induction assumption:
\[
L\sA[S_A\circ S_B, O, \ket |0>] 
\le \Adv(f) L\sA[S_A, O', \ket|0>] \le \Adv(f)^{d+1}.
\]
Similarly, the worst-case transduction complexity of $S_B$ on a unit admissible vector is $\Adv(f)$, hence by~\rf{eqn:compositionTransductionUpper}:
\begin{align*}
W\sA[S_A\circ S_B, O, \ket |0>] 
&\le W\sA[S_A, O, \ket |0>] + \Adv(f) L(S_A, O', \ket |0>)\\
&\le \OO\sA[\Adv(f)^d] + \Adv(f)^{d+1} = \OO\sA[\Adv(f)^{d+1}]
\end{align*}
for a sufficiently large constant behind the big-Oh.

For the time complexity, we have by induction that $S_{f^{(d)}}$ has time complexity $d\cdot T(S_f)$.
The theorem follows from \rf{thm:introImplementationBetter}.
\pfend

Note that the transducer $S_{f^{(d)}}$ in the proof of \rf{thm:introIterated} is different from the transducer $S'_{f^{(d)}}$ we would get by applying \rf{thm:introAdv} to the adversary bound of $f^{(d)}$ obtained using the composition results for the adversary bound.
Indeed, for $S'_{f^{(d)}}$, its transduction and query complexities are equal, which is not the case for $S_{f^{(d)}}$.
Also, for $S'_{f^{(d)}}$, we have no guarantees on its running time.
In \rf{thm:introIterated}, we use the non-query part $\vw$ of the catalyst as a ``scaffolding'' to build a time-efficient iterative algorithm.

\subsection{Purifiers and Composition of Functions}
\label{sec:introPurifier}

Although we have studied thriftiness from \rf{sec:introPrior}, we have so far not touched much on exactness.
True, in most cases, like in \rf{sec:introWalks} on quantum walks, or \rf{thm:introAdv} on the adversary bound, the transduction action of the corresponding transducer is exact.
In this section, we will show how to get very close to general exactness starting from an arbitrary algorithm evaluating a function with bounded error.

We consider both Boolean and non-Boolean functions.
In the Boolean case, we abstract the action of the function-evaluating algorithm as an input oracle performing the following state generation:
\begin{equation}
\label{eqn:introPurifierInput}
O_\psi\colon \ket M |0> \mapsto \ket M|\psi> = \ket B|0>\ket N|\psi_0> + \ket B|1> \ket N|\psi_1>
\end{equation}
in some space $\cM = \cB\otimes \cN$ with $\cB = \bC^2$.
The action of the purifier only depends on the state $\psi$ in~\rf{eqn:introPurifierInput}, hence the notation $O_\psi$.
We allow the gap to be at any position $c$ between 0 and 1.
In other words, we assume there exist constants $0\le c-d < c + d \le 1$ such that
\begin{equation}
\label{eqn:introPurifierCases}
\text{either \qquad
$\|\psi_1\|^2 \le c-d$
\qquad  or \qquad 
$\|\psi_1\|^2 \ge c+d$.
}
\end{equation}
The first case is negative, the second one positive, or, $f(\psi)=0$ and $f(\psi)=1$, respectively.

In the non-Boolean case, the range is some $[p]$. 
For simplicity, we assume $p=\OO(1)$ here. An input oracle has the form
\begin{equation}
\label{eqn:introrPurifierInput2}
O_\psi\colon \ket M |0> \mapsto \ket M|\psi> 
= 
\sum_{j=0}^{p-1} \ket B|j>\ket N|\psi_j>,
\end{equation}
with $\cB = \bC^p$, and we assume there exists a (unique) $f(\psi)\in[p]$ such that
\begin{equation}
\label{eqn:introPurifierCases2}
\|\psi_{f(\psi)}\|^2 \ge \frac12+d
\end{equation}
for some constant $d>0$.

The traditional way of performing error reduction is via majority voting.
The following result is folklore.

\begin{thm}[Majority Voting]
\label{thm:majority}
For any $\eps>0$, there exists an algorithm with bidirectional access to an oracle like in~\rf{eqn:introrPurifierInput2} that has the following properties.
Assuming the oracle satisfies~\rf{eqn:introPurifierCases} or~\rf{eqn:introPurifierCases2}, the algorithm evaluates $f(\psi)$ with error at most $\eps$.
The query complexity of the algorithm is $\OO\sA[\log \frac1\eps]$ and its time complexity in the circuit model is polynomial in $\log\frac1\eps$.
\end{thm}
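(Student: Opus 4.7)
The plan is to reduce the problem to classical majority/threshold voting on independent samples produced by quantum measurement. The algorithm prepares $k = \OO(\log(1/\eps))$ independent copies of $\ket|\psi>$ by applying $O_\psi$ in parallel to $k$ fresh registers, each initialised to $\ket|0>$. Measuring the $\cB$-part of each copy in the standard basis gives independent outcomes $X_1,\dots,X_k$ with $\Pr[X_t = j] = \|\psi_j\|^2$.

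In the non-Boolean case~\rf{eqn:introPurifierCases2}, the correct answer $f(\psi)$ occurs with probability at least $1/2+d$, so by a Chernoff bound its count among the $k$ samples strictly exceeds $k/2$ except with probability $e^{-\Omega(d^2 k)}$; since $p = \OO(1)$, the (necessarily unique) plurality of $X_1,\ldots,X_k$ equals $f(\psi)$ with error at most $\eps$ for $k = \OO((\log 1/\eps)/d^2) = \OO(\log 1/\eps)$. In the Boolean case~\rf{eqn:introPurifierCases}, I would instead form $\hat p = \frac{1}{k}|\{t : X_t = 1\}|$; by Hoeffding, $|\hat p - \|\psi_1\|^2| \le d$ except with probability $2e^{-2d^2 k}$, so outputting $1$ when $\hat p \ge c$ and $0$ otherwise distinguishes the two possibilities of~\rf{eqn:introPurifierCases} with error at most $\eps$.

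To obtain a quantum algorithm rather than merely a classical sampler, the plurality (or the indicator of $\hat p \ge c$) is computed coherently into a fresh $\lceil\log_2 p\rceil$-bit output register by a standard reversible counting/threshold circuit, and only that register is measured at the end. Measuring a classical function of classical registers has the same output distribution whether one measures the inputs first and applies the function classically, or computes it coherently and then measures only the output; hence the error bound transfers verbatim. Thus the algorithm uses exactly $k$ forward queries to $O_\psi$ and no calls to $O_\psi^*$ -- bidirectional access is granted but is not needed by this direct approach -- and the reversible counting circuit on $k$ registers of dimension $p$ has size $\mathrm{poly}(k, \log p)$, giving total time $\mathrm{poly}(\log(1/\eps))$.

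No substantive obstacle arises: the only mild point is fixing a default output for the technicality where no strict plurality of the $X_t$ exists, but the Chernoff analysis already confines this to an event of probability at most $\eps$. The proof is the standard concentration-plus-coherent-postprocessing template; the only conceptual observation worth underlining is that coherent evaluation of a classical function followed by measurement of only the output register preserves the classical error guarantee exactly.
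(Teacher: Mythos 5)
Your proof is correct and is precisely the folklore sample-and-take-plurality argument the paper intends here (the paper states this theorem without proof, describing it as a direct quantisation of a purely classical technique); the Hoeffding/Chernoff bounds, the choice $k=\OO(\log\frac1\eps)$, and the deferred-measurement observation are all in order. The only remark worth adding is that the bidirectional access granted in the statement is what one would use to additionally uncompute the $k$ copies of $\psi$ and hand back a clean output state when the routine serves as an input oracle elsewhere; since the theorem only requires that measuring the output register yield $f(\psi)$ with probability $1-\eps$, your forward-query-only version fully suffices.
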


The majority-voting construction is a direct quantisation of a purely classical technique.
The logarithmic query complexity in the above theorem results in extra logarithmic factors that can be found in the analyses of a large variety of quantum algorithms.
In this paper, we develop an alternative, genuinely quantum approach to error reduction, that we call a \emph{purifier}.
The main feature is that, unlike majority voting, the query complexity of a purifier stays bounded by a constant \emph{no matter how small the error $\eps$ is}.
This is effectively equivalent to having an errorless algorithm (although, we cannot obtain an exact algorithm with a finite overhead in general).
We prove the following theorem in \rf{sec:purifiers}.

\mycutecommand{\purifier}{S_{\mathrm{pur}}}

\begin{thm}[Purifier]
\label{thm:introPurifier}
For any $\eps>0$, there exists a canonical transducer $\purifier$ with bidirectional access to an oracle like in~\rf{eqn:introrPurifierInput2} that has the following properties.
Assuming the oracle satisfies~\rf{eqn:introPurifierCases} or~\rf{eqn:introPurifierCases2}, the purifier transduces $\ket |0>$ into $\ket|f(\psi)>$ with error at most $\eps$ (in the sense to be made exact in \rf{sec:perturbedTransducers}).
Both its transduction and query complexities are bounded by a constant.
Its time complexity is $\OO\sA[s\log \frac1\eps]$ in the circuit model, and $\OO(\timeR)$ in the QRAG model, where $s$ is the number of qubits used by $\cM$.
\end{thm}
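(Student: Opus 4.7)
The plan is to turn the majority-voting algorithm of \rf{thm:majority} into a canonical transducer whose catalyst holds $K = \Theta(\log(1/\eps))$ ``virtual'' copies of the oracle output $|\psi\rangle$, all produced by a \emph{single} oracle call, thereby absorbing the logarithmic query overhead of majority voting into the catalyst rather than paying it in query complexity.

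\textbf{Batched query and canonical form.} In the canonical form the oracle is applied only once, and I would choose the query catalyst to be the balanced superposition $v^{\bullet} = \frac{1}{\sqrt{K}}\sum_{t=1}^{K}|t\rangle|0\rangle_\cM$, whose norm is $1$, so $L(\purifier, O_\psi, |0\rangle) = 1$. After the single application of $I\otimes O_\psi$, this becomes $\frac{1}{\sqrt{K}}\sum_{t=1}^{K}|t\rangle|\psi\rangle$, providing $K$ slice-addressed copies of $|\psi\rangle$ at no further query cost. This is the same amortisation idea that underlies \rf{thm:introImplementationBetter}, repurposed here inside the construction of a single transducer.

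\textbf{Post-oracle unitary and transduction action.} The non-query catalyst $v^{\circ}$ encodes the running state of a majority-voting procedure across the $K$ slices, again scaled by $1/\sqrt{K}$ so that $\|v^{\circ}\|^2 = \OO(1)$. The post-oracle unitary $\purifier^{\circ}$ conditionally on the slice index $t$ applies one elementary step of the majority-voting algorithm to the $t$-th copy of $|\psi\rangle$, and then cycles $t \mapsto t+1$; after one full cycle through $[K]$ the public register has been rotated from $|0\rangle$ to within $\eps$ of $|f(\psi)\rangle$, while the catalyst returns to itself up to the same error, yielding an $\eps$-approximate transduction in the sense of \rf{sec:perturbedTransducers}. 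Bidirectional access to $O_\psi$ enters in the uncomputation half of the cycle, which is needed for $v^{\circ}$ to close up. The non-Boolean case $p = \OO(1)$ is handled by a $p$-ary generalisation of the same construction, or equivalently by running $p$ Boolean ``is the answer $j$?'' purifiers in parallel.

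\textbf{Time complexity and main obstacle.} In the circuit model, $\purifier^{\circ}$ sequentially processes the $K = \Theta(\log(1/\eps))$ slices, each involving a copy of an $s$-qubit $\cM$-register, giving $\OO(s\log(1/\eps))$; in the QRAG model, the slice-dispatched execution reduces to $\OO(\timeR)$ via QRAM-indexed access to the description. The main technical obstacle is showing that the catalyst cycle really closes to within $\OO(\eps)$: because the underlying majority-voting step is only bounded-error per slice, one must control how the small per-slice deviations accumulate around the cycle and argue that they remain $\OO(\eps)$ globally. This is precisely the stability statement handled by the perturbed-transducer framework of \rf{sec:perturbedTransducers}, and its careful invocation will be the main content of the formal proof.
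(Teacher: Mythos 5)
There is a genuine gap at the heart of your construction: the batched query does not give majority voting the resource it needs. After the single oracle call, your catalyst becomes $\frac{1}{\sqrt K}\sum_{t=1}^K |t\rangle|\psi\rangle$, which is a \emph{direct sum} of $K$ branches, each holding one copy of $|\psi\rangle$ at amplitude $1/\sqrt K$. Majority voting, however, needs the $K$ outcomes in \emph{tensor product}, i.e.\ (an encoding of) $|\psi\rangle^{\otimes K}$, so that it can correlate them and compute the majority; no unitary acting slice-by-slice on a direct sum can extract that correlation. If you instead put a genuine $|\psi\rangle^{\otimes K}$, or the history state of the majority-voting circuit, into the catalyst, its query part has squared norm $\Theta(K)=\Theta(\log\frac1\eps)$, and you are back to logarithmic query and transduction complexity --- exactly what the theorem is supposed to avoid (this is what you would get by simply feeding \rf{thm:majority} into \rf{thm:introProg->Transducer}). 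Relatedly, your error analysis does not close: the perturbed-transducer framework of \rf{sec:perturbedTransducers} only \emph{sums} per-step perturbations (\rf{lem:surgery}), so $K$ slices each carrying the constant per-query deviation of the underlying bounded-error oracle would accumulate to $\Omega(1)$, not $\OO(\eps)$. You flag this as ``the main technical obstacle,'' but the framework you invoke to resolve it cannot do so; the error reduction has to come from somewhere else.

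In the paper it comes from a different, genuinely quantum mechanism. The purifier is a quantum walk on a line of length $D=\OO(\log\frac1\eps)$ whose edge $i$ carries the state $(-1)^{i\cdot f(\psi)}\,\wpsi^{\otimes i}$, where $\wpsi$ is a \emph{rescaled} copy of $\psi$ (its two components multiplied by $a^{\pm1}$, $b^{\pm1}$ according to $f(\psi)$) satisfying $\|\wpsi\|^2\le\mu<1$. The geometric decay $\sum_i\mu^i=\OO(1)$ is what makes both the transduction and the query complexity constant; the walk is \emph{exact} on the infinite line, encoding the answer in the phase via the alternating sign pattern, so there is no per-step error to accumulate; and the only perturbation is the truncation at depth $D$, of size $\OO(\mu^{D})=\OO(\eps)$. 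This rescaling idea, imported from the dual-adversary purifiers of~\cite{belovs:variations}, is the ingredient your proposal is missing; without it there is no route from majority voting to a constant-query purifier.
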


The purifier is inspired by the corresponding construction in~\cite{belovs:variations}, which used the dual adversary bound.
It had the same characteristic property of query complexity being bounded by a constant, but there are some differences.

\begin{itemize}
\item 
The purifiers in~\cite{belovs:variations} worked solely in the \emph{query} complexity settings.
The resulting dual adversary bound was for \emph{exact} function evaluation.
Also, by the nature of the adversary bound, it was implicitly assumed that there is only a \emph{finite} collection of possible input oracles, and, technically, for different collections, we obtain different purifiers.

\item
The purifiers in the current paper are constructed keeping both query and \emph{time} complexity in mind.
Also, the same purifier works for all (\emph{infinitely many}) possible input oracles.

\item 
Due to these improvements, the purifier ceases to be exact, but introduces a small error.
\end{itemize}

\mycutecommand{\Stoy}{S_{\mathrm{toy}}}
\mycutecommand{\wpsi}{\widetilde\psi}

\pfstart[Proof sketch of \rf{thm:introPurifier}]
We consider the Boolean case, as the general case can be easily obtained using the Bernstein-Vazirani algorithm~\cite{bernstein:quantumComplexity} as in, e.g.,~\cite[Section 4]{jeffery:kDist}.
Our construction is a quantum walk in the sense of \rf{sec:introWalks}.
In particular, it transduces $\ket |0>$ into $(-1)^{f(\psi)}\ket |0>$.

The overall structure is given by the following toy transducer $\Stoy$ in \rf{fig:introPurifier1}.
It is a quantum walk on the one-sided infinite line.
That is, $\Stoy = R_2R_1$, where the reflection $R_1$ is the product of the local reflections about the odd vertices $1,3,5,\cdots$, and $R_2$ about the positive even vertices $2,4,6,\cdots$.
The local reflection at the vertex $i>0$ is given by the $X$ operation, which maps
\begin{equation}
\label{eqn:toyOperations}
\ket |i-1> + \ket |i> \mapsto \ket |i-1> + \ket |i> 
\qqand
\ket |i-1> - \ket |i> \mapsto -\ket |i-1> + \ket |i>.
\end{equation}

\myfigure{\label{fig:introPurifier1}}
{
A toy transducer illustrating the overall construction of a purifier.
It is a quantum walk on the one-sided infinite line.
Each edge correspond to an element of the standard basis written below it.
The public space is given by $\ket |0>$.
We have two different catalysts (a) and (b) for the same initial state, the numbers above the edge giving the corresponding coefficients.
}
{
\[
\begin{tikzpicture}[auto]
\node at (-2,0) {(a)};
\node (0) at (0,0) [circle, draw] {$0$};
\foreach \x [remember=\x as \prevx (initially 0)] in {1,...,5}
{
    \node (\x) at (\x*2,0) [circle, draw] {$\x$};
    \draw (\prevx) to node [above, blue] {$1$} 
                     node [below, gray] {$\ket|\prevx>$} (\x);
}
\draw (5) to (11,0) node[right] {$\cdots$};
\begin{scope}[shift={(0,-2)}]
\node at (-2,0) {(b)};
\node (0) at (0,0) [circle, draw] {$0$};
\foreach \x [remember=\x as \prevx (initially 0)] in {1,...,5}
{
    \node (\x) at (\x*2,0) [circle, draw] {$\x$};
    \pgfmathparse{mod(\x,2) ? 1 : int(-1)}
    \draw (\prevx) to node [above, blue] {$\pgfmathresult$} 
                     node [below, gray] {$\ket|\prevx>$} (\x);
}
\draw (5) to (11,0) node[right] {$\cdots$};
\end{scope}
\end{tikzpicture}
\]
}

Now, it is not hard to see that in the negative and the positive case, we have the following mappings, which follow the general case of~\rf{eqn:QW_sequence_negative} and~\rf{eqn:QW_sequence_positive}:
\begin{equation}
\label{eqn:toyTransduce}
\ket |0> + \sum_{i=1}^{+\infty} \ket |i> \maps{\Stoy} \ket |0> + \sum_{i=1}^{+\infty} \ket |i>
\qqand
\ket |0> + \sum_{i=1}^{+\infty} (-1)^i\ket |i> \maps{\Stoy} - \ket |0> + \sum_{i=1}^{+\infty} (-1)^i\ket |i>.
\end{equation}
This formally gives us both $\ket |0> \transduce{\Stoy} \ket |0>$ and $\ket |0> \transduce{\Stoy} -\ket |0>$.
Of course, this does not contradict \rf{thm:introTransduce}, because the corresponding space has infinite dimension and both catalysts in~\rf{eqn:toyTransduce} have infinite norm.

Nonetheless, we will be able to utilise this general construction.
The first order of business is to reduce the norm of the catalysts.
Our next step towards a purifier will be a multidimensional quantum walk on the line, in the sense that each edge corresponds to a multidimensional subspace.
See \rf{fig:introPurifier2}.

\myfigure{\label{fig:introPurifier2}}
{
An improved transducer.
It is a multidimensional quantum walk.
Each edge corresponds to the element of the standard basis beneath it tensor multiplied by $\cM^{\otimes \infty}$.
The public space is given by $\ket |0>$.
For the negative and positive case, we have the catalysts like in (a) and (b), respectively, where the vector above the edge is placed in the corresponding subspace.
}
{
\[
\begin{tikzpicture}[auto]
\node at (-2,0) {(a)};
\node (0) at (0,0) [circle, draw] {$0$};
\foreach \x [remember=\x as \prevx (initially 0)] in {1,...,5}
{
    \node (\x) at (\x*2,0) [circle, draw] {$\x$};
    \draw (\prevx) to node [below, gray] {$\ket|\prevx>$} (\x);
}
\draw (0) to node[above, blue] {$1$} (1);
\draw (1) to node[above, blue] {$\wpsi$} (2);
\draw (2) to node[above, blue] {$\wpsi^{\otimes 2}$} (3);
\draw (3) to node[above, blue] {$\wpsi^{\otimes 3}$} (4);
\draw (4) to node[above, blue] {$\wpsi^{\otimes 4}$} (5);
\draw (5) to (11,0) node[right] {$\cdots$};
\begin{scope}[shift={(0,-2)}]
\node at (-2,0) {(b)};
\node (0) at (0,0) [circle, draw] {$0$};
\foreach \x [remember=\x as \prevx (initially 0)] in {1,...,5}
{
    \node (\x) at (\x*2,0) [circle, draw] {$\x$};
    \pgfmathparse{mod(\x,2) ? 1 : int(-1)}
    \draw (\prevx) to node [below, gray] {$\ket|\prevx>$} (\x);
}
\draw (0) to node[above, blue] {$1$} (1);
\draw (1) to node[above, blue] {$-\wpsi$} (2);
\draw (2) to node[above, blue] {$\wpsi^{\otimes 2}$} (3);
\draw (3) to node[above, blue] {$-\wpsi^{\otimes 3}$} (4);
\draw (4) to node[above, blue] {$\wpsi^{\otimes 4}$} (5);
\draw (5) to (11,0) node[right] {$\cdots$};
\end{scope}
\end{tikzpicture}
\]
}

We will define a vector $\wpsi$ that depends on $\psi$ and satisfies $\|\wpsi\| = 1-\Omega(1)$.
The initial coupling is given by
\[
\sum_{i=0}^\infty (-1)^{i\cdot f(\psi)} \ket |i> \ketO |\wpsi>^{\otimes i},
\]
and the transduction complexity is bounded by
\[
\sum_{i=1}^\infty \|\wpsi\|^{2i} = \OO(1).
\]

Let us now explain how to implement the local reflections~\rf{eqn:toyOperations} for this modified transducer.
Up to a sign, the content of the space incident to a vertex $i>0$ is given by
\begin{equation}
\label{eqn:introPurifier1}
\ket |i-1> \ketO |\wpsi>^{\otimes i-1} + (-1)^{f(\psi)} \ket |i> \ketO |\wpsi>^{\otimes i}.
\end{equation}
We use the input oracle to obtain the state
\begin{equation}
\label{eqn:introPurifier2}
\ket |i-1> \ketO |\wpsi>^{\otimes i-1}\ket |\psi> + (-1)^{f(\psi)} \ket |i> \ketO |\wpsi>^{\otimes i}.
\end{equation}
Later we can get back from~\rf{eqn:introPurifier2} to~\rf{eqn:introPurifier1} by uncomputing the $\psi$.
Therefore, bringing the term $\ket |i> \ketO |\wpsi>^{\otimes i-1}$ outside the brackets, it suffices to implement the mapping
\begin{equation}
\label{eqn:notToyOperations}
\begin{cases}
\ket |0> \ket |\psi> + \ket |1>\ketO|\wpsi> \mapsto \ket |0> \ket |\psi> + \ket |1>\ketO|\wpsi>,&\text{if $f(\psi)=0$;\; and}\\
\ket |0> \ket |\psi> - \ket |1>\ketO|\wpsi> \mapsto -\ket |0> \ket |\psi> + \ket |1>\ketO|\wpsi>,&\text{if $f(\psi)=1$.}
\end{cases}
\end{equation}

This is where the condition~\rf{eqn:introPurifierCases} comes into play.
We use the same rescaling idea as in~\cite{belovs:variations}, and define the state
\[
\wpsi = 
\begin{cases}
 \frac 1a \ket |0> \ket |\psi_0> + b \ket|1>\ket|\psi_1>,& \text{if $f(\psi)=0$};\\
a \ket |0> \ket |\psi_0> + \frac 1b \ket|1>\ket|\psi_1>,& \text{if $f(\psi)=1$}.\\
\end{cases}
\]
with
\[
a = \sqrt[4]{\frac{1-c+d}{1-c-d}}
\qqand
b = \sqrt[4]{\frac{c+d}{c-d}}.
\]
It is not hard to check that $\|\wpsi\|^2 = 1-\Omega(1)$.
Now the operation in~\rf{eqn:notToyOperations} reads as
\[
\begin{cases}
\sA[\ket |0> + \frac1a \ket |1>] \ket |0> \ket |\psi_0> +
\sA[\ket |0> + b \ket |1>]\ket |1> \ket |\psi_1>
\mapsto
\sA[\ket |0> + \frac1a \ket |1>] \ket |0> \ket |\psi_0> +
\sA[\ket |0> + b \ket |1>]\ket |1> \ket |\psi_1>,
\\
\sA[\ket |0> - a \ket |1>] \ket |0> \ket |\psi_0> +
\sA[\ket |0> - \frac1b \ket |1>]\ket |1> \ket |\psi_1>
\mapsto
\sA[-\ket |0> + a \ket |1>] \ket |0> \ket |\psi_0> +
\sA[-\ket |0> + \frac1b \ket |1>]\ket |1> \ket |\psi_1>,
\end{cases}
\]
which can be implemented as the 2-qubit reflection about the span of the states
\[
\sA[a\ket |0> + \ket |1> ]\ket |0>
\qqand
\sA[\ket |0> + b \ket |1> ]\ket |1>.
\]

Following the same logic as in the toy example, we obtain that $\ket |0> \transduce{} (-1)^{f(\psi)} \ket |0>$.
However, this time, both the transduction and the query complexities are bounded by a constant.
The problem is that this construction still requires infinite space.
We solve this by truncating the line after some vertex $D$.
This introduces an error, but since the norms of the vectors decrease exponentially with $i$, it suffices to take $D = \OO\sA[\log \frac 1\eps]$.
Finally, the transducer can be converted into a canonical form by the standard technique of \rf{prp:canoning}.
\pfend

\mycutecommand{\yy}{\vec{y}}

The main purpose of purifiers is to reduce error.
Let us give few examples.
First, recall the definition of the composed function $f\circ g$  from~\rf{eqn:introComposedFunction}:
\[
\begin{aligned}
\sS[f\circ g]&(z_{1,1}, \dots,z_{1,m},\;\; z_{2,1},\dots,z_{2,m},\;\;\dots\dots,\;\;z_{n,1},\dots,z_{n,m})\\
&= f\sA[
g(z_{1,1}, \dots,z_{1,m}), 
g(z_{2,1}, \dots,z_{2,m}),
\dots,
g(z_{n,1}, \dots,z_{n,m})].
\end{aligned}
\]
We use notation
\begin{equation*}
\yy_i = (z_{i,1}, \dots,z_{i,m})
\qqand
x = \sA[g(\yy_1), g(\yy_2),\dots g(\yy_n)]
\end{equation*}
so that $f(x) = \sS[f\circ g](z)$.
In the circuit model, we have the following result on the evaluation of this function.

\begin{thm}
\label{thm:introCompositionFunctionCircuit}
Let $A$ and $B$ be quantum algorithms in the circuit model that evaluate the functions $f$ and $g$, respectively, with bounded error.
Then, there exists a quantum algorithm in the circuit model that evaluates the function $f\circ g$ with bounded error in time complexity
\begin{equation}
\label{eqn:introCompositionFunctionCircuit}
\OO(L)\sA[T(A) + T(B) + s\log L]
\end{equation}
where $L$ is the worst-case Las Vegas query complexity of $A$, and $s$ is the space complexity of $B$.
\end{thm}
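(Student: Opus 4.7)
The plan is to build, via transducer composition, a single transducer $S$ for $f\circ g$ with $W(S)=\OO(L)$ and time $T(S)=\OO(T(A)+T(B)+s\log L)$, and then invoke \rf{thm:introImplementation} to implement it in one shot. The key ingredient is the purifier of \rf{thm:introPurifier} with error parameter $\eps'=c/L$ for a sufficiently small constant $c$: its transduction and query complexities are bounded by a constant and its time in the circuit model is $\OO(s\log L)$, so wrapping $B$ in it yields a near-exact subroutine for $g$ at additive cost $\OO(s\log L)$ per call.

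First, apply \rf{thm:introProg->Transducer} in the circuit model to $A$, obtaining a canonical transducer $S_A$ with $T(S_A)=\OO(T(A))$ and $W(S_A,O\oplus O',\xi)=L(A,O\oplus O',\xi)\le L$. For each $i$, regard the unitary $B_i$ that runs $B$ on input $\yy_i$ as a \emph{trivial transducer}, with empty private space and catalyst $v=0$, so that $W(\mathrm{triv}(B_i),O,\cdot)=0$ and $T(\mathrm{triv}(B_i))=T(B)$. Compose it with the purifier to form $S'_i=\purifier\circ\mathrm{triv}(B_i)$, a transducer sending $\ket|0>$ to $\ket|g(\yy_i)>$ up to error $\eps'$. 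By \rf{prp:introFunctional} together with~\rf{eqn:compositionTransductionUpper},
\[
W(S'_i,O,\xi)\le W(\purifier,O_{\psi_i},\xi)+\Wmax\bigl(\mathrm{triv}(B_i)\bigr)\cdot L^{(1)}(\purifier,O_{\psi_i},\xi)=\OO(\|\xi\|^2),
\]
and $T(S'_i)=T(\purifier)+T(B)=\OO(T(B)+s\log L)$.

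The parallel composition $\bigoplus_i S'_i$ is, in the circuit model, a single circuit keyed by the index $i$ and therefore has per-execution time $\OO(T(B)+s\log L)$. Setting $S=S_A\circ\bigoplus_i S'_i$, the bound of~\rf{eqn:compositionTransductionMultipleUpper} gives
\[
W(S,O,\ket|0>)\le W(S_A,O\oplus O',\ket|0>)+\sum_i\Wmax(S'_i)\cdot L^{(i)}(S_A)\le L+\OO(1)\cdot L=\OO(L),
\]
and $T(S)=T(S_A)+T(\bigoplus_i S'_i)=\OO(T(A)+T(B)+s\log L)$. Implementing $S$ via \rf{thm:introImplementation} with a small constant error uses $K=\OO(L)$ executions of $S$, for total time $\OO\bigl(L\cdot[T(A)+T(B)+s\log L]\bigr)$. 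Errors combine: $A$ contributes its own constant bounded error, the $L$ purifier substitutions contribute $\OO(L)\cdot\eps'=\OO(1)$, and the implementation contributes another $\OO(1)$; choosing small enough constants yields a bounded-error algorithm for $f\circ g$.

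The main subtlety, and the key conceptual point, is the use of the non-canonical trivial transducer for $B$. If we had instead canonicalized $B$ via \rf{thm:introProg->Transducer}, we would have $W(S_B)=L(B)=\Omega(T(B))$ in the worst case; threading this through the $L$ purifier substitutions and then through $S_A$ would blow the final transduction complexity up by a factor of $T(B)$, spoiling the bound. Placing $B$'s internal queries inside the unitary (not the catalyst) charges them only via the per-execution time $T(S)$, where they appear linearly across the $K=\OO(L)$ executions, and not via $W$, which would otherwise multiply them by an extra factor of $L$.
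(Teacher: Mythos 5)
Your proposal is correct and follows essentially the same route as the paper: convert $A$ to a circuit-model transducer with $W\le L$, wrap each call to $g$ in a purifier with error $\OO(1/L)$ so that its $\OO(1)$ transduction/query complexity keeps $W(S)=\OO(L)$, and run the composed transducer $K=\OO(L)$ times via \rf{thm:introImplementation}. The only difference is packaging: you fold $B$ into the composed transducer as a trivial zero-catalyst sub-transducer, whereas the paper leaves $B$ \emph{outside} the transducer as its input oracle and simply executes $B$ whenever the oracle is queried (\rf{fig:introCompositionFunctionCircuit}); both choices charge $B$ only $T(B)$ per execution and nothing to $W$, so the accounting is identical, and your closing remark about why one must \emph{not} canonicalize $B$ (which would set $W(S_B)=L(B)$ and multiply into the final $W$) is exactly the right intuition. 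Two small points you gloss over: the purifier requires \emph{bidirectional} access to its oracle, so you must supply $\bi{B}=B\oplus B^{-1}$ rather than $B$ alone (a constant-factor fix, handled explicitly in \rf{thm:compositionFunctionCircuit}); and the claim that $\bigoplus_i S'_i$ costs only $\OO(T(B)+s\log L)$ holds not because direct sums are free in the circuit model (they are not) but because all the $S'_i$ are the \emph{same} circuit differing only in which block of $O_z$ they query, i.e., the $I_\cE\otimes S$ case of \rf{cor:byIdentity} with \rf{rem:directSumDifferentOracle}.
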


\pfsketch
We obtain the transducer $S_A$ in the circuit model using \rf{thm:introProg->Transducer}.
The transduction and query complexities of $S_A$ are bounded by $L$.
Let $\purifier$ be the purifier for the input oracle $B$.
Consider the transducer $S = S_A\circ\bigoplus_i \purifier$, see \rf{fig:introCompositionFunctionCircuit}.
The transducer $\purifier$ on the input oracle $B(O_{\yy_i})$ evaluates $g(y_i)$ with diminished error.
It suffices to make error somewhat smaller than $1/L$.
Then, $S$ on the input oracle $\bigoplus_i B(O_{\yy_i})$ evaluates $f\circ g$ with bounded error.

\myfigure{\label{fig:introCompositionFunctionCircuit}}
{
A composition scheme for \rf{thm:introCompositionFunctionCircuit}.
The input oracle $O_z$ is broken down as $\bigoplus_i O_{\yy_i}$.
The composed transducer contains the elements inside the blue box.
The program $B$ is executed as is, serving as an input oracle to the composed transducer.
}
{
\def\funkcijaB(#1,#2)#3{  
    \edef\x{#1}
    \edef\y{#2}
    \begin{scope}[shift={(#1,#2)}]
        \draw (0,0) rectangle (2,-1) node[pos=0.5] {$\purifier$};
        \draw (0,-1.5) rectangle (2,-2.5) node[pos=0.5] {$B$};
        \draw (0,-3) rectangle (2,-4) node[pos=0.5] {$O_{\yy_#3}$};
        \draw[purple] (1,-1) to (1,-1.5);
        \draw[purple] (1,-2.5) to (1,-3);
    \end{scope}
}
\[
\begin{tikzpicture}[every path/.append style={thick,->}]
\draw (7,0) rectangle (9,-1) node[pos=0.5] {$S_A$};
\funkcijaB(3,-2)1
\funkcijaB(6,-2)2
\node at (9,-2.5) {\Large$\cdots$};
\funkcijaB(10,-2)n
\draw[purple, out=270, in=90] (7.2, -1) to (4,-2) ;
\draw[purple, out=270, in=90] (7.5, -1) to (7,-2) ;
\draw[purple, out=270, in=90] (8.8, -1) to (11,-2) ;
\draw[rounded corners=10pt, blue] (2,0.5) rectangle (13,-3.25);
\end{tikzpicture}
\]
}

The transduction complexity of $\purifier$ on a unit admissible vector is $\OO(1)$.
Hence, by~\rf{eqn:compositionTransductionUpper}
\[
W(S, O_z, \ket|0>) \le W(S_A, O_x, \ket|0>) + \OO(1)\cdot L(S_A, O_x, \ket |0>) = \OO(L). 
\]
All the purifiers can be implemented in parallel, hence by \rf{prp:introFunctional},
\[
T(S) = T(S_A) + T(\purifier) = \OO\sA[T(A)] + \OO(s\log L).
\]
The theorem follows from \rf{thm:introImplementation} applied to $S$, where we replace execution of the input oracle by the execution of $B$.
Again, all the $B$ can be executed in parallel.
\pfend

In the QRAG model, we can easily deal with different functions $g_i$, like in the following function, which we already considered in~\rf{eqn:randomComposedFunction}:
\begin{equation}
\label{eqn:randomComposedFunctionCopy}
f\sA[
g_1(z_{1,1}, \dots,z_{1,m}), 
g_2(z_{2,1}, \dots,z_{2,m}),
\dots,
g_n(z_{n,1}, \dots,z_{n,m})].
\end{equation}
We again use notation
\[
\yy_i = (z_{i,1}, \dots,z_{i,m})
\qqand
x = \sA[g_1(\yy_1), g_2(\yy_2),\dots g_n(\yy_n)].
\]

\begin{thm}
\label{thm:introCompositionFunctionQRAG}
Consider the function as in~\rf{eqn:randomComposedFunctionCopy}.
Let $A$ and $B_1,\dots, B_n$ be quantum algorithms that evaluate the functions $f$ and $g_1,\dots, g_n$, respectively, with bounded error.
Assuming the QRAG model and QRAM access to the description of $A$ and $B_1,\dots, B_n$, there exists a quantum algorithm that evaluates the function~\rf{eqn:randomComposedFunctionCopy} with bounded error in time complexity
\begin{equation}
\label{eqn:introCompositionFunctionQRAG}
\OO(\timeR)\max_x\sB[T(A) + \sum_{i=1}^n T(B_i) L^{(i)}_x(A)].
\end{equation}
Here $L^{(i)}_x(A)$ is the $i$-th partial Las Vegas complexity $L^{(i)}\sA[A, O_x, \ket|0>]$ of the algorithm $A$ on the input oracle encoding $x$.
\end{thm}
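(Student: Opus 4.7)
The plan is to carry out the entire composition in transducer-space---where errors do not accumulate and every individual transducer has QRAG time complexity $\OO(\timeR)$---and to convert back to an algorithm only at the very end. The key idea is to wrap each bounded-error subroutine $B_i$ in a purifier (\rf{thm:introPurifier}), so that it behaves essentially as an exact oracle for $g_i$, at only constant multiplicative overhead in transduction and query complexity and without any overhead in QRAG time.

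First, apply \rf{thm:introProg->Transducer} in the QRAG model to convert $A$ and each $B_i$ into canonical transducers $S_A$ and $S_{B_i}$, each of time complexity $\OO(\timeR)$ and with worst-case transduction complexity on a unit admissible initial state equal to $T(A)$ (respectively $T(B_i)$). Next, for each $i$, functionally compose the purifier $\purifier$, set with error $\eps' = \Theta(1/L)$ where $L$ is a crude a priori upper bound on the total Las Vegas query complexity of $A$, onto $S_{B_i}$ via \rf{prp:introFunctional}. By \rf{eqn:compositionTransductionUpper} and the facts that the purifier has $\Wmax = \Lmax = \OO(1)$ and QRAG time $\OO(\timeR)$ independent of $\eps'$, the composed transducer $P_i := \purifier \circ S_{B_i}$ transduces $\ket |0>$ into $\ket |g_i(\yy_i)>$ with error $\eps'$, satisfies $\Wmax(P_i, O_{\yy_i}) = \OO(1) + \OO(1)\cdot T(B_i) = \OO(T(B_i))$, and has time complexity $\OO(\timeR)$.

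Now take the parallel composition $P := \bigoplus_i P_i$, implementable in $\OO(\timeR)$ time under the assumed QRAM access to descriptions, and functionally compose $S := S_A \circ P$. Since the partial query states of $S_A$ coincide with those of $A$, \rf{eqn:compositionTransductionMultipleUpper} yields
\[
W(S, O_z, \ket |0>) \;\le\; T(A) + \sum_{i=1}^n \Wmax(P_i)\cdot L^{(i)}_x(A) \;=\; \OO\!\left(T(A) + \sum_i T(B_i)\, L^{(i)}_x(A)\right),
\]
while $T(S) = T(S_A) + T(P) = \OO(\timeR)$. Applying \rf{thm:introImplementation} to $S$ with constant target error then produces an algorithm of runtime $\OO(T(S)\cdot W(S, O_z, \ket |0>))$; taking the worst case over $z$ (equivalently over $x$) yields exactly the stated bound.

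The main obstacle is the error analysis: each $P_i$ is only an $\eps'$-perturbed transducer, and $S_A$ may query each $P_i$ up to $L$ times in superposition, so one has to verify, via the perturbed transducer framework of \rf{sec:perturbedTransducers}, that the aggregate perturbation of $S$'s transduction action stays $\OO(L\eps')$ and hence bounded for the chosen $\eps' = \Theta(1/L)$. This amplification is free because the QRAG purifier time does not depend on $\eps'$, so no log factors enter the final complexity.
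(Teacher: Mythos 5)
Your proposal is correct and follows essentially the same route as the paper's: convert $A$ and the $B_i$ into QRAG transducers via \rf{thm:introProg->Transducer}, wrap each $S_{B_i}$ in a purifier (the paper additionally forms the bidirectional version $S_{B_i}\oplus S_{B_i}^{-1}$ so the purifier can uncompute), compose as $S_A\circ\bigoplus_i(\purifier\circ S_{B_i})$, bound the transduction complexity via \rf{eqn:compositionTransductionMultipleUpper}, and convert back to an algorithm only once at the end. The one quantitative caveat is in your error analysis: \rf{cor:approximatePumping} requires the aggregate perturbation to be $\OO(\eps/\sqrt{K})$ with $K=\Theta(W(S,O_z,\ket|0>))$, not merely $\OO(1)$, so $\eps'=\Theta(1/L)$ need not be small enough when $W\gg L$; but, as you observe, shrinking $\eps'$ further is free in the QRAG model, so this does not change the stated complexity.
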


\pfsketch
We obtain the transducer $S_A$ in the QRAG model using \rf{thm:introProg->Transducer}.
Its transduction complexity is $T(A)$.
We assume all $B_i$ have the same range and the same error.
Let $\purifier$ be the corresponding purifier.
Finally, let $S_{B_i}$ be the transducer in the QRAG model corresponding to $B_i$.
Consider the composed transducer $S$ as in \rf{fig:introCompositionFunctionQRAG}.
Similarly to \rf{thm:introCompositionFunctionCircuit}, 
the transducer $S$ on the input oracle $O_z = \bigoplus_i O_{\yy_i}$ evaluates the function $f\circ g$ with bounded error provided that the error of the purifier is sufficiently smaller than $1/L$.

\myfigure{\label{fig:introCompositionFunctionQRAG}}
{
A composition scheme for \rf{thm:introCompositionFunctionQRAG}.
The input oracle $O_z$ is again broken down as $\bigoplus_i O_{\yy_i}$.
The composed transducer $S$ contains the elements inside the blue box.
This time every $B_i$ is turned into a transducer and partakes in the composition.
}
{
\def\funkcijaB(#1,#2)#3{  
    \edef\x{#1}
    \edef\y{#2}
    \begin{scope}[shift={(#1,#2)}]
        \draw (0,0) rectangle (2,-1) node[pos=0.5] {$\purifier$};
        \draw (0,-1.5) rectangle (2,-2.5) node[pos=0.5] {$S_{B_#3}$};
        \draw (0,-3) rectangle (2,-4) node[pos=0.5] {$O_{\yy_#3}$};
        \draw[purple] (1,-1) to (1,-1.5);
        \draw[purple] (1,-2.5) to (1,-3);
    \end{scope}
}
\[
\begin{tikzpicture}[every path/.append style={thick,->}]
\draw (7,0) rectangle (9,-1) node[pos=0.5] {$S_A$};
\funkcijaB(3,-2)1
\funkcijaB(6,-2)2
\node at (9,-2.5) {\Large$\cdots$};
\funkcijaB(10,-2)n
\draw[purple, out=270, in=90] (7.2, -1) to (4,-2) ;
\draw[purple, out=270, in=90] (7.5, -1) to (7,-2) ;
\draw[purple, out=270, in=90] (8.8, -1) to (11,-2) ;
\draw[rounded corners=10pt, blue] (2,0.5) rectangle (13,-4.75);
\end{tikzpicture}
\]
}

Using that the transduction and the query complexity of $\purifier$ are $\OO(1)$, we obtain that
$
\Wmax(\purifier \circ S_{B_i}, O_{\yy_i}) = \OO\sA[T(B_i)].
$
Therefore, by~\rf{eqn:compositionTransductionMultipleUpper} and using that $A$ and $S_A$ have the same query state:
\[
W\sA[S, O_z, \ket|0>] \le W(S_A, O_x, \ket |0>) + \sum_i \OO\sA[T(B_i)] L^{(i)}\sA[A, O_x, \ket|0>].
\]

For the time complexity, $S$ is a composition of three transducers, where the last two are direct sums.  All three of them have time complexity $\OO(\timeR)$, hence, this is also the time complexity of $S$.
The theorem follows from~\rf{thm:introImplementation}.
\pfend

Comparison between Theorems~\ref{thm:introCompositionFunctionCircuit} and~\ref{thm:introCompositionFunctionQRAG} is similar to the comparison between Theorems~\ref{thm:introCompositionCircuit} and~\ref{thm:introCompositionQRAG} in \rf{sec:introComposition}.
The second theorem considers a more general case~\rf{eqn:randomComposedFunctionCopy}, and its formulation is close to~\rf{eqn:randomComposition2}.
On the other hand, in~\rf{eqn:introCompositionFunctionCircuit}, $T(B)$ will most likely dominate $s\log L$, so the latter can be removed.
Also, if $T(A)$ is smaller than $T(B)$, the whole expression is dominated by $L\cdot T(B)$, which is what we would like to have.

Finally, in the QRAG model, any bounded-error quantum algorithm $A$ can be turned into an essentially exact transducer $S_A$ such that, up to constant factors, its transduction complexity is $T(A)$ and its query complexity is the query complexity of $A$.
The latter transducers can be composed in multiple layers as in \rf{thm:introCompositionTree}.

\section{Preliminaries}
\label{sec:preliminaries}
If not said otherwise, a \emph{vector space} is a finite-dimensional complex inner product space.  They are denoted by calligraphic letters.  We assume that each vector space has a fixed orthonormal basis, and we often identify an operator with the corresponding matrix.
$I_{\cX}$ stands for the identity operator in $\cX$.
The inner product is denoted by $\ip<\cdot,\cdot>$.
$A^*$ stands for the adjoint linear operator.
All projectors are orthogonal projectors.
We use ket-notation to emphasise that a vector is a state of a quantum register, or to denote the elements of the computational basis.

We use $\OO$ for the big-Oh notation in order to distinguish from $O$, which we use for input oracles.
$\OO_A$ means that the constant may depend on $A$.
If $P$ is a predicate, we use $1_P$ to denote the corresponding indicator variable; which is equal to 1 if $P$ is true, and to 0 otherwise.

\subsection{Query Algorithms}
\label{sec:prelimQuery}

In this section, we briefly describe the model of quantum query algorithms, and define quantum Las Vegas query complexity.
The query model itself is essentially standard~\cite{buhrman:querySurvey, cleve:quantumComplexityTheory}, with the main difference that we do not restrict ourselves to the evaluation of functions, and also allow for multiple input oracles, which can be arbitrary unitaries.
The notion of quantum Las Vegas query complexity is relatively new~\cite{belovs:LasVegas}.

\mycutecommand\Ic{I^\circ}
\mycutecommand\Ib{I^\bullet}
\mycutecommand\Iw{I^\circ}
\mycutecommand\cHw{\cH^\circ}
\mycutecommand\cHq{\cH^\bullet}
\mycutecommand\cHt{\cH^\uparrow}

A quantum query algorithm $A$ works in space $\cH$, which we call the \emph{workspace} of the algorithm.
The algorithm is given an oracle $O$, which is a unitary%
\footnote{
While~\cite{belovs:LasVegas} define more general input oracles, we, for simplicity, consider only unitary input oracles in this paper.
}
 in some space $\cM$.
The interaction between the algorithm and the oracle is in the form of queries, which we are about to define.
The workspace is decomposed as $\cH = \cHw\oplus \cHq$, where the oracle is only applied to the second half.
Also, $\cHq = \cHt \otimes \cM$ for some $\cHt$, and the \emph{query} is
\begin{equation}
\label{eqn:query}
\tO = \Iw \oplus I\otimes O,
\end{equation}
where $\Iw$ and $I$ are the identities in $\cHw$ and $\cHt$, respectively.

In terms of registers, we assume the decomposition $\cH = \cHw\oplus \cHq$ is marked by a register $\cR$ so that $\ket R|0>$ corresponds to $\cHw$ and $\ket R|1>$ to $\cHq$.
Then, the query $\tO$ is an application of $O$, controlled by $\cR$, where $O$ acts on some subset of the registers (which constitute $\cM$).

The \emph{quantum query algorithm} $A=A(O)$ is a sequence of linear transformations in $\cH$:
\begin{equation}
\label{eqn:preAlgorithm}
A(O) = U_Q\, \tO\, U_{Q-1}\,\tO\,\cdots U_{1}\, \tO\, U_0,
\end{equation}
where $U_t$ are some input-independent unitaries in $\cH$.
See \rf{fig:queryAlgorithm}.
Thus, the algorithm implements a transformation $O\mapsto A(O)$: from the input oracle $O$ in $\cM$ to the linear operator~\rf{eqn:preAlgorithm} in $\cH$.

\myfigure{\label{fig:queryAlgorithm}}
{
A graphical illustration of a quantum query algorithm with $Q=3$ queries.
The algorithm interleaves input-independent unitaries $U_t$ with queries $\tO = \Iw \oplus I\otimes O$.
The intermediate state $\psi_t$ after $U_{t-1}$ and before the $t$-th query is decomposed as $\psi_t^\circ\oplus \psi_t^\bullet$, where only the second half is processed by the oracle.
}
{
\newcommand{\OneIteration}[1]{
    \edef\indxx{#1}
    \begin{scope}[shift={(4*\indxx-4,0)}]
        \draw (0.9,0.35) rectangle (2.3, 1.65) node[pos=0.5] {\Large $I\otimes O$};
        \draw (3.1,0.15) rectangle (4, 3.35) node[pos=0.5] {\Large $U_{\indxx}$};
        \draw[\witnesscolor,->] (0, 2.65) to node[above]{$\psi_{\indxx}^\circ$} (3.1,2.65);
        \draw[\nonquerycolor,->] (0, 1) to node[above]{$\psi_{\indxx}^\bullet$} (0.9,1);
        \draw[\querycolor,->] (2.3, 1) to (3.1,1);
    \end{scope}
}
\[
\begin{tikzpicture}[every node/.style={font=\scriptsize}, every path/.append style={thick,->}]
\draw[\xicolor] (-1.9,1.9) to node[above]{$\xi$} (-0.9,1.9);
\draw (-0.9,0.15) rectangle (0, 3.35) node[pos=0.5] {\Large $U_{0}$};
\OneIteration{1}
\OneIteration{2}
\OneIteration{3}
\draw[\taucolor] (12,1.9) to node[above]{$\tau$} (13,1.9);
\end{tikzpicture}
\]
}

We will generally work with the state conversion formalism.  We say that $A$ transforms $\xi$ into $\tau$ on oracle $O$, if $A(O)\xi = \tau$.
We say that $A$ does so $\eps$-approximately if $\normA|\tau - A(O)\xi|\le\eps$.
In this context, we will often call $\eps$ the \emph{error} of the algorithm.

Let us make two important remarks on the structure of thus defined query algorithms.

\begin{rem}[Alignment]
\label{rem:aligned}
Note that all the queries in~\rf{eqn:preAlgorithm} are identical, i.e., the oracle $O$ is always applied to the same registers and is always controlled by $\cR$.
To acknowledge this, we say that all the queries in $A$ are \emph{aligned}.
Usually, this is not important, but the alignment property will play a significant role in this paper, in particular in Sections~\ref{sec:properties} and~\ref{sec:programs->transducers}.
The main reason is that for the aligned program we can perform all the queries in parallel (assuming we have the intermediate states $\psi_t$ from \rf{fig:queryAlgorithm} somehow).
\end{rem}

\newcommand{\bi}[1]{\overleftrightarrow{#1}}

\begin{rem}[Unidirectionality]
The input oracle is unidirectional: the algorithm only has access to $O$.
This suffices for most of our results.
Quite often, however, bidirectional access to the input oracle is allowed, where the algorithm can query both $O$ and $O^*$.
The latter is a special case of the former, as bidirectional access to $O$ is equivalent to unidirectional access to $O\oplus O^*$.
As this situation will be common in some sections of the paper, we utilise the following piece of notation:
\begin{equation}
\label{eqn:bi}
\bi{O} = O \oplus O^*.
\end{equation}
\end{rem}

The standard complexity measure of a quantum query algorithm is $Q=Q(A)$: the total number of times the query $\tO$ is executed.
It was called Monte Carlo complexity in~\cite{belovs:LasVegas} in order to distinguish it from Las Vegas complexity defined next.

Let $\Pi^\bullet$ be the projector onto $\cHq$.
The state processed by the oracle on the $t$-th query is $ \psi_t^\bullet = \Pi^\bullet U_{t-1}\, \tO\, U_{t-2} \cdots \tO\, U_0 $, and the \emph{total query state} is 
\begin{equation}
\label{eqn:totalQueryState}
q(A, O, \xi) = \bigoplus_{t=1}^Q \psi_t^\bullet.
\end{equation}
This is the most complete way of specifying the work performed by the input oracle $O$ in the algorithm $A$ on the initial state $\xi$.
It is a member of $\cE\otimes \cM$ for some space $\cE$ (the latter actually being equal to $\bC^Q\otimes \cHt$).
The simplest way to gauge the total query state is by defining the corresponding \emph{quantum Las Vegas query complexity}:
\begin{equation}
\label{eqn:LasVegasComplexity}
L(A, O, \xi) = \|q(A, O, \xi)\|^2.
\end{equation}
As mentioned in \rf{sec:conceptualQuantum}, we extend the definitions~\rf{eqn:totalQueryState} and~\rf{eqn:LasVegasComplexity} for the case $\xi'\in \cE\otimes \cH$ for some $\cE$ using identities~\rf{eqn:queryStateExtended} and~\rf{eqn:LasVegasExtended}.
\medskip

\subsection{Multiple Input Oracles}
\label{sec:prelimMultipleOracles}

It is also possible for an algorithm to have access to several input oracles $O^{(1)},\dots,O^{(r)}$.
Las Vegas query complexity can naturally accommodate such a scenario.
Indeed, access to several input oracles is equivalent to access to the one combined oracle
\begin{equation}
\label{eqn:severalOracles}
O = O^{(1)}\oplus O^{(2)}\oplus \cdots \oplus O^{(r)}.
\end{equation}
Consequently, the space of the oracle has a similar decomposition $\cM = \cM^{(1)}\oplus\cdots\oplus \cM^{(r)}$, where $O^{(i)}$ acts in $\cM^{(i)}$.
The total query state can also be decomposed into \emph{partial query states}
\[
q(A,O,\xi) = q^{(1)}(A,O,\xi)\oplus q^{(2)}(A,O,\xi) \oplus \cdots \oplus q^{(r)}(A,O,\xi),
\]
where $q^{(i)}(A,O,\xi)$ is processed by $O^{(i)}$.
This gives partial Las Vegas query complexities
\[
L^{(i)}(A, O, \xi) = \normA|q^{(i)}(A,O,\xi)|^2.
\]

In terms of registers, it can be assumed that the input oracle is controlled by some register $\reg R$, where the value $\ket R|0>$ indicates no application of the input oracle, and $\ket R|i>$ with $i>0$ indicates the $i$-th input oracle $O^{(i)}$.
We note that we use $i$ in $\ket R|i>$ only as a label.
In particular, we do not perform any arithmetical operations on them.
Therefore, $\ket R|i>$ can have a complicated internal encoding that can facilitate the application of the oracle.

The assumption on the register $\cR$ in this section is not necessarily in contradiction with the assumptions of \rf{sec:prelimQuery}, as $\reg R$ can have a separate qubit that indicates whether $i$ in $\ket R|i>$ is non-zero.
This qubit can serve as $\reg R$ in the sense of \rf{sec:prelimQuery}.

The case of usual Monte Carlo query complexity requires additional changes, as per now it turns out that all the oracles are queried the same number of times, $Q$.
One way to allow for different number of queries is as follows.
Similarly to~\rf{eqn:query}, define the query to the $i$-th input oracle as
\[
\tO^{(i)} = \Ic \oplus I \otimes 
\s[I^{(1)} \oplus \cdots\oplus I^{(i-1)} \oplus O^{(i)} \oplus I^{(i+1)}\oplus\cdots\oplus I^{(r)} ],
\]
where the decomposition in the brackets is the same as in~\rf{eqn:severalOracles}.
In other words, $\tO^{(i)}$ is just the application of $O^{(i)}$ controlled by $\ket R|i>$.
The query algorithm is then defined as
\begin{equation}
\label{eqn:algorithmMultipleOracles}
A(O) = U_Q\, \tO^{(s_Q)}\, U_{Q-1}\,\tO^{(s_{Q-1})}\,\cdots\, U_{2}\, \tO^{(s_2)}\, U_{1}\, \tO^{(s_1)}\, U_0,
\end{equation}
where $s_1,s_2,\dots,s_Q \in [r]$.
The number of invocations of the $i$-th oracle, $Q^{(i)}$, is defined as the number of $s_j$ equal to $i$ in~\rf{eqn:algorithmMultipleOracles}.

\subsection{Evaluation of Functions}
\label{sec:prelimFunctions}
The \emph{standard} settings for quantum algorithms evaluating a (partial) function $f\colon D\to [p]$ with $D\subseteq [q]^n$ are as follows.
For an input $x\in [q]^n$, the corresponding input oracle acts in $\bC^n \otimes \bC^q$ as
\begin{equation}
\label{eqn:standardOracle}
O_x \colon \ket|i>\ket |b> \mapsto \ket |i> \ket |b \oplus x_i>
\end{equation}
for all $i\in[n]$ and $b\in [q]$.
Here $\oplus$ stands for the bitwise XOR, and we assume that $q$ is a power of 2 (we can ignore the inputs outside of the domain).

The algorithm $A$ itself has a special output register isomorphic to $\bC^p$.
After finishing the algorithm, measuring the output register should yield the value $f(x)$.
This either happens with probability 1 (for exact algorithms), or with probability at least $1/2+d$ for some constant $d>0$ (bounded error).

This definition is nice because there is one well-defined input oracle $O_x$ for each input.
Also $O_x^2 = I$, which makes uncomputing very easy.
Unfortunately, the standard definition has a drawback that the input oracle of the algorithm has a more restricted form than the one required for the algorithm itself.
This is problematic if the algorithm is expected to be used as an input oracle for another algorithm.
This issue is solved by noting that $\ket |b> \mapsto \ket |b\oplus f(x)>$ can be implemented by evaluating $f(x)$, performing the XOR operation, and uncomputing $f(x)$.
This increases the complexity by a factor of 2, which is fine if we ignore constant factors.
We call it \emph{robust} evaluation of function, as the action of the algorithm is specified for all input states, not just $\ket|0>$.

However, constant factors can be important, for instance, in iterated functions, where such factors appear as a base of the exponent, or in the settings of Las Vegas complexity in~\cite{belovs:LasVegas}, where precise complexity is sought for.
In this case, a more homogenous definition would be appreciated.

We follow one such approach from~\cite{belovs:variations}, which we call \emph{state-generating}.
We say that the input oracle $O_x$ encodes the input string $x\in[q]^n$ if it performs the transformation 
\begin{equation}
\label{eqn:stateGeneratingOracle}
O_x \colon \ket |i>\ket|0> \mapsto\ket |i>\ket|x_i>
\end{equation}
for all $i\in [n]$.
The action of this oracle on $\ket |i>\ket|0>$ is identical to that of~\rf{eqn:standardOracle}, but we do not require anything for other states.
In other words, the admissible subspace of $O_x$ consists of vectors having $\ket|0>$ in the second register.
The admissible subspace of the algorithm itself is spanned by $\ket|0>$.
On that, it has to perform the transformation $\ket |0> \mapsto \ket |f(x)>$.
The algorithm has bidirectional access to $O_x$, which we treat as unidirectional access to $\bi{O_x}$ defined in~\rf{eqn:bi}.
It is possible to assume the input oracle $O_x$ is a direct sum of $n$ unitaries acting in $\bC^q$ in order to apply the multiple input oracle settings from \rf{sec:multipleInputOracles}.

As the initial state is always $\ket |0>$, and $O_x$ is essentially determined by $x$, we will write
\begin{equation}
\label{eqn:function_QueryComplexity}
L_x (A) = L\sA[A, \bi{O_x}, \ket |0>]
\qqand
L(A) =  \max_{x\in D} L_x(A).
\end{equation}
More precisely, we define $L_x(A)$ as the supremum over all input oracles 
$O_x$ that encode the input $x$. 
We use similar notation for $L^{(i)}_x$ and $L^{(i)}$. 

This approach has a number of advantages.
First, it casts function evaluation as a special case of state conversion with state-generating input oracles~\cite{belovs:variations}.
Second, the algorithm can be directly used as a part of the input oracle for another algorithm without any uncomputation.
Finally, this definition does not involve the somewhat arbitrary XOR operation and may be, thus, regarded as being more pure.
In particular, it makes sense to ask for the precise value of its quantum Las Vegas query complexity (and not just only up to constant factors).

This approach has a number of disadvantages.
First, we have to explicitly allow bidirectional access to $O_x$ in order to allow uncomputing, as it is no longer the case that $O_x$ is its own inverse.
More importantly, though, neither the action nor the Las Vegas complexity of the algorithm is specified for the initial states orthogonal to $\ket |0>$.
For our own algorithms, we can design them so that $O_x$ is only executed with $\ket|0>$ in the second register (maybe after some perturbation, see \rf{sec:perturbed}).
But, if we are dealing with an arbitrary algorithm, we have no such guarantees.

\subsection{Circuit Model}
\label{sec:prelimCircuit}

\mycutecommand{\TC}{T_{\mathrm C}}

We assume that the space of the algorithm is embedded into a product of qubits, $(\bC^2)^{\otimes s}$, for some $s$ called the space complexity of the algorithm.
A quantum program is a product of elementary operations called gates:
\begin{equation}
\label{eqn:program}
A = G_TG_{T-1}\cdots G_1.
\end{equation}
In the circuit model, each gate $G_i$ is usually a 1- or a 2-qubit operation that can be applied to any qubit or a pair of qubits.
The number of elementary operations, $T$, is called the time complexity of $A$, and is denoted by $T(A)$.
We assume a universal gate set, so that every unitary can be written as a quantum program.
We do not focus too much on a particular model, as they are all essentially equivalent.

In a query algorithm like in \rf{sec:prelimQuery}, each execution of the input oracle $\tO$ also traditionally counts as one elementary operation.
In other words, each unitary in~\rf{eqn:preAlgorithm} can be decomposed into elementary gates as in~\rf{eqn:program} to give a corresponding program in the circuit model.
We use $T$ to denote its time complexity, and $Q$ to denote its Monte Carlo query complexity.

We will often require an algorithm like in~\rf{eqn:program} to be executed conditionally, that is, controlled by the value of some external qubit.
In other words, we would like to perform an operation $A^{\mathrm{c}}$ of the form
\[
\ket |0>\ket |\xi> \maps{A^{\mathrm{c}}} \ket |0>\ket |\xi>
,\qquad
\ket |1>\ket |\xi> \maps{A^{\mathrm{c}}} \ket |1>\ket |A\xi>,
\]
where the first qubit is the control qubit.
We will denote time complexity of this procedure by $\TC(A)$.

Since it is possible to substitute each $G_i$ in~\rf{eqn:program} by its controlled version, we have that $\TC(A) = \OO(T(A))$.
But it is often possible to do better.
For instance, assume that $A$ is of the form $A_2A_1^{\mathrm{c}}$, i.e., a large chunk of $A$ is already conditioned.
We have that $T(A) = T(A_2) + \TC(A_1)$.
On the other hand, $\TC(A) = \TC(A_2) + \TC(A_1) + \OO(1)$, because we can calculate the AND of the two control qubits of $A_1$ into a fresh qubit, execute $A_1$ controlled by this fresh qubit, and uncompute the new qubit afterwards.
In other words, the constant factor of $\TC(A) = \OO(T(A))$ is not getting accumulated with each new conditioning, but is, in a way, paid only once.

\subsection{QRAG model}
\label{sec:QRAG}

The QRAG model extends the circuit model by allowing the following Quantum Random Access Gate:
\[
\mathrm{QRAG}\colon 
\ket |i> \ket |b> \ket |x_1,\dots,x_{i-1}, x_i, x_{i+1},\dots, x_m>
\mapsto
\ket |i> \ket |x_i> \ket |x_1,\dots,x_{i-1}, b, x_{i+1},\dots, x_m>,
\]
where the first register is an $m$-qudit, and the remaining ones are quantum words (i.e., quantum registers large enough to index all the qubits in the program).
We assume the QRAG takes time $\timeR$ as specified in~\rf{eqn:timeR}.
Note that this gate would require time $\Omega(m)$ to implement in the usual circuit model, as it depends on all $m+2$ registers.

This should not be confused with the QRAM model, which allows oracle access to an array of classical registers $x_1,x_2,\dots,x_m$:
\[
\mathrm{QRAM}\colon  
\ket |i> \ket |b> \mapsto \ket |i> \ket |b\oplus x_i>,
\]
where $\oplus$ stands for the bit-wise XOR.  The difference is that $x_i$s are being fixed during the execution of the quantum procedure (but they may be changed classically between different executions).
The QRAG model is more powerful than the QRAM model.

The main reason we need the QRAG is the following result (see, e.g.,~\cite{jeffery:kDist} for a formal statement, although the same construction has been used elsewhere including~\cite{ambainis:searchVariableTimes, cornelissen:spanProgramsTime, jeffery:subroutineComposition}):

\begin{thm}[Select Operation]
\label{thm:select}
Assume the QRAG model and that we have QRAM access to a description of a quantum program as in~\rf{eqn:program} in some space $\cH$, where each gate $G_i$ either comes from a fixed set of 1- or 2-qubit operations (which can be applied to different qubits each time), or is a QRAG.
Let $\cI$ be a $T$-qudit.  Then, the following Select operation
\begin{equation}
\label{eqn:select}
\sum_i \ket I|i>\ket H|\psi_i> \mapsto \sum_i \ket I|i> \ket H |G_i\psi_i>
\end{equation}
can be implemented in time $\OO(\timeR)$.
\end{thm}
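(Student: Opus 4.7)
\pfstart[Proof sketch]
The plan is to reduce the $i$-controlled application of an arbitrary gate $G_i$ to a constant number of QRAM look-ups, QRAG gates, and applications of a fixed finite set of controlled gates, each of which costs $\OO(\timeR)$.

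First, I would fetch the description of $G_i$ into an ancilla register. Since the program has the form~\rf{eqn:program} and each $G_i$ is drawn from a fixed finite set of 1-qubit gates, 2-qubit gates, or a QRAG, the description of $G_i$ consists of a constant-size label (naming the gate type) together with at most three word-sized indices specifying which qubits/registers are acted upon. Using QRAM access to the description of the program with $\cI$ as the address register, a constant number of QRAM queries suffice to write $(\mathrm{label}_i, j_i, k_i, \ell_i)$ into a fresh ancilla, each query costing $\OO(\timeR)$.

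Next, I would use the fetched indices to route the target qubits into fixed ancilla slots. Concretely, applying a QRAG with address $j_i$ swaps the $j_i$-th qubit of $\cH$ with a designated fixed ancilla qubit; doing the same for $k_i$ (and $\ell_i$ if the gate is a QRAG) places all the qubits the gate needs to touch at a fixed, index-independent location. Now the label register controls a Select-over-a-constant-set operation: for each of the finitely many possible labels (the elements of the fixed gate set, plus the option ``apply a QRAG''), apply the corresponding controlled gate to those fixed ancilla slots. Since the gate set is a fixed constant, this controlled multiplexing is $\OO(1)$ elementary gates, and the single inner QRAG, if selected, costs $\OO(\timeR)$. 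Finally, I reverse the routing QRAGs to put the qubits back in their original positions, and uncompute the description fetch by running the QRAM queries in reverse.

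The main subtlety, and the reason the QRAG model (rather than only QRAM) is needed, is the routing step: without QRAG we cannot move an index-dependent target qubit to a fixed location in $\OO(\timeR)$ time, only in $\Omega(s)$ time by sweeping through all qubits. Once routing is available, correctness on the superposition $\sum_i \ket I|i>\ket H|\psi_i>$ is immediate by linearity, since on each branch of $\cI$ the operation applied to $\ket H|\psi_i>$ is exactly $G_i$. The total cost is a constant number of QRAM/QRAG invocations plus $\OO(1)$ controlled elementary gates, giving time $\OO(\timeR)$.
\pfend
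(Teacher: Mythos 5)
Your proposal is correct and follows essentially the same route as the paper's own proof sketch: use QRAM to fetch the gate description conditioned on $i$, use QRAG to swap the target qubits into fixed scratch registers, apply the gate from the fixed set (or a QRAG) there, then reverse the routing and uncompute the description. The extra remarks on why QRAG (rather than QRAM alone) is needed for routing and on linearity over the superposition are accurate elaborations of the same argument.
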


\pfstart[Proof sketch]
We add a number of scratch registers to perform the following operations.
Conditioned on $i$, we use the QRAM to read the description of $G_i$.
We switch the arguments of $G_i$ into the scratch space using the QRAG.
We apply the operation $G_i$ on the scratch space.
We switch the arguments back into memory, and erase the description of $G_i$ from the scratch memory.
All the operations besides applying $G_i$ take time $\OO(\timeR)$.
Application of a usual gate $G_i$ takes time $\OO(1)$ as the gate set is fixed, or $\OO(T_R)$ if $G_i$ is a QRAG.
\pfend

It is also possible to not store the whole program in memory, but compute it on the fly, in which case the complexity of this computation should be added to the complexity stated in \rf{thm:select}.

\begin{cor}[Parallel execution of programs]
\label{cor:selectProgram}
Assume that in the settings of \rf{thm:select} we have QRAM access to an array storing descriptions of $m$ quantum programs $A^{(1)},\dots, A^{(m)}$.
Let $\cI$ be a $m$-qudit.  Then, the following operation
\[
\sum_i \ket I|i>\ket H|\psi_i> \mapsto \sum_i \ket I|i> \ketA H |A^{(i)}\psi_i>,
\]
can be implemented in time $\OO\sA[\timeR\cdot \max_i T(A^{(i)})]$.
\end{cor}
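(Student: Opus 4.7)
The plan is to reduce this to $T_{\max} = \max_i T(A^{(i)})$ iterated applications of \rf{thm:select}. First, I would pad each program $A^{(i)}$ on the right with identity gates so that every program has length exactly $T_{\max}$; conceptually, we now have a two-dimensional array whose $(i,t)$-th entry is $G^{(i)}_t$, the $t$-th gate of $A^{(i)}$ (or identity if $t > T(A^{(i)})$). QRAM access to the original array of descriptions trivially gives QRAM access to this padded two-dimensional array, since $t$ will always be a classical index in what follows.

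Next, I would run a classical loop $t = 1, 2, \dots, T_{\max}$. At iteration $t$, the goal is to implement
\[
\sum_i \ket I|i>\ket H|\phi_i> \;\mapsto\; \sum_i \ket I|i>\ket H|G^{(i)}_t\phi_i>.
\]
Since $t$ is fixed (classical) at this point, this is exactly the Select operation of \rf{thm:select} applied to the ``virtual program'' whose $i$-th gate is $G^{(i)}_t$; its description is obtained by a single QRAM lookup indexed by $(i,t)$ into the stored array. Hence each iteration costs $\OO(\timeR)$ by \rf{thm:select}. Composing the $T_{\max}$ iterations produces, on each branch $\ket I|i>$, the product $G^{(i)}_{T_{\max}} \cdots G^{(i)}_1 = A^{(i)}$ (padding gates contributing trivially), giving the desired transformation in total time $\OO(\timeR \cdot T_{\max})$.

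The only mild subtlety is ensuring that accessing $G^{(i)}_t$ is a single-lookup QRAM operation. This is purely a matter of data layout: one can either store the descriptions in a padded two-dimensional array of size $m \cdot T_{\max}$, or store them concatenated together with an auxiliary array of offsets (so that the address of $G^{(i)}_t$ is obtained by a constant-time classical computation from $i$, $t$, and the offset table). Either way, the model of \rf{thm:select} is satisfied, and no additional overhead beyond $\OO(\timeR)$ per iteration is incurred. I do not foresee any genuine obstacle; the result is essentially \rf{thm:select} applied in series across a time axis that is classical.
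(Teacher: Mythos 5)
Your proposal is correct and follows essentially the same route as the paper's proof sketch: apply the Select operation of \rf{thm:select} once per time step $t=1,\dots,\max_i T(A^{(i)})$, selecting on $i$ the gate $G^{(i)}_t$, with padding by identities for shorter programs. The data-layout remarks you add are sensible housekeeping that the paper leaves implicit.
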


\pfstart[Proof sketch]
Use \rf{thm:select} to implement the first gate in all of $A^{(i)}$ in parallel, then the second one, and so on until the time mark $\max_i T(A^{(i)})$.
\pfend

Another important primitive is the random access (RA) input oracle.
If $O\colon \cM\to\cM$ is an input oracle, then its RA version acts on $\cJ\otimes \cM^{\otimes K}$, where $\cJ$ is a $K$-qudit.
If the register $\cJ$ contains value $i$, the input oracle is applied to the $i$-th copy of $\cM$ in $\cM^{\otimes K}$.

The idea behind this is that if $O$ is implemented as a subroutine, then the RA input oracle is a special case of \rf{cor:selectProgram}, where all $A^{(i)}$ are the same, but act on different sets of registers (which is easy to define using $\ket J|i>$ as an offset).
Therefore, it makes sense to define the RA input oracle as an elementary operation in the QRAG model.

\subsection{Perturbed Algorithms}
\label{sec:perturbed}

The following lemma is extremely useful in quantum algorithms, but for some reason has seldom experienced the honour of being explicitly stated.

\begin{lem}
\label{lem:surgery}
Assume we have a collection of unitaries $U_1,\dots, U_m$ all acting in the same vector space $\cH$.
Let $\psi_0',\dots,\psi_m'$ be a collection of vectors in $\cH$ such that
\[
\psi_t' = U_t \psi_{t-1}'
\]
for all $t$.
Let $\psi_0,\dots, \psi_m$ be another collection of vectors in $\cH$ such that $\psi_0 = \psi'_0$ and
\begin{equation}
\label{eqn:surgeryPerturbation}
\normA | \psi_t - U_i \psi_{t-1} | \le \eps_i
\end{equation}
for all $i$.
Then,
\begin{equation}
\label{eqn:surgeryResult}
\normA | \psi_m - \psi'_m | \le \sum_{t=1}^m \eps_t.
\end{equation}
\end{lem}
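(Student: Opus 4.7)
The plan is to proceed by induction on $m$, using only the triangle inequality and the fact that each $U_t$ is norm-preserving. The base case $m=0$ is immediate from $\psi_0 = \psi'_0$.

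For the inductive step, insert the intermediate point $U_m\psi_{m-1}$ between $\psi_m$ and $\psi'_m = U_m\psi'_{m-1}$ and apply the triangle inequality:
\[
\normA|\psi_m - \psi'_m| \;\le\; \normA|\psi_m - U_m\psi_{m-1}| \;+\; \normA|U_m\psi_{m-1} - U_m\psi'_{m-1}|.
\]
The first term is at most $\eps_m$ by the hypothesis~\rf{eqn:surgeryPerturbation}. For the second term, unitarity of $U_m$ gives
$\normA|U_m\psi_{m-1} - U_m\psi'_{m-1}| = \normA|\psi_{m-1} - \psi'_{m-1}|$, which is bounded by $\sum_{t=1}^{m-1}\eps_t$ by the induction hypothesis. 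Adding the two bounds yields~\rf{eqn:surgeryResult}.

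There is no real obstacle here; the lemma is essentially a one-line consequence of unitarity plus the triangle inequality. The only thing worth being careful about is simply indexing the perturbations correctly (reading the $i$ in~\rf{eqn:surgeryPerturbation} as $t$, matching the index on $\psi_t$ and $\psi_{t-1}$) so that the telescoping sum over $t=1,\ldots,m$ produces exactly $\sum_t \eps_t$.
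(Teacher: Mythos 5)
Your proof is correct and is essentially the paper's argument in inductive form: the paper writes $\psi_m-\psi'_m$ as the telescoping sum $\sum_{t=1}^m V_t(\psi_t-U_t\psi_{t-1})$ with $V_t=U_m\cdots U_{t+1}$ and bounds each term using unitarity and the triangle inequality, which is exactly what your induction produces when unrolled. No gaps.
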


\pfstart
Denote by $V_t$ the product $U_m U_{m-1} \cdots U_{t+1}$.
In particular, $V_m = I$.
Then,
\[
\psi_m - \psi'_m = V_m \psi_{m} - V_0 \psi_0 
=\sum_{t=1}^m \sA[ V_t \psi_{t} - V_{t-1} \psi_{t-1}]
= \sum_{t=1}^m V_t \sA[ \psi_{t} - U_t \psi_{t-1}].
\]
We obtain~\rf{eqn:surgeryResult} from the triangle inequality using~\rf{eqn:surgeryPerturbation} and the unitarity of $V_t$.
\pfend

In the application of this lemma, $U_t$ stands for sequential sections of a quantum algorithm.
The vectors $\psi'_t$ form the sequence of states the algorithm goes through during its execution.
The vectors $\psi_t$ is an idealised sequence, which is used instead of $\psi_t'$ in the analysis.

We call the difference between $\psi_t$ and $U_t \psi_{t-1}$ a (conceptual) perturbation.
The expression in~\rf{eqn:surgeryPerturbation} is the size of the perturbation.
Therefore, the Eq.~\rf{eqn:surgeryResult} can be stated as the total perturbation of the algorithm does not exceed the sum of the perturbations of individual steps.
If this sum is small, the final state $\psi_m$ of the analysis is not too far away from the actual final state $\psi'_m$ of the algorithm.

This lemma is implicitly used every time an approximate version of a quantum subroutine is used, which happens in pretty much every non-trivial quantum algorithm.

\subsection{Efficient Implementation of Direct-Sum Finite Automata}
\label{sec:automaton}
We will repeatedly use the following construction in this paper, for which we describe a time-efficient implementation.
We call it direct-sum quantum finite automata due to its superficial similarity to quantum finite automata.

The space of the algorithm is $\cK\otimes \cP\otimes \cH$, where $\cK$ is a $K$-qudit, $\cP$ is a qubit, and $\cH$ is an arbitrary space.
Additionally, we have black-box access to $K$ unitaries $S_0,\dots,S_{K-1}$ in $\cP\otimes \cH$.
The algorithm is promised to start in the state of the form
\begin{equation}
\label{eqn:automatonInitialState}
\ket K|0> \ket P|0> \ket H|\phi> + \sum_{t=0}^{K-1} \ket K|t>\ket P|1> \ket H|\psi_t>,
\end{equation}
and it has to perform the following transformation
\begin{equation}
\label{eqn:automaton}
\begin{minipage}{.9\linewidth}
\begin{itemize}
\item For $t=0,1,\dots,K-1$:\negmedskip
\begin{itemize}\itemsep=0pt
\item[(a)] Execute $S_t$ on $\cP\otimes \cH$ conditioned on $\ket K|t>$.
\item[(b)] Conditioned on $\ket P|0>$, replace $\ket K|t>$ by $\ket K|t+1>$.
\end{itemize}
\end{itemize}
\end{minipage}
\end{equation}
Let us elaborate on the ``replace'' in point (b).
It is not hard to show by induction that the $\ket P|0>$-part of the state contains a vector of the form $\ket K|t>\ket H|\phi_{t+1}>$.
This vector has to be replaced by $\ket K|t+1>\ket H|\phi_{t+1}>$.
We also assume that $\ket K|K>$ is identical with $\ket K|0>$.
For a graphical illustration refer to \rf{fig:automaton}.

\myfigure{\label{fig:automaton}}
{
A graphical illustration of the action of a direct-sum finite automaton.
States $\phi_t$ represent the internal state of the automaton, as it processes a ``string'' of quantum vectors $\psi_0,\dots,\psi_{K-1}$ into $\psi'_0,\dots,\psi'_{K-1}$.
Unlike the usual quantum automata, the internal state of the automaton and the current ``letter'' of the ``string'' are joined via the direct sum.
Similarity with \rf{fig:pumping} is apparent.
It is an interesting question, whether such finite automata can find other applications.
}
{\newcommand{\OneIteration}[1]{
    \edef\indxx{#1}
    \begin{scope}[shift={(2.5*\indxx,0)}]
        \draw (0.5,3.5) rectangle (1.5, 4.5) node[pos=0.5] {\Large $S_{\indxx}$};
        \draw[\taucolor] (-1,4) to node[above]{$\phi_{\indxx}$} (0.5,4);
        \draw[red] (1, 5.3) node[above]{$\psi_{\indxx}$} to  (1, 4.5);
        \draw[blue] (1, 3.5) to (1, 2.7) node[below]{$\psi'_{\indxx}$};
    \end{scope}
}
\negbigskip
\[
\begin{tikzpicture}[every node/.style={font=\scriptsize}, every path/.append style={thick,->}]
\OneIteration{0}
\OneIteration{1}
\OneIteration{2}
\OneIteration{3}
\draw[\taucolor] (9,4) to node[above]{$\phi_{4}$} (10.5,4);
\end{tikzpicture}
\]
\negbigskip
}

It is trivial to implement the transformation in~\rf{eqn:automaton} in $\OO(K\log K)$ elementary operations besides the executions of~$S_t$.
It is a technical observation that we can remove the logarithmic factor.

\begin{lem}
\label{lem:automaton}
The transformation in~\rf{eqn:automaton} can be implemented in time $\OO(K) + \sum_t \TC(S_t)$, where $\TC$ is defined in \rf{sec:prelimCircuit}.
\end{lem}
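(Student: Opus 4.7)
The plan is to encode the $K$-qudit $\cK$ in one-hot form: the basis state $\ket K|t>$ is represented by the computational basis vector on $K$ qubits with a single $1$ at position $t$. Since the lemma only fixes the dimension of $\cK$, we are free to choose this qubit-level representation (any other representation can be converted to one-hot in $\OO(K)$ preprocessing, absorbed into the final bound).

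With this encoding, for each $t = 0, 1, \dots, K-1$ I would carry out the two sub-operations of~\rf{eqn:automaton} as follows. Step (a), ``execute $S_t$ on $\cP \otimes \cH$ conditioned on $\ket K|t>$'', is the conditional execution of $S_t$ controlled on a single qubit -- namely the $t$-th qubit of $\cK$ -- being $1$; by the definition of $\TC$ in \rf{sec:prelimCircuit}, this costs $\OO(\TC(S_t))$. Step (b), ``conditioned on $\ket P|0>$, replace $\ket K|t>$ by $\ket K|t+1>$'', is implemented by a single three-qubit gate that swaps the $t$-th and $(t+1)$-th qubits of $\cK$ whenever the $\cP$-qubit is $0$ and the $t$-th qubit of $\cK$ is $1$. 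This gate compiles into $\OO(1)$ elementary operations, giving total cost $\OO(K) + \sum_{t=0}^{K-1} \TC(S_t)$.

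Correctness is checked on computational basis states. If $\cP = 0$ and $\cK = t$, the three-qubit gate fires and moves the sole $1$ from position $t$ to position $t+1$, realising the desired replacement; if $\cK$ sits at any other position (including $t+1$), then the $t$-th qubit of $\cK$ is $0$, so the gate is inert; and if $\cP = 1$, the gate is likewise inert, correctly preserving any $\ket P|1>$ components that $S_t$ may have produced at position $t$. The main obstacle one has to avoid is the $\OO(\log K)$ overhead per iteration that appears if one insists on a binary-encoded counter and either re-decodes ``is $\cK = t$?'' each round or performs a generic conditional increment; the one-hot encoding sidesteps this entirely by turning both the conditional execution of $S_t$ and the shift $\ket K|t> \mapsto \ket K|t+1>$ into strictly local manipulations on a pair of adjacent qubits.
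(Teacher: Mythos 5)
Your one-hot construction is a genuinely different, and arguably more transparent, route than the paper's. The paper keeps $\cK$ in binary on $\log K$ qubits and introduces an auxiliary $\log K$-qubit register recording, in unary, how many leading bits of the content of $\cK$ agree with the loop counter $t$; the control for $S_t$ is then a single qubit of that register, and the cost of incrementing $t$ and updating the auxiliary register is $\OO(d+1)$ when $2^d$ exactly divides $t+1$, which sums to $\OO(K)$ by the standard binary-counter amortization. Your encoding eliminates both the auxiliary register and the amortized analysis, at the price of $K$ qubits of space instead of $\OO(\log K)$. Since the lemma claims only a time bound this trade is legitimate (and the surrounding algorithm can simply adopt the one-hot encoding from the start, preparing the uniform superposition over $\cK$ as a $W$-state in $\OO(K)$ gates), though be aware that the ``slight modification'' of this lemma invoked in \rf{thm:optimalImplementation} requires adding $D^{(i)}$ to $\cK$ modulo $K$, i.e.\ cyclic shifts of the one-hot register, which needs separate care.

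There is, however, one concrete flaw in your Step (b). The three-qubit gate ``swap qubits $t$ and $t+1$ of $\cK$ if $\cP$ is $0$ and qubit $t$ is $1$'' is not a unitary: writing basis states as $(\cP, q_t, q_{t+1})$, it sends both $(0,1,0)$ and $(0,0,1)$ to $(0,0,1)$. A control cannot sit on a qubit that the controlled operation itself moves. The fix is to drop the control on qubit $t$ and use a plain controlled-SWAP (Fredkin gate) with control $\cP=0$. You then lose the ``inert at position $t+1$ when $\cP=0$'' property that your correctness check invokes, but you never needed it: by the induction noted after \rf{eqn:automaton}, just before Step (b) the $\ket P|0>$ part of the state is supported entirely on $\cK=t$, so the only $\cP=0$ amplitude the Fredkin touches is the one you want to move, and every other position of $\cK$ carries only $\cP=1$ amplitude, on which the gate is inert. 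With that correction your argument goes through and yields $\OO(K)+\sum_t \TC(S_t)$ as claimed.
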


\pfstart
The register $\reg K$ uses $\ell = \log K$ qubits.
We introduce an additional register $\cC$ that also consists of $\ell$ qubits.
For $i, t\in [K]$, we denote
\[
\ket C|i\curlyvee t> = \ket |1>^{\otimes c} \ket |0>^{\otimes \ell-c}
\]
if the binary representations of $i$ and $t$, considered as elements of $\bool^\ell$, agree on the first $c$ most significant bits, and disagree on the $(c+1)$-st one (or $c=\ell$).

We modify the algorithm so that before the $t$-th iteration of the loop, the algorithm is in a state of the form
\begin{equation}
\label{eqn:automatonModifiedState}
\ket K|t> \ket P|0> \ket C|t\curlyvee t> \ket H|\phi_t> 
+
\sum_{i=0}^{t-1}\ket K|i> \ket P|1> \ket C|i\curlyvee t> \ket H|\psi'_t>
+
\sum_{i=t}^{K-1}\ket K|i>\ket P|1>\ket C|i\curlyvee t> \ket H|\psi_t> 
\end{equation}
In particular, in the $\ket P|0>$-part, the register $\reg C$ contains $\ell$ ones.

The state~\rf{eqn:automatonModifiedState} with $t=0$ can be obtained from~\rf{eqn:automatonInitialState} in $\OO(\ell)$ elementary operations by computing $\cC$ starting from the highest qubit.
Execution of $S_t$ on Step~(a) can be conditioned on the lowest qubit of $\reg C$, which is equal to 1 if and only if $\cK$ contains $t$.

It remains to consider Step~(b) and the update of the register $\cC$ during the increment from $t$ to $t+1$.
Assume that $t+1$ is divisible by $2^d$, but not by $2^{d+1}$.
Then, Step~(b) takes $d+1$ controlled 1-qubit operations.
Similarly, the change from $t$ to $t+1$ in $\ket C|i\curlyvee t>$ in~\rf{eqn:automatonModifiedState} takes time $\OO(d)$ by first uncomputing the $d+1$ lowest qubits for $t$, and then computing them for $t+1$.
Finally, after the loop, the register $\reg C$ can be uncomputed in $\OO(\ell)$ operations.
Therefore, the total number of elementary operations is $\OO(K)$.
\pfend

\section{Transducers}

In this section, we define transducers and give their basic properties.
This is an initial treatment and will be extended in \rf{sec:canonical} to include query complexity. 

\subsection{Definition}
\label{sec:transducerDefinition}

Mathematically, our approach is based on the following construction.
\begin{thm}
\label{thm:transduce}
Let $\cH\oplus \cL$ be a direct sum of two vector spaces, and $S$ be a unitary on $\cH\oplus \cL$.
Then, for every $\xi\in\cH$, there exist $\tau\in\cH$ and $v\in\cL$ such that
\begin{equation}
\label{eqn:transduce}
S\colon \xi \oplus v \mapsto \tau \oplus v.
\end{equation}
Moreover,
\begin{enumerate}[(a)]\itemsep=0pt
\item The vector $\tau = \tau(S,\xi) = \tau_\cH(S,\xi)$ is uniquely defined by $\xi$ and $S$.
\item The vector $v = v(S,\xi) = v_\cH(S,\xi)$ is also uniquely defined if we require that it is orthogonal to the 1-eigenspace of $\Pi S\Pi$, where $\Pi$ is the projection on $\cL$.
\item The mapping $\xi\mapsto\tau$ is unitary and $\xi\mapsto v$ is linear.
\end{enumerate}
\end{thm}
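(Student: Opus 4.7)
The plan is to decompose $S$ in block form with respect to $\cH \oplus \cL$, write $S = \begin{pmatrix} A & B \\ C & D \end{pmatrix}$ with $A \colon \cH \to \cH$, $B\colon \cL \to \cH$, $C\colon \cH \to \cL$, $D = \Pi S \Pi|_\cL \colon \cL \to \cL$. Then~\rf{eqn:transduce} is equivalent to the pair of equations $\tau = A\xi + Bv$ and $(I - D)v = C\xi$. So the whole theorem reduces to understanding when the latter linear equation in $v$ is solvable, and to what extent the solution determines $\tau$.

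Next I would extract the relevant consequences of unitarity. The block $D$ is a contraction because $S$ is unitary. A short calculation (expanding $\|w - D^*w\|^2$ for $w$ with $Dw = w$) shows the standard fact that for any contraction, $\ker(I - D) = \ker(I - D^*)$; in finite dimension this gives $\operatorname{range}(I - D) = \ker(I - D)^\perp$. The key additional step, which is the main content, is to show $\operatorname{range}(C) \perp \ker(I - D)$. For this, take $w \in \cL$ with $Dw = w$ and apply $S$ to $0 \oplus w$: norm preservation gives $\|Bw\|^2 + \|w\|^2 = \|w\|^2$, so $Bw = 0$. Applying $S^*$ to $S(0 \oplus w) = 0 \oplus w$ and reading off the $\cL$-component of $S^*$ (which involves $C^*$) then yields $C^*w = 0$, i.e.\ $w \perp C\xi$ for every $\xi$. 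Hence $C\xi \in \operatorname{range}(I - D)$, establishing existence of $v$.

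With existence in hand, the remaining items are bookkeeping. Uniqueness of $v$ modulo $\ker(I - D)$ is immediate from linearity, and requiring $v \perp \ker(\Pi S \Pi - I)$ pins it down uniquely; this also makes $\xi \mapsto v$ linear by construction. Uniqueness of $\tau$ is the point where the lemma $Bw = 0$ for $w \in \ker(I - D)$ pays off a second time: two admissible catalysts $v, v'$ differ by an element of $\ker(I - D)$, on which $B$ vanishes, so $\tau = A\xi + Bv$ is unambiguously defined. Linearity of $\xi \mapsto \tau$ then follows from linearity of $\xi \mapsto v$ and of the expression $A\xi + Bv$.

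Finally, unitarity of $S\DownTransduce_\cH$: norm preservation by $S$ gives $\|\xi\|^2 + \|v\|^2 = \|\tau\|^2 + \|v\|^2$, hence $\|\tau\| = \|\xi\|$, so the map is an isometry on the finite-dimensional space $\cH$. For surjectivity I would run the whole construction with $S^{-1}$ in place of $S$: for any $\tau \in \cH$ there exist $\xi \in \cH$ and $v' \in \cL$ with $S^{-1}(\tau \oplus v') = \xi \oplus v'$, which is the same as $S(\xi \oplus v') = \tau \oplus v'$, exhibiting $\tau$ in the image. I expect the step requiring the most care is the orthogonality $\operatorname{range}(C) \perp \ker(I - D)$ via the $S^*$ argument, since it is the place where unitarity (and not merely $D$ being a contraction) is essential; everything else is routine linear algebra once that is in place.
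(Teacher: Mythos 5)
Your proof is correct and follows essentially the same route as the paper's: both reduce \rf{eqn:transduce} to the linear equation $(I-\Pi S\Pi)v=\Pi S\xi$ on $\cL$ and establish solvability from the key fact that a 1-eigenvector of the compression $\Pi S\Pi$ is fixed by $S$ (hence by $S^*$), so that $\Pi S\xi$ lies in the range of $I-\Pi S\Pi$; your block-matrix phrasing with range/kernel duality and the paper's pseudoinverse formula $(\Pi-\Pi S\Pi)^{+}\Pi S\xi$ are the same argument in different notation. Your explicit verification that $B$ vanishes on $\ker(I-D)$, so that $\tau$ is independent of the choice of catalyst, is a detail the paper leaves implicit and is worth retaining.
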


We will prove the theorem at the end of this section.

In the setting of \rf{thm:transduce}, we will say that $S$ \emph{transduces}
$\xi$ into $\tau$, and write $\xi\transduce{S}\tau$.
The mapping $\xi\mapsto\tau$ on $\cH$ will be called the \emph{transduction action} of $S$ on $\cH$ and denoted by $S\DownTransduce_\cH$.

We call any $v$ satisfying $S(\xi\oplus v) = \tau\oplus v$ a \emph{catalyst} for $\xi\transduce{S} \tau$.
The condition in Point~(b) of $v$ to be orthogonal to the 1-eigenspace of $\Pi S\Pi$ is crucial for uniqueness, as adding such a vector to $v$ does not affect~\rf{eqn:transduce}.
Clearly, the vector $v$ as defined in Point~(b) has the smallest possible norm.
Therefore, we can define transduction complexity of $S$ on $\xi$ as $W(S,\xi) = \|v(S,\xi)\|^2$ for the latter $v$.
We write $W_\cH(S,\xi)$ if the space $\cH$ might not be clear from the context.
The above discussion can be formulated as the following claim.

\begin{clm}
\label{clm:anyV}
For any catalyst $v$ of the transduction $\xi\transduce{S} \tau$, we have $W(S,\xi) \le \|v\|^2$.
\end{clm}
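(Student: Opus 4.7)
The plan is to show that the canonical catalyst $v^\star := v(S,\xi)$ from Theorem~\ref{thm:transduce}(b) is the unique minimum-norm element of the affine set of all catalysts, by a Pythagorean argument keyed to the defining orthogonality property of $v^\star$.

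First I would let $v$ be an arbitrary catalyst and write $w = v - v^\star$, so that $v = v^\star + w$ with $w \in \cL$. Subtracting the two transduction identities $S(\xi\oplus v) = \tau\oplus v$ and $S(\xi\oplus v^\star) = \tau\oplus v^\star$ (both of which hold by the definition of catalyst) and using linearity of $S$ gives $S(0\oplus w) = 0\oplus w$. In other words, $0\oplus w$ is a fixed vector of $S$ supported in $\cL$.

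Next I would translate this fixed-point condition into the language of $\Pi S\Pi$, where $\Pi$ is the projector onto $\cL$. Since $0\oplus w$ already lies in the image of $\Pi$, we have $\Pi(0\oplus w)=0\oplus w$; applying $\Pi$ to both sides of $S(0\oplus w)=0\oplus w$ yields $\Pi S\Pi(0\oplus w)=0\oplus w$. Thus $w$ lies in the $1$-eigenspace of $\Pi S\Pi$. By Theorem~\ref{thm:transduce}(b), the canonical catalyst $v^\star$ was chosen precisely to be orthogonal to this $1$-eigenspace, so $\langle v^\star, w\rangle = 0$.

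The conclusion is then immediate from the Pythagorean theorem:
\[
\|v\|^2 \;=\; \|v^\star + w\|^2 \;=\; \|v^\star\|^2 + \|w\|^2 \;\ge\; \|v^\star\|^2 \;=\; W(S,\xi).
\]
There is essentially no main obstacle here; the only care needed is the bookkeeping identifying $\cL$ with its image in $\cH\oplus\cL$ so that ``1-eigenvector of $\Pi S\Pi$'' and ``fixed vector of $S$ inside $\cL$'' mean the same thing. Note that the argument also reproves uniqueness in Theorem~\ref{thm:transduce}(b): equality $\|v\| = \|v^\star\|$ forces $w=0$.
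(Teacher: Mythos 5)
Your proof is correct and takes essentially the same route the paper intends: the paper justifies this claim by the observation that any two catalysts for the same transduction differ by a $1$-eigenvector of $\Pi S\Pi$ (equivalently, a fixed vector of $S$ inside $\cL$) and that the canonical catalyst of \rf{thm:transduce}(b) is orthogonal to that eigenspace, so the minimality of its norm is immediate. You have simply made the subtraction step and the Pythagorean argument explicit; there is nothing to correct.
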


On the other hand, checking orthogonality to $\Pi S\Pi$ is complicated and unnecessary, and we avoid doing it.
We usually couple a transducer with some chosen catalyst $v$ for all the $\xi$ of interest, which need not have the smallest possible norm.
In this case, we somewhat sloppily write $W(S,\xi) = \|v\|^2$ even for this catalyst $v$.
This agreement will become especially important when we add query complexity into the picture in \rf{sec:canonical}; see in particular the note towards the end of \rf{sec:canonicalDefinition}.

Other important notions related to the transducer are its time and query complexity.
Its time complexity, $T(S)$, is defined as its time complexity as an algorithm.
For $\xi\in\cH$, its query state, $q_\cH(S,O,\xi)$, and query complexity, $L_\cH(S,O,\xi)$, are defined as those of $S$ as an algorithm on the initial state  $\xi\oplus v$.
Note that for transducers with input oracles, we will adopt a special canonical form defined in \rf{sec:canonical}, until then we mostly ignore the oracle-related notions.

\begin{exm}
\label{exm:1}
Let us a give a simple concrete example illustrating the above notions.
Assume that $\cH$ is one-dimensional and spanned by $\ket|0>$, and $\cL$ is two-dimensional and spanned by $\ket|1>$ and $\ket|2>$.
Let $S$ be the reflection of the vector $\ket|0>-\ket|1>-\ket|2>$, so that its orthogonal complement stays intact.

The transduction action of $S$ on $\cH$ is the identity, which is certified by
\[
S\colon 
\ket |0> + \frac12 \ket|1> + \frac12 \ket |2> 
\mapsto
\ket |0> + \frac12 \ket|1> + \frac12 \ket |2> .
\]
Hence, we have 
$v(S,\ket|0>) = (\ket|1> + \ket|2>)/2$, 
and
$W(S,\ket|0>) = 1/2$.
This is not the only catalyst, as one can also take $v= \ket|1>$ or $v=\ket|2>$.
However, $(\ket|1> + \ket|2>)/2$ is the only catalyst orthogonal to the 1-eigenspace of $\Pi S\Pi$, which is spanned by $\ket |1> - \ket |2>$, and also has the smallest norm.
\end{exm}

\pfstart[Proof of \rf{thm:transduce}]
The vector $v$ can be found from the equation
\[
\Pi v = \Pi\sA[\tau+v] = \Pi\sA[S(\xi+v)] = \Pi S\xi + \Pi S v = \Pi S\xi + \Pi S \Pi v.
\]
From this we would like to argue that $v$ can be expressed as
\begin{equation}
\label{eqn:transduceV}
v = (\Pi - \Pi S\Pi)^+ \Pi S\xi,
\end{equation}
where $(\cdot)^+$ stands for the Moore-Penrose pseudoinverse.
Let us show that this is indeed the case. 

Denote by $\cK$ the kernel of $\Pi - \Pi S\Pi$ in $\cL$, and by $\cK^\perp$ its orthogonal complement in $\cL$. 
The subspace $\cK$ equals the 1-eigenspace of $\Pi S\Pi$.
But since $S$ is a unitary, a 1-eigenvector of $\Pi S\Pi$ is necessarily a 1-eigenvector of $S$.
Hence, $S$ is a direct sum of the identity on $\cK$ and a unitary on $\cH\oplus \cK^\perp$.
Thus, $\Pi - \Pi S\Pi$ is a direct sum of the zero operator in $\cH\oplus\cK$ and some operator in $\cK^\perp$.
Moreover, the latter operator is invertible in $\cK^\perp$ as its kernel is empty.
Since $\xi\in \cH$ is orthogonal to $\cK$, we get that $\Pi S\xi \in \cK^\perp$.
Hence,~\rf{eqn:transduceV} indeed uniquely specifies $v$.
This proves (b) and the second half of (c).

The uniqueness of $\tau$ and the linearity of $\xi\mapsto \tau$ now follow from~\rf{eqn:transduce} and the linearity of $S$.
Finally, unitarity of $S$ implies $\|\xi\| = \|\tau\|$, hence, the map $\xi\mapsto \tau$ is also unitary.
\pfend

\begin{prp}[Transitivity of Transduction]
\label{prp:transitivityOfTransduction}
Assume $\cH\subseteq \cH_1\subseteq \cH_2$ are vector spaces, and $S$ is a unitary in $\cH_2$.
Then,
\[
S\DownTransduce_{\cH} = (S\DownTransduce_{\cH_1})\DownTransduce_{\cH} ,
\]
and, for every $\xi\in \cH$, a possible catalyst is
\begin{equation}
\label{eqn:transitivityWitness}
v_\cH(S, \xi) = v_{\cH}(S\DownTransduce_{\cH_1}, \xi) + v_{\cH_1}\sA[S, \xi\oplus v_\cH(S\DownTransduce_{\cH_1}, \xi)].
\end{equation}
\end{prp}

\pfstart
By definition,
\[
S\DownTransduce_{\cH_1}\colon 
\xi \oplus v_{\cH}(S\DownTransduce_{\cH_1},\xi) 
\mapsto 
\tau \oplus v_{\cH}(S\DownTransduce_{\cH_1},\xi) 
\]
for some $\tau\in \cH$.  The latter means that
\[
S\colon 
\xi \oplus v_{\cH}(S\DownTransduce_{\cH_1},\xi) \oplus v_{\cH_1}\sA[S, \xi \oplus v_{\cH}(S\DownTransduce_{\cH_1},\xi)]
\mapsto 
\tau \oplus v_{\cH}(S\DownTransduce_{\cH_1},\xi) \oplus v_{\cH_1}\sA[S, \xi \oplus v_{\cH}(S\DownTransduce_{\cH_1},\xi)],
\]
proving~\rf{eqn:transitivityWitness}.
\pfend

\subsection{Implementation}
\label{sec:implementation}
The key point we will now make 
is that given a transducer $S$, there exists a very simple quantum algorithm that approximately implements its transduction action on $\cH$. 
This algorithm is a generalisation of the one from~\cite{belovs:LasVegas}, which was used for implementation of the adversary bound.

Before we describe this algorithm, let us establish a few conventions.
We call $\cH$ the \emph{public} and $\cL$ the \emph{private} space of $S$.
We indicate this separation of $\cH$ and $\cL$ by a privacy qubit $\cP$.
The value 0 of $\cP$ will indicate the public space $\cH$,
and the value 1 the private space $\cL$.
This means that both $\cH$ and $\cL$ are embedded into the same register during implementation.
Thus,
\begin{equation}
\label{eqn:xi+v}
\xi\oplus v = \ket P |0> \ket H |\xi> + \ket P|1> \ket L |v>
\end{equation}
explicitly specifying the public and the private spaces.
We will extend this notation in \rf{sec:canonicalDefinition}.

As it can be understood from the name, the algorithm does not have direct access to the private space $\cL$ of the transducer.  All the interaction between the transducer and its surrounding is through the public space $\cH$.

\begin{thm}
\label{thm:pumping}
Let spaces $\cH$, $\cL$, and a positive integer $K$ be fixed.
There exists a quantum algorithm that transforms $\xi$ into $\tau'$ such that
\[
\|\tau' - \tau(S,\xi)\| \le 2 \sqrt{\frac{W(S,\xi)}{K}}
\]
for every transducer $S\colon \cH\oplus \cL\to\cH\oplus \cL$ and initial state $\xi\in\cH$.
The algorithm conditionally executes $S$ as a black box $K$ times, and uses $\OO(K)$ other elementary operations.
\end{thm}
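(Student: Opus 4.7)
The plan is to implement the pumping algorithm sketched after \rf{thm:introImplementation} and illustrated in \rf{fig:pumping}: distribute $\xi$ across $K$ time-slots as $K$ scaled copies and feed them through $S$ one at a time while sharing a single scaled-down catalyst $v(S,\xi)/\sqrt K$. Since the algorithm has no access to $v(S,\xi)$, correctness will be established by comparing the real execution to an idealized one in which the catalyst is hypothetically present at the start; the missing catalyst is then the only source of error.

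More concretely, I embed $\cH\oplus\cL$ into $\cP\otimes(\cH\oplus\cL)$ as in \rf{sec:implementation} and append a $K$-qudit counter $\cK$ initialized to $\ket K|0>$. The algorithm first prepares the superposition $\ket K|0>\mapsto\frac{1}{\sqrt K}\sum_{t=0}^{K-1}\ket K|t>$ conditioned on $\cP=0$, then runs the direct-sum finite automaton from \rf{lem:automaton} for $K$ rounds with every gate $S_t$ equal to $S$ acting on $\cP\otimes(\cH\oplus\cL)$, using the convention in which the $\cP=1$ register (which will carry the catalyst) is transported between slots by the increment step, and finally uncomputes the $\cK$-preparation. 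The idealized execution is identical but starts from the real initial state plus the ``guessed'' catalyst $\ket K|0>\ket P|1>\ket L|v(S,\xi)/\sqrt K>$ placed at slot $0$. By the defining identity $S(\xi\oplus v(S,\xi))=\tau(S,\xi)\oplus v(S,\xi)$ and linearity, each iteration of the ideal run effects the exact scaled transduction $\xi/\sqrt K\oplus v(S,\xi)/\sqrt K\mapsto\tau(S,\xi)/\sqrt K\oplus v(S,\xi)/\sqrt K$ inside the currently active slot, while the increment transports the catalyst to the next slot; after $K$ rounds and the $\cK$-uncomputation, the ideal $\cP=0$ component is exactly $\ket K|0>\ket P|0>\ket H|\tau(S,\xi)>$.

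The error analysis will then proceed via unitarity and \rf{lem:surgery}. The real and ideal initial states differ by the missing-catalyst vector $\ket K|0>\ket P|1>\ket L|v(S,\xi)/\sqrt K>$ of norm $\sqrt{W(S,\xi)/K}$, and this gap is preserved throughout the unitary evolution. Because the residual private catalyst at the end of the ideal run also has norm $\sqrt{W(S,\xi)/K}$ and must be discarded when we project onto the $\ket K|0>\ket P|0>$ readout subspace to extract $\tau'$, the triangle inequality yields $\|\tau'-\tau(S,\xi)\|\le 2\sqrt{W(S,\xi)/K}$. The resource count is immediate from \rf{lem:automaton}: $K$ conditional calls to $S$ together with $\OO(K)$ extra elementary operations for the automaton, plus $\OO(\log K)$ for the $\cK$-preparation and its inverse. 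The main subtlety will be setting up the automaton in the dual convention to \rf{lem:automaton} so that the catalyst-bearing $\cP=1$ register, rather than the $\cP=0$ register, is the one moved by the increment; this is a symmetric variant of the stated lemma and is pure bookkeeping.
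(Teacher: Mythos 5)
Your proposal is correct and follows essentially the same route as the paper: the same $K$-slot counter with a uniform superposition, the same loop executing $S$ conditioned on the slot and incrementing the counter on the catalyst-bearing branch, and the same two-perturbation error analysis (guess the scaled catalyst $v/\sqrt K$ at the start, discard it at the end), each contributing $\sqrt{W(S,\xi)/K}$ to give the bound $2\sqrt{W(S,\xi)/K}$ via unitarity and the triangle inequality, with \rf{lem:automaton} (in the swapped-$\cP$ convention, exactly as you note) supplying the $\OO(K)$ gate count. The only cosmetic difference is that the paper phrases the endpoint error as a second perturbation under \rf{lem:surgery} rather than as discarding the residual catalyst upon readout, but the resulting estimate is identical.
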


\rf{thm:introImplementation} is a direct corollary of \rf{thm:pumping}.
A sketch of the proof of \rf{thm:pumping} was already given in the same section, see, in particular, \rf{fig:pumping}.

\pfstart[Proof of \rf{thm:pumping}]
The space of the algorithm is $\cK\otimes (\cH\oplus \cL)$, where $\cK$ is a $K$-qudit.
The register $\cH\oplus \cL$ contains the privacy qubit $\cP$ as described above.	
The algorithm starts in the state $\xi = \ket K|0>\ket P |0> \ket H|\xi>$, and 
performs the following transformations:
\begin{enumerate}\itemsep=0pt
\item Map $\ket K|0>$ into the uniform superposition $\frac1{\sqrt K} \sum_{t=0}^{K-1} \ket K|t>$.
\item For $t=0,1,\dots,K-1$:
\begin{enumerate}\itemsep=0pt
\item Execute $S$ on
$\cH\oplus \cL$
conditioned on $\ket K|t>$.
\item Conditioned on $\ket P|1>$, replace $\ket K|t>$ by $\ket K|t+1>$ (where $\ket K|K>$ is equal to $\ket K|0>$).
\end{enumerate}
\item Run Step 1 in reverse.
\end{enumerate}

Clearly, the algorithm conditionally executes $S$ exactly $K$ times.
As described now, the algorithm takes time $\OO(K\log K)$, but it is implementable in time $\OO(K)$ using \rf{lem:automaton}.

Let us prove correctness.
We write $v=v(S,\xi)$ and $\tau=\tau(S,\xi)$.
After Step~1, the algorithm is in the state
\begin{equation}
\label{eqn:pumpingOriginal}
\frac1{\sqrt K} \sum_{i=0}^{K-1} \ket K|i> \ket P |0> \ket H|\xi>.
\end{equation}
We perform a perturbation in the sense of~\rf{lem:surgery} and assume the algorithm is instead in the state
\begin{equation}
\label{eqn:pumpingModified}
\frac1{\sqrt K} \sum_{i=0}^{K-1} \ket K|i> \ket P |0> \ket H|\xi> + \frac1{\sqrt K} \ket K|0>\ket P|1>\ket L|v>.
\end{equation}
On the $t$-th iteration of the loop, the transducer $S$ on Step 2(a) transforms the part of the state
\begin{equation}
\label{eqn:pumpingOneStep}
\frac1{\sqrt K} \ket K|t> \ket P |0> \ket H|\xi> + \frac1{\sqrt K} \ket K|t>\ket P|1>\ket L|v>
\;\longmapsto\;
\frac1{\sqrt K} \ket K|t> \ket P |0> \ket H|\tau> + \frac1{\sqrt K} \ket K|t>\ket P|1>\ket L|v>,
\end{equation}
and on Step 2(b) the following transformation of the part of the state is performed: 
\[
\frac1{\sqrt K} \ket K|t>\ket P|1>\ket L|v> 
\;\longmapsto\;
\frac1{\sqrt K} \ket K|t+1>\ket P|1>\ket L|v>.
\]
Therefore, after the execution of the loop in Step 2, we get the state
\begin{equation}
\label{eqn:pumpingFinal}
\frac1{\sqrt K} \sum_{i=0}^{K-1} \ket K|i> \ket P |0> \ket H|\tau> + \frac1{\sqrt K} \ket K|0>\ket P|1>\ket L|v>.
\end{equation}
We perturb the state to
\begin{equation}
\label{eqn:pumpingTarget}
\frac1{\sqrt K} \sum_{i=0}^{K-1} \ket K|i> \ket P |0> \ket H|\tau>.
\end{equation}
After Step 3, we get the state $\tau = \ket K|0>\ket P |0> \ket H|\tau>$.

Note that the difference between the states in~\rf{eqn:pumpingOriginal} and~\rf{eqn:pumpingModified} has norm $\|v\|/\sqrt K$.
The same is true for the difference between the states in~\rf{eqn:pumpingFinal} and~\rf{eqn:pumpingTarget}.
Therefore, by \rf{lem:surgery}, the actual final state $\tau'$ of the algorithm satisfies
\[
\|\tau' - \tau\| \le 2 \frac{\|v\|}{\sqrt K}
\]
as required.
\pfend

\section{Example I: Quantum Walks}
\label{sec:walks}

In this section, we implement the electric quantum walk from~\cite{belovs:electicityQuantumWalks} using the construction outlined in~\rf{sec:introWalks}.
See also a subsequent paper~\cite{belovs:phaseHelps}, where similar ideas are applied to search and to the Welded Tree problem~\cite{childs:walkExponentialSeparation}.

\myfigure{\label{fig:walk}}
{
An example of the extension of a graph for a quantum walk.
The original graph contains two parts $A$ and $B$ of 4 and 3 vertices, respectively.
The initial probability distribution is supported on two vertices $\{u_1,u_2\}\subseteq A$.
The original edges $E$ of the graph are black, the new ones $E'$ are red.
One marked vertex in $B$ is coloured blue.
}
{
\negbigskip
\def\inputcolor{red}
\[
\begin{tikzpicture}[minimum size=15pt, inner sep=0pt]
    \node[blue] at (0,-0.8) {\Large $A$};
    \node[circle, draw] (A4) at (0,0) {};
    \node[circle, draw] (A3) at (0,1) {};
    \node[circle, draw, fill=\inputcolor!50] (A2) at (0,2) {$u_2$};
    \node[circle, draw, fill=\inputcolor!50] (A1) at (0,3) {$u_1$};
    \node[blue] at (2,-0.8) {\Large $B$};
    \node[circle, draw] (B3) at (2,0.5) {};
    \node[circle, draw] (B2) at (2,1.5) {};
    \node[circle, draw] (B1) at (2,2.5) {};
    \graph{
        (A1)--{(B1),(B3)};
        (A2)--{(B2),(B3)};
        (A3)--{(B1),(B2),(B3)};
        (A4)--{(B1),(B2)};
    };
    \draw (A1) to node[above]{$w_e$} (B1);
    \node[circle, draw, double, fill=blue!50] at (B2) {};
    \node[circle, draw, fill=\inputcolor!50] (S2) at (-1.5,2) {$u_2'$};
    \node[circle, draw, fill=\inputcolor!50] (S1) at (-1.5,3) {$u_1'$};
    \draw[-,\inputcolor] (S1) to node[above] {$\sigma_1$} (A1);
    \draw[-,\inputcolor] (S2) to node[above] {$\sigma_2$} (A2);
\end{tikzpicture}
\]
\negbigskip
}

A quantum walk is described by a bipartite graph $G$, whose parts we denote by $A$ and $B$.
Let $E$ be the set of edges of $G$.
Each edge $e$ of the graph is given a non-negative real \emph{weight} $w_{e}$.
We have some set $M\subseteq A\cup B$ of \emph{marked} vertices.
There is a subroutine \textsf{Check} that, for every vertex $u$, says whether it is marked.
The goal of the quantum walk is to detect whether $M$ is empty or not.

In the framework of electric quantum walks, the graph is extended as follows, see \rf{fig:walk}.
The quantum walk is tailored towards a specific initial probability distribution $\sigma$ on $A$.
Let $A_\sigma\subseteq A$ be the support of $\sigma$.
For each $u\in A_\sigma$, we add a new vertex $u'$ and a new dangling edge $u'u$ to the graph.
The newly added vertices are \emph{not} contained in $B$.
Let $E'$ be the set of newly added edges.
For edges if $E'$ we assume the weight $w_{u'u} = \sigma_u$.

We treat this construction as a transducer.
The private space $\cL$ of the quantum walk is $\bC^E$.
The public space $\cH$ is $\mathbb{C}^{E'}$.
The initial state $\xi$ is given by 
\[
\xi = \sum_{u\in A_\sigma} \sqrt{\sigma_u} \ket|u'u>\in \cH.
\]
For a vertex $u$, let $\cL_u$ denote the space spanned by all the edges incident to $u$ (including the ones in $E'$).
Define
\begin{equation}
\label{eqn:walkPsi}
\psi_u = \sum_{e: e\sim u} \sqrt{w_e}\ket|e>\in \cL_u
\end{equation}
where the sum is over all the edges incident to $u$.

For $U\subseteq A$ or $U\subseteq B$, let $R_U$ denote the reflection of all $\psi_u$ for $u\in U$, i.e., $R_U$ acts as negation on the span of all these $\psi_u$ and as identity on its orthogonal complement.
We define the transducer, which depends on the set of marked vertices $M$, as 
\[
S_M = R_{B\setminus M}R_{A\setminus M}.
\]
Each of $R_{A\setminus M}$ and $R_{B\setminus M}$ is decomposable into products of local reflections in $\cL_u$ as $u$ ranges over $A$ and $B$, respectively.
The corresponding local reflections are either identities for $u\in M$, or reflections of $\psi_u$ for $u\notin M$.
The implementation of $S_M$ can be done using local reflections controlled by the \textsf{Check} subroutine.

Let
\[
W = \sum_{e\in E} w_e
\]
be the total weight of the graph.
For $M\ne\emptyset$, let $R_{\sigma, M}$ be the minimum of
\begin{equation}
\label{eqn:resistance}
\sum_{e\in E} \frac{p_e^2}{w_e},
\end{equation}
over all flows $p=(p_e)$ on the graph where $\sigma_u$ units of flow are injected in $u\in A_\sigma$, and the flow is collected at the vertices in $M$.
The minimum is attained by the electrical flow, and $R_{\sigma, M}$ is the corresponding effective resistance.

\begin{thm}
\label{thm:walk}
The transducer $S_M$ defined above transduces $\xi\transduce{} - \xi$ if $M=\emptyset$, and $\xi\transduce{} \xi$ otherwise.
Its transduction complexity is
\begin{equation}
\label{eqn:walkComplexity}
W\sA[S_\emptyset, \xi] = W
\qqand
W\sA[S_M, \xi] = R_{\sigma, M}
\end{equation}
respectively.
\end{thm}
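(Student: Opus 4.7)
The plan is to exhibit, for each case, an explicit catalyst $v \in \cL$ that witnesses the claimed transduction, following the two patterns of~\rf{eqn:QW_sequence_negative} and~\rf{eqn:QW_sequence_positive}. Each $R_U$ factors into local reflections about $\psi_u$ inside the pairwise orthogonal subspaces $\cL_u$, so verifying that a vector is a $\pm 1$-eigenvector reduces, vertex by vertex, to evaluating the inner products $\ip<\psi_u, \cdot>$.

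For the marked case $M \ne \emptyset$ I follow the stationary pattern of~\rf{eqn:QW_sequence_negative}. Let $p = (p_e)_{e \in E}$ be the electrical flow on $G$ that injects $\sigma_u$ units at each $u \in A_\sigma$ and collects them at $M$, with edges oriented from $A$ to $B$, and set
\[
v \;=\; -\sum_{e \in E} \frac{p_e}{\sqrt{w_e}}\, \ket|e>.
\]
I claim $\xi \oplus v$ is a $+1$-eigenvector of both $R_{A\setminus M}$ and $R_{B\setminus M}$, which immediately yields $\xi \transduce{S_M} \xi$. To verify this, I would compute $\ip<\psi_u, \xi \oplus v>$ at each $u \notin M$: using $w_{u'u} = \sigma_u$ to pick up the dangling-edge contribution, this inner product equals $\sigma_u - \sum_{e \sim u,\, e \in E} p_e$ when $u \in A_\sigma$ and $-\sum_{e \sim u,\, e \in E} p_e$ otherwise, both of which vanish by flow conservation at non-source, non-sink vertices. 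The complexity then follows from $\|v\|^2 = \sum_e p_e^2/w_e$, which equals $R_{\sigma, M}$ by the defining property of the electrical flow and~\rf{eqn:resistance}.

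For the unmarked case $M = \emptyset$ I follow the reflecting pattern of~\rf{eqn:QW_sequence_positive} with the catalyst
\[
v \;=\; \sum_{e \in E} \sqrt{w_e}\, \ket|e>.
\]
Summing~\rf{eqn:walkPsi} over one side of the bipartition---each edge in $E$ having exactly one endpoint per side, and no edge in $E'$ being incident to $B$---gives the two identities $\sum_{u \in A} \psi_u = \xi \oplus v$ and $\sum_{u \in B} \psi_u = v$. The first puts $\xi \oplus v$ in the $-1$-eigenspace of $R_A$, so $R_A(\xi \oplus v) = -\xi \oplus -v$; the second, combined with the fact that $R_B$ fixes $\cH = \bC^{E'}$ pointwise (no dangling edge touches $B$), gives $R_B(-\xi \oplus -v) = -\xi \oplus v$, exactly realising~\rf{eqn:QW_sequence_positive}. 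The transduction complexity is $\|v\|^2 = \sum_e w_e = W$.

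Neither verification is deep; the only place where care really pays off is the bookkeeping around the dangling edges in $E'$. The weight choice $w_{u'u} = \sigma_u$ together with the placement of the square roots is precisely what makes the source term $\sigma_u$ in the flow-conservation identity at $u \in A_\sigma$ cancel against the dangling-edge contribution to $\ip<\psi_u, \xi>$, so both cases ultimately reduce to (a possibly degenerate form of) flow conservation once this accounting is set up correctly.
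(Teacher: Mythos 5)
Your proposal is correct and takes essentially the same route as the paper's proof: the same two catalysts (yours for the marked case differs only by the edge-orientation/sign convention for the flow), the same verification that $\xi\oplus v$ lies in the relevant $\pm1$-eigenspaces via inner products with the $\psi_u$, and the same appeal to flow conservation and to the electrical flow for the equality $\|v\|^2=R_{\sigma,M}$.
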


Thus, the transduction action of the quantum walk encodes the answer to the detection problem in the phase, and this is done exactly.
Let $R$ denote the maximal effective resistance over all possible choices of $M\ne\emptyset$.
We can rescale all $w_i$ by the same factor so that the maximal transduction complexity in~\rf{eqn:walkComplexity} becomes equal to $\sqrt{RW}$.
By \rf{thm:introImplementation}, the presence of marked elements can be detected, with bounded error, in $O(\sqrt{RW})$ executions of $S_M$.
This coincides with the complexity estimate established in~\cite{belovs:electicityQuantumWalks}.

\pfstart[Proof of \rf{thm:walk}]
Let us start with the case $M=\emptyset$.
The initial coupling is
\[
\xi\oplus v_\emptyset = \sum_{e\in E\cup E'} \sqrt{w_e} \ket |e>.
\]
We have $\xi\oplus v_\emptyset = \sum_{u\in A} \psi_u$ and $v_\emptyset = \sum_{u\in B} \psi_u$.
Hence, $R_A$ reflects all $\xi\oplus v_\emptyset$, and $R_B$ only reflects $v_\emptyset$.
This gives us the following chain of transformations
\[
\xi \oplus v_\emptyset \maps{R_A} -\xi \oplus -v_\emptyset \maps{R_B} -\xi \oplus v_\emptyset,
\]
giving $\xi \transduce{S_\emptyset} -\xi$.
Note that this chain of transformation adheres to~\rf{eqn:QW_sequence_positive}.

Now assume $M\ne\emptyset$.
Let $p_e$ be a flow on the graph where $\sigma_u$ units of flow are injected in $u'$, and the flow is collected at $M$.
To solve the sign ambiguity, we assume that all the edges are oriented towards $A$.
This time, we define the catalyst $v_M$ so that
\begin{equation}
\label{eqn:walkPositiveWitness}
\xi\oplus v_M = \sum_{e\in E\cup E'} \frac {p_e}{\sqrt{w_e}} \ket |e>.
\end{equation}
Recall that $p_{u'u} = w_{u'u} = \sigma_u$, hence the above equation is satisfied in $\cH$.

The projection of~\rf{eqn:walkPositiveWitness} onto $\cL_u$ is given by
\begin{equation}
\label{eqn:walkPu}
\sum_{e: e\sim u} \frac {p_e}{\sqrt{w_e}}\ket|e>.
\end{equation}
This state is not changed by the corresponding local reflection in $\cL_u$.
Indeed if $u\in M$, the corresponding local reflection is the identity.
If $u\notin M$, then, by the flow condition, the state in~\rf{eqn:walkPu} is orthogonal to $\psi_u$ in~\rf{eqn:walkPsi}.
Thus, neither $R_{A\setminus M}$ nor $R_{B\setminus M}$ change $\xi\oplus v_M$, hence, $\xi\transduce{S_M} \xi$.
Note that in this case we adhere to~\rf{eqn:QW_sequence_negative}.

The corresponding transduction complexities are as given in~\rf{eqn:walkComplexity}.
The catalyst in the second case uses the flow $p_e$ through the graph, which is not unique.
We choose the minimal one as per \rf{clm:anyV}.
\pfend

\section{Canonical Transducers}
\label{sec:canonical}
In this section, we define a specific form of transducers we will be using in this paper.
The main point is in the application of the input oracle.
Inspired by the construction in~\cite{belovs:LasVegas}, we assume that the transducer first executes the input oracle, and then performs some input-independent unitary.
Moreover, the input oracle is always applied to the private space of the transducer.
These assumptions simplify many constructions, and any transducer can be transformed into the canonical form with a small overhead as shown later in \rf{prp:canoning}.

\subsection{Definition}
\label{sec:canonicalDefinition}

Concerning the input oracle, the assumptions are similar to those in \rf{sec:prelimQuery}.
The input oracle is a unitary $O$ in some space $\cM$, and we have unidirectional access to $O$.
The oracle only acts on the local space $\cL$ of the transducer.
Let us decompose the latter in two parts $\cL = \cLw\oplus \cLq$, which stand for the work (non-query) and query parts.
We also have $\cLq = \cLt \otimes\cM$ for some space $\cLt$.
We denote the identity on $\cLt$ simply by $I$.

A canonical transducer $S = S(O)$ performs the following transformations, see also \rf{fig:canonical}:
\begin{itemize}\itemsep=0pt
\item It executes the input oracle $I\otimes O$ on $\cLq$.
Similarly to~\rf{eqn:query}, we call it a query and denote it by 
\begin{equation}
\label{eqn:canonicalQuery}
\tO = I_\cH \oplus \Iw \oplus I\otimes O,
\end{equation}
where $I_\cH$ and $\Iw$ are identities on $\cH$ and $\cLw$, respectively.
\item It performs an input-independent work unitary $\Sw$ on $\cH\oplus \cL$.
\end{itemize}

The decomposition $\cL = \cLw\oplus\cLq$ yields the decomposition $v = \vw\oplus\vq$ of the catalyst.
Thus, the action $S(O)$ of the transducer $S$ on the input oracle $O$ is given by the following chain of transformations:
\begin{equation}
\label{eqn:canonicalForm}
S(O)\colon \xi \oplus \vw \oplus \vq
\maps{\tO}
\xi \oplus \vw \oplus(I\otimes O)\vq 
\maps{\Sw}
\tau \oplus \vw \oplus \vq .
\end{equation}

Now we can make the following complexity-related definitions.
The catalyst is $v = v(S, O,\xi)$.%
\footnote{It is the same as $v(S(O),\xi)$ in the previous notation, however, we opted to $v(S,O,\xi)$ to reduce the number of brackets and to keep notation synchronised with~\cite{belovs:LasVegas}.}
The transduction complexity is
\begin{equation}
\label{eqn:transductionComplexity}
W(S, O, \xi) = \normA|v(S, O, \xi)|^2.
\end{equation}
The query state is $q(S, O, \xi) = \vq =\Pi^{\bullet}v(S,O,\xi)$, where $\Pi^{\bullet}$ denotes the orthogonal projector onto $\cLq$.
The (Las Vegas) query complexity of the transducer is
\begin{equation}
\label{eqn:transducerQueryComplexity}
L(S, O, \xi) = \normA|q(S, O, \xi)|^2.
\end{equation}
Note that formally the definitions $q(S, O,\xi)$ and $L(S,O,\xi)$ are in conflict with the same definitions~\rf{eqn:totalQueryState} and~\rf{eqn:LasVegasComplexity} if $S$ is considered as a \emph{program} and not as a transducer.
However, this should not cause a confusion.
If the space $\cH$ is not clear from the context, we will add it as a subscript as in \rf{sec:transducerDefinition}.

Time complexity $T(S)$ of the transducer is the number of elementary operations required to implement the unitary $\Sw$.
Note that we do not count the query towards time complexity of the transducer.

Finally, the definitions~\rf{eqn:transductionComplexity} and~\rf{eqn:transducerQueryComplexity} can be extended to $\xi'\in \cE\otimes \cH$ using~\rf{eqn:transductionExtended}.

\paragraph{Note on Non-Uniqueness of Catalyst}
It is important to note that in this setting the non-uniqueness of the  catalyst $v$ discussed in \rf{sec:transducerDefinition} becomes very important.
To understand why, consider again \rf{exm:1} from that section.
This time, assume that $\cLw$ is spanned by $\ket |1>$ and $\cLq$ by $\ket |2>$, the input oracle is $O=I$, and $\Sw = S$ as defined previously.

The ``right''  catalyst $v = (\ket|1> + \ket |2>)/2$ for the initial state $\xi=\ket|0>$ suggests that $W(S, O, \ket|0>) = 1/2$ and $L(S, O, \ket|0>) = 1/4$.
However, if we take $v=\ket|1>$, we get that $W(S,O,\ket |0>) = 1$ and $L(S, O, \ket |0>) = 0$.
Therefore, there is no longer a single catalyst that minimises both the transduction and the query complexity.
This is similar to usual algorithms, where time and query complexity can be minimised by different algorithms.

We solve this complication by implicitly assigning a specific catalyst $v(S, O, \xi)$ for every $O$ and $\xi$ of interest, that gives \emph{both} $W(S,O,\xi)$ and $L(S, O,\xi)$ simultaneously.
Of course, neither of the two are guaranteed to be minimal.
It is possible to study the trade-off between the transduction and the query complexity for a fixed $O$ and $\xi$, but we will not explicitly pursue that in this paper.

\mycutecommand{\xiw}{\xi^\circ}
\mycutecommand{\xiq}{\xi^\bullet}

\paragraph{Implementation Details}
In terms of registers, as in \rf{sec:prelimQuery}, the separation $\cL = \cLw\oplus \cLq$ is indicated by the qubit $\cR$.
Now it makes sense to assume that the registers $\cH$, $\cLw$ and $\cLq$ are the same, the distinction being given by the values of the registers $\cP$ and $\cR$.
We will usually place this common register as the unnamed last register in our expressions.
In particular,
\begin{equation}
\label{eqn:canonicalDecomposition}
\xi \oplus v
=
\ket P|0> \ket R |0> \ket |\xi> + \ket P|1> \ket R|0> \ket |\vw> + \ket P|1> \ket R|1> \ket |\vq>.
\end{equation}
Note that the space $\cH$ is indicated by $\ket P|0> \ket R |0>$ meaning that it is not acted on by the oracle.
This allows us to implement the query as an application of $O$ controlled by $\ket R|1>$,
which is in accord with the convention established in \rf{sec:prelimQuery}.

We will also use registers $\cH$ and $\cL$ in the sense of \rf{sec:implementation}, that is, containing $\cR$.
In particular, we can write the action of a canonical transducer~\rf{eqn:canonicalForm} in registers like
\begin{equation}
\label{eqn:canonicalSequence}
S(O)\colon \ket P |0> \ket H |\xi> + \ket P|1> \ket L |v>
\maps{\tO}
\ket P |0> \ket H |\xi> + \ket P|1> \ketA L |\tO v>
\maps{\Sw}
\ket P |0> \ket H |\tau> + \ket P|1> \ket L |v>,
\end{equation}
where we used shorthand
\begin{equation}
\label{eqn:Ov}
\tO v = \ket R|0> \ket |\vw> + \ket R|1> \ketA |(I\otimes O)\vq>.
\end{equation}

\subsection{Multiple Input Oracles}
\label{sec:multipleInputOracles}

Following~\cite{belovs:LasVegas}, we can also allow multiple input oracles joined by direct sum:
\begin{equation}
\label{eqn:multipleOracles}
O = O^{(1)}\oplus O^{(2)}\oplus \cdots \oplus O^{(r)},
\end{equation}
where the $i$-th input oracle $O^{(i)}$ acts in space $\cM^{(i)}$ and $\cM = \cM^{(1)}\oplus\cdots\oplus \cM^{(r)}$.
We get the corresponding decomposition of the query state:
\begin{equation}
\label{eqn:multipleOraclesQeuryState}
q(S, O, \xi) = q^{(1)}(S, O, \xi) \oplus \cdots \oplus q^{(r)}(S, O, \xi),
\end{equation}
where $q^{(i)}(S,O,\xi)$ is the \emph{partial query state} of the $i$-th input oracle.
This also gives query complexities of the individual oracles:
\[
L^{(i)}(S, O, \xi) = \normA|q^{(i)}(S, O, \xi)|^2.
\]

It makes sense to tweak the notation assumed earlier in \rf{sec:canonicalDefinition} to make it in line with \rf{sec:prelimMultipleOracles}.
We assume the register $\cR$ can hold an integer from 0 to $r$, where $\ket R|i>$ with $i>0$ indicates the space of the $i$-th input oracle.
Thus, in place of~\rf{eqn:canonicalDecomposition}, we have
\begin{equation}
\label{eqn:multipleOraclesWitness}
\xi \oplus v = 
\ket P|0> \ket R|0> \ket |\xi> +
\ket P|1> \ket R|0> \ket |\vw> +
\ket P|1>\sum_{i=1}^r \ket R|i> \ketA |v^{(i)}>
\end{equation}
with $v^{(i)} = q^{(i)}(S, O, \xi)$.
To apply the $i$-th input oracle $O^{(i)}$, it suffices to condition it on $\ket R|i>$.
The action of a canonical transducer stays given by~\rf{eqn:canonicalSequence}, where, this time,
\begin{equation}
\label{eqn:OvMultipleOracles}
\tO v = \ket R|0> \ket |\vw> + \sum_{i=1}^r \ket R|i> \ket |\sA[I\otimes O^{(i)}]v^{(i)}>.
\end{equation}

For notational convenience, we will assume a single input oracle in most of the paper.
The case of multiple oracles can be obtained using the decomposition of the query state in~\rf{eqn:multipleOraclesQeuryState}, which contains all the necessary information.

\subsection{Reducing the Number of Oracle Calls}
\label{sec:reducingOracle}
One problem with the algorithm in \rf{thm:pumping} is that, when applied to a canonical transducer, the input oracle $O$ is executed the same number of times as the work unitary $\Sw$.
This is suboptimal as the transduction complexity can be much larger than the query complexity.
In this section, we describe a query-efficient implementation, which can also handle multiple input oracles.

Let $S$ be a canonical transducer with $r$ input oracles joined into one oracle $O$ via direct sum as in~\rf{eqn:multipleOracles}.
In the following theorem, we assume the spaces $\cH$, $\cLw$, $\cLt$, $\cM^{(1)},\dots,\cM^{(r)}$ are fixed, while the operators $\Sw$ and $O^{(1)},\dots,O^{(r)}$ can vary.

\begin{thm}[Query-Optimal Implementation of Transducers]
\label{thm:optimalImplementation}
Let $K \ge K^{(1)},\dots, K^{(r)}$ be positive integers, which we assume to be powers of 2 for simplicity.
There exists a quantum algorithm that conditionally executes $\Sw$ as a black box $K$ times, makes $K^{(i)}$ queries to the $i$-th input oracle $O^{(i)}$, and uses $\OO(K+K^{(1)}+\cdots+K^{(r)})\log r$ other elementary operations.
For each $\Sw$, $O^{(i)}$, and initial state $\xi$, the algorithm transforms $\xi$ into $\tau'$ such that
\begin{equation}
\label{eqn:optimalImplementationEstimate}
\normA|\tau' - \tau(S, O, \xi)| \le \frac 2{\sqrt K} \sqrt{W(S, O, \xi) + \sum_{i=1}^r\s[\frac{K}{K^{(i)}} -1] L^{(i)}(S, O, \xi) }.
\end{equation}
\end{thm}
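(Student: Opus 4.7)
The plan is to generalise the algorithm of \rf{thm:pumping} by exploiting the canonical form of $S$: because each $O^{(i)}$ is applied only once per execution of $S$, and only to the private query subregister, we can ``prepare'' $D^{(i)} := K/K^{(i)}$ copies of the partial query catalyst $\vq^{(i)} := q^{(i)}(S,O,\xi)$ with a single application of $O^{(i)}$ and then distribute them, one at a time, across $D^{(i)}$ successive controlled applications of $\Sw$, as already sketched in \rf{fig:implementationBetter}. This batching drops the number of queries to $O^{(i)}$ from $K$ to $K^{(i)}$ without altering the total number of $\Sw$ invocations.

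I keep the skeleton of \rf{thm:pumping}: a $K$-qudit counter $\cK$, a uniform superposition over $\cK$, a main loop of $K$ iterations each executing $\Sw$ controlled on $\cK$ followed by a counter update, and a final uncomputation. Right after the superposition, a conceptual perturbation (in the sense of \rf{lem:surgery}) injects one copy of $\vw/\sqrt K$ together with, for each $i$, $D^{(i)}$ copies of $\vq^{(i)}/\sqrt K$ laid out across the first $D^{(i)}$ positions of $\cK$. Writing $W$ and $L^{(i)}$ as shorthand for $W(S,O,\xi)$ and $L^{(i)}(S,O,\xi)$, and using $W = \|\vw\|^2 + \sum_i L^{(i)}$, the squared norm of the total perturbation equals
\[
\frac{\|\vw\|^2}{K} + \sum_i \frac{D^{(i)} L^{(i)}}{K} = \frac{1}{K}\left( W + \sum_i \left(\frac{K}{K^{(i)}}-1\right) L^{(i)} \right).
\]
At each step $t$ of the main loop, I apply $O^{(i)}$ once in place on the register for the $i$-th partial query catalyst whenever $t$ begins a new block modulo $D^{(i)}$; this simultaneously converts all $D^{(i)}$ stored copies of $\vq^{(i)}$ into $(I\otimes O^{(i)})\vq^{(i)}$. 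Then $\Sw$ controlled on $\ket K|t>$ advances slot $t$ from $\xi$ to $\tau$ (by~\rf{eqn:canonicalSequence}) and restores the corresponding query catalyst piece to $\vq^{(i)}$, ready to be reused after the block cycles. After all $K$ iterations, a symmetric final perturbation of equal norm plus the reverse superposition extract $\tau$; \rf{lem:surgery} packages the two perturbations into exactly the error estimate~\rf{eqn:optimalImplementationEstimate}.

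The hard part will be the gate-count bookkeeping: the overhead beyond the conditional $\Sw$'s and oracle calls must stay at $\OO((K+\sum_i K^{(i)})\log r)$. The $\OO(K)$ cost of counter manipulation and $\Sw$ dispatch is absorbed by the gray-code technique of \rf{lem:automaton} exactly as in \rf{thm:pumping}. Detecting the $K^{(i)}$ block boundaries of the $i$-th schedule from $\cK$ without paying $\log K$ per trigger requires per-oracle auxiliary registers that track the current block index and are updated incrementally by a gray-code variant, giving amortised cost $\OO(1)$ per trigger. The $\log r$ overhead then arises from dispatching each triggered query to one of $r$ oracles. The delicate point is coordinating these $r$ concurrent schedules so that their combined bookkeeping totals $\OO((K+\sum_i K^{(i)})\log r)$ rather than the naive $\OO(Kr)$ or $\OO((K+\sum_i K^{(i)})\log K)$.
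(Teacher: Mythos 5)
Your proposal is correct and follows essentially the same route as the paper's proof: the same batched layout with $D^{(i)}=K/K^{(i)}$ copies of each partial query catalyst injected as a conceptual perturbation of squared norm $\frac1K\bigl(W+\sum_i(K/K^{(i)}-1)L^{(i)}\bigr)$, the same error accounting via \rf{lem:surgery}, and the same appeal to a variant of \rf{lem:automaton} to keep the bookkeeping at $\OO(K+K^{(1)}+\cdots+K^{(r)})\log r$. The one detail left implicit in your sketch, which the paper makes explicit (Step~2(d) of its algorithm), is that the counter values carrying the $i$-th query catalyst must be advanced by $D^{(i)}$ at each block boundary so that the recycled copies realign with the next block of $\Sw$ applications before being re-queried.
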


\rf{thm:pumping} is a special case of this theorem with all $K^{(i)}$ equal to $K$.
Observe that the number of elementary operations is equal to the total number of invocations of $\Sw$ and $O^{(i)}$ times $\log r$.  It is highly unlikely that this part of the algorithm would dominate its time complexity.

\pfstart[Proof of \rf{thm:optimalImplementation}]
The proof is an extension of that of \rf{thm:pumping}.
Its outline was already given in \rf{sec:introCanonical}; see in particular \rf{fig:implementationBetter}.

Let us define $D^{(i)} = K/K^{(i)}$, which is a power of 2.
The space of the algorithm is $\cK\otimes (\cH\oplus \cL)$, where $\cK$ is a $K$-qudit.
The register $\cH\oplus \cL$ contains the registers $\cP$ and $\cR$ as described above.
The algorithm starts its work in the state $\xi = \ket K|0>\ket P |0> \ket R|0> \ket|\xi>$.
Its steps are as follows.

\begin{enumerate}\itemsep=0pt
\item Map $\ket K|0>$ into the uniform superposition $\frac1{\sqrt K} \sum_{t=0}^{K-1} \ket K|t>$.
\item For $t=0,1,\dots, K-1$:
\begin{enumerate}\itemsep=0pt
\item For each $i=1,\dots, r$:
\begin{itemize}
\item if $t$ is divisible by $D^{(i)}$, execute the input oracle $O^{(i)}$ conditioned on $\ket R|i>$.
\end{itemize}
\item Execute the work unitary $\Sw$ on $\cH\oplus \cL$ conditioned on $\ket K|t>$.
\item Conditioned on $\ket P|1>\ket R|0>$, replace $\ket K|t>$ by $\ket K|t+1>$.
\item For each $i=1,\dots, r$:
\begin{itemize}
\item if $t+1$ is divisible by $D^{(i)}$, add $D^{(i)}$ to $\reg K$ conditioned on $\ket R|i>$.  The operation is performed modulo $K$.
\end{itemize}
\end{enumerate}
\item Run Step 1 in reverse.
\end{enumerate}

The analysis is similar to that in the proof of \rf{thm:pumping}.
Now we assume that after Step~1, instead of the state 
$\frac1{\sqrt K} \sum_{t=0}^{K-1} \ket K|t> \ket P |0> \ket R|0> \ket|\xi>$,
we are in the state
\begin{equation}
\label{eqn:optimalInitial}
\frac1{\sqrt K} \sum_{t=0}^{K-1} \ket K|t> \ket P |0> \ket R|0>\ket |\xi> + 
\frac1{\sqrt K} \ket P|1> \skB[\ket K|0>\ket R|0> \ket |\vw> + \sum_{i=1}^r \sum_{t=0}^{D^{(i)}-1} \ket K|t>\ket R|i>\ketA |v^{(i)}>].
\end{equation}
Since $t=0$ is divisible by all $D^{(i)}$, after Step~2(a) of the first iteration of the loop, we have the state
\[
\frac1{\sqrt K} \sum_{t=0}^{K-1} \ket K|t> \ket P |0> \ket R|0>\ket |\xi> + \frac1{\sqrt K} \ket P|1> \skB[\ket K|0>\ket R|0> \ket |\vw> + \sum_{i=1}^r \sum_{t=0}^{D^{(i)}-1} \ket K|t>\ket R|i>\ketA |(I\otimes O^{(i)})v^{(i)}>].
\]
The crucial observation is that on each iteration of the loop on step~2(b), the following transformation is performed.
The part of the state
\begin{equation}
\label{eqn:optimal1}
\frac1{\sqrt K} \ket K|t> \skB[\ket P |0> \ket R|0>\ket |\xi> + \ket P|1>\ket R|0> \ket |\vw> + \ket P|1>\sum_{i=1}^r \ket R|i>\ketA |(I\otimes O^{(i)})v^{(i)}>]
\end{equation}
gets mapped by $\Sw$ into
\begin{equation}
\label{eqn:optimal2}
\frac1{\sqrt K} \ket K|t> \skB[\ket P |0> \ket R|0>\ket |\tau> + \ket P|1>\ket R|0> \ket |\vw> + \ket P|1>\sum_{i=1}^r \ket R|i>\ketA |v^{(i)}>],
\end{equation}
where we used~\rf{eqn:canonicalSequence} with~\rf{eqn:OvMultipleOracles}.

If $t=cD^{(i)}-1$ for some integer $c$, then on Step~2(d), we perform the transformation of the part of the state
\begin{equation}
\label{eqn:improvedPumping1}
\frac1{\sqrt K} \ket P|1> \ket R|i> \sum_{t=0}^{D^{(i)}-1} \ket K|(c-1)D^{(i)}+t>\ketA |v^{(i)}>
\maps{}
\frac1{\sqrt K} \ket P|1> \ket R|i>\sum_{t=0}^{D^{(i)}-1} \ket K|cD^{(i)} + t>\ketA|v^{(i)}>,
\end{equation}
which is then mapped on Step~2(a) of the next iteration into
\[
\frac1{\sqrt K} \ket P|1> \ket R|i>\sum_{t=0}^{D^{(i)}-1} \ket K|cD^{(i)} + t>\ket|(I\otimes O^{(i)})v^{(i)}>.
\]
Therefore, after all the $K$ iterations of the loop in Step 2, we result in the state
\[
\frac1{\sqrt K} \sum_{t=0}^{K-1} \ket K|t> \ket P |0> \ket R|0>\ket|\tau> + 
\frac1{\sqrt K} \ket P|1> \skB[\ket K|0>\ket R|0> \ket |\vw> + \sum_{i=1}^r \sum_{t=0}^{D^{(i)}-1} \ket K|t>\ket R|i>\ketA |v^{(i)}>].
\]
After that, we finish as in \rf{thm:pumping}, by assuming we are in the state 
$\frac1{\sqrt K} \sum_{t=0}^{K-1} \ket K|t> \ket P |0> \ket R|0> \ket |\tau>$
instead, and applying Step 3.

The total perturbation of the algorithm is
\begin{equation}
\label{eqn:improvedPumpingTotalPerturbation}
\frac2{\sqrt K} 
\normB|{\ket K|0>\ket R|0> \ket |\vw> + \sum_{i=1}^r \sum_{t=0}^{D^{(i)}-1} \ket K|t>\ket R|i>\ketA |v^{(i)}>}|,
\end{equation}
which is equal to the right-hand side of~\rf{eqn:optimalImplementationEstimate}.

The claim on the number of executions of $\Sw$ and $O^{(i)}$ in the algorithm is obvious.
Besides that, it is trivial to implement the algorithm in 
$\OO(K+K^{(1)}+\cdots+K^{(r)})\log K \log r$ elementary operations,
where the $\log r$ factors comes from the necessity to index one of the $r$ input oracles.
Using a slight modification of \rf{lem:automaton}, the algorithm can be implemented in $\OO(K+K^{(1)}+\cdots+K^{(r)})\log r$ elementary operations.
One crucial point here is that when $t = cD^{(i)}-1$, addition of $D^{(i)}$ on Step~2(d) is equivalent to the replacement of $c-1$ by $c$ in the highest qubits of the register $\cK$ as indicated by~\rf{eqn:improvedPumping1}.
We omit the details.
\pfend

The following corollary, which is equivalent to \rf{thm:introImplementationBetter}, is easier to apply.

\begin{cor}
\label{cor:optimalImplementation}
Assume $r=\OO(1)$, and let $\eps, W, L^{(1)},\dots,L^{(r)}>0$ be parameters.
There exists a quantum algorithm that conditionally executes $\Sw$ as a black box $K=\OO(1+W/\eps^2)$ times, 
makes $\OO(L^{(i)}/\eps^2)$ queries to the $i$-th input oracle $O^{(i)}$, and uses $\OO(K)$ other elementary operations.
The algorithm $\eps$-approximately transforms $\xi$ into $\tau(S, O, \xi)$ for all $S$, $O^{(i)}$, and $\xi$ such that $W(S, O, \xi)\le W$ and $L^{(i)} (S, O, \xi) \le L^{(i)}$ for all $i$.
\end{cor}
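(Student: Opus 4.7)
The plan is to apply \rf{thm:optimalImplementation} directly, choosing $K$ and $K^{(i)}$ large enough to drive the error below $\eps$ while respecting the constraint $K \ge K^{(i)}$. Fix $C = 4(r+1)$ (which is $\OO(1)$ since $r = \OO(1)$). I would take $K$ to be the smallest power of $2$ at least $C(1+W/\eps^2)$, and each $K^{(i)}$ the smallest power of $2$ at least $\max(1, C L^{(i)}/\eps^2)$. Because the canonical decomposition $v = \vw \oplus \bigoplus_i v^{(i)}$ gives $W(S,O,\xi) = \|\vw\|^2 + \sum_i L^{(i)}(S,O,\xi)$, every actual query complexity satisfies $L^{(i)}(S,O,\xi) \le W(S,O,\xi)$, so we may assume without loss of generality that the user-supplied parameters obey $L^{(i)} \le W$ (otherwise tighten $L^{(i)}$ to $\min(L^{(i)}, W)$). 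This ensures $K \ge K^{(i)}$ as required.

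With these choices I would bound the radicand in~\rf{eqn:optimalImplementationEstimate} term by term: since $K^{(i)} \ge C L^{(i)}/\eps^2$,
\[
\sA[\frac{K}{K^{(i)}} - 1] L^{(i)}(S,O,\xi) \;\le\; \frac{K\, L^{(i)}(S,O,\xi)}{K^{(i)}} \;\le\; \frac{K\, L^{(i)}}{K^{(i)}} \;\le\; \frac{K\eps^2}{C},
\]
and similarly $W(S,O,\xi) \le W \le K\eps^2/C$. Therefore the whole radicand is at most $(r+1) K\eps^2/C$, and the error is at most
\[
\frac{2}{\sqrt K}\sqrt{(r+1) K\eps^2/C} \;=\; 2\sqrt{(r+1)/C}\,\eps \;=\; \eps.
\]

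The promised resource counts then follow directly from the construction: $K = \OO(1 + W/\eps^2)$ executions of $\Sw$; $K^{(i)} = \OO(L^{(i)}/\eps^2)$ queries to $O^{(i)}$ (absorbing the additive $1$ needed in the degenerate case $L^{(i)} = 0$); and $\OO\sA[(K + \sum_i K^{(i)})\log r] = \OO(K)$ additional elementary operations, using $r = \OO(1)$ and $K^{(i)} \le K$. There is no real obstacle here beyond this routine juggling of constants; the substance of the corollary is entirely contained in \rf{thm:optimalImplementation}, and the only thing to be careful about is that picking $K$ large enough automatically dominates each $K^{(i)}$, which is guaranteed by the inequality $L^{(i)}(S,O,\xi) \le W(S,O,\xi)$ intrinsic to canonical transducers.
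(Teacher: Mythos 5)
Your choice of parameters and the error estimate are fine, and up to constants this is the same route the paper takes: invoke \rf{thm:optimalImplementation} with $K=\Theta(1+W/\eps^2)$ and $K^{(i)}$ roughly $\Theta(L^{(i)}/\eps^2)$, using $L^{(i)}(S,O,\xi)\le W(S,O,\xi)$ to guarantee $K\ge K^{(i)}$. The one genuine gap is in the last step, where you claim the bound $\OO(L^{(i)}/\eps^2)$ on the number of queries by ``absorbing the additive $1$ needed in the degenerate case $L^{(i)}=0$.'' There is nothing to absorb it into: the statement promises $\OO(L^{(i)}/\eps^2)$ queries, \emph{not} $\OO(1+L^{(i)}/\eps^2)$, so when $L^{(i)}/\eps^2$ is below any fixed constant the algorithm must make \emph{zero} queries to $O^{(i)}$, whereas your construction always sets $K^{(i)}\ge 1$ and hence always queries each oracle at least once. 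This distinction is deliberate in the paper (compare the ``$1+$'' in the count of executions of $\Sw$ with its absence in the query count), and it matters downstream, e.g.\ in \rf{thm:introQueryCompression} where the additive $1$ has to be accounted for explicitly.

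The fix requires an extra idea beyond routine constant-juggling: for every $i$ with $K^{(i)}=1$ (equivalently, $L^{(i)}$ too small), simply omit the single application of $O^{(i)}$. In the construction of \rf{thm:optimalImplementation} that oracle acts only on the guessed catalyst component $\frac1{\sqrt K}\ket K|0>\ket R|i>\ket|v^{(i)}>$ added in~\rf{eqn:optimalInitial}, whose norm is $\sqrt{L^{(i)}(S,O,\xi)/K}$, which your parameter choice already forces to be $\OO(\eps)$. Skipping the query therefore introduces an additional perturbation of $\OO(\eps)$ by \rf{lem:surgery}, so running the whole argument with error target $\eps/2$ and budgeting the remaining $\eps/2$ for these omissions yields the statement as written. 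Without this step you have only proved the relaxed bound $\OO(1+L^{(i)}/\eps^2)$ on the number of queries.
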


\begin{proof}[Proof of \rf{cor:optimalImplementation}]
We first prove a relaxed version of the corollary, where we allow each input oracle to be called $\OO\sA[1 + L^{(i)}/\eps^2]$ times.
Afterwards, we show how to remove this assumption.
We allow error $\eps/2$ in this relaxed version.

If $W<\eps^2/16$, the (relaxed) corollary follows from \rf{thm:pumping}, as the algorithm then executes $\Sw$ and each input oracle $K = 1$ times.
Therefore, we will assume $W\ge \eps^2/16$.
Also, by definition we have that $W(S,O,\xi)\ge L^{(i)}(S,O,\xi)$.
Therefore, we may assume that $W\ge L^{(i)}$, reducing $L^{(i)}$ otherwise.

We intend to use \rf{thm:optimalImplementation}.
We take $K$ as the smallest power of 2 exceeding $32(r+1)W/\eps^2$.
In particular, $K = \Theta(W/\eps^2)$ by our assumption on $W$ and $r=\OO(1)$.
We take $K^{(i)}$ as the largest power of 2 that does not exceed $\max\sfigA{1,\, K L^{(i)}/W}$.
It satisfies $K^{(i)} \le K$ as required.
On the other hand, $K^{(i)}\ge KL^{(i)}/(2W)$ implying $K/K^{(i)} \le 2W/L^{(i)}$, which gives the error estimate~\rf{eqn:optimalImplementationEstimate} at most
\[
\frac2{\sqrt K} \sqrt{2(r+1)W} 
\le \frac{2\sqrt{2(r+1)W}}{\sqrt{32(r+1)W/\eps^2}}
\le \eps/2.
\]
If $K^{(i)}>1$, we get $K^{(i)} \le K L^{(i)}/W = \OO(L^{(i)}/\eps^2)$, which finishes the proof of the relaxed statement of the corollary.

In the following, we assume we used \rf{thm:optimalImplementation} in the proof, the case of \rf{thm:pumping} being similar.
Consider all the values of $i$ such that $K^{(i)}=1$.
By the proof of \rf{thm:optimalImplementation}, the input oracles are applied to the perturbation added in~\rf{eqn:optimalInitial}, also the norm of the perturbation is at most $\eps/4$ (\emph{cf.}~\rf{eqn:improvedPumpingTotalPerturbation}).

To get the original statement of the corollary, we do not apply all the input oracles $O^{(i)}$ with $K^{(i)}=1$.
As they are applied once in the algorithm, this gives additional perturbation of size at most $\eps/2$.
Combining with the perturbation $\eps/2$ of the relaxed algorithm itself, we get an estimate of the total error of at most $\eps$.
\end{proof}

\section{Example II: Adversary Bound}
\label{sec:adv}
As mentioned in the introduction, the construction of transducers is based on the implementation of the adversary bound in~\cite{belovs:LasVegas}.
The quantum adversary bound was first developed as a powerful tool for proving quantum query lower bounds.
However, it was later extended to include upper bounds as well, and we consider the latter in this paper.
For more detail on the adversary bound, refer to the introduction of~\cite{belovs:LasVegas} and the references therein.
We first consider the general case of state conversion with unidirectional unitary input oracles, and then move on to more usual function evaluation problems.

\subsection{State Conversion}
\label{sec:stateConversion}
We consider the adversary bound for state conversion from~\cite{belovs:LasVegas}.
In the state conversion problem, we have a collection of pairs $\xi_x\mapsto\tau_x$ of states in $\cH$ and input oracles $O_x\colon \cM\to\cM$, where $x$ ranges over some finite set $D$.
The task is to develop an algorithm $A$ such that $A(O_x)\xi_x = \tau_x$ for all $x$.
The goal is to minimise $L(A, O_x, \xi_x)$.
The corresponding adversary bound is the following multi-objective optimisation problem:
\begin{subequations}
\label{eqn:advExplicit}
\begin{alignat}{3}
&\mbox{\rm minimise} &\quad& \sA[\norm|v_x|^2]_{x\in D} &\quad&\\
& \mbox{\rm subject to}&&  
\ip<\xi_x, \xi_y> - \ip<\tau_x,\tau_y> = \ipA<v_x,\;  (I_{\cW}\otimes(I_\cM-O^*_xO_y)) v_y> && \text{\rm for all $x, y\in D$;}  \label{eqn:advExplicitCondition}\\
&&& \text{$\cW$ is a vector space}, \qquad
v_x \in \cW\otimes\cM.
\end{alignat}
\end{subequations}

A canonical transducer $S_v$ can be obtained from any feasible solution $v = (v_x)$ to this problem.
It works as follows (with $I = I_\cW$):
\[
\xi_x \oplus v_x \maps{I\otimes O_x} \xi_x \oplus (I \otimes O_x) v_x \maps{\Sw_v} \tau_x \oplus v_x,
\]
where $\Sw_v$ is an input-independent unitary whose existence is assured by~\rf{eqn:advExplicitCondition}, as the latter can be rewritten as
\[
\ip<\xi_x, \xi_y> +  \ipA<(I\otimes O_x) v_x,\;  (I\otimes O_y) v_y>
=
\ip<\tau_x, \tau_y> +  \ip<v_x, v_y>,
\]
and two state collections with the same combination of inner products always admit such a state-independent transforming unitary. 
Thus we have a transduction $\xi_x \transduce{S_v(O_x)} \tau_x$ with 
the transduction and the query complexities satisfying
\[
W(S_v, O_x, \xi_x) = L(S_v, O_x, \xi_x) = \|v_x\|^2,
\qquad
q(S_v, O_x, \xi_x) = v_x.
\]

As shown in~\cite{belovs:LasVegas}, this perfectly captures Las Vegas query complexity of state conversion.
Note that canonical transducers with empty non-query space $\cLw$ are essentially equivalent to this construction.

\subsection{Function Evaluation}
\label{sec:function}
Now we describe the usual case of function evaluation.
These results can be derived from~\cite{belovs:LasVegas} and~\cite{belovs:variations}.
First, we define the formalism behind function-evaluating transducers, and then move on to the adversary bound.

Let $f\colon D \to [p]$ be a function with domain $D \subseteq [q]^n$.
We want to construct a transducer $S_f$ that evaluates $f$.
We assume the state-generating settings from \rf{sec:prelimFunctions}, which means
that, for every $x\in D$, with bidirectional access to the input oracle 
$O_x\colon \ket|i>\ket|0>\mapsto \ket |i>\ket|x_i>$,
the transducer $S_f$ has to perform the transduction $\ket |0> \transduce{} \ket |f(x)>$.
Again, we cast this as unidirectional access to $\bi{O_x}$ from~\rf{eqn:bi}.

Similarly to~\rf{eqn:function_QueryComplexity}, we write
\begin{equation}
\label{eqn:function_Wx}
W_x (S) = W\sA[S, \bi{O_x}, \ket |0>]
\qqand
W(S) =  \max_{x\in D} W_x(S).
\end{equation}
We use similar notation for $L$ and $L^{(i)}$.

There is a slight discrepancy between various existing definitions of the adversary bound $\Adv(f)$ for non-Boolean functions, in the sense that they differ by a factor of at most 2 (see, e.g., Section 3 of~\cite{lee:stateConversion}.)
We adopt the formulation from~\cite{belovs:variations}, which reads in notation of that paper as
\[
\Adv(f) = \gamma_2\sB[1_{f(x)\ne f(y)} \midA \bigoplus\nolimits_{i\in[n]} 1_{x_i\ne y_i}]_{x,y\in D},
\]
and which is equivalent to $\gamma_2(J-F | \Delta)$ in notations of~\cite{lee:stateConversion}.
An explicit definition of $\Adv(f)$ is:%
\footnote{
The usual definition has $\max\sfig{\sum_{i=1}^n \|u_{x,i}\|^2, \sum_{i=1}^n \|v_{x,i}\|^2 }$ in the objective instead of $\frac12 \sB[\sum_{i=1}^n \|u_{x,i}\|^2 + \|v_{x,i}\|^2 ]$.
The two formulations are equivalent~\cite{belovs:LasVegas}.
But even a priori, our formulation does not exceed the usual formulation, and, since we are interested in upper bounds, supersedes the latter.
}
\begin{subequations}
\label{eqn:advFunction}
\begin{alignat}{3}
&\mbox{\rm minimise} &\quad& \max_{x\in D} \frac12 \sum_{i=1}^n \sC[\|u_{x,i}\|^2 + \|v_{x,i}\|^2 ] &\quad& \label{eqn:advFunctionObjective}\\
& \mbox{\rm subject to}&&  
1_{f(x)\ne f(y)} = \sum_{i: x_i\ne y_i} \ip<u_{x,i}, v_{y,i}> && \text{\rm for all $x, y\in D$;}  \label{eqn:advFunctionCondition}\\
&&& \text{$\cW$ is a vector space}, \qquad
u_{x,i}, v_{x,i} \in \cW.
\end{alignat}
\end{subequations}

\mycutecommand{\vup}{v^{\uparrow}}
\mycutecommand{\vdown}{v^{\downarrow}}

Let us now describe the canonical transducer $S_{u,v}$ corresponding to a feasible solution $u_{x,i}, v_{x,i}$ of~\rf{eqn:advFunction}.
Its local space $\cL = \cLq$ is of the form $\cW \otimes \cR\otimes \cB\otimes \cQ$. 
Here $\cW$ acts as  $\cLt$ in notation of \rf{sec:canonicalDefinition},
$\cR$ is an $n$-qudit indicating the index of the input variable,
$\cB$ is a qubit indicating direction of the query,
and $\cQ$ is a $q$-qudit storing the output of the query.
We use $\uparrow$ and $\downarrow$ to denote the basis states of $\cB$.
The first one stands for the direct, and the second one for the inverse query.
The input oracle acts on $\cR\otimes \cB\otimes \cQ$ as
\begin{equation}
\label{eqn:function_inputOracleDecomposition}
\bi{O_x} = \bigoplus_{i\in [n]} \bi{O_{x,i}},
\end{equation}
where
\begin{equation}
\label{eqn:function_inputConstituent}
O_{x,i}\colon\quad \cQ\to\cQ,\quad \ket|0>\mapsto \ket |x_i>
\end{equation}
is the $i$th constituent of the input oracle.

Define the following vectors in $\cW$:
\[
\vup_{x,i} = \frac{u_{x,i} + v_{x,i}}2
\qqand
\vdown_{x,i} = \frac{u_{x,i} - v_{x,i}}2.
\]
They possess the following important property:
\begin{equation}
\label{eqn:function_vinner}
\ipA<\vup_{x,i}, \vup_{y,i}> - \ipA<\vdown_{x,i}, \vdown_{y,i}> 
= 
\frac{\ip<u_{x,i}, v_{y,i}> + \ip<v_{x,i}, u_{y,i}>}2.
\end{equation}

The catalyst for the input $x$ is 
\begin{equation}
\label{eqn:functionWitness}
v_x = \sum_{i\in [n]} \ket R|i> \skB[ \ket B|\uparrow>\ket Q|0> \ketA W|v^\uparrow_{x,i}> + \ket B|\downarrow>\ket Q|x_i>\ketA W|v^\downarrow_{x,i}> ].
\end{equation}
The transducer starts in $\xi_x \oplus v_x = \ket |0> \oplus \ket|v_x>$.
It applies the input oracle~\rf{eqn:function_inputOracleDecomposition}, which gives the state
\begin{equation}
\label{eqn:function1}
\psi_x = \ket |0> \oplus 
\sum_{i\in [n]} \ket R|i> \skB[ \ket B|\uparrow>\ket Q|x_i> \ketA W|v^\uparrow_{x,i}> + \ket B|\downarrow>\ket Q|0>\ketA W|v^\downarrow_{x,i}> ].
\end{equation}
The construction then follows from the following claim.

\begin{clm}
There exists an input-independent unitary $\Sw_{u,v}$ that maps the state $\psi_x$ from~\rf{eqn:function1} into $\ket |f(x)> \oplus \ket |v_x>$ for all $x$.
\end{clm}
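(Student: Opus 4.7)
The plan is to verify that the states $\psi_x$ and $\tau_x = \ket|f(x)>\oplus\ket|v_x>$ have matching Gram matrices as $x$ ranges over $D$, and then invoke the standard fact that two state collections in Hilbert spaces with coinciding Gram matrices admit an isometry (extensible to a unitary $\Sw_{u,v}$) mapping one into the other.

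First I would expand the inner products. Using $\ip<\ket R|i>, \ket R|j>> = 1_{i=j}$, $\ip<\ket Q|a>, \ket Q|b>> = 1_{a=b}$, and orthogonality of $\ket B|\!\uparrow\!>,\ket B|\!\downarrow\!>$, a direct computation gives
\begin{equation*}
\ip<\psi_x, \psi_y> = 1 + \sum_{i\in[n]} \skA[1_{x_i=y_i}\ipA<v^\uparrow_{x,i}, v^\uparrow_{y,i}> + \ipA<v^\downarrow_{x,i}, v^\downarrow_{y,i}>]
\end{equation*}
and analogously
\begin{equation*}
\ip<\tau_x, \tau_y> = 1_{f(x)=f(y)} + \sum_{i\in[n]} \skA[\ipA<v^\uparrow_{x,i}, v^\uparrow_{y,i}> + 1_{x_i=y_i}\ipA<v^\downarrow_{x,i}, v^\downarrow_{y,i}>].
\end{equation*}
Subtracting one from the other, the task reduces to showing
\begin{equation*}
1_{f(x)\ne f(y)} = \sum_{i: x_i\ne y_i} \skA[\ipA<v^\uparrow_{x,i}, v^\uparrow_{y,i}> - \ipA<v^\downarrow_{x,i}, v^\downarrow_{y,i}>].
\end{equation*}

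Next I would plug in the identity~\rf{eqn:function_vinner}, which turns the right-hand side into the symmetrised expression $\sum_{i:x_i\ne y_i} \tfrac12 \skA[\ip<u_{x,i}, v_{y,i}> + \ip<v_{x,i}, u_{y,i}>]$. The feasibility condition~\rf{eqn:advFunctionCondition} applied to the pair $(x,y)$ gives $1_{f(x)\ne f(y)} = \sum_{i:x_i\ne y_i} \ip<u_{x,i}, v_{y,i}>$, and the same condition applied to $(y,x)$ and complex-conjugated gives $1_{f(x)\ne f(y)} = \sum_{i:x_i\ne y_i} \ip<v_{x,i}, u_{y,i}>$; averaging these two yields exactly what is needed.

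Once the Gram equality $\ip<\psi_x, \psi_y> = \ip<\tau_x, \tau_y>$ is established for all $x,y\in D$, the correspondence $\psi_x \mapsto \tau_x$ extends linearly to an isometry on the span of $\sfig{\psi_x}_{x\in D}$, and then arbitrarily to a unitary $\Sw_{u,v}$ on the whole ambient space $\cH\oplus\cL$; crucially, this unitary is defined without reference to $x$, so it is input-independent as required. The only mildly delicate step is the symmetrisation of the adversary condition, but it is immediate from the fact that the indicator $1_{f(x)\ne f(y)}$ is real and symmetric in $(x,y)$; no other obstacle is anticipated.
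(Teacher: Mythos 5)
Your proposal is correct and follows essentially the same route as the paper: compute both Gram matrices, reduce the difference to the symmetrised adversary condition via~\rf{eqn:function_vinner}, and invoke the standard fact that state collections with equal pairwise inner products are related by an input-independent unitary. The only difference is cosmetic — you spell out the symmetrisation of~\rf{eqn:advFunctionCondition} (applying it to $(y,x)$ and conjugating), which the paper leaves implicit.
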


\pfstart
Indeed, for a pair of $x,y\in D$, we have
\begin{equation}
\label{eqn:functioninner1}
\ip<\psi_x,\psi_y> = 1 + \sum_{i\in [n]} \skA[
1_{x_i = y_i} \ipA<\vup_{x,i}, \vup_{y,i}> + \ipA<\vdown_{x,i}, \vdown_{y,i}>].
\end{equation}
On the other hand, the inner product between 
$\ket |f(x)> \oplus \ket |v_x>$ and $\ket |f(y)> \oplus \ket |v_y>$ is
\begin{equation}
\label{eqn:functioninner2}
1_{f(x)=f(y)} + \sum_{i\in [n]} \skA[
 \ipA<\vup_{x,i}, \vup_{y,i}> + 1_{x_i = y_i}\ipA<\vdown_{x,i}, \vdown_{y,i}>].
\end{equation}
To establish the existence of the unitary $\Sw_{u,v}$ it suffices to show that~\rf{eqn:functioninner1} and~\rf{eqn:functioninner2} are equal for all $x,y\in D$.
Subtracting the latter from the former gives us
\[
1_{f(x)\ne f(y)} - \sum_{i: x_i\ne y_i} \skA[\ipA<\vup_{x,i}, \vup_{y,i}> - \ipA<\vdown_{x,i}, \vdown_{y,i}>] 
=
1_{f(x)\ne f(y)} - \frac12 \sum_{i: x_i\ne y_i} \skA[\ip<u_{x,i}, v_{y,i}> + \ip<v_{x,i}, u_{y,i}>]
= 0
\]
using~\rf{eqn:function_vinner} and~\rf{eqn:advFunctionCondition}.
\pfend

Thus, we have that $S_{u,v}$ on the input oracle $\bi{O_x}$ transduces $\ket|0>$ into $\ket|f(x)>$.
If we consider $\bi{O_x}$ as a direct sum of $n$ input oracles as in~\rf{eqn:function_inputOracleDecomposition}, we get the partial query states
\[
q^{(i)}_x (S_{u,v}) = \vup_{x,i} \oplus \vdown_{x,i},
\]
with
\[
L^{(i)}_x (S_{u,v}) = \|\vup_{x,i}\|^2 +  \|\vdown_{x,i}\|^2 = 
\frac{\|u_{x,i}\|^2 + \|v_{x,i}\|^2}2
\]
by the parallelogram identity.
Finally, the transduction complexity and the total query complexity is
\[
W_x (S_{u,v}) =
L_x (S_{u,v}) = 
\frac12 \sC[\sum_{i=1}^n \|u_{x,i}\|^2 + \|v_{x,i}\|^2 ]
\]
which is in the objective of~\rf{eqn:advFunctionObjective}.
This can be summarised as
\begin{thm}
\label{thm:advTransducer}
For every function $f\colon D\to [p]$ with $D\subseteq[q]^n$, there exists a canonical transducer $S_f$ evaluating the function $f$ and whose transduction complexity is bounded by $\Adv(f)$.
In more detail, 
the admissible subspace of $S_f$ is $\ket|0>$;
for every $x\in D$, $S_f$ transduces $\ket |0>\transduce{} \ket |f(x)>$ with bidirectional access to the input oracle $O_x$ encoding the input string $x$; and 
$
W_x (S_f) =
L_x (S_f) \le \Adv(f).
$
Moreover, the catalyst of the transducer $S_f$ is as in~\rf{eqn:functionWitness}.
In particular, it executes the input oracle only on its admissible subspace.
\end{thm}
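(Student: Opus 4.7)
The proof is essentially set up by the preceding paragraphs; the plan is to assemble the pieces into a clean argument, identify what still needs verification, and flag the one subtle point (existence of the input-independent unitary $\Sw_{u,v}$).

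First, I would start by fixing an optimal feasible solution $\{u_{x,i}, v_{x,i}\}_{x\in D,\, i\in[n]}$ to the adversary SDP~\rf{eqn:advFunction}, attaining objective value $\Adv(f)$. This solution lives in some space $\cW$. I then declare the transducer's public space $\cH = \bC^p$ (containing $\ket|0>$ as the unique admissible initial state) and private space $\cL = \cLq = \cW\otimes \cR\otimes\cB\otimes\cQ$, with the input oracle acting only on $\cR\otimes \cB\otimes \cQ$ as in~\rf{eqn:function_inputOracleDecomposition}--\rf{eqn:function_inputConstituent}. The catalyst is defined by~\rf{eqn:functionWitness}, so that the first step of the canonical form, applying $\bi{O_x}$ to $\ket|0>\oplus v_x$, produces exactly the state $\psi_x$ displayed in~\rf{eqn:function1}. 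All of this is bookkeeping.

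Next comes the crux: I need to produce a single input-independent unitary $\Sw_{u,v}$ on $\cH\oplus\cL$ satisfying $\Sw_{u,v}\,\psi_x = \ket|f(x)>\oplus v_x$ for every $x\in D$. The plan here is the standard Gram-matrix/isometry-extension argument: two families $\{\alpha_x\}$ and $\{\beta_x\}$ in a Hilbert space are related by some unitary iff $\ip<\alpha_x,\alpha_y> = \ip<\beta_x,\beta_y>$ for all $x,y$; such an isometry on the span can then be extended to a unitary on the whole space (possibly after padding with an ancilla so dimensions match, which costs nothing). So I compute both Gram matrices — these are~\rf{eqn:functioninner1} and~\rf{eqn:functioninner2} — subtract, apply the parallelogram identity~\rf{eqn:function_vinner} to convert the $\vup, \vdown$ inner products back into $u,v$ inner products, and the resulting quantity is
\[
1_{f(x)\ne f(y)} - \sum_{i:\,x_i\ne y_i} \ip<u_{x,i},v_{y,i}> = 0,
\]
which is exactly the adversary feasibility condition~\rf{eqn:advFunctionCondition}. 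This is the single non-trivial calculation, and the would-be main obstacle — showing that the $u,v$ data assemble \emph{consistently} into one global input-independent unitary rather than a family of $x$-dependent ones — is resolved by precisely this Gram-matrix equality.

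Finally, I would read off the three complexity quantities. By construction the transducer is canonical and queries $\bi{O_x}$ only on the catalyst, so it executes the oracle only on its admissible subspace as claimed. The partial query state on the $i$-th constituent of~\rf{eqn:function_inputOracleDecomposition} is $\vup_{x,i}\oplus \vdown_{x,i}$, whose squared norm is $(\|u_{x,i}\|^2+\|v_{x,i}\|^2)/2$ by the parallelogram identity; summing over $i$ gives
\[
W_x(S_f) = L_x(S_f) = \tfrac12\sum_{i=1}^n\bigl(\|u_{x,i}\|^2+\|v_{x,i}\|^2\bigr) \le \Adv(f),
\]
by optimality of the chosen solution. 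The equality $W_x = L_x$ follows from the fact that $\cLw$ is empty, so the entire catalyst lives in $\cLq$. That completes the proposed proof.
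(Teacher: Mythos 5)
Your proposal is correct and follows essentially the same route as the paper: define the catalyst as in~\rf{eqn:functionWitness}, establish the input-independent work unitary via the Gram-matrix equality between~\rf{eqn:functioninner1} and~\rf{eqn:functioninner2} (which reduces to the feasibility condition~\rf{eqn:advFunctionCondition} through~\rf{eqn:function_vinner}), and read off the complexities with the parallelogram identity. No gaps.
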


\section{Composition of Transducers}
\label{sec:properties}

In this section, we describe basic properties of transducers.
In particular, we show how to combine simple transducers in order to obtain more complex ones.  This is akin to quantum algorithms being build out of elementary operations and subroutines.
In this section, we mostly focus on the circuit model of computation.

\subsection{Basic Properties}
From \rf{thm:transduce}, it follows that, for a fixed $S$ and $O$, the mappings $\xi\mapsto v(S, O, \xi)$ and $\xi\mapsto q(S, O, \xi)$ are linear.
In particular, for $c\in \bC$, we have
\begin{equation}
\label{eqn:rescaling}
W(S, O, c\xi) = |c|^2 W(S, O, \xi)
\qqand
L^{(i)}(S, O, c\xi) = |c|^2 L^{(i)}(S, O, \xi).
\end{equation}
Notice, however, that it is \emph{not} necessarily true that $W(S,O,\xi_1+\xi_2) = W(S,O,\xi_1) + W(S,O,\xi_2)$ even for orthogonal $\xi_1$ and $\xi_2$.
On the other hand, it \emph{is} the case that for $\xi_1\in \cE_1\otimes \cH$ and $\xi_2\in \cE_2\otimes \cH$, we have
\[
W(S,O,\xi_1\oplus \xi_2) = W(S,O,\xi_1) + W(S,O,\xi_2),
\]
using the extended definition of~\rf{eqn:transductionExtended}.

\begin{prp}[Inverse]
\label{prp:inverse}
For a canonical transducer $S$, the inverse transducer $S^{-1}$ satisfies $\tau\transduce{S^{-1}(O^*)} \xi$ whenever $\xi\transduce{S(O)}\tau$.
Moreover, $S^{-1}$ can be implemented in the canonical form, it has the same time complexity as $S$,
\[
W(S^{-1}, O^*, \tau) = W(S, O, \xi),
\qqand
q(S^{-1}, O^*, \tau) = (I\otimes O) q (S, O, \xi).
\]
\end{prp}

\pfstart
Let $v = \vw \oplus \vq$ be the catalyst of the transduction $\xi\transduce{S} \tau$.
From~\rf{eqn:transduce}, it is clear that if $S(O)$ maps $\xi \oplus v \mapsto \tau \oplus v$, then $S(O)^*$ maps $\tau \oplus v \mapsto \xi \oplus v$, hence, transduces $\tau$ into $\xi$ with the same catalyst.
One problem is that its action, as the inverse of~\rf{eqn:canonicalForm},
\[
\tau \oplus \vw \oplus \vq 
\maps{(\Sw)^*}
\xi \oplus \vw \oplus(I\otimes O)\vq 
\maps{\tO^*}
\xi \oplus \vw \oplus \vq.
\]
is not in the canonical form.
But we can take the following transducer $S^{-1}$ in its stead:
\[
\tau \oplus \vw \oplus(I\otimes O)\vq 
\maps{\tO^*}
\tau \oplus \vw \oplus \vq 
\maps{(\Sw)^*}
\xi \oplus \vw \oplus (I\otimes O)\vq.
\]
It is in the canonical form, and satisfies all the conditions.
\pfend

\subsection{Alignment}
In the remaining part of this section, we will study different ways of combining transducers $S_1,\dots,S_m$.
First, we consider parallel composition of transducers $\bigoplus_i S_i$, where individual $S_i$ act on orthogonal parts of the workspace.
Then we move onto sequential composition $S_m * S_{m-1} * \cdots * S_1$, where they act on the same space one after another.
Finally, we consider functional composition of two transducers, where the second transducer acts as an oracle for the first one.

We generally assume that all $S_i$ use the same oracle $O$.
This is without loss of generality since if they use different sets of input oracles, we can assume they all use the oracle $O$ which is the direct sum of the union of these sets of oracles.
Individual $S_i$ will then just ignore the input oracles they are not using.

In principle, the spaces $\cH_i$ and $\cL_i$ can differ between different $S_i$, but we assume they are all embedded into some larger register $\cH$, which also serves as $\cL$ per our convention of \rf{sec:implementation}.
What is crucial, though, is that all $S_i$ use the same privacy qubit $\cP$, the same query register $\cR$, and, most importantly, the oracle $O$ is applied to the same subset of registers in all $S_i$.
This is summarised by the following definition, \emph{cf.} \rf{rem:aligned}:

\begin{defn}[Alignment]\label{defn:alignment}
We say that canonical transducers $S_1,\dots,S_m$ are \emph{aligned in the oracle $O^{(i)}$} if the query of $O^{(i)}$ is conditioned on the same value $\ket R|i>$ of the same register and acts on the same subset of registers in all of them.
We say that $S_1,\dots,S_m$ are \emph{aligned} if they are aligned in all their oracles, and, additionally, use the same privacy qubit $\cP$. 
\end{defn}

If the alignment condition is not satisfied, the transducers have to explicitly move their registers around to meet it.
This might take time if the registers are lengthy.

\subsection{Parallel Composition}
\label{sec:parallel}

Parallel composition of two or several quantum programs is their execution as a direct sum on orthogonal parts of the space of the algorithm.
For transducers, parallel composition can be implemented in a straightforward way.

\begin{defn}[Direct Sum of Transducers]
\label{defn:parallel}
Let $S_1,\dots,S_m$ be canonical transducers.
We assume they all use the same space $\cH\oplus\cL$, the same input oracle $O\colon \cM\to\cM$, and are aligned.
In particular, the query $\tO = I_\cH \oplus \Iw \oplus I\otimes O$ is given by~\rf{eqn:canonicalQuery} and is controlled by $\ket R|1>$ in all of them.

Let the register $\cJ = \bC^m$.
The direct sum $\bigoplus_i S_i$ is a canonical transducer in the space $\cJ\otimes(\cH\oplus\cL) = (\cJ\otimes\cH)\oplus(\cJ\otimes\cL)$.
It has the same input oracle $O$ as all $S_i$.
The query is again controlled by $\ket R|1>$.
The work unitary of $\bigoplus_i S_i$ is $\bigoplus_i \Sw_i$, where $\Sw_i$ is the work unitary of $S_i$.
\end{defn}

\begin{prp}[Parallel Composition]
\label{prp:parallel}
The canonical transducer $S = \bigoplus_i S_i$ from \rf{defn:parallel} satisfies the following conditions.
Assume $\xi_i\transduce{S_i(O)}\tau_i$ for all $i$.
Then, $S(O)$ transduces $\xi = \bigoplus_i \xi_i$ into $\tau = \bigoplus_i \tau_i$.
Moreover,
\begin{equation}
\label{eqn:paralellComplexity}
W(S, O, \xi) = \sum_i W(S_i, O, \xi_i)
\qqand
q(S, O, \xi) = \bigoplus_i q(S_i, O, \xi_i).
\end{equation}
The time complexity of $S$ is equal to the time complexity of implementing $\bigoplus_i \Sw_i$.
\end{prp}

\pfstart
Recall that all $S_i$ are aligned, and, hence, use the same space, $\cP$ is their common privacy qubit, and $\cR$ their common query register.
The privacy and the query registers of $S$ will still be $\cP$ and $\cR$.

Let $v_i$ be the catalyst of transduction $\xi_i \transduce{S_i(O)} \tau_i$.
The initial coupling of $S$ is
\begin{equation}
\label{eqn:parallel_xi+v}
\xi \oplus v = \sum_{i=1}^m \ket J|i> \skA[ \ket P|0> \ket H|\xi_i> + \ket P|1> \ket L|v_i> ].
\end{equation}
As required by canonicity, we first apply the input oracle $O$ controlled by $\ket R|1>$.
This has the effect that $O$ is applied to all $S_i$ in parallel.
Then, conditioned on the value $i$ in $\reg J$, we apply the work unitary $\Sw_i$ to the last two registers.
By the assumption $\xi_i\transduce{S_i(O)}\tau_i$ , this gives
\[
\sum_{i=1}^m \ket J|i> \skA[ \ket P|0> \ket H|\tau_i> + \ket P|1> \ket L|v_i> ] = \tau \oplus v.
\]
Eq.~\rf{eqn:paralellComplexity} follows from~\rf{eqn:parallel_xi+v}.
\pfend

\begin{rem}[Different input oracles in $S_i$]
\label{rem:directSumDifferentOracle}
In \rf{defn:parallel}, we assume all $S_i$ use the same input oracle $O$.
It is, however, possible to allow each $S_i$ to use its own input oracle $O_i$ so that the total input oracle of $\bigoplus_i S_i$ is $\bigoplus_i O_i$, and $\xi_i \transduce{S_i(O_i)} \tau_i$.
This is used, for instance, in iterated and composed functions.
We account for this possibility by allowing $\cM$ to contain $\cJ$, so that the input oracle of $S_i$ has read-only access to the value of $\ket E|i>$.
This does not interfere with the proof of \rf{prp:parallel}, and we get the following identities instead of~\rf{eqn:paralellComplexity}:
\begin{equation}
\label{eqn:paralellComplexityMultipleOracles}
W(S, O, \xi) = \sum_i W(S_i, O_i, \xi_i)
\qqand
q(S, O, \xi) = \bigoplus_i q(S_i, O_i, \xi_i).
\end{equation}
\end{rem}

The time complexity of implementing $\bigoplus_i S_i$ greatly depends on the structure of $S_i$ and the computational model.
For instance, if we assume the QRAG model, implementation of $\bigoplus S_i$ can be done in time essentially $\max_i T(S_i)$ as per~\rf{cor:selectProgram}.
The important special case when all $S_i$ are the same can be efficiently handled in the circuit  model as well. 
We write it out explicitly for ease of referencing.

\begin{defn}[Multiplication by Identity]
\label{defn:byIdentity}
Let $S$ be a canonical transducer with space $\cH\oplus \cL$, input oracle $O\colon \cM\to \cM$, and query given by~\rf{eqn:canonicalQuery}.
Let also $\cE$ be a space, and $I_\cE$ be the identity on $\cE$.
We define $I_\cE \otimes S$ as a canonical transducer in the space $(\cE\otimes \cH)\oplus (\cE\otimes \cL)$, with the work unitary $I_\cE\otimes \Sw$.
In other words, $I_\cE \otimes S = \bigoplus_i S$ where the summation is over the basis of $\cE$.
\end{defn}

\begin{cor}
\label{cor:byIdentity}
In the settings of \rf{defn:byIdentity}, assume $\xi_i\transduce{S(O)}\tau_i$ as $i$ ranges over the basis of $\cE$.
Then, $I_\cE\otimes S(O)$ transduces $\xi = \bigoplus_i \xi_i$ into $\tau = \bigoplus_i \tau_i$.
The complexities are given by~\rf{eqn:paralellComplexity} or~\rf{eqn:paralellComplexityMultipleOracles} with all $S_i=S$.
The time complexity is $T(I_\cE\otimes S) = T(S)$.
\end{cor}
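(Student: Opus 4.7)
The plan is to observe that this corollary is essentially an immediate specialisation of \rf{prp:parallel} to the case where every constituent transducer is the same, plus one additional observation about time complexity. I would simply apply \rf{prp:parallel} with $S_i = S$ for each basis element $i$ of $\cE$, treating $\cE$ as playing the role of the $\cJ$ register in \rf{defn:parallel}. Since $\xi_i \transduce{S(O)} \tau_i$ for each $i$, \rf{prp:parallel} immediately yields that $I_\cE \otimes S(O)$ transduces $\bigoplus_i \xi_i$ into $\bigoplus_i \tau_i$, with the claimed formulas~\rf{eqn:paralellComplexity} (or~\rf{eqn:paralellComplexityMultipleOracles} in the setting of \rf{rem:directSumDifferentOracle}) specialising correctly when all $S_i = S$.

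The one point that requires its own justification, and which separates this statement from a direct invocation of \rf{prp:parallel}, is the identity $T(I_\cE \otimes S) = T(S)$ rather than merely an upper bound of the form $\OO(T(S))$ that a generic $\bigoplus_i \Sw_i$ would incur. The key observation is that $I_\cE \otimes \Sw$ is not a genuine ``select'' operation: it acts by the same unitary $\Sw$ on the $\cH \oplus \cL$ register, independently of the contents of $\cE$. In the circuit model this means that we implement $I_\cE \otimes \Sw$ simply by executing the circuit for $\Sw$ verbatim, leaving the register $\cE$ untouched; no control by $\cE$ and no gate count overhead is introduced. Correspondingly the canonical query $\tO = I_\cH \oplus \Iw \oplus I \otimes O$ of $I_\cE \otimes S$ is exactly a copy of the query of $S$ extended trivially over $\cE$, so it does not contribute to the time complexity either (queries are excluded from $T$ by convention in \rf{sec:canonicalDefinition}).

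I do not anticipate a genuine obstacle here: the main subtlety is simply being careful that the alignment and canonical-form conventions do not get in the way, which is immediate because $S$ is aligned with itself and the register $\cE$ does not interfere with the privacy qubit $\cP$, the query register $\cR$, or the input-oracle registers. Thus the proof is a one-paragraph reduction to \rf{prp:parallel} together with the remark that tensoring with an identity on a fresh register is computationally free in the circuit model.
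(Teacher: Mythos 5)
Your proposal is correct and matches the paper's intent exactly: the corollary is presented as an immediate specialisation of \rf{prp:parallel} with all $S_i = S$ (the paper gives no separate proof), and your justification that $I_\cE\otimes \Sw$ is implemented by running the circuit for $\Sw$ verbatim while leaving $\cE$ untouched is precisely why the time complexity is $T(S)$ with no overhead.
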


Note that complexities in~\rf{eqn:transductionExtended} are the same as in this corollary with Eq.~\rf{eqn:paralellComplexity} used.

\subsection{Sequential Composition}
\label{sec:composition}

A quantum program~\rf{eqn:program} is a sequence of gates applied one after the other.
Therefore, sequential composition is of prime importance in quantum algorithms.
Let us formally define it for transducers.

\begin{defn}[Sequential Composition]
\label{defn:sequential}
Let $S_1,\dots, S_m$ be an aligned family of canonical transducers, all on the same public space $\cH$ and the same input oracle $O$.
Its sequential composition is a canonical transducer $S = S_m * S_{m-1} * \cdots * S_1$ on the same public space with the following property.
For every sequence of transductions
\begin{equation}
\label{eqn:sequenceOfTransductions}
\xi = \psi_1 \transduce{S_1(O)} \psi_2 \transduce{S_2(O)} \psi_3 \transduce{S_3(O)} \cdots \transduce{S_{m-1}(O)} \psi_{m} \transduce{S_m(O)} \psi_{m+1} = \tau,
\end{equation}
$S(O)$ transduces the initial state $\xi$ into the final state $\tau$.
\end{defn}

The above definition does not specify the implementation as we give two different implementations in this section.
The first one is tailored towards the circuit model, and the second one towards the QRAG model.
This division is not strict though.

\begin{prp}[Sequential Composition, Sequential Implementation]
\label{prp:sequentialsequential}
A sequential composition $S = S_m * S_{m-1} * \cdots * S_1$ as in \rf{defn:sequential} can be implemented by a canonical transducer with the following parameters:
\begin{equation}
\label{eqn:sequentialWitnessComplexity}
W(S, O, \xi) = \sum_{t=1}^m W(S_t, O, \psi_t),
\qquad
q(S, O, \xi) = \bigoplus_{t=1}^m q(S_t, O, \psi_t),
\end{equation}
and
\begin{equation}
\label{eqn:sequentialTimeComplexity}
T(S) = \OO(m)+ \sum_{t=1}^m \TC(S_t),
\end{equation}
where $\TC$ is defined in \rf{sec:prelimCircuit}.
\end{prp}

\pfstart
The general idea is simple.  
We first apply the input oracle to make the queries in all $S_t$ in parallel.
After the query, we execute all $\Sw_t$ one after the other, see \rf{fig:sequentialsequential}.
Some care must be taken, however, so that transducers act inside their respective private spaces and do not interfere with private spaces of other transducers.

\myfigure{\label{fig:sequentialsequential}}
{
A graphical illustration to the construction of \rf{prp:sequentialsequential} with $m=4$.
For convenience of representation, we draw the catalyst states by vertical lines, not horizontal ones like in \rf{fig:canonical}.
We also write $O\vq_t$ instead of $(I\otimes O)\vq_t$ to save space.
\\
We first apply the query $I_m\otimes I\otimes O$, which implements the queries in all $\Sw_t$.
After that, we apply all of $\Sw_t$ one after the other.
}
{
\newcommand{\OneIteration}[1]{
    \edef\indxx{#1}
    \begin{scope}[shift={(3.3*\indxx,0)}]
        \draw (0.5,3.5) rectangle (2.5, 4.5) node[pos=0.5] {\Large $\Sw_{\indxx}$};
        \draw[\witnesscolor] (1, 5.3) node[above] {$\vw_{\indxx}$} to (1, 4.5);
        \draw[\witnesscolor] (1, 3.5) to (1, 2.7) node[below] {$\vw_{\indxx}$};
        \draw[\nonquerycolor] (2, 7.8) node[above]{$\vq_{\indxx}$} to (2, 7);
        \draw[\querycolor] (2, 6) to node[right]{$O\vq_{\indxx}$} (2, 4.5);
        \draw[\nonquerycolor] (2, 3.5) to (2, 2.7) node[below]{$\vq_{\indxx}$};
    \end{scope}
}
\newcommand{\Mezhdu}[1]{
    \edef\indxx{#1}
    \begin{scope}[shift={(3.3*\indxx,0)}]
        \draw[thick, black,->] (-0.8,4) to node[above]{$\psi_{\indxx}$} (0.5,4);      
    \end{scope}
}
\negbigskip
\negbigskip
\[
\begin{tikzpicture}[every node/.style={font=\scriptsize}, every path/.append style={thick,->}]
\OneIteration{1}
\Mezhdu{2}
\OneIteration{2}
\Mezhdu{3}
\OneIteration{3}
\Mezhdu{4}
\OneIteration{4}
\draw (4,6) rectangle (15.7, 7) node[pos=0.5] {\Large $I_m\otimes I\otimes O$};
\draw[\xicolor] (2.5,4) node[left] {\normalsize $\xi$} to node[above]{$\psi_{1}$} (3.8,4);
\draw[\taucolor] (15.7,4) to node[above]{$\psi_{5}$} (16.8,4) node[right] {\normalsize $\tau$};
\end{tikzpicture}
\]
\negbigskip
}

The transducer $S$ uses the same privacy and query registers $\cP$ and $\cR$ as the family $S_1,\dots,S_m$.
We additionally use an $m$-qudit $\cK$, so that the local space of the transducer $S_t$ is marked by the value $\ket K|t>$.
Therefore, the space of $S$ is of the form $\cK\otimes (\cH\oplus \cL)$.
Its public space is still $\cH$ embedded as $\ket K|0> \otimes \cH$.

Let $\xi$ and $\psi_t$ be as in~\rf{eqn:sequenceOfTransductions}.
For each $t$, let $v_t$ be the catalyst of the transduction $\psi_t \transduce{S_t(O)} \psi_{t+1}$ from~\rf{eqn:sequenceOfTransductions}.
For the transduction $\xi \transduce{S(O)} \tau$, we have the following initial coupling:
\begin{equation}
\label{eqn:sequentialWitness1}
\xi\oplus v = \ket P|0>\ket K|0>\ket H|\xi> + \ket P|1> \sum_{t=1}^m \ket K |t>\ket L|v_t>.
\end{equation}
This already implies~\rf{eqn:sequentialWitnessComplexity}.

As required by the definition, the first operation is the application of the input oracle conditioned on $\ket R|1>$.
This gives the state
\[
\ket P|0>\ket K|0>\ket H|\psi_1> + \ket P|1> \sum_{t=1}^m \ket K |t>\ketA L|\tO v_t>,
\]
where $\tO v_t$ is defined as in~\rf{eqn:Ov}, and we used that $\xi = \psi_1$.

The work part of the transducer $S$ is as follows.
We assume the indexing of $\reg K$ is done modulo $m$.
\begin{itemize}
\item For $t=1,\dots,m$:
\begin{enumerate}
\item Controlled by $\ket P|0>$, replace the value $\ket K|t-1>$ by $\ket K|t>$.
\item Apply the work unitary $\Sw_t$ controlled by $\ket K|t>$.
\end{enumerate}
\end{itemize}

By induction and using~\rf{eqn:canonicalSequence}, after $\ell$ iterations of the loop, we have the state
\[
\ket P|0>\ket K|\ell>\ket H|\psi_{\ell+1}> 
+ \ket P|1> \sum_{t=1}^\ell \ket K |t>\ket L|v_t>
+ \ket P|1> \sum_{t=\ell+1}^m \ket K |t>\ketA L|\tO v_t>.
\]
And after execution of the whole transducer $S$, we have the state
\[
\ket P|0>\ket K|0>\ket H|\tau> + \ket P|1> \sum_{t=1}^m \ket K |t>\ket L|v_t> = \tau \oplus v,
\]
where we used that $\psi_{m+1}=\tau$.
The time complexity in~\rf{eqn:sequentialTimeComplexity} can be achieved using the direct-sum finite automaton of \rf{lem:automaton}.
\pfend

The second implementation of the sequential composition is slightly less intuitive.
We move all the intermediate states in~\rf{eqn:sequenceOfTransductions} to the catalyst.
This increases the catalyst size, but now all the operations $\Sw_t$ can be implemented in parallel.

\begin{prp}[Sequential Composition, Parallel Implementation]
\label{prp:sequentialParallel}
A sequential composition $S = S_m * S_{m-1} * \cdots * S_1$ as in \rf{defn:sequential} can be implemented by a transducer with the following parameters:
\begin{equation}
\label{eqn:sequentialComplexity}
W(S, O, \xi) = \sum_{t=2}^m \|\psi_t\|^2 + \sum_{t=1}^m W(S_t, O, \psi_t)
,\qquad
q(S, O, \xi) = \bigoplus_{t=1}^m q(S_t, O, \psi_t)
\end{equation}
and
\begin{equation}
\label{eqn:sequentialParallelTime}
T(S) = \OO(\log m) + T \sC[\bigoplus_t \Sw_t].
\end{equation}
\end{prp}

In the QRAG model, we usually replace $\OO(\log m)$ in~\rf{eqn:sequentialParallelTime} by $\OO(\timeR)$.

\pfstart
The main new idea compared to \rf{prp:sequentialsequential} is that we do not compute the intermediate $\psi_t$ from~\rf{eqn:sequenceOfTransductions}, but store the sequence $\psi_2,\dots,\psi_m$ in the catalyst.  
We apply all of $\Sw_t$ in parallel, thus moving this sequence forward by one position,
see \rf{fig:sequentialParallel}.

\myfigure{\label{fig:sequentialParallel}}
{
A graphical illustration to the construction of \rf{prp:sequentialParallel} with $m=4$.
For convenience of representation, we draw the catalyst states by vertical lines, not horizontal ones like in \rf{fig:canonical}.
We also write $O\vq_t$ instead of $(I\otimes O)\vq_t$ to save space.
\\
We first apply the query $I_m\otimes I\otimes O$, which implements the queries in all $\Sw_t$.
After that, we apply all of $\Sw_t$ in parallel.
The initial state $\xi$ becomes the input for $\Sw_1$.
The input of $\Sw_t$ for $t>1$ is taken from the catalyst, and the output becomes the catalyst for $t+1$, except for $t=m$, which yields the terminal state $\tau$.
}
{
\newcommand{\OneIteration}[1]{
    \edef\indxx{#1}
    \begin{scope}[shift={(3.3*\indxx,0)}]
        \draw (0.5,3.5) rectangle (2.5, 4.5) node[pos=0.5] {\Large $\Sw_{\indxx}$};
        \draw[\witnesscolor] (1, 5.3) node[above] {$\vw_{\indxx}$} to (1, 4.5);
        \draw[\witnesscolor] (1, 3.5) to (1, 2.7) node[below] {$\vw_{\indxx}$};
        \draw[\nonquerycolor] (2, 7.8) node[above]{$\vq_{\indxx}$} to (2, 7);
        \draw[\querycolor] (2, 6) to node[right]{$O\vq_{\indxx}$} (2, 4.5);
        \draw[\nonquerycolor] (2, 3.5) to (2, 2.7) node[below]{$\vq_{\indxx}$};
    \end{scope}
}
\newcommand{\Golova}[1]{
    \edef\indxx{#1}
    \begin{scope}[shift={(3.3*\indxx,0)}]
        \draw[\xicolor] (-0.1,5.3) node[above] {$\psi_{\indxx}$}  .. controls (-0.1,4) .. (0.5,4);        
    \end{scope}
}
\newcommand{\Hvost}[1]{
    \edef\indxx{#1}
    \begin{scope}[shift={(3.3*\indxx-3.3,0)}]
        \draw[\taucolor,<-] (3.2,2.8) node[below] {$\psi_{\indxx}$} .. controls (3.2,4) .. (2.5,4);        
    \end{scope}
}
\negbigskip
\negbigskip
\[
\begin{tikzpicture}[every node/.style={font=\scriptsize}, every path/.append style={thick,->}]
\OneIteration{1}
\Hvost{2}
\Golova{2}
\OneIteration{2}
\Hvost{3}
\Golova{3}
\OneIteration{3}
\Hvost{4}
\Golova{4}
\OneIteration{4}
\draw (4,6) rectangle (15.7, 7) node[pos=0.5] {\Large $I_m\otimes I\otimes O$};
\draw[\xicolor] (2.5,4) node[left] {\normalsize $\xi$} to node[above]{$\psi_{1}$} (3.8,4);
\draw[\taucolor] (15.7,4) to node[above]{$\psi_{5}$} (16.8,4) node[right] {\normalsize $\tau$};
\end{tikzpicture}
\]
\negbigskip
}

Again, we assume all $S$ have space $\cH\oplus\cL$, have privacy qubit $\cP$, and query register $\cR$.
The transducer $S$ uses the same query register $\cR$, but we introduce a new privacy qubit $\cP'$.
We also use an $m$-qudit $\cK$, so that the space of $S$ is of the form $\cK\otimes \cP'\otimes (\cH\oplus \cL)$.

Let $\xi$ and $\psi_t$ be as in~\rf{eqn:sequenceOfTransductions}.
For each $t$, let $v_t$ be the catalyst of transduction $\psi_t \transduce{S_t(O)} \psi_{t+1}$.
We start with the following initial coupling
\[
\xi \oplus v = 
\ket P'|0>\ket H|\xi> + \ket P'|1> \skC[\sum_{t=2}^m \ket K |t>\ket P|0>\ket H|\psi_{t}> + \sum_{t=1}^m \ket K|t>\ket P|1>\ket L|v_t>].
\]
This already gives~\rf{eqn:sequentialComplexity}.
As required by the canonicity assumption, we first apply the input oracle.
This gives
\[
\ket P'|0>\ket H|\xi> + \ket P'|1> \skC[\sum_{t=2}^m \ket K |t>\ket P|0>\ket H|\psi_{t}> + \sum_{t=1}^m \ket K|t>\ket P|1>\ket L|\tO v_t>],
\]
where $\tO$ is defined in~\rf{eqn:Ov}.
Here we use that $\cH$ has value $0$ in the register $\cR$ and is not affected by the input oracle.
Next, we exchange $\ket P'|0>\otimes \cH$ with $\ket P'|1> \ket K|1> \ket P|0> \otimes \cH$.
Since $\xi=\psi_1$, this gives
\[
\ket P'|1> \skC[\sum_{t=1}^m \ket K |t>\ket P|0>\ket H|\psi_{t}> + \sum_{t=1}^m \ket K|t>\ket P|1>\ket L|\tO v_t>].
\]
Now we apply $\Sw_{t}$ to the last two registers, controlled by the value in the register $\cK$.
In other words, we apply $\bigoplus_t \Sw_{t}$.
By~\rf{eqn:canonicalSequence}, this gives
\begin{equation}
\label{eqn:sequential1}
\ket P'|1> \skC[\sum_{t=1}^m \ket K |t>\ket P|0>\ket H|\psi_{t+1}> + \sum_{t=1}^m \ket K|t>\ket P|1>\ket L|v_t>].
\end{equation}
Now we increment the value in the register $\cK$ conditioned on $\ket P|0>$, and exchange $\ket P'|0>\otimes \cH$ with $\ket P'|1> \ket K|m+1> \ket P|0> \otimes \cH$.
Since $\tau=\psi_{m+1}$, we get
\begin{equation}
\label{eqn:sequential2}
\ket P'|0>\ket H|\tau> + \ket P'|1> \skC[\sum_{t=2}^m \ket K |t>\ket P|0>\ket H|\psi_{t}> + \sum_{t=1}^m \ket K|t>\ket P|1>\ket L|v_t>] = \tau \oplus v.
\end{equation}
The time complexity estimate is obvious.
\pfend

\subsection{Functional Composition}
\label{sec:functional}

We defined functional composition in \rf{sec:conceptualTypes} as an algorithm where a subroutine is used to implement an oracle call.
In the case of transducers, this falls under the transitivity of transduction,  \rf{prp:transitivityOfTransduction}, as part of the transducer (the oracle call) is implemented as a transduction action of another transducer.
In this section, we give an explicit construction for canonical transducers.

We assume the settings similar to \rf{fig:composition},
except we use transducers $S_A$ and $S_B$ instead of programs $A$ and $B$.
The outer transducer $S_A$ has two input oracles $O\oplus O'$ joined as in \rf{sec:multipleInputOracles}.
The first one, $O$, is the global input oracle, and the second one, $O'$, is the oracle realised as the transduction action of the inner transducer $S_B$.
We assume both transducers are in the canonical form and aligned in the oracle $O$.
We denote their public spaces by $\cH_A$ and $\cH_B$, respectively.
If there are several subroutines implemented by different transducers, then, as in \rf{fig:compositionMultiple}, they can be joined via parallel composition of \rf{prp:parallel}.

\begin{prp}[Functional Composition]
\label{prp:functional}
Under the above assumptions, there exists a canonical transducer $S_A\circ S_B$ with the public space $\cH_A$ and the oracle $O$, which satisfies the following properties.
\begin{itemize}\itemsep=0pt
\item Its transduction action on input oracle $O$ is equal to the transduction action of $S_A$ on oracle $O \oplus O'$, where $O' = S_B(O)\DownTransduce_{\cH_B}$.
\item Its transduction complexity satisfies
\begin{equation}
\label{eqn:functionalTransductionComplexity}
W(S_A\circ S_B, O, \xi) = W(S_A, O\oplus O', \xi) + W\sB[ S_B, O, q^{(1)}(S_A, O\oplus O', \xi)].
\end{equation}
\item Its total query state is
\begin{equation}
\label{eqn:functionalQueryComplexity}
q(S_A\circ S_B, O, \xi) = q^{(0)}(S_A, O\oplus O', \xi) \oplus q\sB[ S_B, O, q^{(1)}(S_A, O\oplus O', \xi)].
\end{equation}
\item Its time complexity is $\TC(S_A)+\TC(S_B)$.
\end{itemize}
Here $q^{(0)}$ and $q^{(1)}$ denote the partial query states of $S_A$ to the oracles $O$ and $O'$, respectively.
We also used the extended versions of $W$ and $q$ from~\rf{eqn:transductionExtended} for the transducer $S_B$.
\end{prp}

Note that both Equations~\rf{eqn:functionalTransductionComplexity} and~\rf{eqn:functionalQueryComplexity} are reminiscent of the ``gold standard''~\rf{eqn:randomComposition2}.

\pfstart
Let $\cM$ be the space of the input oracle $O$.
The space of the transducer $S_A$ is of the form 
$\cH_A\oplus \cLw_A \oplus \cLq_A \oplus \cL'_A$, where the last two spaces are for the queries to $O$ and $O'$, respectively.
In particular, $\cL'_A = \cLt_A \otimes \cH_B$ by the assumption that $S_B(O)$ implements $O'$.

Let $v = \vw \oplus \vq \oplus v'$ be the catalyst for the transduction $\xi\transduce{}\tau$ by $S_A(O\oplus O')$.
That is, $\vq = q^{(0)}(S_A, O\oplus O', \xi)$ and $v' = q^{(1)}(S_A, O\oplus O', \xi)$.
By~\rf{eqn:canonicalSequence}, the transducer $S_A$ imposes the following chain of transformations in $\cH_A\oplus \cLw_A \oplus \cLq_A \oplus \cL'_A$:
\begin{equation}
\label{eqn:functional_1}
\xi \oplus \vw \oplus  \vq \oplus v'
\maps{\widetilde{O}}
\xi \oplus \vw \oplus (I\otimes O)\vq \oplus v'
\maps{\widetilde{O'}}
\xi \oplus \vw \oplus (I\otimes O)\vq \oplus (I\otimes O') v'
\maps{\Sw_A}
\tau \oplus \vw \oplus  \vq \oplus v',
\end{equation}
where we separated the applications of $O$ and $O'$.
Recall that the identity $I$ acts on $\cLt_A$.

\mycutecommand{\ww}{w^\circ}
\mycutecommand{\wq}{w^\bullet}

The space of $S_B$ is $\cH_B \oplus \cLw_B \oplus \cLq_B$, where $\cLq_B = \cLt_B\otimes \cM$.
We obtain the transducer $I\otimes S_B$ as in \rf{defn:byIdentity}, whose space is $\cLt_A\otimes (\cH_B\oplus \cLw_B\oplus \cLq_B)$.

Let $w= \ww \oplus \wq$ be the catalyst for the transduction $v'\transduce{} (I\otimes O') v'$ by $I\otimes S_B(O)$.  In particular, $\wq =  q\sA[ S_B, O, q^{(1)}(S_A, O\oplus O', \xi)]$.
By~\rf{eqn:canonicalSequence} again, we have the chain of transformations in $\cLt_A\otimes (\cH_B\oplus \cLw_B\oplus \cLq_B)$:
\begin{equation}
\label{eqn:functional_2}
v' \oplus \ww \oplus  \wq
\maps{\tO}
v' \oplus \ww \oplus (I\otimes I_B\otimes O)\wq
\maps{I\otimes \Sw_B}
(I\otimes O')v' \oplus \ww \oplus \wq,
\end{equation}
where $I_B$ is the identity on $\cLt_B$.

\myfigure{\label{fig:functional}}
{
A graphical representation of a function composition of transducers $S_A$ and $S_B$.
We omit the tensor multipliers of $O$ and $O'$ on the arrows to save space.
The vector $v'$ is simultaneously a non-queried catalyst for $S_A\circ S_B$ and the input to $I\otimes S_B$.
}
{
\negbigskip
\[
\begin{tikzpicture}[every node/.style={font=\scriptsize}, every path/.append style={thick,->}]
    \draw[\xicolor] (0,0) node[above] {$\xi$} to (11,0) ;
    \draw[\witnesscolor] (0,-1) node[above] {$\vw$} to (11,-1) ;
    \draw[\nonquerycolor,out=0,in=180] (0,-2) node[above] {$\vw$} to (2,-4) ;
    \draw[-,purple,out=0,in=180] (0,-3) node[above] {$v'$} to (2,-2) ;
    \draw[-,\witnesscolor,out=0,in=180] (0,-4) node[above] {$\ww$} to (2,-3) ;
    \draw[\nonquerycolor] (0,-5) node[above] {$\wq$} to (2,-5) ;
    \draw[-,purple] (2,-2) to (5,-2);
    \draw[-,\witnesscolor] (2,-3) to (5,-3);
    \draw (2, -3.5) rectangle (5,-5.5) node[pos=0.5] {\normalsize$(I\oplus I\otimes I_B)\otimes O$};
    \draw[purple,out=0,in=180] (5,-2) to (7,-3) ;
    \draw[\witnesscolor,out=0,in=180] (5,-3) to (7,-4) ;
    \draw[\querycolor] (5,-5) to node[above] {$O\wq$} (7,-5) ;
    \draw[-,\querycolor, out=0, in=180] (5,-4) to (7,-2) ;
    \draw (7, -2.5) rectangle (9,-5.5) node[pos=0.5] {\normalsize$I\otimes \Sw_B$};
    \draw[\querycolor] (7,-2) to node[above,pos=0.4] {$O\vq$} (11,-2) ;
    \draw[\querycolor] (9,-3) to node[above] {$O'v'$} (11,-3);
    \draw[\witnesscolor] (9, -4) to (13,-4) node[above]{$\ww$};
    \draw[\nonquerycolor] (9, -5) to (13,-5) node[above]{$\wq$};
    \draw (11,0.5) rectangle (12, -3.5) node[pos=0.5] {\Large $\Sw_A$};
    \draw[\xicolor] (12,0) to (13,0) node[above] {$\xi$} ;
    \draw[\witnesscolor] (12,-1) to (13,-1) node[above] {$\vw$} ;
    \draw[\nonquerycolor] (12,-2) to (13,-2) node[above] {$\vq$} ;
    \draw[purple] (12,-3) to (13,-3) node[above] {$v'$} ;
\end{tikzpicture}
\]
\negbigskip
}

The transducer $S_A\circ S_B$ works as follows; see \rf{fig:functional}.  It starts in
\begin{equation}
\label{eqn:functional_witness}
\xi \oplus \vw\oplus \vq \oplus v' \oplus \ww \oplus \wq.
\end{equation}
It applies the input oracle $O$ to the third and the last terms, which corresponds to the first steps in both~\rf{eqn:functional_1} and~\rf{eqn:functional_2}.
This gives
\begin{equation}
\label{eqn:functional_A}
\xi \oplus \vw\oplus (I\otimes O)\vq \oplus v' \oplus \ww \oplus (I\otimes I_B\otimes O)\wq.
\end{equation}
Then it performs $I\otimes \Sw_B$, which is the second operation in~\rf{eqn:functional_2}, and which gives
\begin{equation}
\label{eqn:functional_B}
\xi \oplus \vw\oplus (I\otimes O)\vq \oplus (I\otimes O') v' \oplus \ww \oplus \wq.
\end{equation}
After that, $\Sw_A$ is performed, which is the last operation of~\rf{eqn:functional_1}, and we get the final state
\begin{equation}
\label{eqn:functional_C}
\tau \oplus \vw\oplus \vq \oplus v' \oplus \ww \oplus \wq.
\end{equation}

The transduction~\rf{eqn:functionalTransductionComplexity} and the query~\rf{eqn:functionalQueryComplexity} complexities follow from~\rf{eqn:functional_witness}.
In order to get the time complexity, we have to elaborate on the placement of all these transducers in registers.
Since $S_A$ and $S_B$ are aligned in $O$, we assume they use the same value $\ket R|1>$ of the same register to denote its execution.
Concerning $O'$, which is an oracle only for $S_A$, we assume it is indicated by another qubit $\cR'$.
Let $\cP_A$ and $\cP_B$ be the privacy qubits of $S_A$ and $S_B$, respectively; we assume they are different.

Let us write down the values of these indicator qubits for the various subspaces of the composed transducer:
\[
\begin{array}{l|cccccc|}
        & \cH_A	& \cLw_A & \cLq_A & \cL'_A \text{ and } \cH_B 	& \cLw_B & \cLq_B \\\hline
\cP_A 	& 0		& 1		 & 1	  & 1			   	& 1		& 1\\
\cR 	& 0		& 0		 & 1	  & 0			   	& 0		& 1\\
\cR'	& 0		& 0		 & 0	  & 1				& 1		& 1\\
\cP_B	& 0		& 0		 & 0      & 0				& 1		& 1 \\\hline
\end{array}
\]
Other than that, we assume we can embed all these subspaces into registers also preserving $\cL'_A = \cLt\otimes \cH_B$.
The privacy qubit of $S_A\circ S_B$ is $\cP_A$ and its query register is $\cR$.

Let us go through the steps of $S_A\circ S_B$.
The application of the query to $O$ to get to~\rf{eqn:functional_A} is conditioned on $\ket R|1>$ as required, and it is possible because we assume $S_A$ and $S_B$ are aligned in $O$.
The application of $I\otimes \Sw_B$ to get to~\rf{eqn:functional_B} is conditioned on $\ket R'|1>$.
Finally, the application of $\Sw_A$ to get to the final state~\rf{eqn:functional_C} is conditioned on $\ket P_B|0>$.
This gives the required time complexity estimate.
\pfend

\section{Transducers from Programs}
\label{sec:programs->transducers}

In this section, we describe how to convert a quantum program $A$ like in~\rf{eqn:program} into a canonical transducer $S_A$, and give the corresponding corollaries.
We consider both the circuit and the QRAG models.
The general idea is similar in both cases.  
First, we obtain transducers for individual gates, and then compose them using either \rf{prp:sequentialsequential} for the circuit model, or \rf{prp:sequentialParallel} for the QRAG model.

\subsection{General Assumptions}
\label{sec:assumptions}
We are given a program $A$, and our goal is to construct a canonical transducer $S_A$ whose transduction action is identical to the action of $A$.

One thing to observe is that we have to modify our assumptions on the execution of input oracles by introducing an additional register related to $\cR$.
There are two main reasons for that.
First, the initial state $\xi$ of $S_A$ can be any state in $\cH$.
In particular, it can use $\cHq$ which is in contradiction with the assumption of \rf{sec:canonicalDefinition} that $\xi$ must be contained in $\cHw$.
Second, as described in \rf{sec:introAlgorithm->Transducer}, the catalyst in the QRAG case is the history state~\rf{eqn:introHistoryState}.
Various $\psi_t$ can use $\cHq$, and we would like to protect them from the application of the input oracle in $S_A$.

Because of that, we introduce an additional qubit $\ctR$.
The input oracle in the canonical transducer $S_A$ is applied conditioned by $\ket \ctR|1>$.
More precisely, the $i$-th input oracle is controlled by $\ket \ctR|1> \ket R|i>$.

Additionally, we assume that all the queries made by the program $A$ are aligned, see \rf{defn:alignment}, and, in case of several programs, they are aligned by their shared input oracles.
This is necessary since the composition results of \rf{sec:properties} require the transducers to be aligned.

\subsection{Building Blocks}
\label{sec:buildingBlocks}

Here we describe transducers corresponding to elementary operations.
The following proposition is trivial.

\begin{prp}[Trivial Transducer]
\label{prp:gates}
Let $A$ be a quantum algorithm (without oracle calls) that implements some unitary in $\cH$.
It can be considered as a transducer with $\cL$ being empty, $T(A)$ being the time complexity of $A$, and both $W(A,\xi)$ and $q(A,\xi)$ equal to zero.
\end{prp}

Oracle execution is more tricky because of the canonicity assumption.
Recall the query $\tO  = \Iw \oplus I\otimes O$ from~\rf{eqn:query} taking place in $\cH = \cHw\oplus\cHq$ with $\cHq = \cHt\otimes\cM$.
Let us divide $\xi = \xi^\circ \oplus \xi^\bullet$ accordingly.

\begin{prp}[Oracle]
\label{prp:oracle}
For a fixed embedding $\cH = \cHw\oplus\cHt\otimes\cM$ as above, there exists a canonical transducer $S_{\mathrm Q}$ such that $\xi\transduce{S_{\mathrm Q}(O)}\tO\xi$ for all $\xi\in\cH$ and unitaries $O\colon \cM\to\cM$.
The transducer satisfies $W(S_{\mathrm Q}, O,\xi) = \|\xi^\bullet\|^2$, $q(S_{\mathrm Q}, O, \xi) = \xi^\bullet$, and $T(S_{\mathrm Q})=O(1)$.
\end{prp}

\pfstart
The private space of $S_{\mathrm Q}$ will be $\cL = \cLq = \cLt\otimes\cM$ with $\cLt$ equal to $\cHt$.
The catalyst is $v = \vq = \xi^\bullet$.
The transducer $S_{\mathrm Q}$ first applies the input oracle to $\cLq$, which is achieved by conditioning on $\ket\ctR|1>$:
\begin{multline*}
\ket P |0> \ket\ctR|0> \ket H |\xi> + \ket P|1> \ket\ctR|1>\ket L |\xi^\bullet>
\maps{\tO}
\ket P |0> \ket\ctR|0> \ket H |\xi> + \ket P|1> \ket\ctR|1> \ketA L |(I\otimes O)\xi^\bullet>\\
=
\ket P|0> \ket\ctR|0> \ket R|0> \ket |\xi^\circ> + \ket P|0> \ket\ctR|0> \ket R|1> \ket |\xi^\bullet>
+ \ket P|1> \ket\ctR|1> \ket R|1> \ket |(I\otimes O)\xi^\bullet>.
\end{multline*}
Now apply C-NOT to both $\cP$ and $\ctR$ controlled by $\ket R|1>$.
This gives
\[
\ket P|0> \ket\ctR|0> \ket R|0> \ket |\xi^\circ> + \ket P|0> \ket\ctR|0> \ket R|1> \ket |(I\otimes O)\xi^\bullet>
+ \ket P|1> \ket\ctR|1> \ket R|1> \ket |\xi^\bullet>
=
\ket P |0> \ket\ctR|0> \ketA H |\tO\xi> + \ket P|1> \ket\ctR|1>\ketA L |\xi^\bullet>.
\qedhere
\]
\pfend

\subsection{Circuit Model}
\label{sec:alg2TransducersCircuit}

For the circuit model, we have the following result, which is a first half of \rf{thm:introProg->Transducer}.

\begin{thm}[Program to Transducer, Circuit Model]
\label{thm:program->transducerCircuit}
Let $A=A(O)$ be a quantum program in some space $\cH$ assuming the circuit model.
We assume all the queries in $A$ are aligned, see \rf{defn:alignment}.
Then, there exists a canonical transducer $S_A(O)$ with the following properties.
\begin{itemize}\itemsep=0pt
\item $S_A(O)\DownTransduce_\cH = A(O)$ for all input oracles $O$.
\item For any input oracle $O$ and the initial state $\xi$, we have $q(S_A, O, \xi) = q(A, O, \xi)$.  In particular, $L(S_A, O, \xi) = L(A, O, \xi)$.
\item The catalyst $v(S_A, O,\xi)$ is equal to the query state $q(S_A, O, \xi)$.  In particular, $W(S_A, O, \xi) = L(A, O, \xi)$.
\item Finally, the transducer $S_A$ can be implemented in the circuit model, and $T(S_A) = \TC(A) + \OO(Q(A)) = \OO(T(A))$, where the complexity measures of $A$ are as in \rf{sec:prelimCircuit}.
\end{itemize}
\end{thm}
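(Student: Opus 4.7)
\medskip

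\noindent\textbf{Proof proposal.} The plan is to decompose the program into its gates according to~\eqref{eqn:preAlgorithm}, replace each gate by a transducer for that gate, and then glue the pieces together with the sequential composition from \rf{prp:sequentialsequential}. Write $A(O) = U_Q \tO U_{Q-1} \tO \cdots U_1 \tO U_0$, where each $U_t$ is an input-independent unitary on $\cH$ (itself a short quantum circuit) and $\tO$ is the aligned oracle query. For every $U_t$ I will take the trivial transducer from \rf{prp:gates}, which has empty private space, zero catalyst, zero query state, and conditional time complexity $\TC(U_t)$. For every query $\tO$ I will take the oracle transducer $S_{\mathrm Q}$ from \rf{prp:oracle}, whose catalyst is precisely the state it receives in $\cHq$, whose query state is the same, and whose time complexity is $\OO(1)$. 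The alignment hypothesis on the queries of $A$ is exactly what is needed so that the family of transducers is aligned in the sense of \rf{defn:alignment}, which is the precondition for sequential composition.

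Having assembled this sequence, I apply \rf{prp:sequentialsequential} to get a canonical transducer $S_A$ realising the chain of transductions in~\eqref{eqn:sequenceOfTransductions} along the intermediate states $\psi_t$ that $A$ itself traverses on initial state $\xi$. Transitivity of these one-step transductions yields $S_A(O) \DownTransduce_\cH = A(O)$. For the query state, the formula~\eqref{eqn:sequentialWitnessComplexity} gives $q(S_A, O, \xi) = \bigoplus_t q(S_t, O, \psi_t)$; the trivial transducers contribute nothing, while each oracle transducer contributes the part $\psi_t^\bullet \in \cHq$ of the state about to be queried, whose direct sum over $t$ is exactly $q(A, O, \xi)$ by~\eqref{eqn:totalQueryState}. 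Since the catalyst of each constituent equals its query state, the same holds for the composed transducer, which gives $W(S_A, O, \xi) = L(S_A, O, \xi) = L(A, O, \xi)$.

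For the time complexity, I will group consecutive non-query pieces so that the sequential composition has $m = 2Q+1$ constituents: $Q+1$ trivial transducers alternating with $Q$ copies of $S_{\mathrm Q}$. Then~\eqref{eqn:sequentialTimeComplexity} gives $T(S_A) = \OO(Q) + \sum_t \TC(S_t)$, where the trivial-transducer contributions telescope into $\TC(A) - \OO(Q)$ (the conditional overhead is paid only once per block, as discussed in \rf{sec:prelimCircuit}) and the $Q$ oracle transducers each contribute $\OO(1)$, giving the claimed $\TC(A) + \OO(Q(A)) = \OO(T(A))$.

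The main technical nuisance will be bookkeeping around the extra register $\ctR$ introduced in \rf{sec:assumptions}: the initial state $\xi$ of $S_A$ is allowed to live anywhere in $\cH$, including $\cHq$, but the canonical query in $S_A$ must not touch the public space. The construction of \rf{prp:oracle} handles exactly this by conditioning the actual oracle call on $\ket \ctR|1>$ and using a C-NOT swap to exchange $\xi^\bullet$ with the catalyst copy, so once I verify that each building block treats $\ctR$ and $\cP$ in the same way, the sequential composition goes through without interference. A small additional point to check is that the registers $\cP$, $\cR$, $\ctR$ of the composed transducer are consistent across blocks, which follows from using the common embedding fixed in \rf{sec:assumptions} together with the alignment hypothesis.
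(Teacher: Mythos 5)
Your proposal is correct and follows essentially the same route as the paper's own proof: decompose $A$ as in~\rf{eqn:preAlgorithm}, turn each $U_t$ into a trivial transducer via \rf{prp:gates} and each query into a copy of $S_{\mathrm Q}$ via \rf{prp:oracle}, then apply \rf{prp:sequentialsequential} with $m=2Q(A)+1$ and read off the query state, catalyst, and time bound from~\rf{eqn:sequentialWitnessComplexity} and~\rf{eqn:sequentialTimeComplexity}. The extra remarks on the $\ctR$ register and alignment are consistent with how the paper handles these points in \rf{sec:assumptions} and \rf{prp:oracle}.
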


\pfstart
Take the representation of the query algorithm as in~\rf{eqn:preAlgorithm}.
We interpret each unitary $U_t$ as a transducer using \rf{prp:gates}, and transform each query into an independent copy of the transducer $S_{\mathrm Q}$ from~\rf{prp:oracle}.

Now, we apply sequential composition of \rf{prp:sequentialsequential} with $m=2Q(A)+1$.
The total query state $q(A, O, \xi)$ is defined as the direct sum of all the queries given to the input oracle, hence, the third and the second points follow from~\rf{eqn:sequentialWitnessComplexity}.
The time complexity follows from~\rf{eqn:sequentialTimeComplexity} using that $T(A) = Q(A) + \sum_t T(U_t)$.
\pfend

While canonical transducers are nice from the theoretical point of view, designing one from scratch is usually inconvenient.
The following result shows that we can convert any transducer into a canonical form with a slight increase in complexity.

Let $S = S(O)$ be a transducer in $\cH\oplus \cL$ not in the canonical form.
Recall from \rf{sec:transducerDefinition} that we defined its total query state $q_\cH(S,O,\xi)$, and Las Vegas query complexity $L_\cH(S, O, \xi)$ on $\xi\in \cH$ as that of $S$, considered as a usual quantum algorithm in $\cH\oplus \cL$, on the initial state $\xi\oplus v(S(O), \xi)$, where $v\sA[S(O),\xi]$ is as in \rf{thm:transduce}.
We assume all the queries made in $S$ are aligned.

\begin{prp}[Transforming Transducers into Canonical Form]
\label{prp:canoning}
For a non-canonical transducer $S = S(O)$ as above, there exists a canonical transducer $S'= S'(O)$ with the same transduction action and such that, for all $O$ and $\xi$,
\begin{equation}
\label{eqn:canoningComplexity}
q(S', O, \xi) = q_\cH(S, O, \xi),
\qquad
W(S', O, \xi) = W(S(O), \xi) + L_\cH(S, O, \xi),
\end{equation}
and $T(S') = \OO(T(S))$.
\end{prp}

\pfstart
Let $S_S = S_S(O)$ be a canonical transducer as obtained in \rf{thm:program->transducerCircuit} from $S=S(O)$ considered as a quantum program.
Its public space is $\cH\oplus \cL$, and $S_S(O)\DownTransduce_{\cH\oplus \cL} = S(O)$.

Let $\xi' = \xi \oplus v\sA[S(O),\xi]$.  By \rf{thm:program->transducerCircuit}, we have $T(S_S) = \OO\sA[T(S)]$, and 
\begin{equation}
\label{eqn:canoning1}
v_{\cH\oplus \cL}(S_S, O, \xi') 
= q_{\cH\oplus \cL}(S_S, O, \xi') 
= q_{\cH\oplus \cL}(S, O, \xi')
= q_\cH(S, O, \xi),
\end{equation}
where, in the third expression, we consider $S$ as a usual quantum program in $\cH\oplus \cL$.

We obtain $S' = S'(O)$ as the transduction action of $S_S(O)$ on $\cH$, using the transitivity of transduction, \rf{prp:transitivityOfTransduction}.
First, $S_S(O)$ is in canonical form when considered as a transducer with the public space $\cH\oplus \cL$.
A fortiori, it is canonical also on the public space $\cH$.
Next, its time complexity does not change, hence, $T(S') = T(S_S) = \OO\sA[T(S)]$.
For the query state, from~\rf{eqn:transitivityWitness} and~\rf{eqn:canoning1}, we get
\[
q_\cH(S', O, \xi) = q_{\cH\oplus\cL} (S_S, O, \xi') = q_\cH(S, O, \xi).
\]
Finally, for the transduction complexity, we can utilise~\rf{eqn:transitivityWitness} in the following way:
\begin{align*}
W(S', O, \xi) = W_\cH\sA[S_S(O), \xi] &= W_\cH\sA[S_S(O)\DownTransduce_{\cH\oplus\cL}, \xi] + W_{\cH\oplus\cL} \sA[ S_S(O), \xi'] \\&= W_\cH(S(O), \xi) + L_\cH(S, O, \xi),
\end{align*}
where we used~\rf{eqn:canoning1} in the last equality.
\pfend

The remaining corollaries of \rf{thm:program->transducerCircuit} were already proven in \rf{sec:overview}: Theorems~\ref{thm:introQueryCompression} and~\ref{thm:introCompositionCircuit}.

Let us note that \rf{thm:program->transducerCircuit} might not always be the best way to get a transducer $S_A$ from a quantum program $A$ in the circuit model.
One can use additional structure of $A$ to get better transducers.
For instance, if $A$ repeatedly uses the same sequence of gates, one can define it as a new oracle, thus reducing the time complexity of the transducer, see, e.g., \rf{prp:iteratedFunctionsImproved}.
If $A$ contains a large loop, one can use \rf{prp:sequentialParallel}.
Obtaining efficient transducers in the circuit model for specific cases seems like an interesting research direction.

\subsection{QRAG Model}
\label{sec:alg2TransducersQRAG}
Assuming the QRAG model, we get the remaining half of \rf{thm:introProg->Transducer}.
\begin{thm}
\label{thm:program->transducerQRAG}
Let $A=A(O)$ be a quantum program in some space $\cH$, and assume we have QRAM access to the description of $A$.
Then, there exists a canonical transducer $S_A(O)$ with the following properties.
\begin{itemize}\itemsep=0pt
\item $S_A(O)\DownTransduce_\cH = A(O)$ for all input oracles $O$.
\item For any input oracle $O$ and initial state $\xi$, we have $q(S_A, O, \xi) = q(A, O, \xi)$.  In particular, $L(S_A, O, \xi) = L(A, O, \xi)$.
\item Also, $W(S_A, O, \xi) \le T(A) \|\xi\|^2$.
\item Finally, in the QRAG model, we have $T(S_A) = \OO(\timeR)$ as defined in~\rf{eqn:timeR}.
\end{itemize}
\end{thm}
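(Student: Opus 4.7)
The plan is to instantiate the history-state construction sketched in~\rf{eqn:introHistoryState}--\rf{eqn:introProgram->Transducer}, replacing what would be a loop of $T(A)$ steps by a single Select operation, which is where QRAM access to the description of $A$ does its work (\rf{thm:select}). Write $A(O) = G_T\cdots G_1$, set $\psi_t = G_t\cdots G_1\xi$, introduce a $T$-qudit register $\cK$, place the public space $\cH$ at $\ket K|0>$ and the private space $\cL$ at $\ket K|1>,\dots,\ket K|T-1>$, and aim for the initial coupling
\[
\xi\oplus v = \sum_{t=0}^{T-1}\ket K|t>\ket|\psi_t>,\qquad v=\sum_{t=1}^{T-1}\ket K|t>\ket|\psi_t>.
\]
Since every $G_t$ is unitary, $\|v\|^2 = (T-1)\|\xi\|^2\le T(A)\|\xi\|^2$, which yields the third bullet.

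To make this canonical, split $\cL = \cLw\oplus\cLq$ and route each catalyst slot accordingly: for every $t$ with $G_{t+1}=\tO$, send $\psi_t^\bullet$ into $\cLq=\cLt\otimes\cM$ (with the auxiliary control qubit $\ctR$ from \rf{sec:assumptions} set to $1$) and $\psi_t^\circ$ into $\cLw$; for all other $t$, store $\psi_t$ entirely in $\cLw$. Because all queries of $A$ are aligned (\rf{rem:aligned}), a single application of the canonical query~\rf{eqn:canonicalQuery} simultaneously performs $\psi_t^\bullet\mapsto(I\otimes O)\psi_t^\bullet$ across every query slot, and leaves the public-space slot $\xi$ (at $\ket K|0>$, with $\ctR=0$) alone; if $G_1=\tO$ I absorb this first query inside $\Sw$ using the $\ctR$-swap pattern from the proof of~\rf{prp:oracle}.

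The work unitary $\Sw$ then realises the cyclic shift $\sum_{t=0}^{T-1}\ket K|t>\ket|\psi_t>\mapsto\sum_{t=1}^{T}\ket K|t>\ket|\psi_t>$ (identifying $\ket K|T>$ with $\ket K|0>$) in two QRAG-level operations. The first is a single Select (\rf{thm:select}) that, conditioned on $\ket K|t>$, applies the gate at position $t+1$ in the QRAM description of $A$; on query slots the ``gate'' it performs is the stitching $\psi_t^\circ\oplus(I\otimes O)\psi_t^\bullet\mapsto\psi_{t+1}$ via the same $\ctR$-swap as in \rf{prp:oracle}. The second is the cyclic increment $\ket K|t>\mapsto\ket K|t+1\bmod T>$, together with the accompanying flips on $\cP$ and $\ctR$ that route $\ket K|0>$ outward into $\cL$ and $\ket K|T-1>$ back into $\cH$. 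Both operations cost $\OO(\timeR)$, so $T(S_A)=\OO(\timeR)$. By construction the query part of the catalyst equals $\bigoplus_{t:\,G_{t+1}=\tO}\ket K|t>\ket|\psi_t^\bullet>$, which after reindexing on the query counter matches $q(A,O,\xi)$ from~\rf{eqn:totalQueryState}; this gives the second bullet, and the first bullet then follows from the cyclic-shift identity by direct inspection.

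The main obstacle is the careful wiring of $\cK$, $\cP$, $\ctR$, and $\cR$: canonicity demands that the oracle fire exactly once and exclusively on $\cLq$, yet the history state must be initialised in $\cH$ at $\ket K|0>$, pass through $\cL$ at the intermediate slots, and return to $\cH$ once the cyclic counter wraps around, while query and non-query slots of the catalyst are advanced in lockstep by the same Select call. This is essentially the QRAG counterpart of the bookkeeping in~\rf{prp:sequentialsequential} combined with the $\ctR$-swap trick of~\rf{prp:oracle}, with the $T$-iteration loop collapsed into a single Select thanks to~\rf{thm:select}.
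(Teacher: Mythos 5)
Your construction is correct and essentially identical to the paper's: the same history-state catalyst $\sum_t \ket|t>\ket|\psi_t>$, the same $\ctR$-marking of the query parts of the catalyst so that a single canonical query serves all time steps at once, and the same Select-plus-cyclic-increment work unitary costing $\OO(\timeR)$ via \rf{thm:select}. The one loose phrase is "absorb this first query inside $\Sw$" — the work unitary must remain input-independent, so the actual fix is (as your pointer to \rf{prp:oracle} suggests) an extra catalyst copy of $\psi_0^\bullet$ in $\cLq$ that receives the query and is then swapped into place; the paper sidesteps this case by writing $A$ in the form $G_{m-1}\tO^{b_{m-1}}\cdots G_1\tO^{b_1}G_0$ with $b_0=0$, so that a (possibly identity) gate always precedes the first query and the public-space slot is never queried.
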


\pfstart
The proof parallels that of \rf{thm:program->transducerCircuit} using \rf{prp:sequentialParallel} instead of \rf{prp:sequentialsequential}, but since this special case might be of interest, we give an explicit construction here, slightly simplifying it along the way.

It is convenient to assume a different but essentially equivalent form of a quantum algorithm.
Namely, we assume that the algorithm $A$ is given as
\begin{equation}
\label{eqn:program->alternative}
A(O) = G_{m-1}\tO^{b_{m-1}}G_{m-2}\cdots \tO^{b_2}G_1\tO^{b_1}G_0,
\end{equation}
where each $G_i$ is a gate (not a query), $\tO$ is the query operator~\rf{eqn:query}, and each $b_t$ is a bit that indicates whether there is a query before the $t$-th gate ($b_0$ is always 0).
A program like in~\rf{eqn:program} can be transformed into~\rf{eqn:program->alternative} with $m\le T+1$ adding identity $G_t$ if necessary.
We assume we have QRAM access to an array specifying the gates $G_t$ as well as to the array of $b_t$.

Let the algorithm $A$ go through the following sequence of states on the initial state $\xi$ and the oracle $O$:
\begin{equation}
\label{eqn:program->sequenceOfStates}
\xi = \psi_0 \maps{G_0} \psi_1 \maps{G_1\tO^{b_1}} \psi_2 \maps{G_2\tO^{b_2}}\cdots \maps {G_{m-1}\tO^{b_{m-1}}} \psi_{m} = \tau.
\end{equation}
At high level, the action of $S_A$ is
\begin{equation}
\label{eqn:program->Transducer}
\xi \oplus v = 
\sum_{t=0}^{m-1} \ket |t> \ket |\psi_t> 
\maps{\tO^{b_t}}
\sum_{t=0}^{m-1} \ket |t> \ketA |\tO^{b_t}\psi_t>
\maps{{G_t}}
\sum_{t=0}^{m-1} \ket |t> \ket |\psi_{t+1}>
\longmapsto
\sum_{t=1}^{m} \ket |t> \ket |\psi_{t}>
 = \tau \oplus v,
\end{equation}
where, we first apply the input oracle conditioned on $b_t$, then $G_t$ conditioned on $t$ using~\rf{thm:select}, and then increment $t$ by 1 modulo $m$.

\mycutecommand{\psiw}{\psi^\circ}
\mycutecommand{\psiq}{\psi^\bullet}

Recall that we have a decomposition $\cH = \cHw \oplus \cHq$ of the space of $A$, which is indicated by the query register $\cR$.
As mentioned in~\rf{sec:assumptions}, the transducer $S_A$ uses a different query qubit $\ctR$.
We will use notation $\ctH = \ctR\otimes \cH$.
Let $D$ denote the C-NOT on $\ctR$ controlled by $\cR$.
Then, for $\psi = \psiw\oplus \psiq \in\cH$, we have
\[
\ket \ctH|\psi> = \ket \ctR|0> \ket R|0>\ket |\psiw> + \ket \ctR|0> \ket R|1>\ket |\psiq>
\qqand
\ket \ctH|D\psi> = \ket \ctR|0> \ket R|0>\ket |\psiw> + \ket \ctR|1> \ket R|1>\ket |\psiq>.
\]
Let $\cT$ be an $m$-qudit with operations modulo $m$, and $\cP$ be the privacy qubit of $S_A$.

Let us go through the steps of~\rf{eqn:program->Transducer}.
The initial coupling is given by
\[
\xi\oplus v = \ket P|0> \ket T|0>\ket \ctH|\psi_0> + \sum_{t=1}^{m-1} \ket P|1> \ket T|t> \ketA \ctH|D^{b_t}\psi_t>.
\]
First, as required by the canonical form and our assumptions in \rf{sec:assumptions}, we apply the input oracle $O$ conditioned on $\ket \ctR|1>$.
This acts as $\tO$ on $D\psi_t$, but does not change $\psi_t$.
Then, we apply the operation $D$ controlled on $b_t$ (which can be accessed using the QRAM).
After that, we apply C-NOT to $\reg P$ controlled by $\ket T|0>$.
This gives the second state in the sequence~\rf{eqn:program->Transducer}:
\[
\ket P|1> \sum_{t=0}^{m-1} \ket T|t> \ketA \ctH|\tO^{b_t}\psi_t>.
\]
Now, we apply the last two operations from~\rf{eqn:program->Transducer}: $G_t$ conditioned on $\ket T|t>$, and increment of $\cT$ by 1 modulo $m$.
This gives
\[
\ket P|1> \sum_{t=1}^{m} \ket T|t> \ket \ctH|\psi_t>.
\]
Now we apply the operation $D$ controlled on $b_t$ and apply C-NOT to the register $\reg P$ controlled by $\ket T|0> = \ket T|m>$.
This gives the final state
\[
\ket P|0> \ket T|0>\ket \ctH|\psi_{m}> + \sum_{t=1}^{m-1} \ket P|1> \ket T|t> \ketA \ctH|D^{b_t}\psi_t>
=
\tau\oplus v.
\]

Since $\|\psi_t\| = \|\xi\|$, we have that
$
W(S_A, O, \xi) = (m-1) \|\xi\|^2 \le T(A) \|\xi\|^2.
$
It is clear that $q(S_A, O, \xi) = q(A, O, \xi)$, and the time complexity of $S_A$ is $\OO(\timeR)$ thanks in particular to \rf{thm:select}.
\pfend

In applications like in \rf{sec:introComposition}, we often have to apply this construction in parallel.
The following easy modification of the proof takes care of that.

\begin{prp}
\label{prp:program->transducerParallel}
Let $A_1,\dots,A_n$ be quantum programs in some space $\cH$, all using the same oracle $O$.
Assume we have QRAM access to their joint description.
Then, the direct sum $\bigoplus_{i=1}^n S_{A_i}$ of transducers defined in \rf{thm:program->transducerQRAG} can be implemented in time $\OO(\timeR)$.
\end{prp}

By the joint access, we mean that we have access to the list $m_1,\dots,m_n$ of the parameters $m$ in~\rf{eqn:program->alternative}, as well as access to $b_t$ and $G_t$ of $A_i$ as a double array with indices $i$ and $t$.

\pfstart[Proof of \rf{prp:program->transducerParallel}]
We execute the transducers $S_{A_i}$ in parallel using a register $\cJ$ to store the value of $i$.
Accessing $b_t$ and $G_t$ now is double-indexed by $i$ and $t$, and by our general assumption of~\rf{eqn:timeR} it takes time $\OO(\timeR)$ to access them.
For incrementation of $\cT$ modulo $m$, we use the array containing $m_i$.
Other than that, it is a standard word-sided operation and takes time $\OO(\timeR)$.
\pfend

The consequences of this result were already considered in \rf{sec:overview}: Theorems~\ref{thm:introQueryCompression}, \ref{thm:introCompositionQRAG}, and~\ref{thm:introCompositionTree}.

\section{Example III: Iterated Functions}
\label{sec:iterated}

In this section, we give a more detailed proof of \rf{thm:introIterated}.
We will use notation of \rf{sec:function} for the transducer $S_f$ built from the adversary bound.
However, we assume a more general variant of a canonical transducer $S_f$ for evaluation of $f\colon [q]^n\to[q]$.
Its public space is $\bC^q$, its admissible space is spanned by $\ket |0>$, and $\ket |0> \transduce{S_f(\bi{O_x})} \ket |f(x)>$ for all $x\in [q]^n$.
We assume the initial coupling of $S_f$ on $\bi{O_x}$ is given by
\begin{equation}
\label{eqn:composed_vx}
\ket|0>\oplus v_x
=
\ket P|0>\ket R|0> \ket |0> + 
\ket P|1>\ket R|0> \ket |\vw_x> +
\ket P|1> \sum_{i\in [n]} \ket R|i> \skB[ \ket B|\uparrow>\ket Q|0> \ketA W|v^\uparrow_{x,i}> + \ket B|\downarrow>\ket Q|x_i>\ketA W|v^\downarrow_{x,i}> ]
\end{equation}
for some vectors $\vw_x$, $v^\uparrow_{x,i}$, and $v^\downarrow_{x,i}$, where the registers are as in~\rf{eqn:functionWitness}.
The difference between~\rf{eqn:composed_vx} and~\rf{eqn:functionWitness}, however, is addition of the term $\vw_x$ that is not processed by the input oracle.
Note that~\rf{eqn:composed_vx} gives the general form of a canonical transducer that executes the input oracle on the admissible subspace: the constituent $O_{x,i}$ from~\rf{eqn:function_inputConstituent} on $\ket|0>$ and $O_{x,i}^*$ on $\ket|x_i>$.

Recall the definition of the composed function $f\circ g$  from~\rf{eqn:introComposedFunction}:
\begin{equation}
\label{eqn:composedFunction}
\begin{aligned}
\sS[f\circ g]&(z_{1,1}, \dots,z_{1,m},\;\; z_{2,1},\dots,z_{2,m},\;\;\dots\dots,\;\;z_{n,1},\dots,z_{n,m})\\
&= f\sA[
g(z_{1,1}, \dots,z_{1,m}), 
g(z_{2,1}, \dots,z_{2,m}),
\dots,
g(z_{n,1}, \dots,z_{n,m})].
\end{aligned}
\end{equation}
We define
\begin{equation}
\label{eqn:composed_y}
\yy_i = (z_{i,1}, \dots,z_{i,m})
\qqand
x = \sA[g(\yy_1), g(\yy_2),\dots g(\yy_n)]
\end{equation}
so that $f(x) = \sS[f\circ g](z)$.
Recall also notation $W_x(S) = W\sA[S, \bi{O_x}, \ket|0>]$ and $W(S) = \max_x W_x(S)$ from~\rf{eqn:function_Wx}.

\begin{prp}
\label{prp:composedFunction}
Let $S_f$ and $S_g$ be canonical transducers for the functions $f$ and $g$ of the form  described above.
Then, there exists a canonical transducer $S_{f\circ g}$ for the composed function $f\circ g$ from~\rf{eqn:composedFunction} that has the same form, and such that
\begin{equation}
\label{eqn:composedFunctionTransduction}
W_z(S_{f\circ g}) = W_x (S_f) + \sum_{i=1}^n L^{(i)}_x(S_f)\cdot W_{\yy_i}(S_g) \le W(S_f) + L(S_f)\cdot W(S_g)
\end{equation}
and
\begin{equation}
\label{eqn:composedFunctionQuery}
L_z(S_{f\circ g}) = \sum_{i=1}^n L^{(i)}_x(S_f)\cdot L_{\yy_i}(S_g) \le L(S_f)\cdot L(S_g),
\end{equation}
for every $z\in[q]^{nm}$.
Time complexity satisfies $T(S_{f\circ g}) = \TC(S_f) + 2\TC(S_g)$.
\end{prp}

\pfstart
As the first step, we create a bidirectional version $\bi{S_g}$ of $S_g$.
Its public space is $\reg B\otimes \reg Q$, and its admissible space on the input oracle $\bi{O_y}$ is spanned by $\ket B|\uparrow>\ket Q|0>$ and $\ket B|\downarrow> \ket Q|g(y)>$.
Its corresponding transduction action is 
\[
\ket B|\uparrow>\ket Q|0> \transduce{} \ket B|\uparrow>\ket Q|g(y)>
\qqand
\ket B|\downarrow>\ket Q|g(y)> \transduce{} \ket B|\downarrow>\ket Q|0>.
\]
Its time complexity is $2\TC(S_g)$ and, in notation at the end of~\rf{sec:introCanonical}:
\[
\Wmax\sA[\bi{S_g}, \bi{O_y}] = W_y(S_g)
\qqand
\Lmax\sA[\bi{S_g}, \bi{O_y}] = L_y(S_g).
\]

We obtain $\bi{S_g}$ as follows.
By swapping $\uparrow$ and $\downarrow$ if necessary, we may identify $\bi{O_y}^*$ and $\bi{O_y}$.
Then from \rf{prp:inverse}, we obtain a transducer $S^{-1}_g$ with transduction action $\ket |g(y)> \transduce{} \ket |0>$ on $\bi{O_y}$, whose complexity is identical to $S_g$.
We may assume $S_g$ and $S^{-1}_g$ are aligned.
We get $\bi{S_g}$ as $S_g\oplus S_g^{-1}$, where the direct sum is done via the register $\reg B$.
By the definition of $\Wmax$, we have that for some unit vector $(\alpha, \beta)\in\bC^2$:
\begin{align*}
\Wmax (\bi{S_g}, \bi{O_y}) 
&= W\sB[\bi{S_g}, \bi{O_y}, \alpha\ket B|\uparrow>\ket Q|0> + \beta \ket B|\downarrow>\ket Q|g(y)>]\\
&= |\alpha|^2 W\sA[S_g, \bi{O_y}, \ket |0>] + |\beta|^2 W\sA[S_g^{-1}, \bi{O_y}, \ket |g(y)> ]
= W_y(S_g),
\end{align*}
where we used Propositions~\ref{prp:parallel} and~\ref{prp:inverse}, as well as~\rf{eqn:rescaling}.
Query complexity derivation is similar.

As the next step, we construct the transducer $I_n\otimes \bi{S_g}$, where $I_n$ acts on the span of $\ket R|i>$ with $i>0$.
Moreover, as in \rf{rem:directSumDifferentOracle}, we assume the input oracle uses the register $\reg R$.
Therefore, the transduction action of $I_n\otimes \bi{S_g}$ on the input oracle 
\[
\bi{O_z} = \bigoplus_{i=1}^n \bi{O_{\yy_i}}
\]
is identical to the action of $\bi{O_x}$ on its admissible subspace.

Finally, we get $S_{f\circ g}$ as $S_f\circ \sA[I_n\otimes \bi{S_g}]$.
The time complexity estimate follows from \rf{prp:functional}.
For the transduction complexity, we obtain from~\rf{eqn:compositionTransductionMultipleUpper}, using that $S_f$ makes only admissible queries to $\bi{O_x}$ and does not have direct access to $\bi{O_z}$:
\begin{align*}
W_z(S_{f\circ g}) &= W\sA[S_{f\circ g}, \bi{O_z}, \ket |0>] \\
&\le W\sA[S_f, \bi{O_x}, \ket |0>] + \sum_{i=1}^n \Wmax\sA[ \bi{S_g}, \bi{O_{\yy_i}} ] L^{(i)} (S_f, \bi{O_x}, \ket |0>) \\
&= W_x (S_f) + \sum_{i=1}^n W_{\yy_i}(S_g)\cdot L^{(i)}_x(S_f).
\end{align*}
The last inequality in~\rf{eqn:composedFunctionTransduction} follows from $L_x(S_f) = \sum_i L^{(i)}_x(S_f)$.
The estimate~\rf{eqn:composedFunctionQuery} is similar, where we again use that $S_f$ does not make direct queries to $\bi{O_z}$.
Finally, since $\bi{S_g}$ makes only admissible queries to $O_y$, we get that $S_{f\circ g}$ makes only admissible queries to $O_z$.
\pfend

Now we can prove a more detailed version of \rf{thm:introIterated}.

\begin{thm}
\label{thm:iteratedFunctions}
Let $S_f$ be a canonical transducer for a function $f$ of the form as in~\rf{eqn:composed_vx}, and such that $L = L(S_f) = 1 + \Omega(1)$.  Let $W = W(S_f)$ and $T=T(S_f)$, assuming the circuit model.
Then, for each $d$, there exists a bounded-error quantum algorithm evaluating the iterated function $f^{(d)}$ in Monte Carlo query complexity $\OO(L^d)$ and time complexity $\OO\sA[d\cdot TW L^{d-1}] = \OO_f(d\cdot L^d)$ in the circuit model.
\end{thm}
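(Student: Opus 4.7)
\pfsketch
The plan is to build the transducer $S_{f^{(d)}}$ by induction on $d$, invoking \rf{prp:composedFunction} at each step, and then apply \rf{cor:optimalImplementation} once at the end to obtain the bounded-error quantum algorithm. For the base case I take $S_{f^{(1)}}=S_f$, which already has the canonical form~\rf{eqn:composed_vx} required by \rf{prp:composedFunction} and the complexities $L, W, T$. For the inductive step, assuming $S_{f^{(d)}}$ has been constructed, I set $S_{f^{(d+1)}}=S_{f^{(d)}\circ f}$ by applying \rf{prp:composedFunction} with the outer transducer $S_{f^{(d)}}$ and the inner transducer $S_f$; the resulting transducer is again canonical of the required form, which keeps the induction alive.

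I then propagate the three complexity measures $L_d=L(S_{f^{(d)}})$, $W_d=W(S_{f^{(d)}})$, $T_d=T(S_{f^{(d)}})$ through the recurrences supplied by \rf{prp:composedFunction}. For queries, $L_{d+1}\le L\cdot L_d$, so $L_d\le L^d$. For time, $T_{d+1}\le \TC(S_{f^{(d)}}) + 2\TC(S_f)$, giving $T_d=\OO(d\cdot T)$. For transduction, $W_{d+1}\le W_d + L_d\cdot W \le W_d + W L^d$, which unrolls to
\[
W_d \;\le\; W\sum_{k=0}^{d-1} L^k \;=\; W\cdot\frac{L^d-1}{L-1} \;=\; \OO(W L^{d-1}),
\]
where the last equality crucially uses the hypothesis $L=1+\Omega(1)$, so $L-1$ is a positive constant and the geometric sum is dominated by its last term. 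This is the step I expect to be the most delicate, since a weaker $L$ would only yield the looser bound $\OO(d\cdot W L^{d-1})$; fortunately the proposition feeds exactly the right linear recurrence.

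Finally, I apply \rf{cor:optimalImplementation} to $S_{f^{(d)}}$ with a constant target error $\eps=\Theta(1)$. The black box $\Sw$ of $S_{f^{(d)}}$ is invoked $K=\OO(1+W_d)=\OO(WL^{d-1})$ times, each execution costing $T_d=\OO(dT)$ elementary operations in the circuit model, and the input oracle is queried $\OO(L_d)=\OO(L^d)$ times. Multiplying gives the stated time complexity
\[
\OO(K\cdot T_d)=\OO(d\cdot TWL^{d-1})=\OO_f(d\cdot L^d),
\]
using that $W,T$ depend only on $f$ and that $W=\OO(L)$ (since $W\ge L$ up to constant factors is compatible with the adversary-type bound, while conversely $L\le W$ by definition). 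The query complexity is $\OO(L^d)$ as required, completing the construction.
\pfend
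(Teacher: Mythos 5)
Your proposal is correct and follows essentially the same route as the paper: build $S_{f^{(d)}}$ by iterating \rf{prp:composedFunction}, unroll the recurrences to get $L(S_{f^{(d)}})\le L^d$, $W(S_{f^{(d)}})\le W(1+L+\cdots+L^{d-1})=\OO(WL^{d-1})$ (using $L=1+\Omega(1)$) and $T(S_{f^{(d)}})=\OO(dT)$, then apply the query-optimal implementation theorem with constant error. The only cosmetic difference is your closing aside about $W=\OO(L)$, which is unnecessary: the final $\OO_f(d\cdot L^d)$ bound follows simply because $T$ and $W$ are constants depending only on $f$.
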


\def\Sf#1{S_{f^{(#1)}}}
\pfstart
Define $\Sf d$ as $S_f$ composed with itself $d$ times using \rf{prp:composedFunction}.
By induction, we have that $L\sA[\Sf d] \le L^d$, and
\[
W\sA[\Sf d] \le \sA[1 + L + \cdots + L^{d-1}] W  = \OO\sA[L^{d-1}W].
\]
Similarly, its time complexity is at most $2d\cdot \TC(S_f) = \OO(d\cdot T)$.
The theorem follows from \rf{thm:optimalImplementation}.
\pfend

The time complexity of the previous theorem can be slightly improved.
Let $s$ denote the initial space complexity of the transducer $S_f$, i.e, the number of qubits used to encode the initial coupling $\ket |0>\oplus v_x$.
It can be much smaller than the time complexity of $S_f$, as well as its space complexity, which is the total number of qubits used by $S_f$.

\begin{prp}
\label{prp:iteratedFunctionsImproved}
The time complexity of \rf{thm:iteratedFunctions} can be improved to $\OO\sA[(sd+T)W L^{d-1}]$, where $s$ is the initial space complexity of the transducer $S_f$.
\end{prp}

We only sketch the proof, as it does not improve the overall asymptotic $\OO_f(d\cdot L^d)$.

\pfstart[Proof sketch of \rf{prp:iteratedFunctionsImproved}]
By studying the proof of \rf{thm:iteratedFunctions}, we can observe that 
the work unitary of $\Sf d$ consists of repeated applications of $\bi{S_f}$ to different registers.
We can treat $\bi{S_f}$ as an oracle.
Then, the work unitary of $\Sf d$ becomes a non-canonical transducer with the oracle $\bi{S_f}$.
It is non-aligned, but we can make it aligned in additional time $s\cdot d$ by switching the registers before each execution of $\bi{S_f}$.
We can convert it into the canonical form using \rf{prp:canoning}, which 
increases the transduction complexity by at most a constant factor.
However, now it takes only one execution of $\bi{S_f}$ and $\OO(sd)$ other operations to implement the transducer.
The statement again follows from \rf{thm:optimalImplementation}.
\pfend

\section{Perturbed Transducers}
\label{sec:perturbedTransducers}

It is quite common in quantum algorithm to use subroutines that impose some error.
The total correctness of the algorithm then follows from a variant of \rf{lem:surgery}, assuming that the error of each constituent is small enough.

In most cases in this paper, like in Sections~\ref{sec:walks} and~\ref{sec:function}, the transduction action is the exact implementation of the required transformation.
However, it is not always feasible, and it makes sense to study what happens if a transducer satisfies the condition~\rf{eqn:transduce} only approximately.

It is worth noting that small errors in a transducer can result in large errors in the corresponding transduction action.
In other words, it is possible that $\xi \oplus v \maps{S} \tau' \oplus v'$ with $v$ close to $v'$, but $\xi\transduce{S} \tau$ with $\tau$ being far from $\tau'$.
For example, let both $\cH$ and $\cL$ be 1-dimensional, and assume $S$ acts as
\[
S\colon 
\begin{pmatrix}
0\\ v
\end{pmatrix}
\mapsto
\begin{pmatrix}
a\\\sqrt{v^2-a^2}
\end{pmatrix}
\]
for some positive real $v$ and $a$.
Observe that
\[
v - \sqrt{v^2-a^2} = \frac{a^2}{v + \sqrt{v^2-a^2}}
\]
can be very small for large $v$, even if $a$ is substantial.
Thus, on the basis of $v\approx v - \sqrt{v^2-a^2}$ we may be tempted to assume that $S$ approximately transduces $0$ into $a$, but this is very far from the true transduction action $0\transduce{S}0$.

\mycutecommand{\tv}{\widetilde v}
\mycutecommand{\ttau}{\widetilde\tau}

\subsection{Definition}
\label{sec:perturbedDefinition}
In order to solve this issue, we incorporate a perturbation, in the sense of \rf{lem:surgery}, into the transducer $S$.
Let $S$ be a unitary in $\cH\oplus \cL$ that maps
\[
\tS\colon \xi\oplus v \mapsto \ttau \oplus \tv .
\]
We assume its idealised version maps
\[
S\colon \xi \oplus v \mapsto \tau \oplus v
\]
with the perturbation
\[
\delta(S, \xi) = \normA| (\tau\oplus v) - (\ttau \oplus \tv) |.
\]

We call $S$ the idealised or perturbed transducer, and $\tS$ the approximate transducer.
We say that $S$ transduces $\xi$ into $\tau$, and we keep notation $\xi\transduce{S} \tau$, $v(S,\xi) = v$, $W(S, \xi) = \|v\|^2$, and $\tau(S,\xi) = \tau$.

This also extends to the canonical form of transducer in \rf{sec:canonicalDefinition}, where we decompose $v = \vw\oplus \vq$, and let $q(S,O,\xi) = \vq$ and $L(S,O,\xi) = \|\vq\|^2$.
As before, we write $\delta(S, O, \xi)$ instead of $\delta(S(O),\xi)$, and similarly for other pieces of notation.

As for usual quantum algorithms, we can use approximate transducers instead of the idealised ones, as long as we carefully keep track of the perturbations.
In the remaining part of this section, we will briefly study the main results of this paper under the perturbation lenses.

\subsection{Implementation}

We have the following variant of \rf{thm:pumping}.

\begin{thm}
\label{thm:pumpingApproximate}
Under the assumptions of \rf{sec:perturbedDefinition}, for every positive integer $K$, there exists a quantum algorithm that transforms $\xi$ into $\tau'$ such that
\[
\normA|\tau' - \tau(S,\xi)| \le 2 \sqrt{\frac{W(S,\xi)}{K}} + \sqrt{K}\delta(S,\xi)
\]
for every $\tS\colon \cH\oplus \cL\to\cH\oplus \cL$, perturbed version $S$, and initial state $\xi\in\cH$.
The algorithm conditionally executes $\tS$ as a black box $K$ times, and uses $\OO(K)$ other elementary operations.
\end{thm}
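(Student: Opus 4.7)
The plan is to reuse verbatim the algorithm of \rf{thm:pumping} but with the approximate transducer $\tS$ substituted for the idealised $S$ inside the conditional execution on Step~2(a). The outer structure (uniform superposition over $\cK$, incrementation controlled by $\ket P|1>$, and reversal of Step~1) is entirely input-independent, so the stated resource bounds on black-box executions of $\tS$ and on the number of other elementary operations carry over without change.

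For correctness I would again invoke \rf{lem:surgery}, but now treating \emph{two} kinds of perturbations on the same run. First, the two perturbations already used in the proof of \rf{thm:pumping}: inserting the rescaled catalyst $\ket K|0>\ket P|1>\ket L|v>/\sqrt K$ after Step~1, and removing it before Step~3. As there, each has norm $\|v\|/\sqrt K$, and together they contribute $2\sqrt{W(S,\xi)/K}$ to the final error. Second, a new perturbation on each of the $K$ iterations of Step~2(a): the actual action there is $\tS$ applied (conditioned on $\ket K|t>$) to the part of the state of norm $1/\sqrt K$, namely
\[
\frac1{\sqrt K}\ket K|t>\sA[\ket P|0>\ket H|\xi>+\ket P|1>\ket L|v>],
\]
while the analysis of \rf{thm:pumping} pretends that the output is the idealised $\frac1{\sqrt K}\ket K|t>\sA[\ket P|0>\ket H|\tau>+\ket P|1>\ket L|v>]$. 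By definition of $\delta(S,\xi)$ in \rf{sec:perturbedDefinition}, this one-step perturbation has norm exactly $\delta(S,\xi)/\sqrt K$.

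Summing, the $K$ iteration-level perturbations contribute $K\cdot \delta(S,\xi)/\sqrt K=\sqrt K\,\delta(S,\xi)$, and the two catalyst-insertion perturbations contribute $2\sqrt{W(S,\xi)/K}$, giving the stated bound
\[
\normA|\tau'-\tau(S,\xi)|\le 2\sqrt{\frac{W(S,\xi)}{K}}+\sqrt K\,\delta(S,\xi)
\]
via the triangle inequality built into \rf{lem:surgery}.

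The only mild subtlety, which I would highlight but not belabour, is that in the idealised analysis the part of the state on which $\tS$ acts at iteration $t$ has exactly the form $(\xi\oplus v)/\sqrt K$ at \emph{every} step, because the catalyst $v$ is left intact by the idealised $S$ and the incrementation on Step~2(b) is conditioned on $\ket P|1>$. Therefore all $K$ one-step perturbations are honestly bounded by $\delta(S,\xi)/\sqrt K$ and no compounding of the per-iteration errors occurs \emph{inside} a single iteration; the accumulation is only the linear-in-$K$ one already accounted for by \rf{lem:surgery}. This is really the key point where the additive structure of \rf{lem:surgery} is doing all the work, and it is what prevents the naive worry that errors in $\tv$ versus $v$ might snowball between iterations.
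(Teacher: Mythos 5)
Your proposal is correct and follows essentially the same route as the paper: run the algorithm of \rf{thm:pumping} verbatim with $\tS$ in place of $S$, and account for the per-iteration discrepancy of size $\delta(S,\xi)/\sqrt K$ as an additional perturbation in \rf{lem:surgery}, yielding the extra $\sqrt K\,\delta(S,\xi)$ term. The subtlety you flag—that the idealised state fed to $\tS$ is $(\xi\oplus v)/\sqrt K$ at every iteration, so the per-step errors are uniform and accumulate only linearly—is exactly the point the paper's argument relies on.
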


\pfstart
The algorithm is identical to that of \rf{thm:pumping}.
The analysis is similar with the only difference that, on Step~2(a), in order to obtain the mapping
\[
\frac1{\sqrt K} \ket T|t> \ket P |0> \ket H|\xi> + \frac1{\sqrt K} \ket T|t>\ket P|1>\ket L|v>
\;\longmapsto\;
\frac1{\sqrt K} \ket T|t> \ket P |0> \ket H|\tau> + \frac1{\sqrt K} \ket T|t>\ket P|1>\ket L|v>
\]
as in \rf{eqn:pumpingOneStep}, we introduce a perturbation of size $\delta(S, \xi)/\sqrt{K}$.
By \rf{lem:surgery}, the total perturbation is then
\[
2 \frac{\|v\|}{\sqrt K} + K\cdot \frac{\delta(S,\xi)}{\sqrt{K}},
\]
which gives the required estimate.
\pfend

Again, this theorem can be reformulated as follows.
\begin{cor}
\label{cor:approximatePumping}
For all $W, \eps>0$, there exists a quantum algorithm that executes $\tS$ as a black box $K = \OO(1+W/\eps^2)$ times, uses $\OO(K)$ other elementary operations, and $\eps$-approximately transforms $\xi$ into $\tau(S, \xi)$ for all $S$, $\tS$, and the initial state $\xi$ that satisfy $W(S,\xi)\le W$ and $\delta(S, \xi)\le \frac{\eps}{2\sqrt{K}}$.
\end{cor}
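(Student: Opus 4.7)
The plan is to derive this corollary as a direct consequence of \rf{thm:pumpingApproximate} by choosing the parameter $K$ appropriately to balance the two error contributions. Recall that \rf{thm:pumpingApproximate} guarantees
\[
\normA|\tau' - \tau(S,\xi)| \le 2\sqrt{\frac{W(S,\xi)}{K}} + \sqrt{K}\,\delta(S,\xi),
\]
where $K$ is under our control. The goal is to pick $K$ so that, under the two hypotheses $W(S,\xi)\le W$ and $\delta(S,\xi)\le \eps/(2\sqrt K)$, each of the two terms is bounded by $\eps/2$.

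Concretely, I would take $K$ to be the smallest positive integer satisfying $K\ge 16W/\eps^2$; this makes $K=\OO(1+W/\eps^2)$ (the additive $1$ is necessary to cover the regime $W\le \eps^2/16$, in particular $W=0$, where we still need at least one invocation of $\tS$). With this choice, the bound $W(S,\xi)\le W$ gives
\[
2\sqrt{\frac{W(S,\xi)}{K}} \le 2\sqrt{\frac{W}{K}} \le 2\sqrt{\frac{\eps^2}{16}} = \frac{\eps}{2}.
\]
The second hypothesis $\delta(S,\xi)\le \eps/(2\sqrt K)$ then yields $\sqrt K\,\delta(S,\xi)\le \eps/2$ immediately, so the total error does not exceed $\eps$.

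The algorithm is exactly the one provided by \rf{thm:pumpingApproximate} for this value of $K$, so the claim on the number of executions of $\tS$ and on the $\OO(K)$ other elementary operations follows without further work. There is really no substantive obstacle here; the only minor point worth being careful about is the $1+$ term in $K=\OO(1+W/\eps^2)$, which ensures the construction makes sense uniformly in $W$ (including $W\to 0$), and the hypothesis on $\delta$ is precisely what is needed to absorb the $\sqrt K$ amplification of the per-step perturbation into the target error budget.
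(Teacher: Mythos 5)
Your proof is correct and matches the paper's intent exactly: the paper states this corollary as an immediate reformulation of \rf{thm:pumpingApproximate} without spelling out the choice of $K$, and your choice $K = \max\{1, \lceil 16W/\eps^2\rceil\}$ splits the error bound into two terms of at most $\eps/2$ each, precisely as needed. The remark about the additive $1$ covering the $W\to 0$ regime is also the right observation.
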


We get a version of \rf{thm:optimalImplementation} in a similar fashion.

\begin{prp}
\label{prp:approximateImplementation}
\rf{thm:optimalImplementation} works assuming a perturbed transducer $S$.  The estimate is
\[
\normA|\tau' - \tau(S, O, \xi)| \le \frac 2{\sqrt K} \sqrt{W(S, O, \xi) + \sum_{i=1}^r\s[\frac{K}{K^{(i)}} -1] L^{(i)}(S, O, \xi) } + \sqrt K \delta(S, O, \xi).
\]
\rf{cor:optimalImplementation} also works assuming a perturbed transducer $S$ under the additional assumption of $\delta(S, O, \xi) \le \frac\eps{2\sqrt K}$.
\end{prp}

\pfstart
The proof is analogous to \rf{thm:pumpingApproximate}:  we use perturbation of size
$\delta(S, O, \xi)/\sqrt{K}$ to get from~\rf{eqn:optimal1} to~\rf{eqn:optimal2}.
\pfend

\subsection{Composition}
The composition of perturbed transducers exactly follows the corresponding constructions of \rf{sec:properties}, where we use \rf{lem:surgery} to evaluate the total perturbation.
This gives us the following proposition.

\begin{prp}
\label{prp:perturbationComposition}
Propositions~\ref{prp:parallel}, \ref{prp:sequentialsequential}, \ref{prp:sequentialParallel} and~\ref{prp:functional} work assuming perturbed transducers.
We get the following estimates:
\begin{equation}
\label{eqn:perturbationLinear}
\delta(S, O, c\xi) = |c| \delta (S, O, \xi),
\end{equation}
\begin{equation}
\label{eqn:perturbationParallel}
\delta(S, O, \xi) \le \sqrt{\sum_{i=1}^m \delta(S_i, O, \xi_i)^2}
\end{equation}
for~\rf{prp:parallel}, and
\begin{equation}
\label{eqn:perturbationSequential}
\delta(S, O, \xi) \le \sum_{t=1}^m \delta(S_t, O, \psi_t)
\end{equation}
for Propositions~\ref{prp:sequentialsequential} and \ref{prp:sequentialParallel}.
For \rf{prp:functional}, we have
\begin{equation}
\label{eqn:perturbationFunctional}
\delta(S_A\circ S_B, O, \xi)
\le 
\delta\sA[S_A, O\oplus O',\xi] + \delta\sA[S_B, O, q^{(1)} (S_A, O\oplus O', \xi) ].
\end{equation}
\end{prp}

\pfstart
Eq.~\rf{eqn:perturbationLinear} follows by linearity.
All the remaining estimates follow from the corresponding proofs in \rf{sec:properties} replacing each transducer with its approximate version and using \rf{lem:surgery}.
The estimate~\rf{eqn:perturbationParallel} follows from the observation that perturbations in terms of the direct sum act on orthogonal subspaces.
\pfend

\section{Purifiers}
\label{sec:purifiers}

In this section, we formulate and prove the formal version of \rf{thm:introPurifier}, as well as draw some consequences of it for composition of bounded-error algorithms.

\subsection{Boolean Case}

Recall our settings from \rf{sec:introPurifier}.
The input oracle performs the transformation
\begin{equation}
\label{eqn:purifierInput}
O_\psi\colon \ket M |0> \mapsto \ket M|\psi> = \ket B|0>\ket N|\psi_0> + \ket B|1> \ket N|\psi_1>
\end{equation}
for some unit vector $\psi$ in some space $\cM = \cB\otimes \cN$ with $\cB = \bC^2$, and it is promised that there exist constants $0\le c-d < c + d \le 1$ such that
\begin{equation}
\label{eqn:purifierCases}
\text{either \qquad
$\|\psi_1\|^2 \le c-d$
\qquad  or \qquad 
$\|\psi_1\|^2 \ge c+d$,
}
\end{equation}
which corresponds to $f(\psi)=0$ and $f(\psi)=1$, respectively.
Let
\begin{equation}
\label{eqn:purifierd}
\mu = \sqrt{(1-c)^2 - d^2} + \sqrt{c^2 - d^2} < 1.
\end{equation}
We say the input oracle $O_\psi$ is \emph{admissible} if it satisfies~\rf{eqn:purifierInput} and~\rf{eqn:purifierCases}.

\mycutecommand{\deltapur}{\delta_{\mathrm{pur}}}

\begin{thm}
\label{thm:purifierBoolean}
Let $D$ be a positive integer.
In the above assumptions, there exists a perturbed transducer $\purifier$ on the 1-dimensional public space and with bidirectional access to $O_\psi$ which satisfies the following conditions:
\begin{itemize}\itemsep=0pt
\item 
It transduces $\ket |0>$ into $(-1)^{f(\psi)} \ket |0>$ for all admissible input oracles $O_\psi$.
In particular, every vector in its public space is admissible.
\item On every admissible input oracle and normalised input state, its perturbation is at most $\deltapur = 2 \mu^{D-1}$.
\item Its transduction and query complexities, $\Wmax(\purifier)$ and $\Lmax(\purifier)$, are $\OO\sA[1/(1-\mu)] = \OO(1)$.
\item It executes the input oracle on the admissible subspace only: $O_\psi$ on $\ket M|0>$ and $O_\psi^*$ on $\ket M|\psi>$.
\end{itemize}
In the circuit model, the purifier can be implemented in time $\OO(D\cdot s)$, where $s$ is the number of qubits used in $\cM$.
In the QRAG model and assuming the RA input oracle, the purifier can be implemented in time $\OO(\timeR)$.
\end{thm}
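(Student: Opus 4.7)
The plan is to construct $\purifier$ as a finite version of the idealised quantum walk sketched around \rf{fig:introPurifier2}, exactly following the proof sketch of \rf{thm:introPurifier}: a multidimensional quantum walk on the one-sided line truncated at vertex $D$, with public space spanned by the leftmost edge $\ket|0>$. The transducer will take the form $S=R_2 R_1$, where $R_1$ is a product of local reflections at the odd vertices $1,3,\ldots$ and $R_2$ at the positive even vertices $2,4,\ldots$, each local reflection combining one application of $O_\psi$ and one of $O_\psi^*$ around a fixed two-qubit reflection. By construction every query will be applied to a register in the admissible state $\ket|0>$ (or its image $\ket|\psi>$ after the inverse), giving the last bullet for free.

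First I would introduce the rescaled state
\[
\wpsi=\begin{cases}\frac1a\ket|0>\ket|\psi_0>+b\ket|1>\ket|\psi_1>,&f(\psi)=0,\\ a\ket|0>\ket|\psi_0>+\frac1b\ket|1>\ket|\psi_1>,&f(\psi)=1,\end{cases}
\]
with $a=\sqrt[4]{(1-c+d)/(1-c-d)}$ and $b=\sqrt[4]{(c+d)/(c-d)}$, and verify by direct calculation that $\|\wpsi\|^2=\mu$ for $\mu$ as in~\rf{eqn:purifierd}. Then I would check, as laid out in the sketch, that the two-qubit reflection about the span of $(a\ket|0>+\ket|1>)\ket|0>$ and $(\ket|0>+b\ket|1>)\ket|1>$, conjugated by the input oracle as in~\rf{eqn:notToyOperations}, implements the required local reflection at each internal vertex in both cases $f(\psi)=0$ and $f(\psi)=1$.

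Given this, I would take the candidate catalyst
\[
v=\sum_{i=1}^{D}(-1)^{i\cdot f(\psi)}\ket|i>\ket|\wpsi>^{\otimes i}
\]
and show that $\ket|0>\oplus v\mapsto(-1)^{f(\psi)}\ket|0>\oplus v$ holds in the \emph{infinite}-line walk via the standard bookkeeping of~\rf{eqn:QW_sequence_negative} and~\rf{eqn:QW_sequence_positive}; the transduction and query complexity bounds $\|v\|^2=\sum_{i=1}^{D}\mu^i\le\mu/(1-\mu)=\OO(1/(1-\mu))$ then follow as geometric sums. The main obstacle is quantifying the perturbation introduced by truncation. In the infinite walk $\ket|0>\oplus v$ is stationary under both $R_1$ and $R_2$ up to sign, but on the truncated line the local reflection at vertex $D$ is missing its would-be counterpart on the edge $(D,D+1)$. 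The resulting discrepancy is supported on the boundary edges, whose catalyst components carry norms $\mu^{(D-1)/2}$ and $\mu^{D/2}$; applying the triangle inequality to the composite $R_2 R_1$ and using $\|\wpsi\|^2=\mu$ yields the claimed bound $\delta(\purifier,O_\psi,\ket|0>)\le 2\mu^{D-1}$.

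Finally, I would pass from this non-canonical transducer to canonical form via \rf{prp:canoning}, which costs only constant multiplicative overhead in all three complexities. For the implementation, addressing the vertex register and acting on the appropriate two-qubit subspace inside $\ket|\wpsi>^{\otimes i}$ costs $\OO(s)$ elementary operations per layer in the circuit model, for $\OO(D)$ layers in total, giving the bound $\OO(D\cdot s)$; in the QRAG model, \rf{thm:select} together with the RA input oracle collapses both $R_1$ and $R_2$ to time $\OO(\timeR)$.
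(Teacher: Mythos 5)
Your proposal is correct and follows essentially the same route as the paper's proof: the truncated multidimensional walk on the line with the rescaled state $\wpsi$, the two-qubit reflection conjugated by the oracle, the geometric-sum bound $\OO(1/(1-\mu))$ on the catalyst norm, canonicalisation via \rf{prp:canoning}, and the same time-complexity arguments in both models. The only differences are bookkeeping (the paper's catalyst runs over $i=0,\dots,D-1$ in $\cD\otimes\cM^{\otimes D-1}$, and in the positive case it identifies the exact final state, with the discrepancy being $2\ket|D-1>\ketO|\wpsi>^{\otimes D-1}$, while the negative case is exact), which do not change the substance of the argument.
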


Here we used $\Wmax(\purifier)$ to denote maximal $W(\purifier, \bi{O_\psi}, \ket|0>)$ over all admissible input oracles $O_\psi$.
$\Lmax(\purifier)$ is defined similarly.

\pfstart
The outline of the proof was already given in \rf{sec:introPurifier}.
We will assume that $D$ is even for concreteness, the case of odd $D$ being analogous.
Recall the parameters
\[
a = \sqrt[4]{\frac{1-c+d}{1-c-d}}
\qqand
b = \sqrt[4]{\frac{c+d}{c-d}}.
\]
and the following vector in $\cM$:
\[
\wpsi = 
\begin{cases}
 \frac 1a \ket |0> \ket |\psi_0> + b \ket|1>\ket|\psi_1>,& \text{if $f(\psi)=0$};\\
a \ket |0> \ket |\psi_0> + \frac 1b \ket|1>\ket|\psi_1>,& \text{if $f(\psi)=1$}.\\
\end{cases}
\]
We have the following important estimate.  If $f(\psi)=0$:
\begin{equation}
\label{eqn:purifierPsiSize1}
\|\wpsi\|^2 
= \frac1{a^2} \|\psi_0\|^2 + b^2 \|\psi_1\|^2
\le \frac1{a^2} (1-c+d) + b^2 (c-d)=
\mu,
\end{equation}
where we used that $1/a^2 < 1 < b^2$.
Similarly, for $f(\psi)=1$:
\begin{equation}
\label{eqn:purifierPsiSize2}
\|\wpsi\|^2 
= {a^2} \|\psi_0\|^2 + \frac1{b^2} \|\psi_1\|^2
\le {a^2} (1-c-d) + \frac1{b^2} (c+d)=
\mu.
\end{equation}

We describe the transducer as non-canonical, and we will turn it into the canonical form later.
The space of the transducer is $\cD\otimes \cM^{\otimes D-1}$, where $\cD$ is a $D$-qudit.
The one-dimensional public space is spanned by $\xi = \ket D|0> \ket |0>^{\otimes D-1}$.
The initial coupling is given by
\begin{equation}
\label{eqn:purifier xi+v}
\xi \oplus v = \sum_{i=0}^{D-1} (-1)^{i\cdot f(\psi)} \ket D|i> 
\ketO|\wpsi>^{\otimes i} \ket|0>^{\otimes D-i-1}.
\end{equation}
The transduction complexity is 
\begin{equation}
\label{eqn:zetaNorm}
\|v\|^2 \le \|\xi\oplus v\|^2 \le \sum_{i=0}^{\infty} \normA|\wpsi|^{2i} = \frac{1}{1-\normA|\wpsi|^2} \le \frac1{1-\mu}.
\end{equation}

\myfigure{\label{fig:purifier}}
{
A purifier is a multidimensional quantum walk on the line graph seen above.
The edge between the vertices $i$ and $i+1$ corresponds to the subspace $\ket D|i>\otimes \cM^{\otimes D-1}$ as indicated by the state below the edge.
The local reflections on the vertices $0$ and $D$ are identities.
The local reflection at the vertex $i=1,\dots, D-1$ acts on the subspace $\spn\{\ket D|i-1>, \ket D|i>\}\otimes \cM^{\otimes D-1}$.
The expressions above the edges give the initial coupling from~\rf{eqn:purifier xi+v}.
}
{
\[
\begin{tikzpicture}[auto]
\node (0) at (0,0) [circle, draw] {$0$};
\node (1) at (2,0) [circle, draw] {$1$};
\node (2) at (5,0) [circle, draw] {$2$};
\node (3) at (7.5,0) [circle, draw] {$3$};
\node (D-1) at (11,0) [ellipse, draw] {$D-1$};
\node (D) at (14.5, 0) [ellipse, draw] {$D$};
\draw (0) to node [midway, above, blue] {$\scriptstyle\ket |0>^{\otimes D-1} $} 
             node [midway, below, gray ] {$\ket D|0>$} 
(1);
\draw (1) to node [midway, above, blue] {$\scriptstyle(-1)^{f(\psi)}\ketO |\wpsi>\ket |0>^{\otimes D-2} $} 
             node [midway, below, gray ] {$\ket D|1>$} 
(2);
\draw (2) to node [midway, above, blue] {$\scriptstyle\ketO |\wpsi>^{\otimes 2}\ket |0>^{\otimes D-3} $} 
             node [midway, below, gray ] {$\ket D|2>$} 
(3);
\draw (3) to (8.5,0);
\node (dots) at (9,0) {$\cdots$};
\draw(D-1) to (9.5,0);
\draw (D-1) to node [midway, above, blue] {$\scriptstyle (-1)^{f(\psi)}\ketO |\wpsi>^{\otimes D-1} $} 
             node [midway, below, gray ] {$\ket D|D-1>$} 
(D);
\end{tikzpicture}
\]
}

Let us now describe the action of the transducer.
The transducer is a multidimensional quantum walk on the line graph, see \rf{fig:purifier}.
It is a product of two reflections $R_1$ and $R_2$.
The reflection $R_1$ is the product of the local reflections on the odd vertices, $i=1,3,5,\dots, D-1$.
The reflection $R_2$ is the product of the local reflections on the even vertices, $i=2,4,\dots, D-2$.

The local reflection for the vertex $i=1,\dots,D-1$ is as follows.
It acts in $\spn\{\ket D|i-1>, \ket D|i>\}\otimes \cM^{\otimes D-1}$.
Define a qubit $\cA$ whose value $0$ corresponds to $\ket D|i-1>$ and $1$ to $\ket D|i>$.
If $i$ is odd, this could be the least significant qubit of $\cD$.
Let $\cM^{(i)} = \cB^{(i)}\otimes \cN^{(i)}$ be the $i$-th multiplier in the tensor product $\cM^{\otimes D-1}$.
The reflection is as follows:

\begin{enumerate}\itemsep=0pt
\item Execute the input oracle $O_\psi$ on $\cM^{(i)}$ conditioned on $\ket A|0> = \ket D|i-1>$.
\item Execute a two-qubit unitary on $\cA\otimes \cB^{(i)}$, which is the reflection about the span of the states
\[
\sA[a\ket A|0> + \ket A|1> ]\ket B|0>
\qqand
\sA[\ket A|0> + b \ket A|1> ]\ket B|1>.
\]
\item Execute the inverse oracle $O^*_\psi$ on $\cM^{(i)}$ conditioned on $\ket A|0> = \ket D|i-1>$.
\end{enumerate}

\begin{clm}
\label{clm:purifier}
The local reflection for the vertex $i$ multiplies the corresponding part of the state in~\rf{eqn:purifier xi+v}
\[
\ket D|i-1> \ketO|\wpsi>^{\otimes i-1} \ket|0>^{\otimes D-i} +
(-1)^{f(\psi)}\ket D|i> \ketO|\wpsi>^{\otimes i} \ket|0>^{\otimes D-i-1}
\]
by the phase $(-1)^{f(\psi)}$.
\end{clm}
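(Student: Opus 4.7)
The plan is to verify the claim by directly unfolding the three-step local reflection on a state supported on $\ket A|0>\otimes \ket M|0> + (-1)^{f(\psi)} \ket A|1>\otimes \ketO|\wpsi>$ (after factoring out the common tensor factor $\ketO|\wpsi>^{\otimes i-1}\ket|0>^{\otimes D-i-1}$, which is untouched by the reflection). Here $\cA$ is the two-dimensional register marking $\ket D|i-1>$ vs $\ket D|i>$, and the only nontrivial action takes place on $\cA\otimes \cM^{(i)}$.

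After Step~1, I would substitute $O_\psi\ket M|0>=\ket M|\psi>=\ket B|0>\ket N|\psi_0>+\ket B|1>\ket N|\psi_1>$ and expand $\ketO|\wpsi>$ by its definition. A brief algebraic rearrangement gives, in the case $f(\psi)=0$,
\[
\sA[\ket A|0>+\tfrac1a \ket A|1>]\ket B|0>\ket N|\psi_0>+\sA[\ket A|0>+b\ket A|1>]\ket B|1>\ket N|\psi_1>,
\]
which by inspection lies inside the reflection span $\spn\sfigA{(a\ket A|0>+\ket A|1>)\ket B|0>,\; (\ket A|0>+b\ket A|1>)\ket B|1>}\otimes \cN^{(i)}$ (the two $\cA$-factors are scalar multiples of the defining vectors). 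Step~2 therefore acts as the identity. In the case $f(\psi)=1$, the same expansion instead produces
\[
\sA[\ket A|0>-a\ket A|1>]\ket B|0>\ket N|\psi_0>+\sA[\ket A|0>-\tfrac1b\ket A|1>]\ket B|1>\ket N|\psi_1>,
\]
and here the $\cA$-vectors are, respectively, orthogonal to $a\ket A|0>+\ket A|1>$ and $\ket A|0>+b\ket A|1>$ (the inner products vanish identically), so the state is in the orthogonal complement of the reflection span and Step~2 negates it.

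Finally, Step~3 uncomputes: the $\ket A|1>$-component (proportional to $\ketO|\wpsi>$) is untouched by the controlled $O_\psi^*$, while the $\ket A|0>$-component (proportional to $\ket M|\psi>$) returns to $\ket M|0>$. Reassembling and restoring the tensor factor $\ketO|\wpsi>^{\otimes i-1}\ket|0>^{\otimes D-i-1}$ yields $(-1)^{f(\psi)}$ times the original state, as claimed.

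I do not expect any real obstacle; the whole argument reduces to two four-term expansions and two inner-product checks using the specific values of $a,b$. The only point requiring care is bookkeeping of which tensor slot is $\cM^{(i)}$ in the chains $\ketO|\wpsi>^{\otimes i-1}\ket|0>^{\otimes D-i}$ and $\ketO|\wpsi>^{\otimes i}\ket|0>^{\otimes D-i-1}$, so that Step~1's controlled oracle acts on the correct factor (the $\ket|0>$ at position $i$ of the $\ket A|0>$-summand, and the $\wpsi$ at position $i$ of the $\ket A|1>$-summand). Once this is aligned correctly, the claim is a short calculation and the global transduction statement $\ket|0>\transduce{} (-1)^{f(\psi)}\ket|0>$ follows by the same $R_2R_1$ analysis as in~\rf{eqn:QW_sequence_negative}--\rf{eqn:QW_sequence_positive} (both endpoints $i=0$ and $i=D$ carrying the identity as their local reflection).
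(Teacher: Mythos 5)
Your proposal is correct and follows essentially the same route as the paper's proof: apply the controlled oracle, rewrite the bracketed state as $\sA[\ket A|0>+\frac1a\ket A|1>]\ket B|0>\ket N|\psi_0>+\sA[\ket A|0>+b\ket A|1>]\ket B|1>\ket N|\psi_1>$ (resp.\ the version with $-a$ and $-\frac1b$), and observe that the Step-2 reflection fixes the first and negates the second. The only difference is that you spell out the span-membership and orthogonality checks that the paper dismisses as ``easy to see,'' which is a fine elaboration rather than a new argument.
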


\pfstart
After application of the oracle in Step 1, we get the state
\[
\ketO|\wpsi>^{\otimes i-1}  
\otimes 
\skB[\ket D|i-1> \ket M |\psi> + (-1)^{f(\psi)} \ket D|i> \ketA M|\wpsi>] 
\otimes 
\ket|0>^{\otimes D-i-1}.
\]
The local reflection acts on the state in the square brackets, which can be rewritten as
\begin{equation}
\label{eqn:purifierSub1}
\sB[\ket A|0> + \frac 1a \ket A|1>]\ket B|0> \ket N|\psi_0> +
\sB[\ket A|0> + b \ket A|1> ]\ket B|1> \ket N|\psi_1>
\end{equation}
if $f(\psi)=0$, and
\begin{equation}
\label{eqn:purifierSub2}
\sB[\ket A|0> - a \ket A|1> ]\ket B|0> \ket N|\psi_0> +
\sB[\ket A|0> - \frac 1b \ket A|1> ]\ket B|1> \ket N|\psi_1>
\end{equation}
if $f(\psi)=1$.
It is easy to see that the operation on Step 2 does not change the state in~\rf{eqn:purifierSub1} and negates the one in~\rf{eqn:purifierSub2}, from which the claim follows.
\pfend

From the claim, it immediately follows that, if $f(\psi)=0$, the transducer does not change the state~\rf{eqn:purifier xi+v}.
Therefore, in this case $\ket |0> \transduce{\purifier} \ket |0>$.

On the other hand, if $f(\psi)=1$, then $R_1$ reflects the whole state $\xi\oplus v$, and $R_2$ reflects all the terms in the sum except for $i=0$ and $i=D-1$.  Thus, the final state is
\[
-\xi \oplus v - 2 (-1)^{(D-1)f(\psi)}\ket D|D-1> \ketO|\wpsi>^{\otimes D-1} .
\]
This can be interpreted as transducing $\ket |0>$ into $-\ket |0>$ with a perturbation of size at most $2\mu^{D-1}$ by~\rf{eqn:purifierPsiSize1} and~\rf{eqn:purifierPsiSize2}.

One can see that the local reflection for the vertex $i$ applies the input oracle and its inverse on the part of the state $v$ in the subspace $\ket D|i-1>\otimes \cM^{\otimes D-1}$.
Hence, the query complexity is at most $2\|\xi\oplus v\|^2 \le 2/(1-\mu)$ by~\rf{eqn:zetaNorm}.
The transformation into canonical form, \rf{prp:canoning}, adds the query complexity to the transduction complexity, hence, the latter stays $\OO\sA[1/(1-\mu)]$.

Let us estimate time complexity.  We start with the circuit model.
The queries in transducer of \rf{thm:purifierBoolean} are not aligned, as they are applied to different copies of $\cM$.
In order to make them aligned, as required by \rf{prp:canoning}, the register $\cM^{(i)}$ should be moved to some specific array of qubits shared by all the local reflections.
This takes time $\OO(s)$ per each local reflection.
Step 2 of the local reflection can be implemented in constant time.
Thus, each local reflection takes time $\OO(s)$.
There are $D-1$ local reflections performed.
By \rf{lem:automaton} with all $\phi_i$ being absent, the whole transducer can be implemented in time $\OO(D\cdot s)$.
Transformation into canonical form in \rf{prp:canoning} keeps the time complexity of the transducer essentially the same.

Now consider the QRAG model.
All local reflections in $R_1$ can be performed in parallel, and the same is true for $R_2$.
The first and the third operation in the local reflection are implemented by the RA input oracle.
The second operation can be performed in $\OO(\timeR)$ time by \rf{thm:select}.
\pfend

\subsection{Non-Boolean Case}
This time let $\cM = \cB\otimes \cN$ be a space with $\cB = \bC^p$.
Let $O_\psi$ be an oracle that performs the following state generation:
\begin{equation}
\label{eqn:purifierInput2}
O_\psi\colon \ket M |0> \mapsto \ket M|\psi> 
= 
\sum_{j=0}^{p-1} \ket B|j>\ket N|\psi_j>.
\end{equation}
Let $d>0$ be a constant.
We assume that for every $\psi$ there exists (unique) $f(\psi)\in[p]$ such that
\begin{equation}
\label{eqn:purifierCases2}
\|\psi_{f(\psi)}\|^2 \ge \frac12+d.
\end{equation}
Define
\[
\mu = 2 \sqrt{1/4 - d^2} < 1,
\]
which is the same as in~\rf{eqn:purifierd} for $c=1/2$.
We treat $\cB = \bC^p$ as composed out of $\ell = \log p$ qubits.
We do \emph{not} assume that $\ell = \OO(\timeR)$ here, as functions, in principle, can have very long output.
Again, we call every input oracle $O_\psi$ satisfying~\rf{eqn:purifierInput2} and~\rf{eqn:purifierCases2} admissible.

\begin{thm}
\label{thm:purifierGeneral}
Let $D$ be a positive integer.
Under the above assumptions, there exists a perturbed transducer $\purifier$ on the public space $\bC^p$ and with bidirectional access to $O_\psi$ which satisfies the following conditions:
\begin{itemize}\itemsep=0pt
\item For all $b\in\bool^\ell$,  and admissible $O_\psi$, it transduces $\ket |b>$ into $\ketA|b\oplus f(\psi)>$, where $\oplus$ stands for the bit-wise XOR.
In particular, every initial vector in $\bC^p$ is admissible.
\item On any admissible input oracle and unit initial vector, its perturbation is at most $\deltapur = 2 \mu^{D-1}$.
\item Its query complexity satisfies $\Lmax(\purifier) = \OO\sA[1/(1-\mu)] = \OO(1)$.
\item It executes the input oracle on the admissible subspace only: $O_\psi$ on $\ket M|0>$ and $O_\psi^*$ on $\ket M|\psi>$.
\end{itemize}
For the time and the transduction complexity, we have the following estimates:
\begin{itemize}\itemsep=0pt
\item In the circuit model, $\Wmax(\purifier) = \OO\sA[1/(1-\mu)] = \OO(1)$ and $T(\purifier) = \OO(s\cdot D)$, where $s$ is the number of qubits used in $\cM$.
\item In the QRAG model, assuming the RA input oracle, we have $\Wmax(\purifier) = \OO(\log p/(1-\mu)) = \OO(\log p)$ and $T(\purifier) = \OO(\timeR)$.
\end{itemize}
\end{thm}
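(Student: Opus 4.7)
The plan is to reduce the non-Boolean case to the Boolean one of \rf{thm:purifierBoolean} via a Bernstein--Vazirani trick, as foreshadowed in the sketch of \rf{thm:introPurifier}. For any $y\in\bool^\ell$ with $\ell=\log p$, I define the auxiliary Boolean oracle $\widetilde O^{(y)}_\psi$ on $\cM\otimes\cA$, with $\cA$ a fresh ancilla qubit, by
\[
\ket M|0>\ket A|0>\;\longmapsto\;\sum_j \ket B|j>\ket N|\psi_j>\ket A|y\cdot j>.
\]
The weight on $\ket A|y\cdot f(\psi)>$ equals $\sum_{j:\,y\cdot j=y\cdot f(\psi)}\|\psi_j\|^2\ge\|\psi_{f(\psi)}\|^2\ge\tfrac12+d$, so $\widetilde O^{(y)}_\psi$ is admissible in the sense of \rf{eqn:purifierCases} with $c=\tfrac12$ and the same $d$; hence a Boolean purifier $\purifier^{(y)}$ from \rf{thm:purifierBoolean} transduces $\ket A|0>\mapsto(-1)^{y\cdot f(\psi)}\ket A|0>$ with the same $\mu$ and $D$.

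For the circuit model, I form the parallel composition $\bigoplus_{y\in\bool^\ell}\purifier^{(y)}$ via \rf{prp:parallel}, storing $y$ in a register $\cY\cong\cB$ that acts as a read-only modifier inside the oracle call. The combined transducer has public space $\cY$ and transduces $\ket Y|y>\mapsto(-1)^{y\cdot f(\psi)}\ket Y|y>$. Sandwiching with Hadamards on $\cB=\cY$ turns this $y$-dependent phase into the XOR action $\ket B|b>\mapsto\ket B|b\oplus f(\psi)>$. On a unit input $\ket B|b>$, Hadamards produce $p^{-1/2}\sum_y(-1)^{b\cdot y}\ket Y|y>$, and \rf{eqn:paralellComplexity} combined with \rf{eqn:rescaling} gives both transduction and query complexities $\sum_y p^{-1}\Wmax(\purifier^{(y)})=\OO(1/(1-\mu))$. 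Computing $y\cdot j$ inside each oracle call takes $\OO(\ell)=\OO(s)$, the Boolean purifier uses $\OO(D)$ oracle calls and $\OO(sD)$ other operations by \rf{thm:purifierBoolean}, and the outer Hadamards add only $\OO(\ell)$; the total time is $\OO(sD)$, and the perturbation is that of a single Boolean purifier, $2\mu^{D-1}$.

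In the QRAG model the single-parity oracle above is too slow, because $y\cdot j$ alone may take $\Omega(\ell)$ sequential operations and $\ell$ can exceed $\timeR$. Instead, for each $k\in[\ell]$ I let $\widetilde O^{(k)}_\psi$ XOR only the single bit $j_k$ into an ancilla, and define a bit-purifier $\purifier_k$ by sandwiching a Boolean purifier of \rf{thm:purifierBoolean} for $\widetilde O^{(k)}_\psi$ with Hadamards on bit $k$ of $\cB$, so that $\purifier_k$ transduces $\ket|b_k>\mapsto\ket|b_k\oplus f(\psi)_k>$. I then run all $\ell$ bit-purifiers in parallel, synchronising their $D$ rounds of local reflections so that a \emph{single} RA invocation of $O_\psi$ per round simultaneously feeds the $\ell$ distinct single-bit XORs, using \rf{thm:select} and \rf{cor:selectProgram}. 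Because the queries are shared across the $\ell$ bit-purifiers, the query catalyst is the same as for a single Boolean purifier, giving $\Lmax=\OO(1/(1-\mu))$, while the non-query catalyst parts add and yield $\Wmax=\OO(\log p/(1-\mu))$. To absorb the additive accumulation of perturbations in \rf{prp:perturbationComposition}, I inflate the internal $D$ of each bit-purifier to $D+\OO(\log\ell)$; by \rf{thm:purifierBoolean} this does not affect the QRAG time, which is $\OO(\timeR)$ regardless of $D$.

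The main obstacle will be the shared-oracle-call QRAG construction: naively composing the $\ell$ bit-purifiers with \rf{prp:sequentialParallel} or \rf{prp:parallel} would multiply both $\Wmax$ and $\Lmax$ by $\ell$, violating the $\Lmax=\OO(1/(1-\mu))$ claim of the theorem. Achieving $\Lmax=\OO(1)$ requires aligning (in the sense of \rf{defn:alignment}) the $\ell$ bit-purifiers so that their $D$-round line-graph quantum walks are interleaved step-by-step, with $O_\psi$ executed exactly once per round and its output broadcast via the RA input oracle to all $\ell$ Boolean-ancilla XORs in parallel, so that each of the $\OO(D)$ global oracle calls contributes $\OO(1)$ query weight rather than $\OO(\ell)$.
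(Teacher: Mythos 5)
Your circuit-model argument is essentially the paper's proof: the parity oracle $\widetilde O^{(y)}_\psi$ is the paper's $E^{(i)}(O_\psi)$, the admissibility bound $\sum_{j:\,y\odot j=y\odot f(\psi)}\|\psi_j\|^2\ge\frac12+d$ is the same observation, and the Hadamard sandwich around the direct sum $\bigoplus_y\purifier^{(y)}$, with the query and transduction weights accounted through $\sum_y p^{-1}\Wmax(\purifier^{(y)})$ and the quadratic combination of perturbations, matches \rf{prp:parallel}, \rf{eqn:rescaling} and \rf{prp:perturbationComposition} exactly as used in the paper.

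Your QRAG-model argument, however, has a genuine gap, and the premise that motivated it is mistaken. First the premise: the parity computation $y\odot j$ taking $\Omega(\ell)$ gates is \emph{not} an obstruction in the QRAG model, because the program computing it can be converted via \rf{thm:program->transducerQRAG} into a canonical transducer with time $\OO(\timeR)$ and transduction complexity $\OO(\log p)$ per unit of input norm; functionally composing the Boolean purifier with this transducer is precisely what produces the $\OO(\log p/(1-\mu))$ term in the theorem's QRAG bound on $\Wmax$, while leaving $\Lmax=\OO(1/(1-\mu))$ and $T=\OO(\timeR)$. This is the route the paper takes, uniformly in both models. Second, the gap: your replacement construction runs $\ell$ bit-purifiers, one per output bit, and asserts that a single RA invocation of $O_\psi$ per round can ``feed'' all $\ell$ of them so that the query weight does not grow by a factor of $\ell$. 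This does not hold. Each bit-purifier is a separate quantum walk whose catalyst contains its own copies of (rescaled) $\psi$ in its own registers $\cM^{\otimes(D-1)}$ — indeed the rescaling $\wpsi$ differs between bits, since it depends on whether $f(\psi)_k$ is $0$ or $1$ — and the RA input oracle applies $O_\psi$ to \emph{one} indexed copy of $\cM$, not to $\ell$ registers simultaneously. The Las Vegas query complexity is the total squared norm of all states handed to the oracle, so $\ell$ registers each carrying query weight $\Theta(1/(1-\mu))$ give $\Lmax=\Theta(\ell/(1-\mu))$, contradicting the theorem's third bullet. Combining the $\ell$ bit-purifiers is also a tensor-product-style composition on disjoint bits of the public register, which none of \rf{prp:parallel}, \rf{prp:sequentialsequential} or \rf{prp:functional} provides with the accounting you need. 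The fix is simply to keep the parity/Bernstein--Vazirani construction in the QRAG model and absorb the parity circuit into the transduction complexity as above.
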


Here we use $\Wmax(\purifier)$ to denote maximal $W(\purifier, \bi{O_\psi}, \xi)$ over all admissible input oracles $O_\psi$ and admissible normalised initial states $\xi$.
$\Lmax(\purifier)$ is defined similarly.

\pfstart
We reduce the non-Boolean case to the Boolean case of \rf{thm:purifierBoolean} by encoding the value into the phase and using the Bernstein-Vazirani algorithm~\cite{bernstein:quantumComplexity} to decode it back.

For simplicity of notation, we will assume $p = 2^\ell$ so that $O_\psi$ in~\rf{eqn:purifierInput2} just does not use the extra dimensions.
For $a,b\in [p]$, we denote by $a\odot b\in\{0,1\}$ their inner product when considered as elements of $\bF_2^\ell$.

Denote the public space $\bC^p$ of $\purifier$ by $\cJ$.
We first apply the Hadamard $H^{\otimes \ell}$ to perform the following transformation
\begin{equation}
\label{eqn:purifierGenA}
H^{\otimes \ell}\colon \ket J|b> \longmapsto \frac1{\sqrt{p}} \sum_{i=0}^{p-1} (-1)^{i\odot b} \ket J|i>.
\end{equation}

Consider the following procedure $E$ that evaluates the inner product between $i$ and the output of the oracle into an additional qubit $\cZ$:
\begin{equation*}
E(O_\psi) \colon \ket J|i> \ket M|0>  \ket Z|0>
\maps{O_\psi} 
\ket J|i> \sum_{j=0}^{p-1} \ket B|j>\ket N|\psi_j> \ket Z|0>
\longmapsto
\ket J|i> \sum_{j=0}^{p-1} \ket B|j>  \ket N|\psi_j> \ket Z |i\odot j>.
\end{equation*}
We consider it as a direct sum $E = \bigoplus_{i\in[p]} E^{(i)}$ with
\begin{equation}
\label{eqn:purifierGenB}
E^{(i)}(O_\psi) \colon \ket M|0>\ket Z|0>
\longmapsto
\sum_{j=0}^{p-1} \ket B|j>  \ket N|\psi_j> \ket Z |i\odot j>.
\end{equation}

We convert them into canonical transducers $S_E$ and $S^{(i)}_E$.
We have $\Lmax\sA[S_E^{(i)}] = 1$, where the admissible subspace is $\ket M|0>\ket Z|0>$.
Using Propositions~\ref{prp:inverse} and~\ref{prp:parallel}, we get transducers
\[
\bi{S_E}\sA[\bi{O_\psi}] = S_E(O_\psi) \oplus S_E^{-1}(O^*_\psi)
\qqand
\bi{S_E}^{(i)}\sA[\bi{O_\psi}] = S_E^{(i)}(O_\psi) \oplus (S_E^{(i)})^{-1}(O^*_\psi).
\]
Again,
$\Lmax\sA[\bi{S_E}^{(i)}] = 1$.
We still have $\bi{S_E} = \bigoplus_i \bi{S_E}^{(i)}$ with the help of \rf{rem:directSumDifferentOracle}.
It is also clear that $\bi{S_E}$ only executes the input oracle $\bi{O_\psi}$ on the admissible subspace.

We treat $E^{(i)}(O_\psi)$ as an oracle encoding a Boolean value into the register $\cZ$.
By~\rf{eqn:purifierCases2}, we get that $E^{(i)}(O_\psi)$ evaluates $i\odot f(\psi)$ with bounded error.
Take the purifier $\purifier'$ from \rf{thm:purifierBoolean} with $c=1/2$ and the same values of $d$ and $D$.
This purifier satisfies
\[
\purifier'\sB[\bi{E^{(i)}(O_\psi)}] \colon \ket |0>\transduce{} (-1)^{i\odot f(\psi)}\ket |0> .
\]
Taking direct sum over all $i\in[p]$, we get that a transducer
\begin{equation}
\label{eqn:purifierItimesPurifier}
\sS[I_\cJ\otimes \purifier'] \sB[\bi{E(O_\psi)}]
=
\bigoplus_{i=0}^{p-1} \purifier'\sB[\bi{E^{(i)}(O_\psi)}]
\end{equation}
performs the following transduction:
\begin{equation}
\label{eqn:purifierGen1}
\frac1{\sqrt{p}} \sum_{i=0}^{p-1} (-1)^{i\odot b}\ket J|i>
\transduce{}
\frac1{\sqrt{p}} \sum_{i=0}^{p-1} (-1)^{i\odot b + i\odot f(\psi)}\ket J|i> .
\end{equation}

Finally, we again apply $H^{\otimes \ell}$ to get
\begin{equation}
\label{eqn:purifierGen2}
H^{\otimes \ell}\colon \frac1{\sqrt{p}} \sum_{i=0}^{p-1} (-1)^{i\odot b + i\odot f(\psi)}\ket J|i>
\longmapsto \ket J |b \oplus f(\psi)>.
\end{equation}

Combining~\rf{eqn:purifierGenA}, \rf{eqn:purifierGen1} and~\rf{eqn:purifierGen2}, we see that we can use sequential composition of \rf{prp:sequentialsequential} to get
\[
\purifier = S_{H^{\otimes \ell}} * \sA[(I_\cJ\otimes \purifier')\circ \bi{S_E}] * S_{H^{\otimes \ell}},
\]
where $S_{H^{\otimes \ell}}$ is a transducer from \rf{prp:gates} with transduction action $H^{\otimes \ell}$ and no input oracle.

Let us estimate query complexity on a unit vector $\xi\in\cJ$.
We have the following estimate, where we explain individual lines after the equation.
\begin{align*}
L(\purifier, \bi{O_\psi}, \xi) 
&= L\sA[(I_\cJ\otimes \purifier')\circ \bi{S_E}, \bi{O_\psi}, H^{\otimes \ell}\xi]\\
& = \sum_{i=0}^{p-1} L\sA[\purifier'\circ \bi{S_E}^{(i)}, \bi{O_\psi}, \phi_i]\\
& \le \sum_{i=0}^{p-1} \Lmax\sA[\purifier'] \Lmax\sA[\bi{S_E}^{(i)}] \|\phi_i\|^2\\
& \le \OO(1/(1-\mu)) \sum_{i=0}^{p-1} \|\phi_i\|^2 = \OO(1/(1-\mu)).
\end{align*}
On the first line, we used sequential composition of \rf{prp:sequentialsequential} and that the transducer $S_{H^{\otimes \ell}}$ does not use the input oracle.
On the second line, we decomposed $H^{\otimes \ell}\xi = \bigoplus_i \phi_i$, and used~\rf{eqn:purifierItimesPurifier} and \rf{prp:parallel}.
On the third line, we used functional composition of \rf{prp:functional}, Eq.~\rf{eqn:rescaling}, and that $\purifier'$ does not have direct access to $O_\psi$ and executes its input oracle $E^{(i)}$ on the admissible subspace only.

In a similar way, we have
\begin{align}
W(\purifier, \bi{O_\psi}, \xi) 
&= 2\Wmax(S_{H^{\otimes \ell}}) + W\sA[(I_\cJ\otimes \purifier')\circ \bi{S_E}, \bi{O_\psi}, H^{\otimes \ell}\xi] \notag\\
& = 2\Wmax(S_{H^{\otimes \ell}}) + \sum_{i=0}^{p-1} W\sA[\purifier'\circ \bi{S_E}^{(i)}, \bi{O_\psi}, \phi_i]\notag\\
& \le 2\Wmax(S_{H^{\otimes \ell}}) + 
\sum_{i=0}^{p-1} \skB[{\Wmax\sA[\purifier'] + \Lmax\sA[\purifier'] \Wmax\sA[\bi{S_E}^{(i)}]}] \|\phi_i\|^2 \notag\\
& \le 2\Wmax(S_{H^{\otimes \ell}}) + \Wmax (\purifier') + \Lmax(\purifier')\max_i\Wmax\sA[\bi{S_E}^{(i)}]\notag\\
& \le 2\Wmax(S_{H^{\otimes \ell}}) + \OO\sA[1/(1-\mu)]\max_i\sB[{1+\Wmax\sA[\bi{S_E}^{(i)}]}].
\label{eqn:purifierTransductionComplexity}
\end{align}
For the perturbation, we have using \rf{prp:perturbationComposition}:
\begin{align*}
\delta(\purifier, \bi{O_\psi}, \xi) 
&= \delta\sA[(I_\cJ\otimes \purifier')\circ \bi{S_E}, \bi{O_\psi}, H^{\otimes \ell}\xi]\\
& = \sqrt{\sum_{i=0}^{p-1} \delta\sA[\purifier',  E^{(i)}(O_\psi), \phi_i]^2} 
 \le \sqrt{\sum_{i=0}^{p-1} \sA[ \deltapur \|\phi_i\| ]^2} = \deltapur.\\
\end{align*}
Finally,
\begin{equation}
\label{eqn:purifierTimeComplexity}
T(\purifier) = 2 \TC(S_{H^{\otimes \ell}}) + \TC(\purifier') + \TC(\bi{S_E}) + \OO(1).
\end{equation}

In the circuit model, the transduction complexities of $S_{H^{\otimes \ell}}$ and $\bi{S_E}^{(i)}$ are 0 and 1, respectively, and we get the required estimate from~\rf{eqn:purifierTransductionComplexity}.
Also, both $T(S_{H^{\otimes\ell}})$ and $T(\bi{S_E})$ are $\OO(\log p)$, and $T(\purifier') = \OO(s\cdot D)$.  Since $s\ge \log p$, we get the required estimate from~\rf{eqn:purifierTimeComplexity}.

In the QRAG model, we have both $\Wmax(S_{H^{\otimes \ell}})$ and $\Wmax(\bi{S_E}^{(i)})$ bounded by $\OO(\log p)$, which gives the required estimate on the transduction complexity.
For the time complexity, all the terms in~\rf{eqn:purifierTimeComplexity} are $\OO(\timeR)$, which shows that $T(\purifier) = \OO(\timeR)$.
\pfend

\subsection{Composition of Bounded-Error Algorithms}
\label{sec:purifierComposition}

Purifier can be composed with algorithms evaluating functions with bounded error to reduce the error.
In this section, we mention some examples.

Since we are ignoring the constant factors, we will assume the standard version of the input oracle: $O_x \colon \ket |i>\ket|b> \mapsto \ket|i>\ket|b\oplus x_i>$.
In particular, it is its own inverse, and we use $O_x$ instead of $\bi{O_x}$ everywhere in this section.

Let us again recall the composed function $f\circ g$  from~\rf{eqn:composedFunction}:
\begin{equation}
\label{eqn:composedFunctionCopy}
\begin{aligned}
\sS[f\circ g]&(z_{1,1}, \dots,z_{1,m},\;\; z_{2,1},\dots,z_{2,m},\;\;\dots\dots,\;\;z_{n,1},\dots,z_{n,m})\\
&= f\sA[
g(z_{1,1}, \dots,z_{1,m}), 
g(z_{2,1}, \dots,z_{2,m}),
\dots,
g(z_{n,1}, \dots,z_{n,m})].
\end{aligned}
\end{equation}
and
\begin{equation}
\label{eqn:composed_yCopy}
\yy_i = (z_{i,1}, \dots,z_{i,m})
\qqand
x = \sA[g(\yy_1), g(\yy_2),\dots g(\yy_n)]
\end{equation}
so that $f(x) = \sS[f\circ g](z)$.
The following result is essentially \rf{thm:introCompositionFunctionCircuit}.

\begin{thm}
\label{thm:compositionFunctionCircuit}
Let $A$ and $B$ be quantum algorithms in the circuit model that evaluate functions $f$ and $g$, respectively, with bounded error.
Then, there exists an algorithm in the circuit model that evaluates the function $f\circ g$ with bounded error in time complexity
\begin{equation}
\label{eqn:compositionFunctionCircuit}
\OO(L)\sA[T(A) + T(B) + s\log L]
\end{equation}
where $L$ is the worst-case Las Vegas query complexity of $A$, and $s$ is the space complexity of $B$.
The algorithm makes $\OO\sA[L\cdot Q(B)]$ queries, where $Q(B)$ is the usual Monte Carlo query complexity of $B$.
\end{thm}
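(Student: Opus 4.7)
The plan is to follow essentially the roadmap already sketched under \rf{thm:introCompositionFunctionCircuit}, but carefully verifying that each ingredient is available in the circuit model. First, I apply \rf{thm:program->transducerCircuit} to the outer algorithm $A$ to obtain a canonical transducer $S_A$ with transduction action equal to $A(O_x)$, with $W_x(S_A) = L_x(S_A) = L(A, O_x, \ket|0>)$ and $T(S_A) = \OO(T(A))$. In particular, both $\Wmax(S_A)$ and $\Lmax(S_A)$ are bounded by $L$.

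Next, I invoke \rf{thm:purifierGeneral} (the non-Boolean purifier in the circuit model) with $D = \Theta(\log L)$ chosen so that the per-call perturbation $\deltapur = 2\mu^{D-1}$ is of order $1/\sqrt{L}$, more precisely $\OO(1/\sqrt{L \log L})$, or any such value that suffices to get a sum-of-perturbations bound below a fixed constant after the composition analysis below. The resulting purifier $\purifier$ has constant transduction and query complexities and time complexity $\OO(s \log L)$, where $s$ is the space used by the oracle it queries, namely the space complexity of $B$. I treat $B$ itself as the input oracle for $\purifier$: since $B$ evaluates $g$ with bounded error, the state $B\ket|0>$ fits the admissible input format of the purifier, and so $\purifier$ composed with $B$ transduces $\ket|0> \to \ket|g(\yy)>$ up to perturbation $\deltapur$.

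I then form $\bigoplus_{i=1}^n \purifier$ via \rf{prp:parallel} (using \rf{rem:directSumDifferentOracle} since each copy queries a different $B(O_{\yy_i})$) and functionally compose it into $S_A$ via \rf{prp:functional}, obtaining a transducer $S = S_A \circ \bigoplus_i \purifier$ on the composed input oracle $\bigoplus_i O_{\yy_i} = O_z$. By \rf{eqn:compositionTransductionUpper} and $\Wmax(\purifier) = \OO(1)$,
\[
W(S, O_z, \ket|0>) \le W(S_A, O_x, \ket|0>) + \OO(1)\cdot L(S_A, O_x, \ket|0>) = \OO(L),
\]
and analogously $L(S, O_z, \ket|0>) = \OO(L)$. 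The time complexity adds up as $T(S) = T(S_A) + T(\purifier) + \OO(1) = \OO(T(A)) + \OO(s\log L)$, using \rf{prp:functional} and the fact that parallel composition of identical $\purifier$'s in the circuit model costs only the complexity of one copy. Finally, I apply the perturbed implementation \rf{prp:approximateImplementation} (the perturbed analogue of \rf{cor:optimalImplementation}) to $S$ with $\eps = \Theta(1)$, replacing each oracle call to $\bigoplus_i B(O_{\yy_i})$ by a parallel execution of $n$ copies of $B$. Each such parallel call takes time $\OO(T(B))$ and $\OO(Q(B))$ queries, and the algorithm performs $K = \OO(L)$ iterations of $\Sw$ and $\OO(L)$ oracle calls, giving total time $\OO(L)\big[T(A) + T(B) + s\log L\big]$ and total queries $\OO(L \cdot Q(B))$.

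The main obstacle in fleshing out the plan is the perturbation accounting. One must verify that the sum of perturbations coming from the $\OO(L)$ conditional executions of $S$ inside the pumping algorithm, plus the total perturbation introduced by each purifier call (each of size at most $\deltapur$ per admissible unit input), stays below the desired constant. Using \rf{prp:perturbationComposition}, the per-execution perturbation from the purifiers scales like $\deltapur \cdot \sqrt{\Lmax(S_A)} = \OO(\deltapur \sqrt{L})$, and the pumping algorithm amplifies this by a factor $\sqrt{K} = \OO(\sqrt{L})$, so the choice $\deltapur = \OO(1/L)$ (i.e.\ $D = \Theta(\log L)$ with a sufficiently large constant, depending on $\mu < 1$) is exactly what is needed. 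Once this is pinned down, the remaining estimates are direct applications of the composition and implementation theorems already established.
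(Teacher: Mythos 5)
Your proposal follows essentially the same route as the paper's proof: convert $A$ to a transducer via \rf{thm:program->transducerCircuit}, compose with $I_n\otimes\purifier$ whose input oracle is (the bidirectional version of) $B$, bound the transduction complexity by $\OO(L)$, and choose $D=\Theta(\log L)$ so that $\deltapur=\OO(1/L)$, exactly as in the paper's perturbation accounting. The only detail you gloss over is that the purifier needs bidirectional access to its oracle, so one must explicitly form $\bi{B}=B\oplus B^{-1}$ (running $B$'s circuit in reverse), which the paper does but which costs nothing asymptotically.
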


\pfstart
First, use \rf{thm:program->transducerCircuit} to get a transducer $S_A$ whose transduction action is identical to the execution of $A$.
Its time complexity $T(S_A) = \OO\sA[T(A)]$ and its transduction and query complexities are bounded by $L$.

The algorithm $B$ on the input oracle $O_y$ evaluates $g(y)$ with bounded error.
We obtain the algorithm $B^{-1}$ with the input oracle $O_y^* = O_y$ whose action is the inverse of $B$.
Combining the two via direct sum, we get the algorithm $\bi{B}(O_y) = B(O_y)\oplus B^{-1}(O_y)$.
Let $\purifier$ be the corresponding purifier from \rf{thm:purifierGeneral}, and $D$ and $\deltapur$ be the parameters therein.
The transduction action of $\purifier$ on the input oracle $\bi{B}(O_y)$ is $\ket|b> \transduce{} \ket |b\oplus g(y)>$.

By~\rf{eqn:composed_yCopy}, we have $O_z = \bigoplus_i O_{\yy_i}$.
Let us denote by $\bi{B}(O_z)$ the algorithm $\sS[I_n\otimes \bi B](O_z) = \bigoplus_i \bi{B} (O_{\yy_i})$.
By \rf{cor:byIdentity} with \rf{rem:directSumDifferentOracle}, the transducer $I_n\otimes\purifier$ on the input oracle $\bi{B}(O_z)$ performs the transduction $\ket |i>\ket|b>\transduce{} \ket |i>\ket |b\oplus g(\yy_i)>$ for every $i\in[n]$.
In other words, its transduction action is $O_x$.

Now consider the transducer 
\[
S = S_A\circ (I_n\otimes \purifier)
\]
with the input oracle $\bi{B}(O_z)$.
By the definition of functional composition, it transduces $\ket |0>$ into $\sS[g\circ f](z)$.
By \rf{eqn:compositionTransductionMultipleUpper}, its transduction complexity is at most
\[
W(S_A, O_x, \ket |0>) + \sum_i L^{(i)} (S_A, O_x, \ket |0>) \cdot \Wmax\sA[\purifier, \bi{B}(O_{\yy_i})]  = \OO(L).
\]

We obtain the required algorithm by using \rf{cor:approximatePumping} with $\eps = \Theta(1)$ on the above transducer $S$.
The transducer $S$ is executed $\OO(L)$ times.
Each execution takes times $\OO\sA[T(A) + sD]$ to execute the transducer and $\OO\sA[T(B)]$ to execute the input oracle.
Therefore, the total time complexity is
\[
\OO(L) \sA[T(A) + T(B) + sD]
\]
and the query complexity is $\OO(L)Q(B)$.

It remains to estimate $D$.
We may assume the error of $A$ is a small enough constant.
By \rf{cor:approximatePumping}, in order to get a bounded-error algorithm from the transducer $S_A\circ (I_n\otimes \purifier)$, we should have
\[
\delta \sA[I_n\otimes \purifier, \bi{B}(O_z), q(S_A, O_x, \ket |0>)] = \OO\sA[1/\sqrt{L}].
\]
for a small enough constant.
Using~\rf{eqn:perturbationLinear} and that $\norm|{q(S_A, O_x, \ket |0>)}| \le \sqrt{L}$, we get that it suffices to have $\deltapur = \OO(1/L)$.
Therefore, we can take $D = \OO(\log L)$, which finishes the proof.
\pfend

Let us now proceed with QRAG case, \rf{thm:introCompositionFunctionQRAG}.
Recall the function from~\rf{eqn:randomComposedFunctionCopy}:
\begin{equation}
\label{eqn:randomComposedFunctionCopy2}
f\sA[
g_1(z_{1,1}, \dots,z_{1,m}), 
g_2(z_{2,1}, \dots,z_{2,m}),
\dots,
g_n(z_{n,1}, \dots,z_{n,m})].
\end{equation}
with notation
\[
\yy_i = (z_{i,1}, \dots,z_{i,m})
\qqand
x = \sA[g_1(\yy_1), g_2(\yy_2),\dots g_n(\yy_n)].
\]

For simplicity, we assume all functions $f$ and $g$ use the same input and output alphabet $q$.
Let $A$ and $B_i$ be quantum algorithms that evaluate $f$ and $g_i$, respectively.
To simplify expressions, we assume that $T(B_i)\ge \log q$.  In other words, we spend at least 1 iteration per bit of the output.
Also, all the algorithms have the same upper bound on permissible error.
We use an approach similar to \rf{sec:iterated} on iterated functions, so that we are able to compose several layer of functions.

As in \rf{sec:function}, we denote $L_x(A) = L(A, O_x, \ket|0>)$ and similarly for other notation.
This time, however the input oracle  $O_x \colon \ket |i>\ket|b> \mapsto \ket|i>\ket|b\oplus x_i>$ is uniquely defined.
As we can make the perturbation of a purifier as small as necessary without increasing complexity, we ignore the perturbations in the following implicitly assuming they are small enough.

\begin{thm}
\label{thm:compositionFunctionQRAG}
Let $S_f$ be a perturbed transducer evaluating the function $f$, and $B_1,\dots, B_n$ be algorithms evaluating the functions $g_1,\dots,g_n$ with bounded-error.
Assuming the QRAG model with RA input oracle, and QRAM access to the description of $B_1,\dots, B_n$, there exists a perturbed transducer $S$ evaluating the function~\rf{eqn:randomComposedFunctionCopy2} with the following parameters.
Its transduction complexity is
\begin{equation}
\label{eqn:compositionFunctionQRAG}
W(S, O_z, \ket|0>) = W(S_f, O_x, \ket |0>) + \sum_{i=1}^n \OO\sA[ L^{(i)}_x(S_f) T(B_i)]
\end{equation}
its query complexity is
\begin{equation}
\label{eqn:compositionFunctionQRAGQuery}
L^{(i,j)} (S, O_z, \ket|0>) = \OO\sA[ L^{(i)}_x(S_f) L^{(j)}_{\yy_i}(B_i)  ]
\end{equation}
and its time complexity is $T(S) = \TC(S_f) + \OO(\timeR)$.
\end{thm}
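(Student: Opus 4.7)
The plan is to build $S$ along the lines of the scheme outlined for \rf{thm:introCompositionFunctionQRAG}, refining the bookkeeping to track partial query complexities. First, I would convert each bounded-error algorithm $B_i$ into a canonical transducer $S_{B_i}$ via \rf{thm:program->transducerQRAG}, giving $T(S_{B_i}) = \OO(\timeR)$, transduction complexity $W(S_{B_i}, O_{\yy_i}, \xi) \le T(B_i) \|\xi\|^2$, and matching partial query complexities $L^{(j)}(S_{B_i}, O_{\yy_i}, \xi) = L^{(j)}(B_i, O_{\yy_i}, \xi)$. Via \rf{prp:inverse} combined with \rf{prp:parallel}, form the bidirectional transducer $\bi{S_{B_i}} = S_{B_i}\oplus S_{B_i}^{-1}$, whose admissible-subspace transduction action produces or uncomputes the output state of $B_i$ at the same order of complexity.

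Second, apply \rf{thm:purifierGeneral} with $p=q$ and truncation parameter $D$ to be fixed at the end to obtain purifiers $\purifier_i$ with $\Lmax(\purifier_i) = \OO(1)$, $\Wmax(\purifier_i) = \OO(\log q)$, $T(\purifier_i) = \OO(\timeR)$, and perturbation $\deltapur = 2\mu^{D-1}$. Functionally compose each pair via \rf{prp:functional} to obtain $T_i := \purifier_i \circ \bi{S_{B_i}}$, whose transduction action on $\bi{O_{\yy_i}}$ is (up to perturbation) exactly the state-generating oracle $\ket|b>\mapsto\ket|b\oplus g_i(\yy_i)>$. Because $\purifier_i$ queries $\bi{S_{B_i}}$ only on its admissible subspace, \rf{eqn:compositionTransductionUpper} yields
\begin{equation*}
\Wmax(T_i, O_{\yy_i}) \le \Wmax(\purifier_i) + \Lmax(\purifier_i)\cdot\Wmax(\bi{S_{B_i}}, O_{\yy_i}) = \OO(\log q) + \OO(T(B_i)) = \OO(T(B_i)),
\end{equation*}
using the hypothesis $T(B_i) \ge \log q$; since $\purifier_i$ makes no direct queries to $O_{\yy_i}$, the partial query bound is $L^{(j)}_{\max}(T_i, O_{\yy_i}) = \OO(L^{(j)}_{\yy_i}(B_i))$, and $T(T_i) = \OO(\timeR)$.

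Third, assemble the aggregate $T := \bigoplus_{i=1}^{n} T_i$ using \rf{prp:parallel} with \rf{rem:directSumDifferentOracle}, so that its transduction action on $O_z = \bigoplus_i O_{\yy_i}$ matches $O_x$ up to small perturbation. Using QRAM access to the joint gate descriptions, a select-style implementation in the spirit of \rf{prp:program->transducerParallel} executes $T$ in time $\OO(\timeR)$. Finally, set $S := S_f \circ T$ via \rf{prp:functional}. As $S_f$ has no direct access to $O_z$, \rf{eqn:compositionTransductionMultipleUpper} yields
\begin{equation*}
W(S, O_z, \ket|0>) \le W(S_f, O_x, \ket|0>) + \sum_{i=1}^n \Wmax(T_i, O_{\yy_i})\cdot L^{(i)}_x(S_f),
\end{equation*}
which is \rf{eqn:compositionFunctionQRAG}; refining \rf{eqn:compositionLasVegasMultipleUpper} to track $O^{(j)}_{\yy_i}$ separately gives \rf{eqn:compositionFunctionQRAGQuery}. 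Time complexity is $T(S) = \TC(S_f) + T(T) = \TC(S_f) + \OO(\timeR)$.

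The delicate part is the perturbation bookkeeping via \rf{prp:perturbationComposition}. Each $\purifier_i$ contributes perturbation $\OO(\deltapur)$ per unit of input norm; after parallel composition and then functional composition with $S_f$, the total error is governed, by \rf{eqn:perturbationLinear} and \rf{eqn:perturbationParallel}, by $\deltapur$ times a quantity scaling like $\sqrt{L(S_f, O_x, \ket|0>)}$. Choosing $D = \OO(\log L)$ with $L$ an upper bound on that query complexity makes the overall perturbation an arbitrarily small constant, at the cost of a logarithmic factor that is subsumed in the $\OO$ of \rf{eqn:compositionFunctionQRAG}. The main obstacles are (i) ensuring consistent alignment of all constituents in the input-oracle register $\cR$ (\rf{defn:alignment}) so that the parallel and functional compositions are valid, and (ii) verifying that each $\purifier_i$ invokes $\bi{S_{B_i}}$ only on its admissible subspace, without which the clean $\Wmax$/$\Lmax$ bounds used in the preceding estimates would fail.
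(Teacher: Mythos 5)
Your proposal is correct and follows essentially the same route as the paper's proof: convert each $B_i$ to a transducer via \rf{thm:program->transducerQRAG}, form the bidirectional versions, compose with purifiers, take the direct sum to realize $O_x$, and functionally compose with $S_f$, with the same complexity bookkeeping via \rf{eqn:compositionTransductionMultipleUpper} and the partial-query variants. The only cosmetic difference is that the paper writes the aggregate as $(I_n\otimes \purifier)\circ\bigoplus_i \bi{S_{B_i}}$ and treats the perturbation even more casually than you do (in the QRAG model the purifier's cost is independent of $D$, so no logarithmic factor needs to be absorbed).
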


\pfstart
From \rf{thm:program->transducerQRAG}, for each $i$, we obtain a transducer $S_{B_i}$ whose transduction action is identical to the action of $B_i$.
Its transduction complexity $\Wmax(S_{B_i}) = \OO\sA[T(B_i)]$ and query state is identical to that of $B_i$.
Also, we obtain $\bi{S_{B_i}} = S_{B_i} \oplus S_{B_i}^*$ whose complexity is identical to $S_{B_i}$.
Let $\purifier$ be the corresponding purifier.
We have that the transduction action of $\purifier\circ \bi{S_{B_i}}$ on an input oracle $O_y$ is $\ket|b> \transduce{} \ket |b\oplus g_i(y)>$.

Using direct sum, we get that the transducer 
\[
S_g = (I_n\otimes \purifier)\circ\bigoplus_i \bi{S_{B_i}} = \bigoplus_i \purifier\circ \bi{S_{B_i}}
\]
on the input oracle $O_z = \bigoplus_i O_{\yy_i}$ transduces $\ket |i>\ket|b>\transduce{} \ket |i>\ket |b\oplus g_i(\yy_i)>$ for every $i\in[n]$.
Thus, its transduction action is $O_x$,
and the transducer $S = S_f\circ S_g$ evaluates the function in~\rf{eqn:randomComposedFunctionCopy2}.

Let us estimate its transduction complexity.
First by~\rf{eqn:compositionTransductionUpper}:
\[
\Wmax(\purifier\circ \bi{S_{B_i}}, O_{\yy_i}) \le
\Wmax(\purifier) + \Lmax(\purifier) \Wmax(\bi{S_{B_i}}) 
= \OO(\log p) + \OO(T(B_i)) = \OO(T(B_i))
\]
using our assumption on $T(B_i)\ge\log p$.
Therefore, by~\rf{eqn:compositionTransductionMultipleUpper}:
\begin{align*}
W\sA[S, O_z, \ket|0>] 
&\le W(S_f, O_x, \ket |0>) + \sum_i L^{(i)}_x(S_f)\Wmax\sA[\purifier\circ \bi{S_{B_i}}, O_{\yy_i}] \\
&\le W(S_f, O_x, \ket |0>) + \sum_i L^{(i)}_x(S_f) \OO(T(B_i)) .
\end{align*}

For the query complexity, we use a partial-query variant of~\rf{eqn:compositionLasVegasUpper} and that the purifier only executes the subroutine on the admissible initial states to obtain:
\[
\Lmax^{(j)}\sA[\purifier\circ \bi{S_{B_i}}, O_{\yy_i}]
\le \Lmax(\purifier) L^{(j)}\sA[S_{B_i}, O_{\yy_i}, \ket |0>]
= \OO\sA[L^{(j)}_{\yy_i} (B_i)].
\]
Hence, by~\rf{eqn:compositionLasVegasMultipleUpper}:
\[
L^{(i,j)}\sA[S, O_z, \ket|0>] 
\le L^{(i)}_x(A) \Lmax^{(j)}\sA[\purifier\circ \bi{S_{B_i}}, O_{\yy_i}]
\le L^{(i)}_x(A) \OO\sA[L^{(j)}_{\yy_i} (B_i)].
\]

The time complexity of $\purifier$ is $\OO(\timeR)$.
Also, the direct sum $\bigoplus_i \bi{S_{B_i}}$ can be implemented in $\OO(\timeR)$ by \rf{prp:program->transducerParallel}.
Thus, the time complexity of $S$ is $\TC(S_f) + \OO(\timeR)$.
\pfend

We get \rf{thm:introCompositionFunctionQRAG} from this theorem by using a transducer $S_f$ obtained from the program $A$ using \rf{thm:program->transducerQRAG}, and then applying \rf{prp:approximateImplementation} to the resulting transducer.
Moreover, we get that the algorithm executes the input oracle $O_z$
\[
\OO\sB[ \max_z \sum_{i=1}^n L^{(i)}_x(A) L_{\yy_i}(B_i)  ]
\]
times.

\rf{thm:compositionFunctionQRAG} can be used multiple times in a row to obtain a composed transducer for a tree of functions similar to the one in \rf{thm:introCompositionTree}.
If $d$ is the depth of the tree, the query complexity grows by the factor of $C^d$, where $C$ is the constant in~\rf{eqn:compositionFunctionQRAGQuery}.
This growth is completely analogous to what one obtains using span programs for the query complexity.
The contribution to the time complexity from the subroutines on layer $\ell$ also grows by the factor of $C^\ell$ because they are multiplied by the corresponding query complexity in~\rf{eqn:compositionFunctionQRAG}.
The time complexity of the final transducer is $\OO(d\timeR)$.

\subsection*{Acknowledgements}
We would like to thank Titouan Carette for bringing references~\cite{bartha:quantumTuringAutomata, AndresMartinez:phd} to our attention.
We are grateful to anonymous referees for their useful suggestions on the presentation of this paper.	

AB is supported by the Latvian Quantum Initiative under European Union Recovery and Resilience Facility project no. 2.3.1.1.i.0/1/22/I/CFLA/001 and the QuantERA project QOPT.

SJ is supported by NWO Klein project number OCENW.Klein.061; and ARO contract no W911NF2010327. SJ is funded by the European Union (ERC, ASC-Q, 101040624). Views and opinions expressed are however those of the author(s) only and do not necessarily reflect those of the European Union or the European Research Council. Neither the European Union nor the granting authority can be held responsible for them. 
SJ is supported by the project Divide \& Quantum  (with project number 1389.20.241) of the research programme NWA-ORC which is (partly) financed by the Dutch Research Council (NWO).
SJ is a CIFAR Fellow in the Quantum Information Science Program. 

\bibliographystyle{habbrvM}
{
\small
\bibliography{belov}
}

\end{document}